\long\def\tobeadded#1\endtobeadded{}
\newcommand{\tobereplaced}[2]{#1}
\newcommand{\mscomm}[1]{\begingroup\endgroup}
\newcommand{\blueit}[1]{#1}
\newcommand{\bluevar}[1]{#1}
\newcommand{\remove}{}
\newcommand{\l@myshrink}[2]{\vspace{-0.4cm}}
\long\def\comment#1\endcomment{}
\long\def\tmpcomment#1\endtmpcomment{}
\theoremstyle{theorem}
\newtheorem{theorem}{Theorem}
\newtheorem{lemma}{Lemma}
\newtheorem{corollary}{Corollary}
\newtheorem{proposition}{Proposition}
\newtheorem{example}{Example}
\newtheorem*{theorem2prime*}{Theorem~5$'$}
\newtheorem*{consistencyprinciple*}{Consistency Principle}
\theoremstyle{remark}
\newtheorem{remark}{Remark}
\theoremstyle{definition}
\newtheorem{definition}{Definition}
\newtheorem*{definition*}{Definition}
\newtheorem*{definitionsketch*}{Definition Sketch}
\newtheorem*{notation*}{Notation}
\newtheorem*{physicalinterpretation*}{Physical interpretation}
\newtheorem{pr}{Problem}
\newenvironment {th*}[1]
    {\gdef\thname{#1} \begin{thn}}%
    {\end{thn}}
\newtheorem*{thn}{\thname}
\begin{document}

%\begin{frontmatter}
\title{Feynman checkers:
\\ lattice quantum field theory with real time}
%\\ Minkowskian lattice quantum field theory}
%\title{Feynman anticheckers}

\author{M. Skopenkov and A. Ustinov}

\date{}
%\date{{https://www.mccme.ru/\~{}mskopenkov/courses/quarks-18.html}}
%\ead{skopenkov@rambler.ru}
%\address{\url{http://skopenkov.ru/courses/quarks-17.html}}

\maketitle

%\vspace{-2.5cm}
%\vspace{-1.0cm}

\begin{abstract}
We present a new completely elementary model \blueit{that} describes
\blueit{the} creation, annihilation, and motion of non-interacting electrons and positrons along a line.
It is a modification of the model known under the names
Feynman checkers \blueit{or} one-dimensional quantum walk.
%, or Ising model at imaginary temperature.
\blueit{It can be viewed as a six-vertex model with certain complex weights of the vertices.}
The discrete model is consistent with the continuum quantum field theory,
namely, reproduces the known expected charge density as the lattice step
tends to zero. It is exactly solvable in terms of hypergeometric functions.
We introduce interaction resembling Fermi\blueit{'s} theory
and establish perturbation expansion.

\textbf{Keywords and phrases.} Feynman checkerboard, quantum walk, \blueit{six-vertex model,} loop O(n) model, propagator, Dirac equation

\textbf{MSC2010:} 81T25, 81T27, 81T40, 82B20, 82B23, 33C45.
\end{abstract}

\footnotetext{The work was prepared within the Russian Science Foundation grant N22-41-05001, \url{https://rscf.ru/en/project/22-41-05001/}. \mscomm{Check wording!!!}}
%%%%

%\end{frontmatter}

%\vspace{-1.0cm}

\tableofcontents

\begin{figure}[htbp]
  \centering
  \begin{tabular}{cccccc}
  \includegraphics[width=0.10\textwidth]{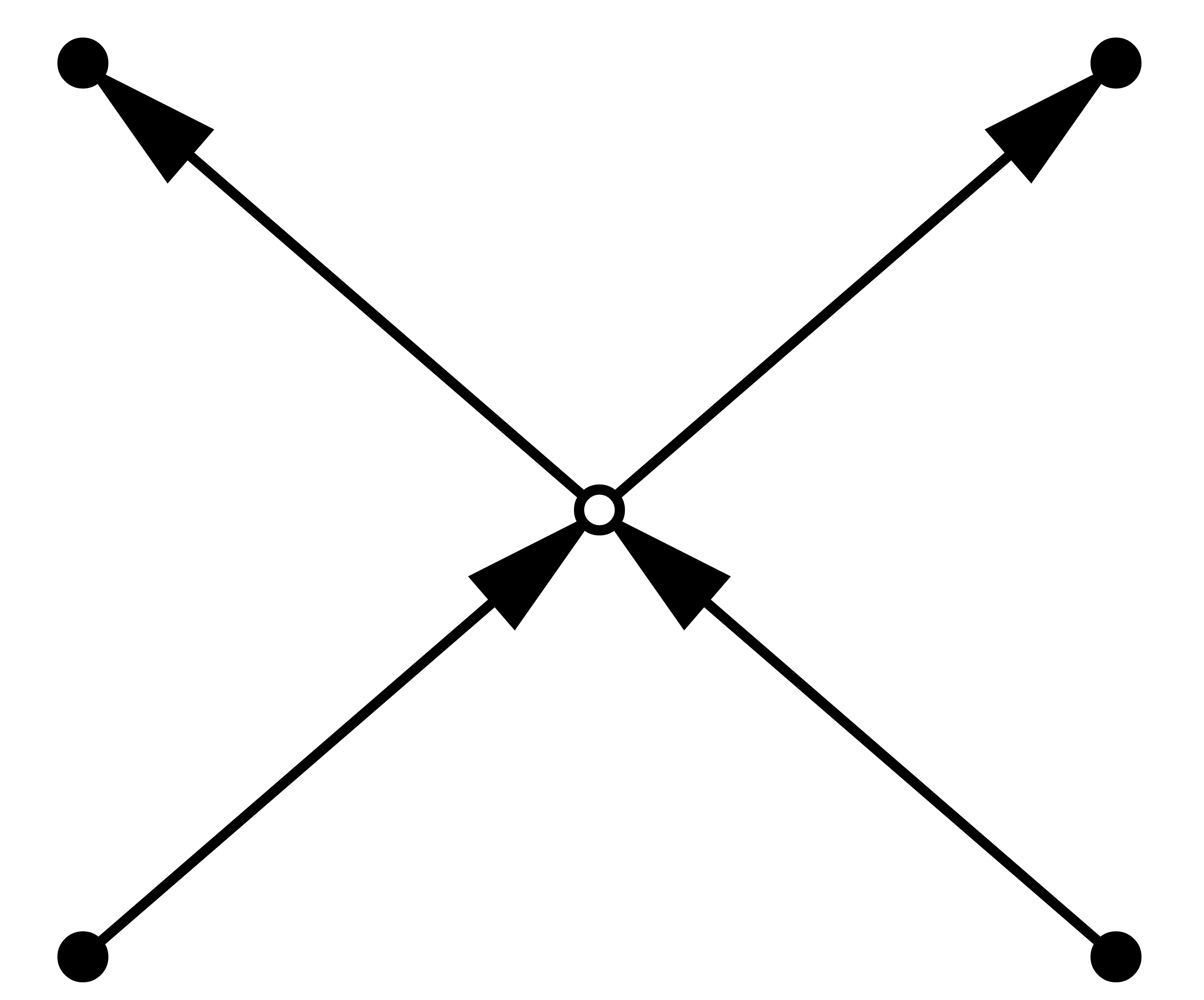}&
  \includegraphics[width=0.10\textwidth]{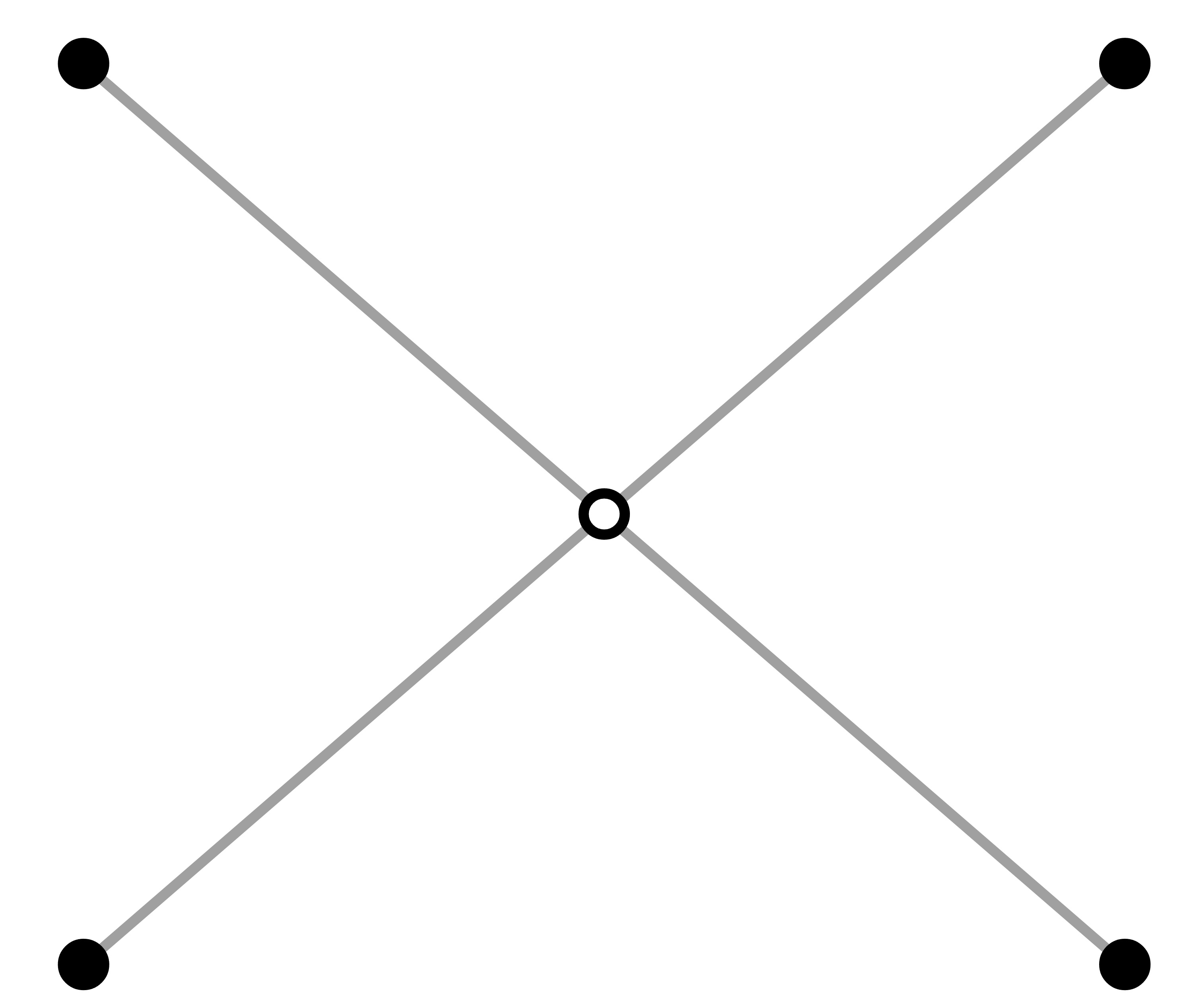}&
  \includegraphics[width=0.10\textwidth]{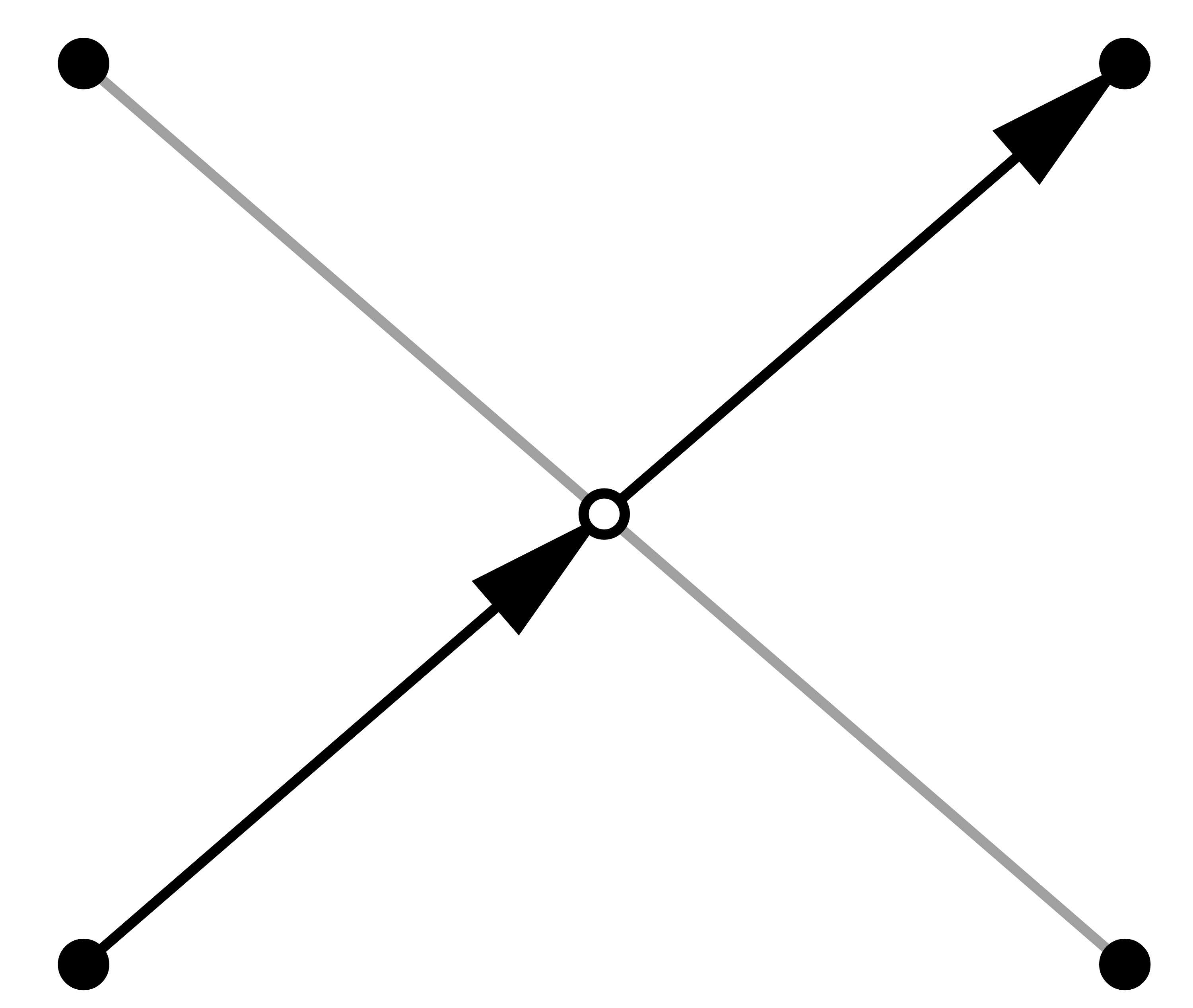}&
  \includegraphics[width=0.10\textwidth]{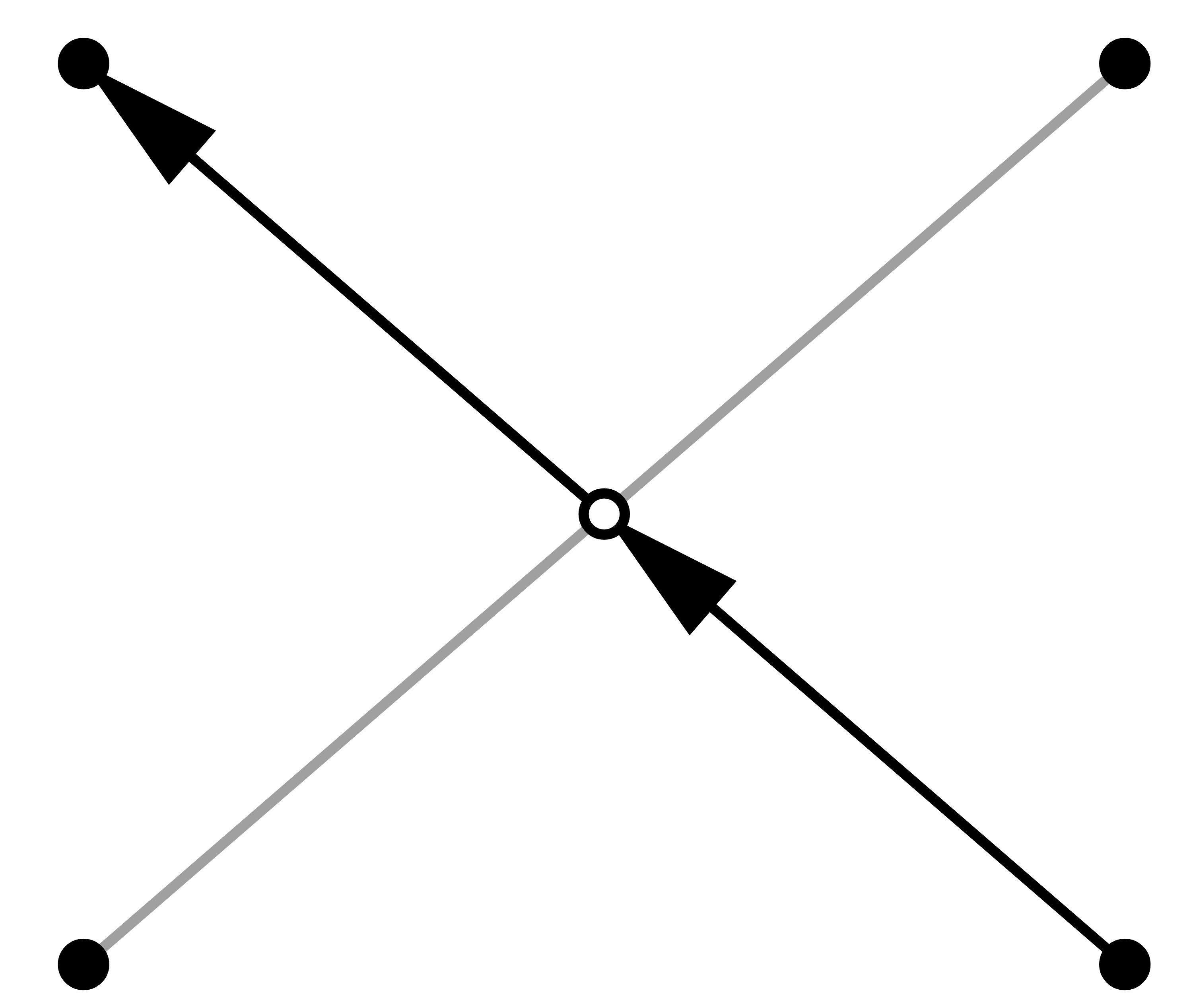}&
  \includegraphics[width=0.10\textwidth]{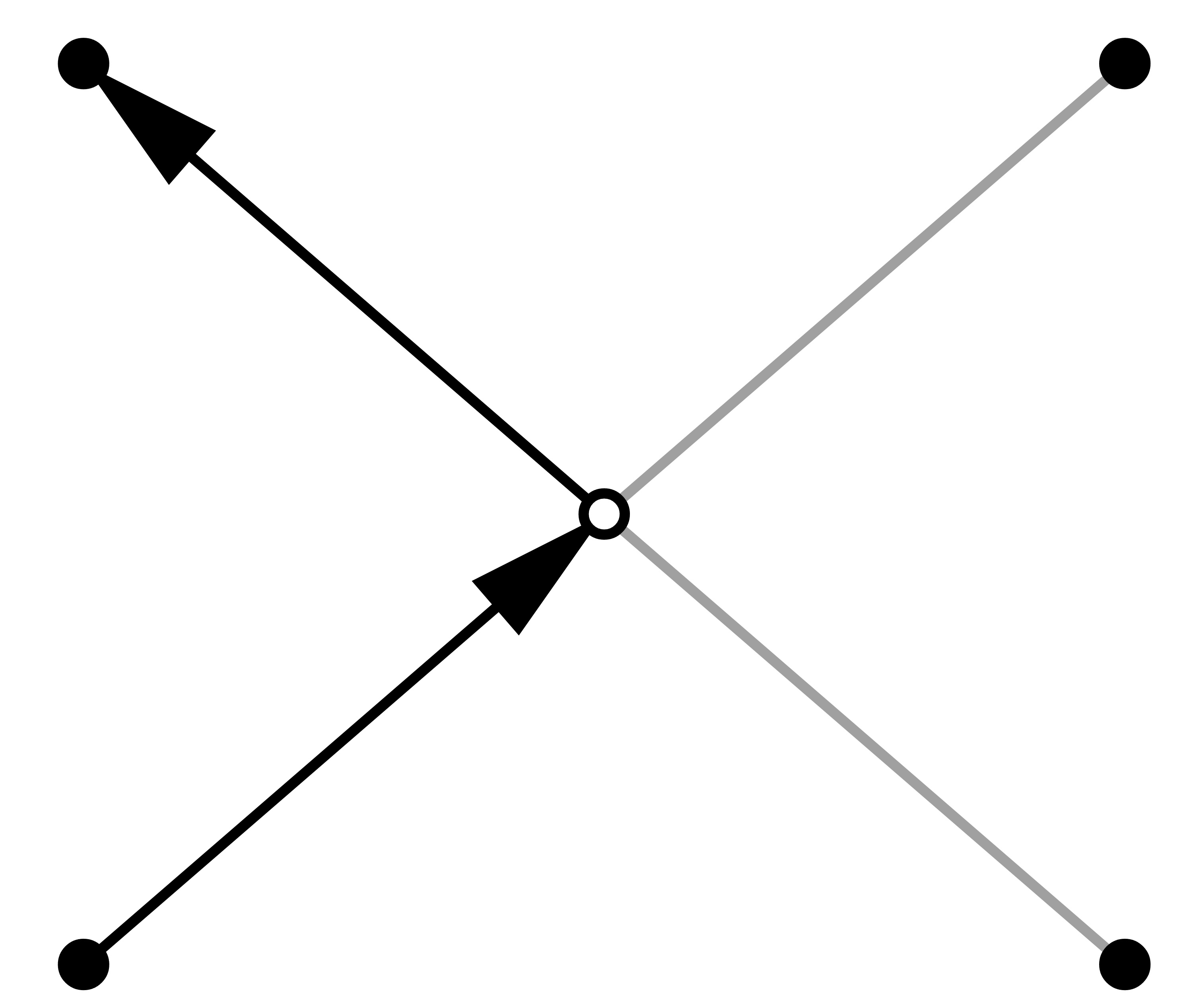}&
  \includegraphics[width=0.10\textwidth]{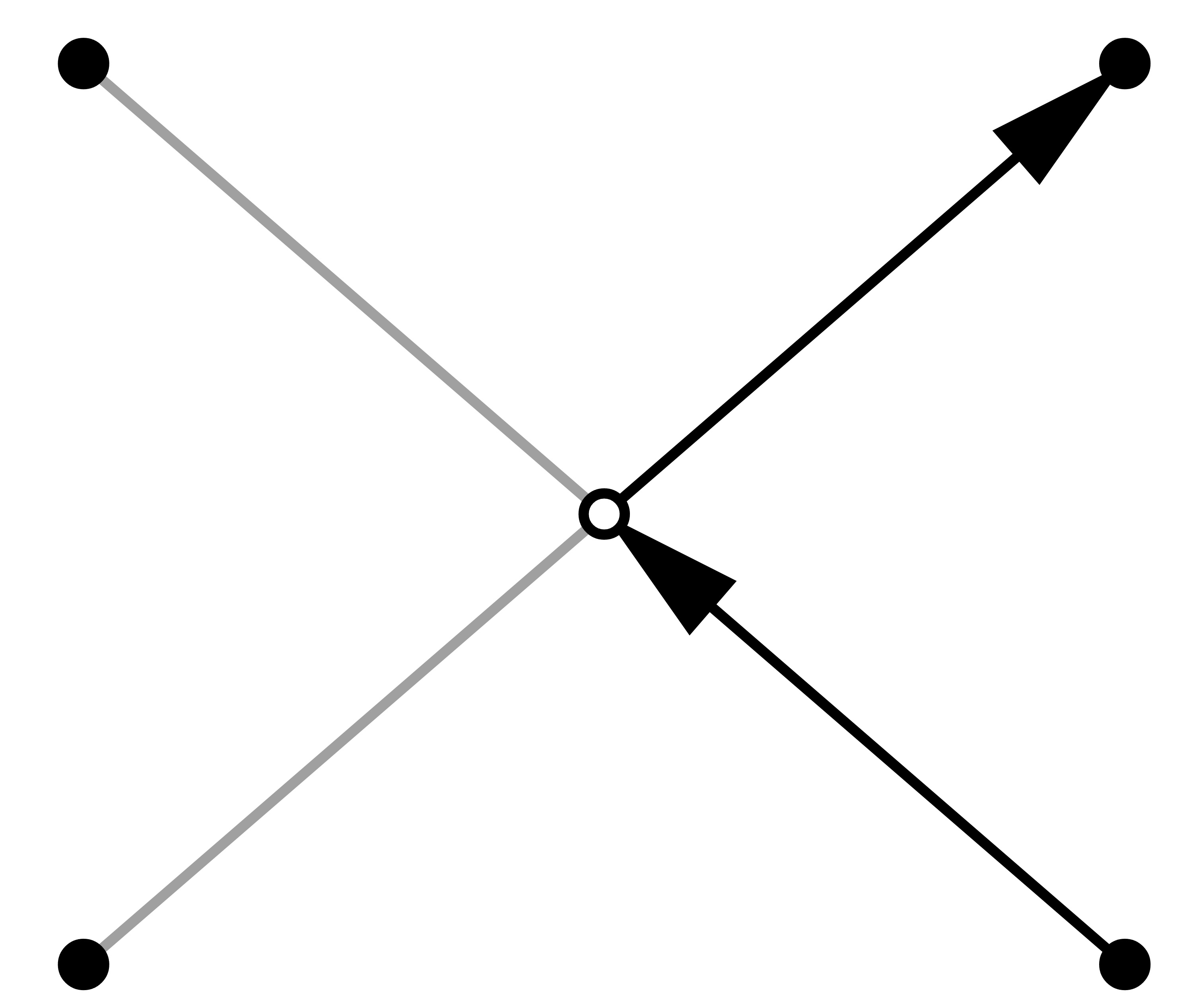}\\
  \multicolumn{2}{c}{$1$} &
  \multicolumn{2}{c}{$1/\sqrt{1+m^2\varepsilon^2}$} &
  \multicolumn{2}{c}{$-im\varepsilon/\sqrt{1+m^2\varepsilon^2}$} \\[0.3cm]
  \includegraphics[width=0.10\textwidth]{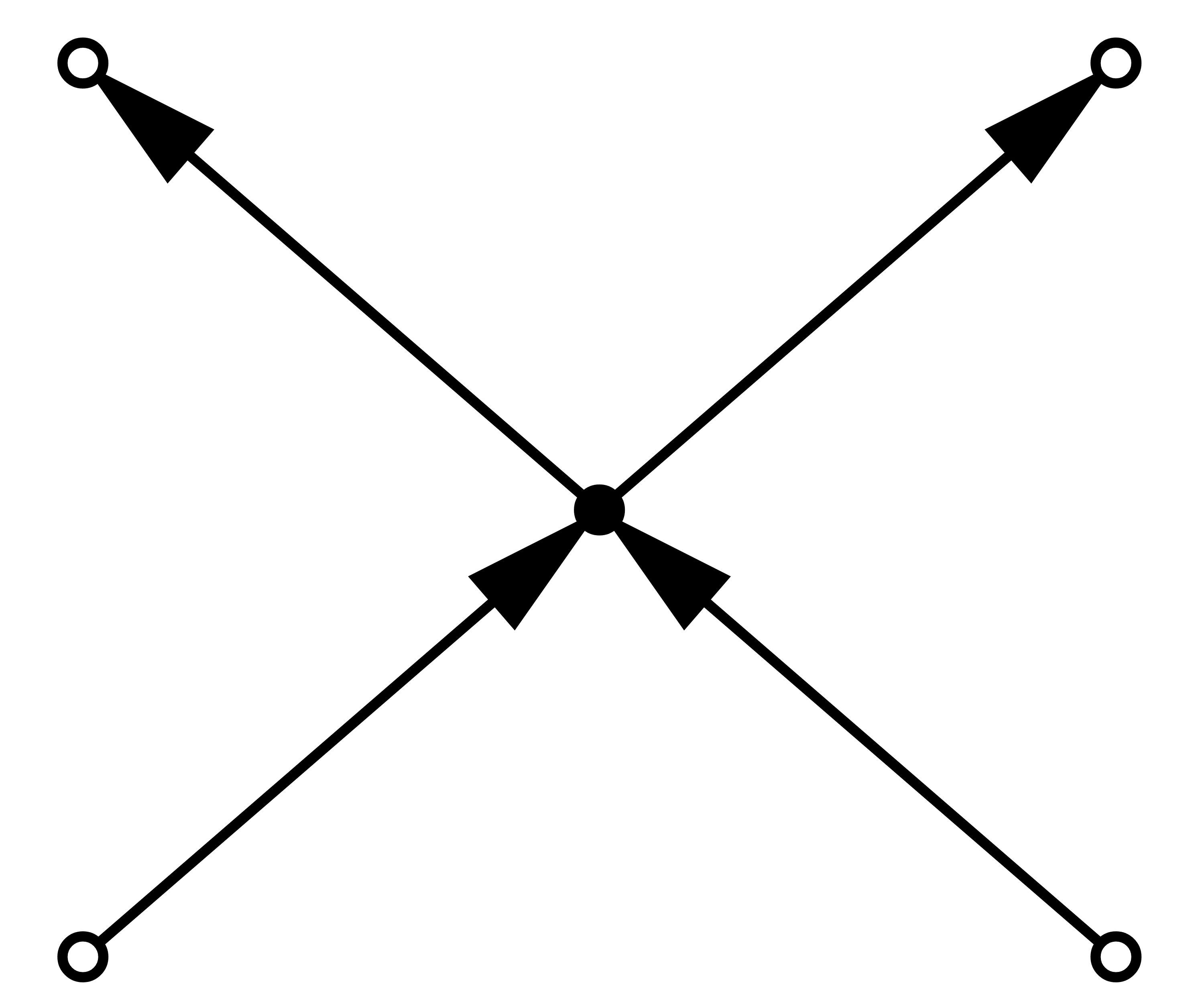}&
  \includegraphics[width=0.10\textwidth]{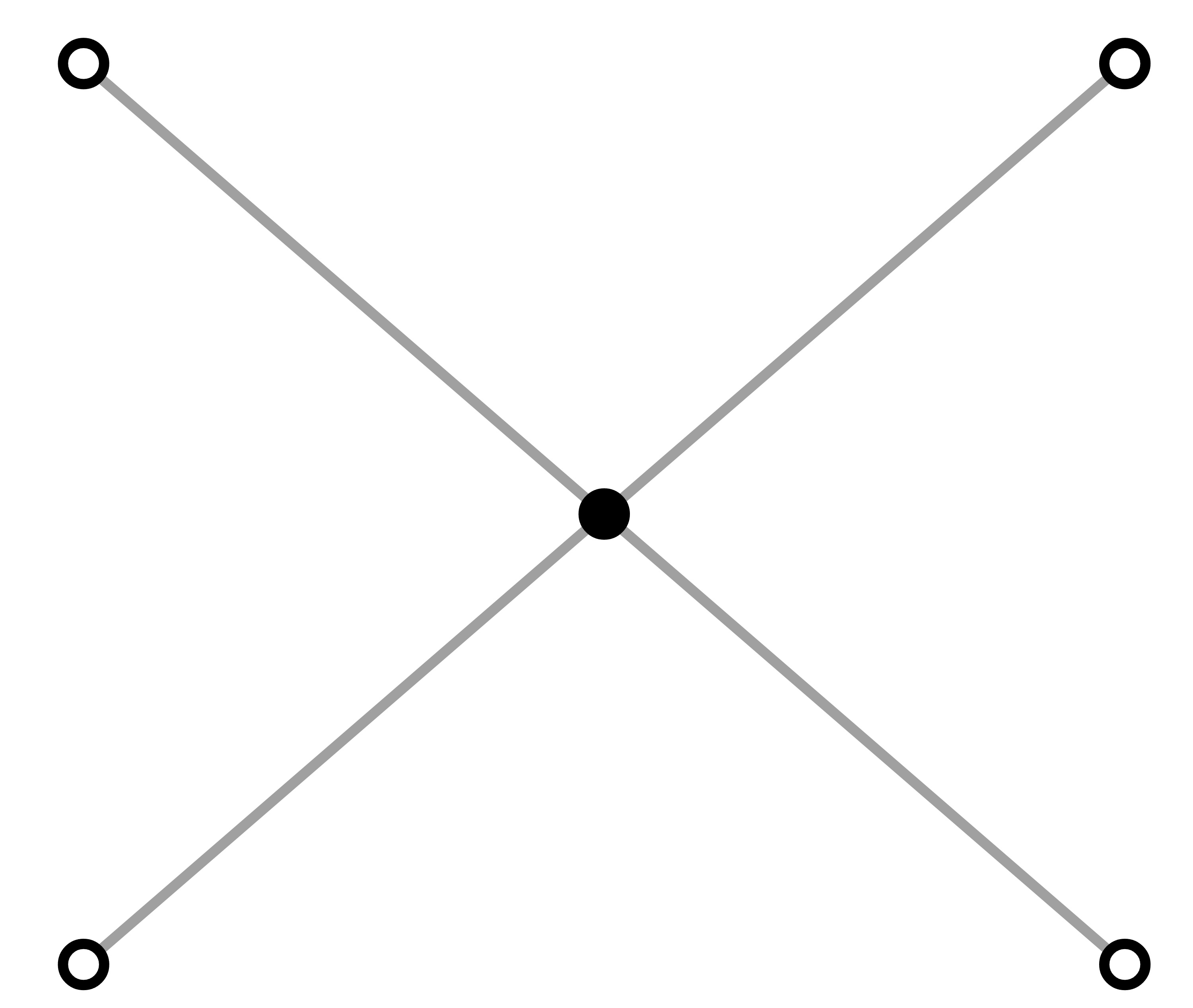}&
  \includegraphics[width=0.10\textwidth]{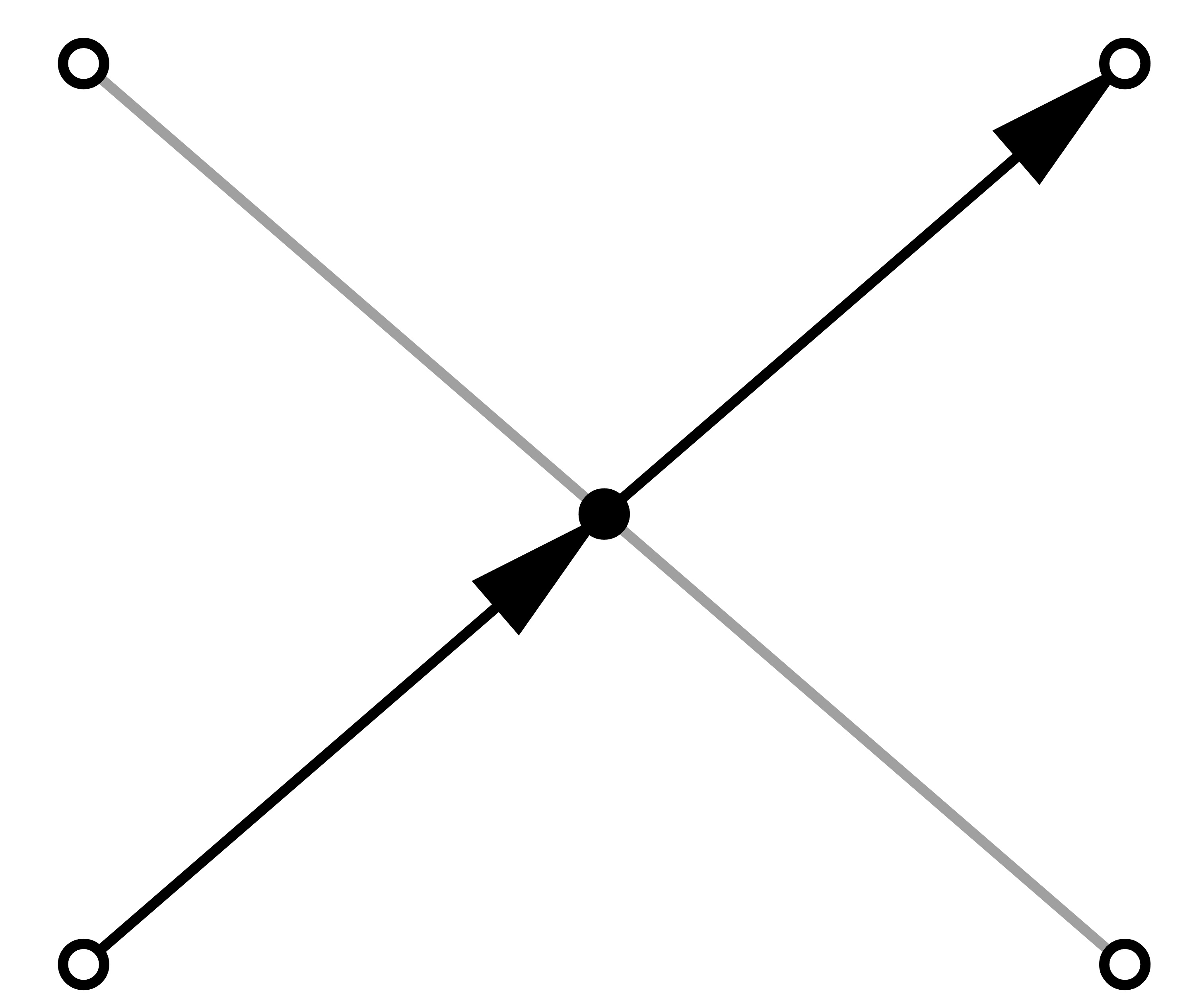}&
  \includegraphics[width=0.10\textwidth]{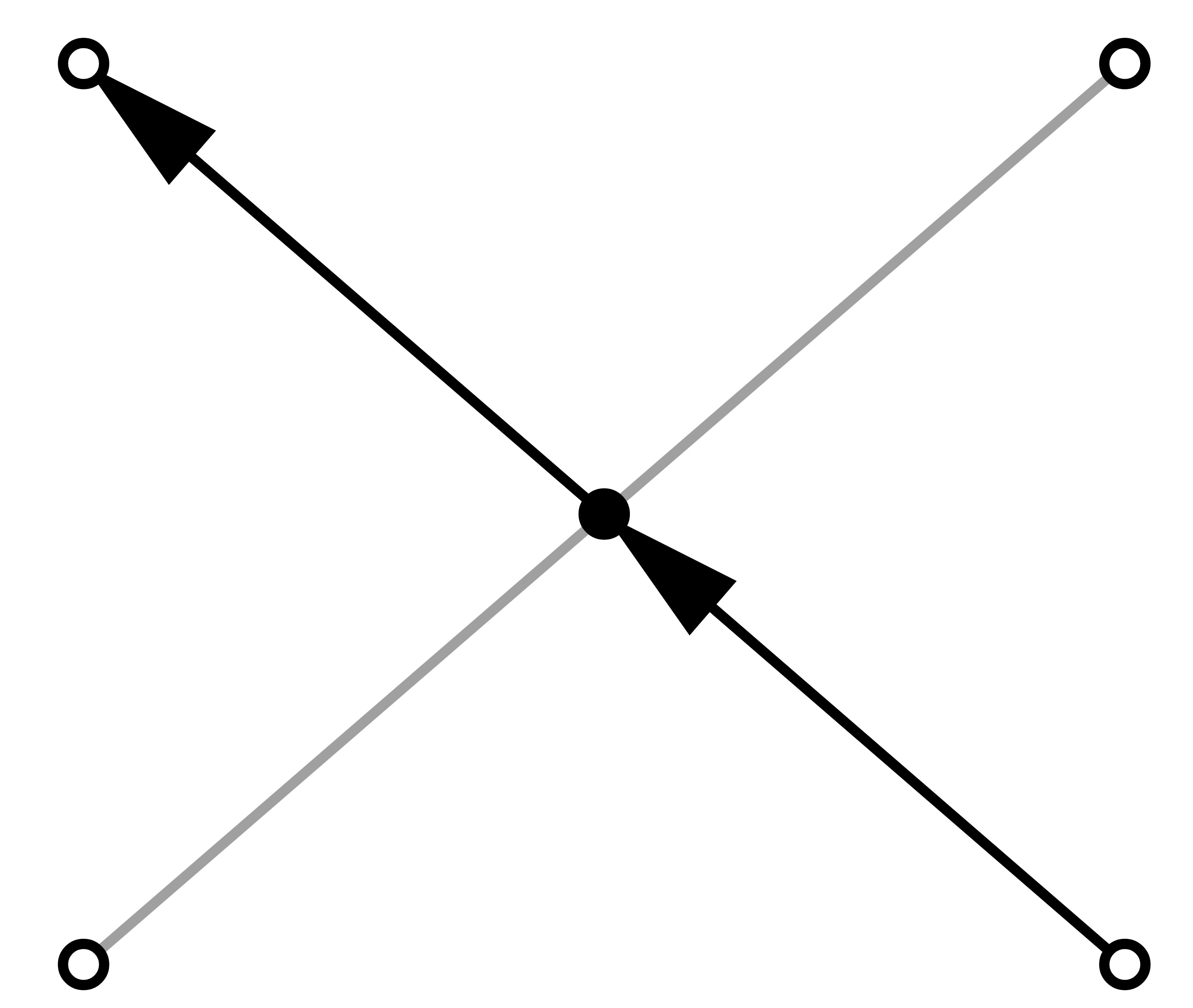}&
  \includegraphics[width=0.10\textwidth]{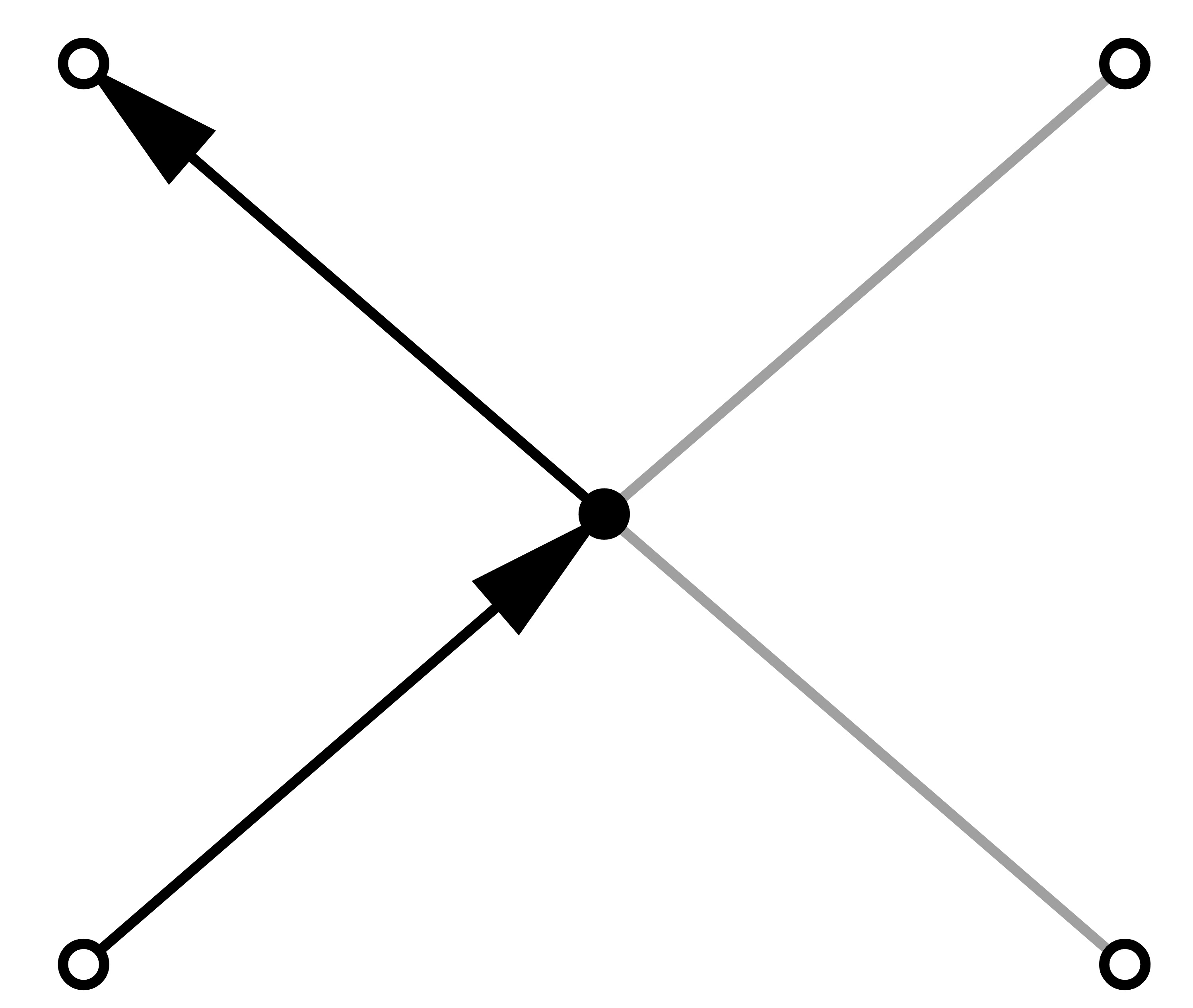}&
  \includegraphics[width=0.10\textwidth]{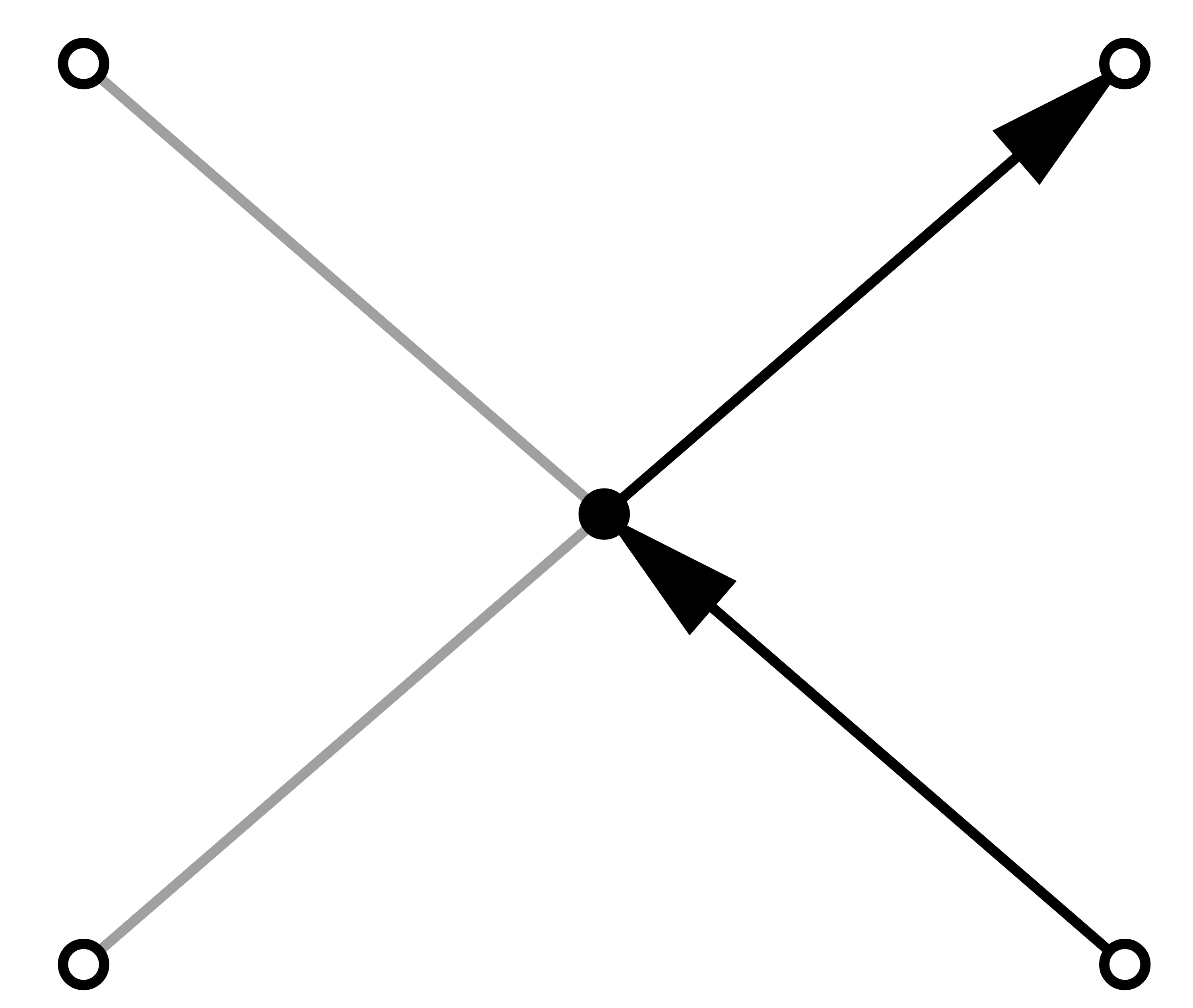}\\
  \multicolumn{2}{c}{$1$} &
  \multicolumn{2}{c}{$1/\sqrt{1-\delta^2\varepsilon^2}$} &
  \multicolumn{2}{c}{$-\delta\varepsilon/\sqrt{1-\delta^2\varepsilon^2}$}
  \end{tabular}
  \caption{\bluevar{Feynman (anti)checkers as a six-vertex model. Weights of odd (top) and even (bottom) vertices depend on mass $m$, lattice step $\varepsilon$, and small imaginary mass $\delta$. Each configuration has also an overall sign defined globally in terms of a loop decomposition.
  %There is also a globally defined overall sign.
  See Definition~\ref{def-anti-subgraphs}.}}\label{fig-6v}
\end{figure}

%\vspace{-0.4cm}

\addcontentsline{toc}{myshrink}{}
\addcontentsline{toc}{myshrink}{}

\section{Introduction}

We present a new completely elementary model \blueit{that} describes
\blueit{the} creation, annihilation, and motion of non-interacting electrons and positrons along a line (see Definitions~\ref{def-anti-alg}, \ref{def-anti-combi}, and~\ref{def-multipoint}).
It is a modification of the model known %under the names
as Feynman checkers, one-dimensional quantum walk, or Ising model at \blueit{an} imaginary temperature
(see Definition~\ref{def-mass}
and surveys~\cite{%Kempe-09,
Konno-20,SU-22,Venegas-Andraca-12}).
\blueit{It can be viewed as a six-vertex model with complex vertex weights (see Figure~\ref{fig-6v} and Definition~\ref{def-anti-subgraphs}).}

This modification preserves known identities (see \S\ref{ssec-identities})
and Fourier integral representation (see Proposition~\ref{th-equivalence})
but adds antiparticles (and thus is called \emph{Feynman anticheckers}).
The discrete model is consistent with the continuum quantum field theory,
namely, reproduces the known expected charge density as the lattice step tends to zero
(see Figure~\ref{fig-approx-q} and Corollary~\ref{cor-anti-uniform}).
It is exactly solvable via %in terms of
hypergeometric functions
(see Proposition~\ref{p-mass5}) and is described asymptotically
by Bessel and trigonometric functions (see Theorems~\ref{th-continuum-limit}--\tobereplaced{\ref{th-anti-ergenium}}{\ref{th-anti-Airy}}).
We introduce interaction resembling Fermi\blueit{'s}  theory
and get perturbation expansion (see Definition~\ref{def-fermi}
and Proposition~\ref{p-perturbation}).

\begin{figure}[bhtp]
  \centering
%\vspace{-0.4cm}
\includegraphics[width=0.3\textwidth]{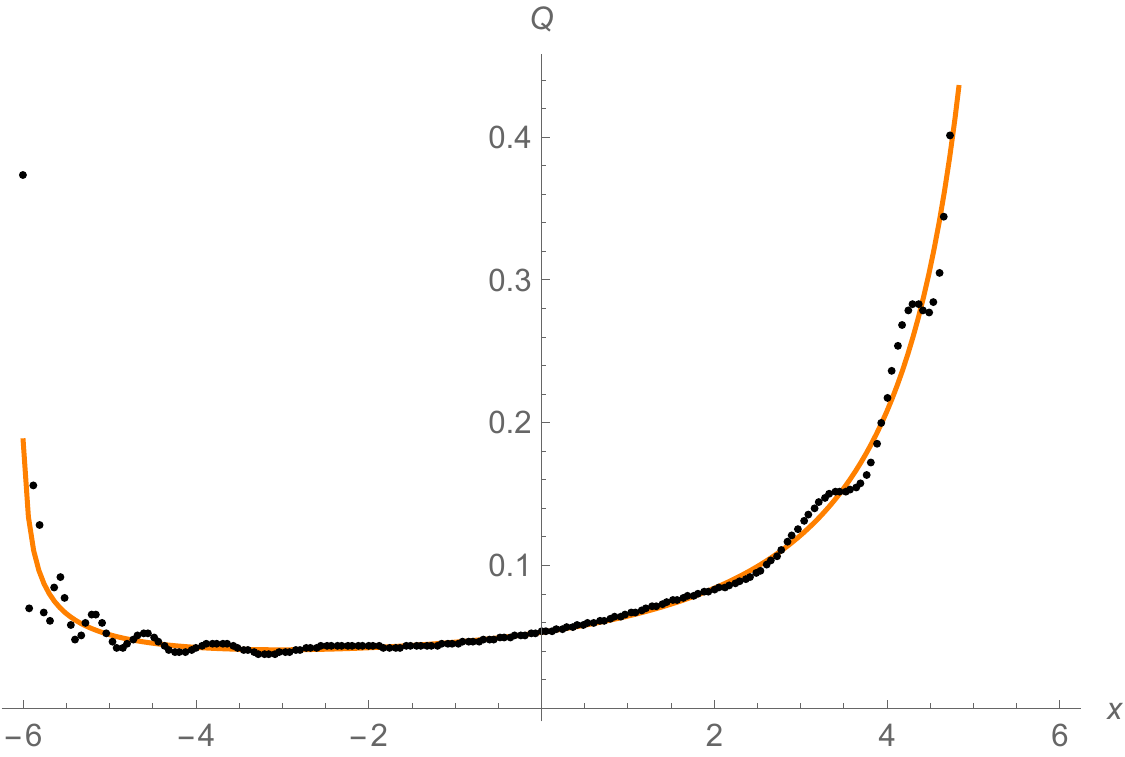}
%\dots
\includegraphics[width=0.3\textwidth]{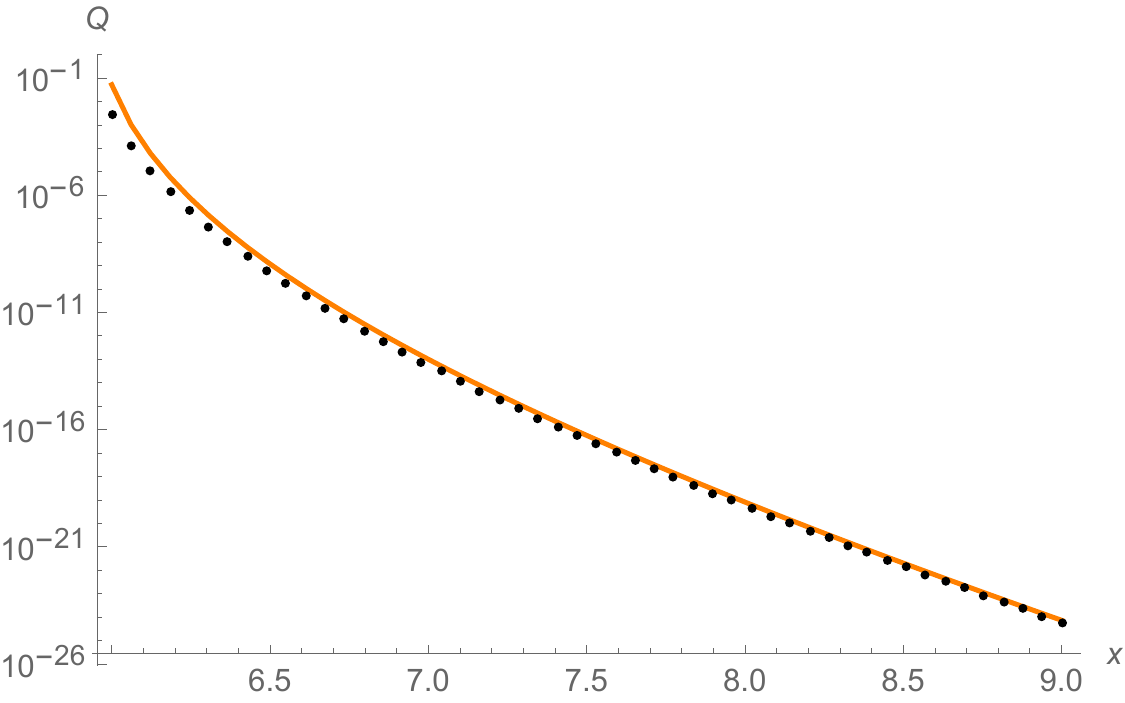}
\vspace{-0.4cm}
\caption{Normalized expected charge density
in the discrete model (dots) and continuum quantum field theory (curves).
%Dots depict the total expected charge at the two lattice points $(x,t)$ and $(x+\varepsilon,t)$
%divided by $8\varepsilon^2$ for $x/\varepsilon$ even.
%Curves depict the norm square of the spin-$1/2$ Feynman propagator~\eqref{eq-feynman-propagator}.
Plots depict the left (dots) and the right side (curves) of~\eqref{eq-cor-anti-uniform}
for mass $m=4$, lattice step $\varepsilon=0.03$, time $t=6$, and position $x$ being an even multiple of $\varepsilon$.
%See Corollary~\ref{cor-anti-uniform}.
}
\vspace{-0.2cm}
\label{fig-approx-q}
\end{figure}

\subsection{Background}

%\mscomm{!!! This subsection requires rewriting !!!}

One of the main open problems in mathematics is a rigorous definition of quantum field theory. For instance,
the case of $4$-dimensional Yang--Mills theory is a Millennium Problem.

%  the particular case of Yang--Mills theory in dimension $4$ is one of the Millennium Problems.

A perspective approach to the problem is \emph{constructive field theory}, which constructs a continuum theory as a limit of discrete ones \cite{Glimm-Jaffe-12}.
This leads to the \emph{consistency} question of whether
the discrete objects indeed approximate the desired continuum ones.
%continuum objects are indeed approximated by discrete ones.

Constructive field theory is actually as old as quantum field theory itself.
The most elementary model of electron motion known as \emph{Feynman checkers}
or \emph{quantum walk} was introduced by R.Feynman in 1940s
and first published in 1965 \cite{Feynman-Gibbs}.
Consistency with continuum quantum mechanics was posed as a problem there
\cite[Problem~2.6]{Feynman-Gibbs}; it was solved mathematically
only recently %in our survey
\cite{SU-22}. See also surveys~\cite{%Kempe-09,
Konno-20,SU-22,Venegas-Andraca-12}
on Feynman's model and its generalizations.

In \blueit{the} 1970s F.~Wegner and K.~Wilson introduced \emph{lattice gauge theory} as a computational tool for gauge theory describing all known interactions (except gravity); see \cite{Maldacena-16} for a popular-science introduction.
%Using it, Wilson established confinement of quarks in large-coupling limit. The general-coupling case remains a famous open problem.
\blueit{Electrons and more general \emph{fermions} are also incorporated in this theory.}
This theory is \emph{Euclidean} in the sense that it involves \emph{imaginary} time.
Euclidean lattice field theory became one of the main
computational tools \cite{Rothe-12} and culminated in determining the proton mass theoretically with \blueit{an} error less than $2\%$ in a sense.
There were developed methods to establish consistency, such as \emph{Reisz power counting theorem}
\cite[\S13.3]{Rothe-12}. This \blueit{led} to some rigorous constructions of field theories in dimensions 2 and 3 \cite{Glimm-Jaffe-12}.

Euclidean %lattice field
theories are related to %lattice models of
statistical physics via the equivalence between imaginary
time and temperature \cite[\S V.2]{Zee-10}.
For instance, Feynman checkers can be viewed as an Ising model at \blueit{an} imaginary
temperature \cite[\S2.2]{SU-22}, whereas \emph{Euclidean} %lattice field
%theory of \blueit{electron and more general
\blueit{fermions emerge, for instance, in the critical or near-critical Ising, dimer, and six-vertex models} at real temperature.
S.~Smirnov and coauthors %developed new methods that allowed them to
proved consistency \blueit{(in the above sense)} for a wide class of
%statistical physical
$2$-dimensional models including the Ising one and some loop $O(n)$ models \cite{Smirnov-10, KSS-23}. \blueit{The consistency for the six-vertex model is a hot research topic being developed by H.~Duminil-Copin and others \cite{Duminil-Copin-etal-22}.}
\remove{}
Euclidean %lattice field
theories %involving %spin-$1/2$ particles such as
%electrons
suffer from \emph{fermion doubling}, unavoidable
by the \emph{Nielsen--Ninomiya no-go theorem},
and often making %the lattice propagators
them inconsistent \cite[\S4.2]{Rothe-12}. % with the continuum ones.
\mscomm{Contradiction?}

A promising modern direction is \emph{Minkowskian} lattice quantum field theory \cite{Alexandru-etal-22},
where time is \emph{real}
% rather than imaginary
and fermion doubling is possibly avoided
%
%Several authors argue that Feynman checkers have the advantage of \emph{no fermion doubling} inherent in Euclidean lattice theories and avoid the Nielsen--Ninomiya no-go theorem
%\cite[...]{Bialynicki-Birula-94},
\cite[\S4.1.1]{Foster-Jacobson-17}, \cite[\S VI]{Arnault-Cedzich-22}.
%(cf.~\emph{staggered fermions} \cite[\S4.4]{Rothe-12}).
%See \cite{Zhou-19} for number-theoretic applications
\blueit{One feature of theories with real time is the emergence of \emph{oscillating integrals} instead of exponentially damped ones; such oscillating integrals can be analyzed by well-developed number-theoretic tools.}
Feynman checkers is a reasonable starting point \blueit{in this direction.}
It was an old dream to incorporate %also checker paths turning backward in time or forming loops \cite[p.~481--483]{Schweber-86}, \cite{Jacobson-85}; this would mean
creation and annihilation of electron-positron pairs in it
(see \cite[p.~481--483]{Schweber-86}, \cite{Jacobson-85}), celebrating a passage from quantum mechanics to quantum field theory (\emph{second quantization}).
One looked for a combinatorial model reproducing \emph{Feynman propagator}~\eqref{eq-feynman-propagator} %rather than the \emph{retarded} one
in the continuum limit. This cannot be achieved by any kind of quantum walks or \emph{quantum cellular automata}
(discussed e.g. in the work \cite{Arrighi-et-al} by P.~Arrighi--C.~B\'eny--T.~Farrelly and in the references there) because the Feynman propagator does not vanish for negative time (``allows propagation backward in time'').
%Several lattice models reproducing \emph{equations} of quantum field theory with real time were known (see e.g. \cite{Arrighi-et-al} and references there), but none were shown to reproduce their key \emph{solutions} such as Feynman propagator.
%This has never been achieved to our knowledge.
Other known constructions (such as \emph{hopping expansion} \cite[\S12]{Rothe-12}) did not lead to the Feynman propagator because of fermion doubling.
%certain lattice artifacts.
In the \emph{massless} case, a \emph{non-combinatorial} construction %of Feynman propagator on the lattice
was given by C.~Bender--L.~Mead--K.~Milton--D.~Sharp in \cite[\S9F]{Bender-etal-94} and~\cite[\S IV]{Bender-etal-85}. %%(those authors studied another form of lattice Dirac equation equivalent to the Feynman checkerboard with $m=0$).
%but the construction was not combinatorial;

The desired combinatorial construction %in the massive case
is finally given in this paper (realizing two steps of the program
outlined in \cite[\S\S8--9]{SU-22}).
%;this can be viewed as the second quantization of Feynman checkers.
It follows a classical approach known from
Kirchhoff matrix-tree theorem, the Kasteleyn and Kenyon theorems \cite{Levine-11,Kenyon-02}.
%and plenty of other similar results.
In this approach, a physical system (an electrical network, a moving electron, etc)
is described by a difference equation on the lattice (lattice Laplace equation, lattice Dirac equation
from Feynman checkers, etc).  The solution is expressed through
determinants, interpreted combinatorially
via loop expansion \cite[\S12.3]{Rothe-12}, and computed explicitly by
Fourier transform. In our setup, the solution is not unique
and is \emph{regularized} first by introducing a small imaginary mass ``living'' on the dual lattice.
In the proof of consistency, we overcome the \emph{sign problem} \cite{Alexandru-etal-22} by a set of number-theoretic tools.
%(see~\S\ref{ssec-outline}).
%Putting real and imaginary parts of a quantity on dual lattices is a well-known principle in discrete complex analysis \cite{Chelkak-Smirnov-11}  but completely new for lattice field theory.

%%%
%In the analysis of the new model, we realize a classical strategy
%bringing together ideas from mathematical physics, combinatorics, linear algebra, and calculus.
%\mscomm{!Rewrite the rest!} The strategy is analogous to the derivation of the Kirchhoff matrix-tree theorem, the Kasteleyn theorem [...],
%the Kenyon theorem [...], and many other similar results.
%The consistency question is
%Such models are particularly interesting %from a mathematical point of view
%because they bring together ideas from different branches of mathematics:
%mathematical physics, combinatorics, linear algebra, calculus.
%In the known results like
%the Kirchhoff matrix-tree theorem, the Kasteleyn theorem [...],
%or evaluation of lattice propagator in Euclidean lattice field theory,
%one can see a common strategy.
%One starts from a combinatorial model %arising from physics
%(electric network, dimer model, ... ),
%equipped with certain finite difference operator
%Then one finds a quantity satisfying a certain difference equation
%(Laplace operator, ???, Dirac operator).
%Then one expresses the inverse operator through determinants
%... loop expansion ...
%%%

\subsection{Organization of the paper}

In \S\ref{sec-statements} we define the new model and list its properties.
In \S\ref{sec-variations} we discuss its generalizations and
in \S\ref{sec-proofs} we give proofs. In Appendices~\ref{ssec-proofs-alternative} and~\ref{sec-axioms}
we give alternative definitions/proofs and put the model in the general framework
of quantum field theory respectively.

The paper assumes no background in physics. The definitions in
\S\ref{sec-statements}--\ref{sec-variations} are completely elementary
(in particular, we use neither Hilbert spaces nor Grassman variables).
% common for analogous models in the literature). The properties of the model
% rely only on basic calculus and the definitions of a few special functions.

The paper is written in a mathematical level of rigor, in the sense that all the definitions, conventions, and theorems (including corollaries, propositions, \blueit{and} lemmas) should be understood literally. Theorems remain true, even if cut out from the text. The proofs of theorems use the statements but not the proofs of the other ones. Most statements are much less technical than the proofs; hence the proofs are kept in a separate section (\S\ref{sec-proofs}) and long computations are kept in~\cite{SU-3}. Remarks are informal and usually not used elsewhere (hence skippable). Text outside definitions, theorems, \blueit{and} proofs is neither used formally.

\addcontentsline{toc}{myshrink}{}

\section{Feynman anticheckers: statements}
\label{sec-statements}

\subsection{Construction outline}
\label{ssec-outline}

Let us recall the definition of Feynman's original model,
and then outline how it is modified.

\begin{definition} \label{def-mass} %\mscomm{!!! Make independent from Definition 1 !!!}
Fix $\varepsilon>0$ and $m\ge 0$ called \emph{lattice step} and \emph{particle mass} respectively.
Consider the lattice $\varepsilon\mathbb{Z}^2
=\{\,(x,t):x/\varepsilon,t/\varepsilon\in\mathbb{Z}\,\}$
(see Figure~\ref{fig-paths} to the left).
%(see Figure~\ref{fig-triple-limit}).
A \emph{checker path} $s$ is a finite sequence of lattice points such that the vector from each point (except the last one) to the next one equals either $(\varepsilon,\varepsilon)$ or $(-\varepsilon,\varepsilon)$. Denote by $\mathrm{turns}(s)$ the number of points in $s$ (not the first and not the last one) such that the vectors from the point to the next and to the previous ones are orthogonal.
%\emph{Checker paths} $s$ on $\varepsilon\mathbb{Z}^2$ and their \emph{number of turns} $\mathrm{turns}(s)$ are defined analogously to those on $\mathbb{Z}^2$; see Definition~\ref{def-basic}.
To the path $s$, assign the complex number
$$
a(s):=(1+m^2\varepsilon^2)^{1-n/2}\,i
(-im\varepsilon)^{\mathrm{turns}(s)},
$$
where $n$ is the total number of points in $s$.
For each $(x,t)\in\varepsilon\mathbb{Z}^2$, where $t>0$,
denote by %\vspace{-0.4cm}
\begin{equation*}%\label{eq-def-mass}
{a}(x,t,m,\varepsilon)
:=\sum_s a(s)
%:=\sum_s(1+m^2\varepsilon^2)^{(1-t/\varepsilon)/2}\,i
%(-im\varepsilon)^{\mathrm{turns}(s)}
\end{equation*}
the sum over all checker paths $s$ from $(0,0)$ to $(x,t)$ containing $(\varepsilon,\varepsilon)$.
An empty sum is set to be zero by definition.
Denote
$$
%P(x,t,m,\varepsilon):=|{a}(x,t,m,\varepsilon)|^2,
%\quad
a_1(x,t,m,\varepsilon):=\mathrm{Re}\,a(x,t,m,\varepsilon),
\quad
a_2(x,t,m,\varepsilon):=\mathrm{Im}\,a(x,t,m,\varepsilon).
$$
%Denote by $a_1(x,t,m,\varepsilon)$ and $a_2(x,t,m,\varepsilon)$ the real and the imaginary part of $a(x,t,m,\varepsilon)$ respectively.
%For a function $a(\dots)$ denote by $a_1(\dots)$ and $a_2(\dots)$ the real and the imaginary part of $a$ respectively. In particular, \friday{$a(x,t,m,\varepsilon) =a_1(x,t,m,\varepsilon) +i\,a_2(x,t,m,\varepsilon)$.}
%Sometimes the vector $\vec a$ is considered as a complex number $a_1+ia_2$ (although complex numbers are not required for the solution of most problems). In what follows assume that $x$ and $t$ have the same parity unless the opposite is indicated.
\end{definition}

\begin{physicalinterpretation*} One interprets $|{a}(x,t,m,\varepsilon)|^2$ as the probability to find
an electron of mass $m$ in the interval of length $\varepsilon$
around the point $x$ at the time $t>0$, if the electron
was emitted from the origin at the time $0$. Hereafter we work in a natural system of units where $\hbar=c=e=1$
(setting all those constants to $1$ is possible for
vacuum permittivity $\varepsilon_0\ne 1$).
\end{physicalinterpretation*}

\begin{figure}[htbp]
%\vspace{-0.5cm}
  \centering
\includegraphics[height=2.4cm]{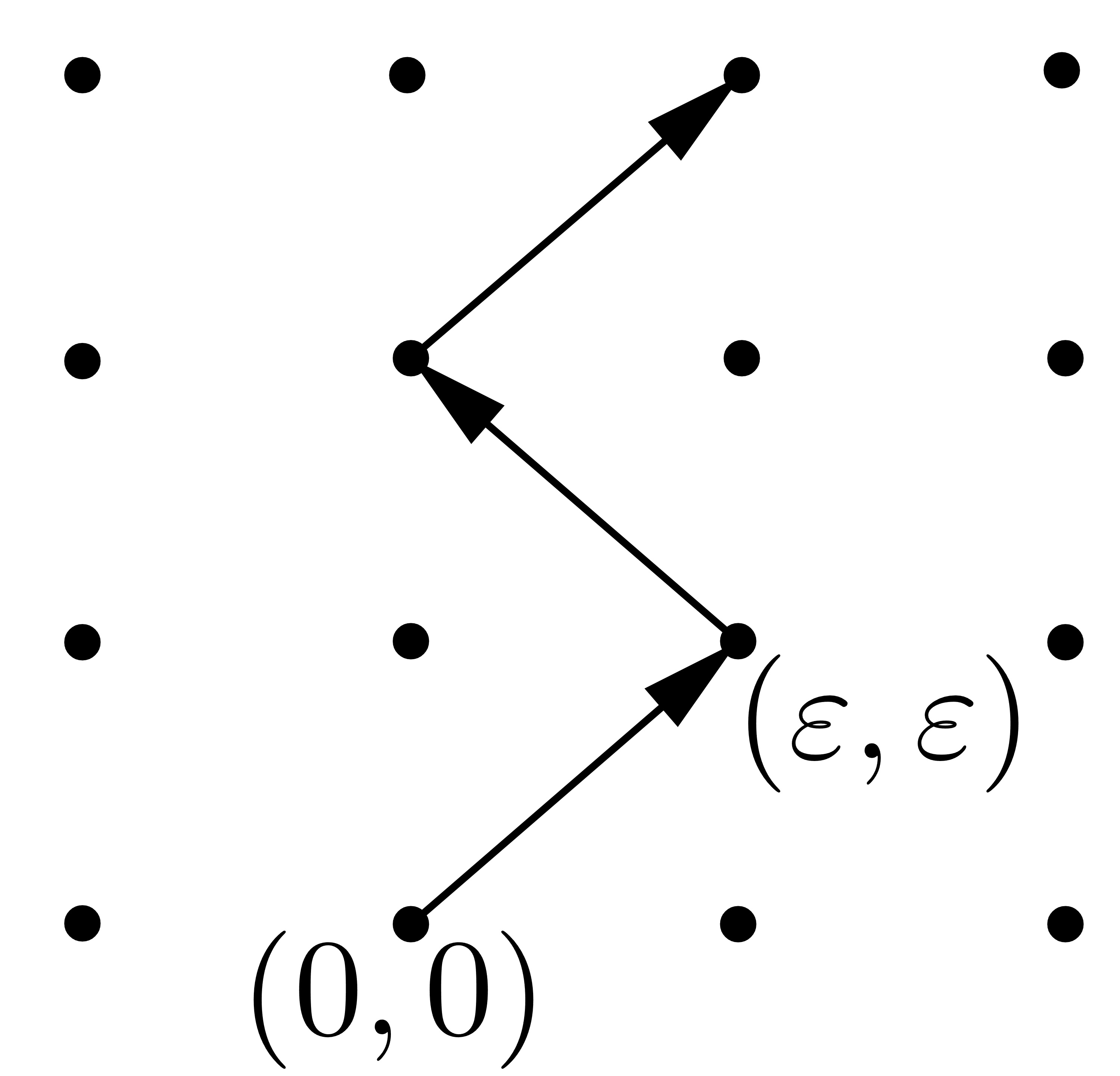}
\qquad
\includegraphics[height=2.4cm]{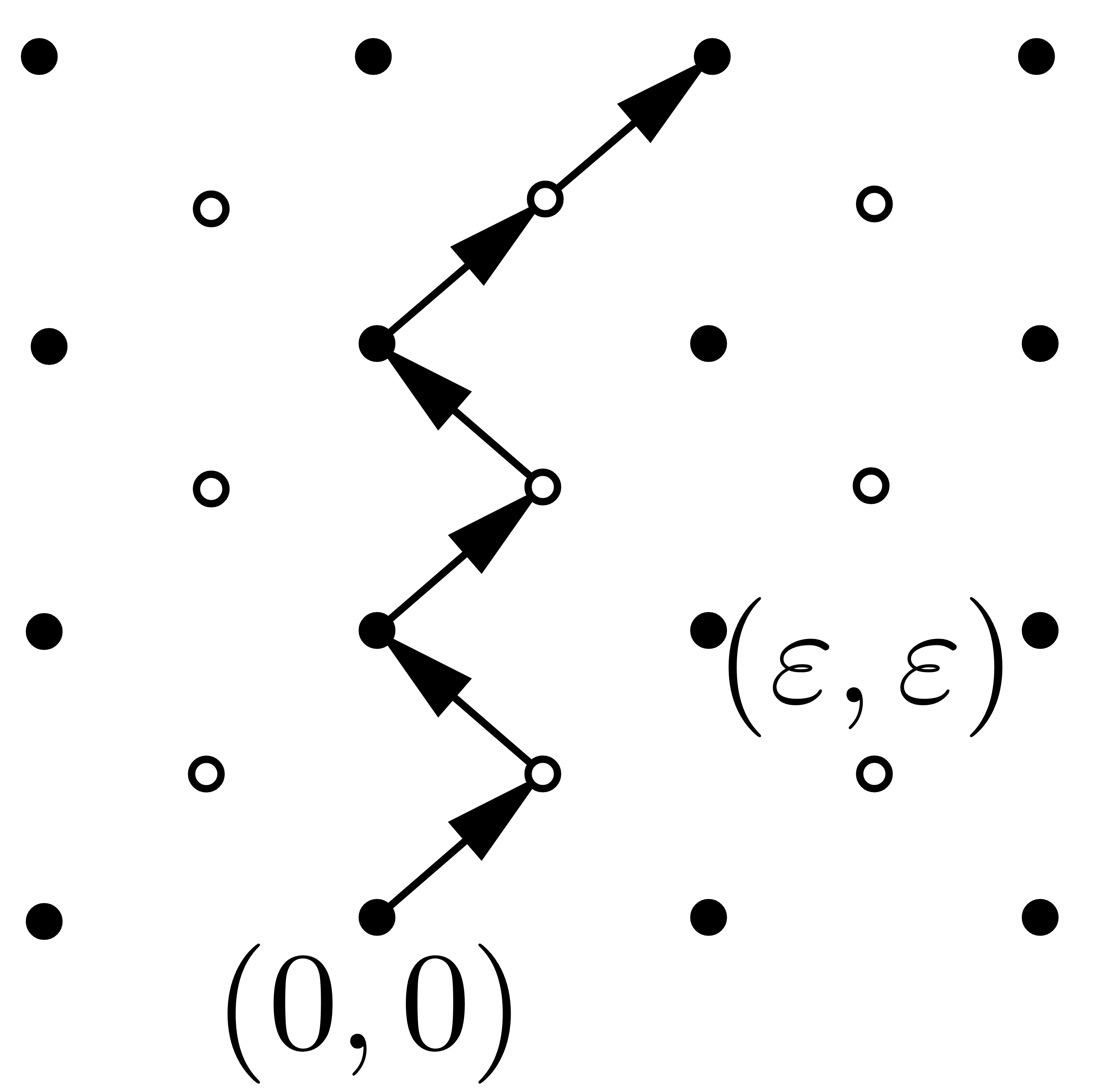}
%\vspace{-0.4cm}
\caption{A checker path (left). A generalized checker path (right).
See Definitions~\ref{def-mass} and~\ref{def-anti-combi}.} %\mscomm{Redraw with lattices !!!}
  \label{fig-paths}
\end{figure}

%For each $(x,t)\ne (\varepsilon,\varepsilon)$
We have the following recurrence relation
called \emph{lattice Dirac equation} \cite[Proposition~5]{SU-22}:
\begin{align*}%\label{eq-Dirac-mass1}
a_1(x,t,m, \varepsilon) &= \frac{1}{\sqrt{1+m^2\varepsilon^2}}
(a_1(x+\varepsilon,t-\varepsilon,m, \varepsilon)
+ m \varepsilon\, a_2(x+\varepsilon,t-\varepsilon,m, \varepsilon)),\\
%\label{eq-Dirac-mass2}
a_2(x,t,m, \varepsilon) &= \frac{1}{\sqrt{1+m^2\varepsilon^2}}
(a_2(x-\varepsilon,t-\varepsilon,m, \varepsilon)
- m \varepsilon\, a_1(x-\varepsilon,t-\varepsilon,m, \varepsilon)).
\end{align*}
%%%
%It is called \emph{lattice Dirac equation} because
%in the limit $\varepsilon\to 0$, it reproduces the \emph{Dirac equation in $1$ space- and $1$ time-dimension}
%(where $a_1(x,t)$ and $a_2(x,t)$ are two unknown smooth functions)
%$$\begin{pmatrix}
%m  & \partial/\partial x-\partial/\partial t \\
%\partial/\partial x+\partial/\partial t & m
%\end{pmatrix}
%\begin{pmatrix}
%a_2(x,t%,m,\varepsilon
%) \\ a_1(x,t%,m,\varepsilon
%)
%\end{pmatrix}=0.$$
%%%

%This is a lattice analogue of Dirac equation.

Informally, the new model \blueit{(see Definition~\ref{def-anti-alg})}
is obtained by the following modification of lattice Dirac equation:
\begin{description}
\item[Step 0:] the functions $a_1$ and $a_2$ are extended to the \emph{dual} lattice,
shifts by $\pm\varepsilon$ are replaced by shifts by $\pm\varepsilon/2$ in their arguments, and
a term vanishing outside the origin is added;
\item[Step 1:] the particle mass acquires small imaginary part which we eventually tend to zero;
\item[Step 2:] on the dual lattice, the mass is replaced by its imaginary part.
\end{description}
This makes lattice Dirac equation uniquely solvable in $L^2$,
and the solution is much different from $(a_1,a_2)$:
we get two complex functions rather than components of one complex function.

The elementary combinatorial definition (see Definition~\ref{def-anti-combi}) is obtained from Feynman's one (see Definition~\ref{def-mass}) by slightly more involved modification
starting with the same Step~1:
\begin{description}
%\item[Step 1${}'$:] the particle mass acquires small imaginary part which we eventually tend to zero;
\item[Step 2${}'$:] just like the real mass is ``responsible'' for turns at the points of the lattice, the imaginary one allows turns at the points of the \emph{dual} lattice (see Figure~\ref{fig-paths} to the right);
\item[Step 3${}'$:] the infinite lattice is replaced by a torus with the size eventually tending to infinity;
\item[Step 4${}'$:] the sum over checker paths is replaced by a ratio of sums over loop configurations.
\end{description}
The resulting loops are interpreted as the \emph{Dirac sea} of electrons filling the whole space, and the edges not in the loops form paths of holes in the sea, that is, positrons.

In this informal outline, Steps~2 and~2${}'$ are completely new whereas the other ones are standard. The former reflect a general principle that the real and the imaginary part of a quantity should be always put on dual lattices.
Thus in what follows we consider a new lattice which is the disjoint union
of the initial lattice $\varepsilon\mathbb{Z}^2$ and its dual; the latter become sublattices.

\blueit{Each of Steps~0-3${}'$ is necessary. Omitting any one %of them
makes the model ill-defined or equivalent to %original
Feynman's one (see Remarks~\ref{rem-delta-zero} and~\ref{rem-z}).
As for Step~4${}'$, the result of omitting it is unknown.}
%We do not know what omitting Step~4${}'$ leads to.}
%Omitting Step~2 or~2${}'$ makes the model ill-defined (see Remarks~\ref{rem-delta-zero} and~\ref{rem-z}). Omitting any other step makes it equivalent to original Feynman's one.}

\begin{remark}
\blueit{Both the old model and the new one can be viewed as \emph{six-vertex models} with certain complex weights. For instance, in Definition~\ref{def-mass}, the number $a(s)$ can be alternatively defined as the product of the vertex weights shown in Figure~\ref{fig-6v} to the top, over all the lattice vertices. (Here the thin gray segments are the ones not contained in the checker path $s$, and the leftmost vertex type cannot actually occur in the setup of Definition~\ref{def-mass}.) In the new model, there is an additional contribution from the vertices of the dual lattice with weights shown in Figure~\ref{fig-6v} to the bottom, and also a globally defined overall sign. We return to this topic in Definition~\ref{def-anti-subgraphs}.}
% and now realize the outlined plan.}
\end{remark}

%\vspace{-0.3cm}

\subsection{Axiomatic definition}
\label{ssec-alg-def}

\begin{definition} \label{def-anti-alg}
Fix $\varepsilon,m,\delta>0$ called \emph{lattice step, particle mass}, and \emph{small imaginary mass} respectively. Assume $\delta\varepsilon<1$.
For two elements $x,y$ of the same set
%(for instance, two real numbers or two edges)
denote
\begin{equation*}%\label{eq-Kronecker}
\delta_{xy}:=
\begin{cases}
1, &\text{for }x=y,\\
0, &\text{for }x\ne y.
\end{cases}
\end{equation*}
%Denote $\left(\varepsilon\mathbb{Z}^2\right)^*
%:=\{(x+\varepsilon/2,t+\varepsilon/2):(x,t)\in\varepsilon\mathbb{Z}^2\}$.
Define a pair of complex-valued functions $A_k(x,t)=A_k(x,t,m,\varepsilon,\delta)$,
where $k\in\{1,2\}$, on the set
$\{\,(x,t)\in\mathbb{R}^2:
2x/\varepsilon,2t/\varepsilon,(x+t)/\varepsilon\in\mathbb{Z}\,\}$
%$\varepsilon\mathbb{Z}^2\cup\left(\varepsilon\mathbb{Z}^2\right)^*$
by the following $3$ conditions:
\begin{description}
%\item[Dirac equation with real mass:]
\item[axiom~1:] for each $(x,t)$ with $2x/\varepsilon$ and $2t/\varepsilon$ even,
\begin{equation*}%\label{eq-def-anti-alg-1}
\begin{aligned}
A_1(x,t%,m, \varepsilon,\delta
) &= \frac{1}{\sqrt{1+m^2\varepsilon^2}}
\left(A_1\left(x+\frac{\varepsilon}{2},t-\frac{\varepsilon}{2}%,m, \varepsilon,\delta
\right)
+ m \varepsilon\, A_2\left(x+\frac{\varepsilon}{2},t-\frac{\varepsilon}{2}%,m, \varepsilon,\delta
\right)\right),\\
A_2(x,t%,m, \varepsilon,\delta
) &= \frac{1}{\sqrt{1+m^2\varepsilon^2}}
\left(A_2\left(x-\frac{\varepsilon}{2},t-\frac{\varepsilon}{2}%,m, \varepsilon,\delta
\right)
- m \varepsilon\, A_1\left(x-\frac{\varepsilon}{2},t-\frac{\varepsilon}{2}%,m, \varepsilon,\delta
\right)\right)
+2\delta_{x0}\delta_{t0};
\end{aligned}
\end{equation*}
%\item[Dirac equation with imaginary mass:]
\item[axiom~2:]
for each $(x,t)$ with $2x/\varepsilon$ and $2t/\varepsilon$ odd,
\begin{equation*}%\label{eq-def-anti-alg-2}
\begin{aligned}
A_1(x,t%,m, \varepsilon,\delta
) &= \mathrlap{
\frac{1}{\sqrt{1-\delta^2\varepsilon^2}}
\left(A_1\left(x+\frac{\varepsilon}{2},t-\frac{\varepsilon}{2}%,m, \varepsilon,\delta
\right)
-i\delta\varepsilon\, A_2\left(x+\frac{\varepsilon}{2},t-\frac{\varepsilon}{2}%,m, \varepsilon,\delta
\right)\right),
}
\hphantom{\frac{1}{\sqrt{1+m^2\varepsilon^2}}
\left(A_2\left(x-\frac{\varepsilon}{2},t-\frac{\varepsilon}{2}%,m, \varepsilon,\delta
\right)
- m \varepsilon\, A_1\left(x-\frac{\varepsilon}{2},t-\frac{\varepsilon}{2}%,m, \varepsilon,\delta
\right)\right)
+2\delta_{x0}\delta_{t0};
}
\\
A_2(x,t%,m, \varepsilon,\delta
) &= \frac{1}{\sqrt{1-\delta^2\varepsilon^2}}
\left(A_2\left(x-\frac{\varepsilon}{2},t-\frac{\varepsilon}{2}%,m, \varepsilon,\delta
\right)
+i\delta\varepsilon\, A_1\left(x-\frac{\varepsilon}{2},t-\frac{\varepsilon}{2}%,m, \varepsilon,\delta
\right)\right);
\end{aligned}
\end{equation*}
%\item[Finite $L^2$ norm:]
\item[axiom~3:]
%\begin{equation}\label{eq-def-anti-alg-3}
$\sum_{(x,t)\in\varepsilon\mathbb{Z}^2%\cup\left(\varepsilon\mathbb{Z}^2\right)^*
}
\left(\left|A_1(x,t%,m,\varepsilon,\delta
)\right|^2+\left|A_2(x,t%,m,\varepsilon,\delta
)\right|^2\right)<\infty.$
%\end{equation}
\end{description}

For each $k\in\{1,2\}$ and $(x,t)\in\varepsilon\mathbb{Z}^2$
define the \emph{lattice propagator} to be the limit
\begin{equation}\label{eq-def-anti-alg}
\widetilde{A}_k(x,t):=\widetilde{A}_k(x,t,m,\varepsilon)
:=\lim_{\delta\searrow0}
A_k(x,t,m,\varepsilon,\delta).
\end{equation}
\end{definition}

\blueit{As usual for axiomatic definitions, the price for conciseness is that even the existence of the defined object is not obvious. This is going to be our first theorem.}

\begin{theorem}[Consistency of the axioms and concordance to Feynman's model]
\label{th-well-defined}
The functions $A_k(x,t,m,\varepsilon,\delta)$ and the lattice propagator
$\widetilde{A}_k(x,t,m,\varepsilon)$ are well-defined, that is,
there exists a unique pair of functions satisfying axioms~1--3, and
limit~\eqref{eq-def-anti-alg} exists
for each %$m,\varepsilon>0$,
$(x,t)\in\varepsilon \mathbb{Z}^2$ and $k\in\{1,2\}$.
For $(x+t)/\varepsilon+k$ even, the limit is real and given by
\begin{align*}
\widetilde{A}_1(x,t,m,\varepsilon)
&=a_1(x,|t|+\varepsilon,m,\varepsilon),
&&\text{for $(x+t)/\varepsilon$ odd,}\\
\widetilde{A}_2(x,t,m,\varepsilon)&=
\pm a_2(\pm x+\varepsilon,|t|+\varepsilon,m,\varepsilon),
&&\text{for $(x+t)/\varepsilon$ even,}
\end{align*}
where the minus signs are taken when $t<0$.
For $(x+t)/\varepsilon+k$ odd, limit~\eqref{eq-def-anti-alg} is purely imaginary.
\end{theorem}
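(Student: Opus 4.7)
The plan is to reduce axioms 1--3 to an algebraic problem via Fourier transform, solve it for each $\delta>0$, and then analyze the $\delta\searrow 0$ limit carefully.

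First, I would note that the combined lattice $\{(x,t):2x/\varepsilon,2t/\varepsilon,(x+t)/\varepsilon\in\mathbb{Z}\}$ is the disjoint union of the primal sublattice $\varepsilon\mathbb{Z}^2$ (on which axiom~1 acts) and the dual sublattice $(\varepsilon/2,\varepsilon/2)+\varepsilon\mathbb{Z}^2$ (on which axiom~2 acts). Both sublattices are translates of $\varepsilon\mathbb{Z}^2$, so a discrete Fourier transform over $\varepsilon\mathbb{Z}^2$ converts axioms~1 and~2 into a $4\times 4$ linear system: the four unknowns are the Fourier images of $A_1,A_2$ on each of the two sublattices, and the right-hand side is the constant coming from the source $2\delta_{x0}\delta_{t0}$ in axiom~1. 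Axiom~3 selects the inverse Fourier integral among all formal solutions.

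Next, I would solve this algebraic system. Axiom~1 expresses the primal values in terms of the dual ones (via multiplication by phases $e^{\pm i(p\varepsilon - E\varepsilon)/2}$ and the mixing by $m\varepsilon$), and axiom~2 does the reverse (with the mixing by $i\delta$). Substituting one pair into the other produces a $2\times 2$ eigenvalue-style equation whose determinant I expect to factor as a product of two factors of the form $1\pm(\text{something involving }m,\delta,p,E)$. For $\delta>0$ these factors are bounded away from zero on the whole torus of frequencies --- this is exactly the role of $\delta$ as a regulator, moving the poles of the symbol off the real contour. Hence the symbol is invertible, its inverse is continuous and bounded, the inverse Fourier integral lies in $\ell^2$, and existence plus uniqueness of $A_k(x,t,m,\varepsilon,\delta)$ follows.

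Then I would take the limit $\delta\searrow 0$. As $\delta\to 0$ the poles of the integrand collide with the real torus, but the sign $\delta>0$ prescribes from which side; by a Sokhotski--Plemelj argument the pointwise limit exists on $\varepsilon\mathbb{Z}^2$. The resulting integral splits into contributions from the two branches of the dispersion relation, and comparing with the Fourier representation of $a(x,t,m,\varepsilon)$ supplied by Proposition~\ref{th-equivalence} identifies one branch with $a_k$ at a shifted argument of the form $(x,|t|+\varepsilon)$ or $(\pm x+\varepsilon,|t|+\varepsilon)$, while the other branch matches after using the symmetries of $a_1,a_2$. The choice of sign $\pm$ in front of $a_2$ and of $\pm x$ arises from which contour half one closes, which is determined by $\mathrm{sgn}(t)$. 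The parity statements (real for $(x+t)/\varepsilon+k$ even, purely imaginary otherwise) follow because complex conjugation of the symbol interchanges the two sublattices in a way compatible with these parities.

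The main obstacle I expect is the last step: tracking signs and shifts through the residue calculation and matching them exactly to Feynman's $a_1,a_2$, and in particular seeing why the time argument is shifted by $\varepsilon$ rather than $\varepsilon/2$ (as one might naively guess from the half-step recurrences). A secondary but conceptually important point is that the same solution, constructed without any restriction on the sign of $t$, automatically produces the ``backward in time'' values that encode the antiparticle content; this is the feature that ultimately drives Corollary~\ref{cor-anti-uniform}, and any sign mistake in the $t<0$ case would break it.
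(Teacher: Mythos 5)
Your proposal follows essentially the same route as the paper: substitute the dual-sublattice recurrence into the primal one, apply the Fourier transform over $\varepsilon\mathbb{Z}^2$ so that Plancherel gives existence and uniqueness for $\delta>0$ (the denominator of the symbol has imaginary part $-m\varepsilon\delta\ne 0$, which is exactly the regularizing role of $\delta$ you describe), then evaluate the $\omega$-integral by residues as $\delta\searrow 0$ and match the result against the known Fourier representation of Feynman's $a_k$, with the $t<0$ case and the real/imaginary alternation handled by changes of variables in the integral. The sign- and shift-tracking you flag as the main obstacle is precisely what the paper delegates to the earlier results \cite[Propositions~12 and~17]{SU-22}.
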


Again %we see that the
real and imaginary parts appear on dual sublattices (cf.~\emph{staggered fermions} \cite[\S4.4]{Rothe-12}).

\begin{physicalinterpretation*} One interprets
\begin{equation}\label{eq-q}
Q\left(x,t,m,\varepsilon\right):=\frac{1}{2}\left|\widetilde{A}_1\left(x,t,m,\varepsilon\right)\right|^2+
\frac{1}{2}\left|\widetilde{A}_2\left(x,t,m,\varepsilon\right)\right|^2
\end{equation}
as \emph{minus the expected charge} in the interval of length $\varepsilon$ around the point $x$ at the time $t>0$, if an electron of mass $m$ was emitted from the origin at the time $0$ (or a positron is absorbed there). Unlike the original Feynman checkers, this value cannot be interpreted as probability \cite[\S9.2]{SU-22}:
virtual electrons and positrons also contribute to the charge.
\end{physicalinterpretation*}

%%%%
%\begin{notation*} In what follows we entirely switch to a more convenient notation
%\begin{align*}
%  \widetilde{A}_1(x,t):=\widetilde{A}_1(x,t,m,\varepsilon)
%&:={A}_1(x,t+\varepsilon,m,\varepsilon),\\
%  \widetilde{A}_2(x,t):=\widetilde{A}_2(x,t,m,\varepsilon)
%&:={A}_2(x+\varepsilon,t+\varepsilon,m,\varepsilon).
%\end{align*}
%\end{notation*}
%%%%

\subsection{Exact solution}%Analytic definition}
\label{ssec-analytic}

%\mscomm{This has been borrowed from checkers.tex and then edited. The actual version is here.}

%\mscomm{The first result of this section is not included in checkers.tex. The rest  of this section is taken from the previous paper and requires editing.}

Now we state a result which reduces investigation of the new model to analysis of certain integral. One can use it as an alternative definition of the model.
The integral coincides with the one arising in the original Feynman's model \cite[Proposition~12]{SU-22} but now is computed for arbitrary parity of $(x+t)/\varepsilon$.

%%%%%%%%%
%\textbf{Notation.} Throughout we denote
%\begin{equation}\label{eq-omega}
%\omega_p:=\frac{1}{\varepsilon}\arccos\left(\frac{\cos p\varepsilon}{\sqrt{1+m^2\varepsilon^2}}\right).
%\end{equation}
%For two elements $x,y$ of the same set
%(for instance, two real numbers or two edges) denote
%\begin{equation*}%\label{eq-Kronecker}
%\delta_{xy}:=
%\begin{cases}
%1, &\text{for }x=y,\\
%0, &\text{for }x\ne y.
%\end{cases}
%\end{equation*}
%We also use \emph{Heaviside step function} and \emph{fractional binomial coefficients}
%for $x\in\mathbb{R}$, $0\le k\in\mathbb{Z}$:
%\begin{align*}
%\theta(x)&:=
%\begin{cases}
%                            1, & \mbox{if } x\ge0, \\
%                            0, & \mbox{if } x<0;
%\end{cases}
%&
%\binom{x}{k}&:=\frac{1}{k!}\prod_{j=1}^{k}(x-j+1).
%\end{align*}
%%%%%%

\begin{proposition}[Fourier integral]\label{th-equivalence}
%\textup{(Cf.~\cite[Proposition~12]{SU-22})}
For each $m,\varepsilon>0$ and $(x,t)\in\varepsilon\mathbb{Z}^2$ we have
\begin{equation*}
%  \label{eq-def-anti1}
  \begin{aligned}
  \widetilde{A}_1(x,t,m,\varepsilon)&=
  \pm\frac{im\varepsilon^2}{2\pi}
  \int_{-\pi/\varepsilon}^{\pi/\varepsilon}
  \frac{e^{i p x-i\omega_pt}\,dp}
  {\sqrt{m^2\varepsilon^2+\sin^2(p\varepsilon)}};\\
  \widetilde{A}_2(x,t,m,\varepsilon)&=
  \pm\frac{\varepsilon}{2\pi}\int_{-\pi/\varepsilon}^{\pi/\varepsilon}
  \left(1+
  \frac{\sin (p\varepsilon)} {\sqrt{m^2\varepsilon^2+\sin^2(p\varepsilon)}}\right) e^{ipx-i\omega_pt}\,dp,
  \end{aligned}
\end{equation*}
where the minus sign in the expression for $\widetilde{A}_k$ is taken for $t<0$ and $(x+t)/\varepsilon+k$ even, and
\begin{equation}\label{eq-omega}
\omega_p:=\frac{1}{\varepsilon}\arccos\left(\frac{\cos p\varepsilon}{\sqrt{1+m^2\varepsilon^2}}\right).
\end{equation}
\end{proposition}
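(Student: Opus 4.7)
The plan is to compute $A_k(x,t,m,\varepsilon,\delta)$ for fixed $\delta>0$ via Fourier transform, invert the $\omega$-integral by residues, and then pass to the limit $\delta\searrow 0$ inside the remaining integral over $p$.

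First, I would eliminate the values of $A_k$ on the dual (half-integer) sublattice: axiom~2 expresses them as linear combinations of values at $(x\pm\varepsilon,t-\varepsilon)$ and $(x,t-\varepsilon)$, and substituting into axiom~1 yields, for each $(x,t)\in\varepsilon\mathbb{Z}^2$, a closed two-component recurrence on the even sublattice in which the source $2\delta_{x0}\delta_{t0}$ appears only in the equation for $A_2$. Applying the discrete Fourier transform $\widehat A_k(p,\omega):=\sum_{(x,t)\in\varepsilon\mathbb{Z}^2}A_k(x,t)e^{-ipx+i\omega t}$ and writing $v:=e^{i\omega\varepsilon}$, the recurrence becomes a $2\times 2$ linear system whose determinant simplifies to
\[
D(p,\omega)=v^{2}-2\alpha v(\cos p\varepsilon+i\delta m\varepsilon)+1,\qquad
\alpha:=\frac{1}{\sqrt{(1+m^{2}\varepsilon^{2})(1-\delta^{2})}}.
\]
A direct check shows $|D|$ is bounded away from $0$ on the torus $\{|v|=1\}$ for each $\delta\in(0,1)$; this justifies axiom~3 \emph{a~posteriori} and lets Cramer's rule give explicit rational formulas for $\widehat A_1$ and $\widehat A_2$.

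Next, I would invert the $\omega$-transform by the substitution $d\omega=dv/(iv\varepsilon)$ and evaluate the contour integral $\frac{1}{2\pi i}\oint_{|v|=1}\widehat A_k\,v^{-t/\varepsilon-1}\,dv$ by residues. The two roots $v_{\pm}$ of $D$ satisfy $v_{+}v_{-}=1$, coincide with $e^{\pm i\omega_p\varepsilon}$ at $\delta=0$, and a short perturbative calculation in $\delta$ shows $|v_{+}|>1>|v_{-}|$ throughout $0<\delta<1$. Estimating $\widehat A_k\,v^{-t/\varepsilon-1}$ at $v=\infty$ and at $v=0$ then reduces the contour integral to a single residue: $-\operatorname{Res}_{v_{+}}$ when $t\ge 0$ and $\operatorname{Res}_{v_{-}}$ when $t<0$. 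Uniform bounds on the integrands in a neighbourhood of the unit circle justify, by dominated convergence, exchanging the limit $\delta\searrow 0$ with the remaining integral over $p$.

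Finally, evaluating the residues at $\delta=0$ with the help of the identities $\sqrt{1+m^{2}\varepsilon^{2}}\cos\omega_p\varepsilon=\cos p\varepsilon$ and $\sqrt{1+m^{2}\varepsilon^{2}}\sin\omega_p\varepsilon=\sqrt{m^{2}\varepsilon^{2}+\sin^{2}p\varepsilon}$ yields, after the inverse $p$-Fourier transform, exactly the formulas of the proposition with the ``$+$'' sign whenever $t\ge 0$, and with $e^{+i\omega_p t}$ in place of $e^{-i\omega_p t}$ whenever $t<0$. The advertised ``$\pm$'' convention is then recovered via the substitution $p\mapsto p+\pi/\varepsilon$, which uses the identity $\omega_{p+\pi/\varepsilon}=\pi/\varepsilon-\omega_p$ together with $e^{i\pi x/\varepsilon}=(-1)^{x/\varepsilon}$ and $e^{i\pi t/\varepsilon}=(-1)^{t/\varepsilon}$ to convert $e^{+i\omega_p t}$ into $\pm e^{-i\omega_p t}$ according to the parity of $(x+t)/\varepsilon+k$. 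I expect the main obstacle to be not the contour integration itself, which is routine, but the careful bookkeeping of the half-integer shifts, parity conditions, and signs throughout; the reduction to the even sublattice in the very first step is essential for keeping this tractable.
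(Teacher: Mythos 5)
Your proposal is correct and follows essentially the same route as the paper: substitute axiom~2 into axiom~1, solve the transformed $2\times2$ system by Cramer's rule together with the Plancherel bijection (the paper's Proposition~\ref{l-double-fourier-finite}), and then convert the double integral to the single Fourier integral by residues in $v=e^{i\omega\varepsilon}$, taking $\delta\searrow 0$ last. The only difference is that the paper outsources the contour-integration step to \cite[Proposition~17]{SU-22} (via Proposition~\ref{cor-double-fourier-anti}), whereas you carry out the residue computation and the sign/parity bookkeeping explicitly.
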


%This proposition is proved in \S\ref{sec-proofs}.

\begin{physicalinterpretation*} Fourier integral represents a wave emitted by a point source as a superposition of waves of wavelength $2\pi/p$ and frequency~$\omega_p$.
%Relation~\eqref{eq-omega} between $p$ and $\omega_p$ is called the \emph{dispersion relation}.
\emph{Plank} and \emph{de Broglie relations} assert that $\omega_p$ and $p$ are the energy and the momentum \blueit{associated to} the waves.
As $\varepsilon\to 0$, %\emph{dispersion relation}~\eqref{eq-omega}
the energy $\omega_p$ tends to the expression $\sqrt{m^2+p^2}$ (see~Lemma~\ref{l-g-difference}).
The latter expression is standard; it generalizes \emph{Einstein formula} $\hbar\omega_0=mc^2$
relating particle energy and mass (recall that $\hbar=c=1$ in our units).
Fourier integral resembles the \emph{spin-$1/2$ Feynman propagator}
describing \blueit{the} creation, annihilation and motion of non-interacting electrons and positrons along a line
in quantum field theory
\cite[(6.45),(6.50)--(6.51)]{Folland}:
\begin{equation}\label{eq-feynman-propagator-fourier}
G^F(x,t,m)=
\frac{1}{4\pi}
\int_{-\infty}^{+\infty}
\begin{pmatrix}
\frac{im}{\sqrt{m^2+p^2}} & 1+\frac{p}{\sqrt{m^2+p^2}} \\
-1+\frac{p}{\sqrt{m^2+p^2}} & \frac{im}{\sqrt{m^2+p^2}}
\end{pmatrix}
e^{i p x-i\sqrt{m^2+p^2}t}\,dp
\qquad\text{for }t>0.
%%%
%\frac{1}{4\pi}
%\int_{-\infty}^{+\infty}
%\begin{pmatrix}
%im/\sqrt{m^2+p^2} & \mathrm{sgn}(t)+p/\sqrt{m^2+p^2} \\
%-\mathrm{sgn}(t)+p/\sqrt{m^2+p^2} & im/\sqrt{m^2+p^2}
%\end{pmatrix}
%e^{i p x-i\sqrt{m^2+p^2}|t|}\,dp
%\qquad\text{for }|x|\ne |t|.
%%%
\end{equation}
%To be precise,
Here the integral is understood as Fourier transform of
matrix-valued tempered distributions. %\mscomm{!!! Check for $t=0$ !!!}
(We do not use~\eqref{eq-feynman-propagator-fourier} in this paper
and hence do not define those notions. Cf.~\eqref{eq-feynman-propagator})
\end{physicalinterpretation*}

\begin{example}[Massless and heavy particles]\label{p-massless}
\textup{(Cf.~\cite{Bender-etal-85})}
For each $(x,t)\in \varepsilon\mathbb{Z}^2$ we have
\begin{align}
\notag
\widetilde{A}_k(x,t,\infty,\varepsilon)&:=
\lim_{m \to +\infty}\widetilde{A}_k(x,t,m,\varepsilon)=
%\begin{cases}
\delta_{x0}(-i)^{|t/\varepsilon+k-1|-1}, %&\text{\!for } x=0;\\
%0,   &\text{\!for } x\ne 0;
%\end{cases}
\\
\notag
%\quad\text{and}\quad
\widetilde{A}_1(x,t,0,\varepsilon)&:=
\lim_{m \searrow 0}\widetilde{A}_1(x,t,m,\varepsilon)=0,\\
\label{eq-massless-lattice}
\widetilde{A}_2(x,t,0,\varepsilon)&:=
\lim_{m \searrow 0}\widetilde{A}_2(x,t,m,\varepsilon)=
\begin{cases}
1, &\text{for }x=t\ge 0,\\
-1, &\text{for }x=t<0,\\
{2i\varepsilon}/{\pi(x-t)},&\text{for }
{(x+t)}/{\varepsilon}\text{ odd},\\
0, &\text{otherwise.}
\end{cases}
\end{align}
Physically this means that an infinitely heavy particle
stays at the origin forever, and a massless particle forms
a charged ``cloud'' moving with the speed of light.
The massless lattice propagator is proportional to
%Corollary~\ref{cor-anti-uniform} remains true for $m=0$, where
the \emph{massless spin-$1/2$ Feynman
propagator} defined by
\begin{equation}\label{eq-massless}
G^{F}(x,t,0):=
\left(\begin{smallmatrix}
   0 & i/2\pi(x-t) \\
   i/2\pi(x+t) & 0
 \end{smallmatrix}\right)
\qquad\text{for }|x|\ne |t|.
\end{equation}
\end{example}

\begin{example}[Unit mass and lattice step]
\label{ex-simplest-values}
%\mscomm{!!! Ref to Wolfram in the proof !!!}
The value
\begin{align}\label{eq-Gauss}
\widetilde{A}_1(0,0,1,1)/i&=\Gamma(\tfrac{1}{4})^2/(2\pi)^{3/2}=
%\tfrac{1}{2\pi}B(\tfrac{1}{4},\tfrac{1}{2})=
\tfrac{2}{\pi}K(i)=:G\approx 0.83463
\intertext{is the Gauss constant and}
\label{eq-lemniscate}
\widetilde{A}_2(1,0,1,1)/i&=2\sqrt{2\pi}/\Gamma(\tfrac{1}{4})^2
=\tfrac{2}{\pi}(E(i)-K(i))=1/\pi G=:L'\approx 0.38138
\end{align}
is the inverse lemniscate constant (cf.~\textup{\cite[\S6.1]{Finch-03}}), where $K(z):=\int_{0}^{\pi/2}d\theta/\sqrt{1-z^2\sin^2\theta}$ and
$E(z):=\int_{0}^{\pi/2}\sqrt{1-z^2\sin^2\theta}\,d\theta$ are the complete elliptic integrals of the 1st and 2nd kind respectively. The other values are
even more complicated irrationalities (see Table~\ref{table-a4}).
\end{example}

%%%%%%%%%%%%%%%%%%%
\begin{proposition}[Rational basis] \label{p-basis}
For each $k\in\{1,2\}$ and $(x,t)\in \mathbb{Z}^2$
the values $2^{|t|/2}\mathrm{Re}\,\widetilde{A}_k(x,t,1,1)$ are integers and
$2^{|t|/2}\mathrm{Im}\,\widetilde{A}_k(x,t,1,1)$ are rational linear combinations
of numbers~\eqref{eq-Gauss} and~\eqref{eq-lemniscate}.
%of the Gauss constant~$G$ and the inverse lemniscate constant~$L'$.
\end{proposition}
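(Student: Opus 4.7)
The plan is to prove the real-part (integer) and imaginary-part ($\mathbb{Q}$-combination) claims separately. The former follows from Theorem~\ref{th-well-defined} and Definition~\ref{def-mass}: when $(x+t)+k$ is even, $\widetilde{A}_k(x,t,1,1)=\pm a_k(x',|t|+1,1,1)$ for a suitable shift $x'$, and each checker path $s$ from $(0,0)$ to $(x',|t|+1)$ has $n=|t|+2$ lattice points, so $a(s)=2^{-|t|/2}\cdot i(-i)^{\mathrm{turns}(s)}$ lies in $2^{-|t|/2}\{\pm 1,\pm i\}$. Therefore $2^{|t|/2}\sum_s a(s)=\sum_s i(-i)^{\mathrm{turns}(s)}$ has real and imaginary parts equal to signed counts of paths (with signs determined by $\mathrm{turns}(s)\bmod 4$), hence integers. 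When $(x+t)+k$ is odd the real part vanishes, trivially an integer.

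For the imaginary part I would apply the Fourier representation of Proposition~\ref{th-equivalence}. Using $\cos\omega_p=\cos p/\sqrt{2}$ and $\sin\omega_p=\sqrt{(1+\sin^2 p)/2}$ (valid for $m=\varepsilon=1$), together with the Chebyshev identities $\cos(\omega_p t)=T_{|t|}(\cos p/\sqrt{2})$, $\sin(\omega_p t)=\mathrm{sgn}(t)\,U_{|t|-1}(\cos p/\sqrt{2})\,\sqrt{1+\sin^2 p}/\sqrt{2}$, $\cos(px)=T_{|x|}(\cos p)$, and $\sin(px)=\mathrm{sgn}(x)\,U_{|x|-1}(\cos p)\,\sin p$, I would expand the Fourier integrand as a sum of products of polynomials in $\cos p$ with factors $\sin p$ and $\sqrt{1+\sin^2 p}$. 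The symmetries $p\mapsto -p$ (odd integrands vanish) and $p\mapsto \pi-p$ (odd powers of $\cos p$ vanish) eliminate all but a few surviving terms, reducing the problem to integrals of the form
\[
\int_0^{\pi/2}\frac{R(\sin^2 p)}{\sqrt{1+\sin^2 p}}\,dp\quad\text{and}\quad\int_0^{\pi/2}R(\sin^2 p)\,\sqrt{1+\sin^2 p}\,dp
\]
with an integer-coefficient polynomial $R$ depending on $x,t,k$. The integer coefficients emerge because $T_n$ and $U_n$ contain only monomials $u^j$ with $j\equiv n\pmod{2}$, so every surviving term in $T_{|t|}(u/\sqrt{2})$ or $U_{|t|-1}(u/\sqrt{2})/\sqrt{2}$ carries a factor $2^{-j/2}$ with $(|t|-j)/2\in\mathbb{Z}_{\ge 0}$, which the prefactor $2^{|t|/2}$ exactly absorbs.

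To finish I would reduce $J_n=\int_0^{\pi/2}(1+\sin^2 p)^{n-1/2}\,dp$ and $K_n=\int_0^{\pi/2}\sin^{2n} p/\sqrt{1+\sin^2 p}\,dp$ to the complete elliptic integrals. Integrating $\tfrac{d}{dp}[\sin p\cos p\,(1+\sin^2 p)^{n-1/2}]$ over $[0,\pi/2]$ (the boundary term vanishes) and substituting $v=1+\sin^2 p$ yields the three-term recurrence $(2n+1)J_{n+1}=6nJ_n-2(2n-1)J_{n-1}$; together with $J_0=K(i)$ and $J_1=E(i)$, induction gives $J_n\in\mathbb{Q} K(i)+\mathbb{Q} E(i)$, and expanding $\sin^{2n} p=\sum_{k=0}^{n}\binom{n}{k}(-1)^{n-k}(1+\sin^2 p)^k$ treats $K_n$ and, in the same way, the second family of integrals above. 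The identities $K(i)=\pi G/2$ and $E(i)=\pi(G+L')/2$ from Example~\ref{ex-simplest-values}, combined with the factor $1/\pi$ from $\tfrac{1}{2\pi}\int_{-\pi}^{\pi}=\tfrac{2}{\pi}\int_0^{\pi/2}$, convert the answer to $\mathbb{Q} G+\mathbb{Q} L'$. The main obstacle, and really the only non-mechanical step, is the parity bookkeeping described above that makes $2^{|t|/2}$ exactly cancel the half-integer powers of $2$ from the Chebyshev expansions; the elliptic-integral recurrence and the symmetry arguments are routine.
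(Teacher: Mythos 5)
Your proposal is correct, but for the imaginary part it takes a genuinely different route from the paper. The real-part argument is the same in both (Theorem~\ref{th-well-defined} plus Definition~\ref{def-mass}, with $n=|t|+2$ points giving the factor $2^{-|t|/2}$ exactly as you compute). For the imaginary part, the paper instead runs a two-level induction: the base $t=0$ is itself an induction on $x$ starting from the two values $\widetilde{A}_1(0,0,1,1)=iG$ and $\widetilde{A}_2(1,0,1,1)=iL'$ of Example~\ref{ex-simplest-values}, the spatial step uses the equal-time mixed recurrence (Proposition~\ref{l-mean}), whose coefficients are rational at fixed $t$, and the time step uses the Dirac equation (Proposition~\ref{p-mass}), whose overall factor $1/\sqrt{1+m^2\varepsilon^2}=1/\sqrt{2}$ is exactly absorbed by passing from $2^{t/2}$ to $2^{(t+1)/2}$. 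That proof is a few lines because it reuses identities already established. Your route — expanding the Fourier integrand of Proposition~\ref{th-equivalence} in Chebyshev polynomials, killing terms by the $p\mapsto -p$ and $p\mapsto\pi-p$ symmetries, and reducing the survivors to $J_n=\int_0^{\pi/2}(1+\sin^2p)^{n-1/2}\,dp$ via the recurrence $(2n+1)J_{n+1}=6nJ_n-2(2n-1)J_{n-1}$ (which I checked) with $J_0=K(i)$, $J_1=E(i)$ — is heavier on bookkeeping but buys something the paper's induction does not: an explicit algorithm for the rational coefficients, and a structural explanation of why the normalization $2^{|t|/2}$ is the right one (the parity of Chebyshev monomials) and why precisely $G=\tfrac{2}{\pi}K(i)$ and $L'=\tfrac{2}{\pi}(E(i)-K(i))$ span the answer. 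The parity checks all go through: only $\cos(px)\cos(\omega_pt)$ survives in $\mathrm{Im}\,\widetilde{A}_1$ (forcing $x+t$ even), and only the two claimed families of elliptic-type integrals arise for $\mathrm{Im}\,\widetilde{A}_2$ (forcing $x+t$ odd), consistent with Corollary~\ref{cor-alternation}.
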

%%%%%%%%%%%%%%%%%%%%

\begin{table}[ht]
  \centering
  $\widetilde{A}_1(x,t,1,1)$ \\[0.1cm]
\begin{tabular}
{|c*{8}{|>{\centering\arraybackslash}p{55pt}}|}
\hline
$2$& $0$ & ${i\frac{2G-3L'}{3}}$& $\frac{1}{2}$ & ${-iL'}$& $\frac{1}{2}$ & ${i\frac{2G-3L'}{3}}$& $0$ \\
\hline
$1$ & ${i\frac{7G-15L'}{3\sqrt{2}}}$ & $0$ & ${i\frac{G-L'}{\sqrt{2}}}$& $\frac{1}{\sqrt{2}}$ & ${i\frac{G-L'}{\sqrt{2}}}$ & $0$ & ${i\frac{7G-15L'}{3\sqrt{2}}}$ \\
\hline
$0$& $0$ & ${i(G-2L')}$& $0$ & ${iG}$& $0$ & ${i(G-2L')}$& $0$ \\
\hline
$-1$& ${i\frac{7G-15L'}{3\sqrt{2}}}$ & $0$ & ${i\frac{G-L'}{\sqrt{2}}}$& $\frac{1}{\sqrt{2}}$ & ${i\frac{G-L'}{\sqrt{2}}}$& $0$ & ${i\frac{7G-15L'}{3\sqrt{2}}}$\\
\hline
$-2$ & $0$ & ${i\frac{2G-3L'}{3}}$& $\frac{1}{2}$ & ${-iL'}$& $\frac{1}{2}$ & ${i\frac{2G-3L'}{3}}$& $0$ \\
\hline
\diagbox[dir=SW,height=25pt]{$t$}{$x$}&$-3$&$-2$&$-1$&$0$&$1$&$2$&$3$\\
\hline
\end{tabular}
%\qquad
%  \caption{The values $b_1(x,t,1,1)$ for small $x$ and $t$ (see Definition~\ref{def-anti} and Example~\ref{ex-simplest-values})}
%  \label{table-a3}
%\end{table}
%
%\begin{table}[ht]
%  \centering
\\[0.2cm]
   $\widetilde{A}_2(x,t,1,1)$\\[0.1cm]
\begin{tabular}{|c*{8}{|>{\centering\arraybackslash}p{55pt}}|}
\hline
$2$& $i\frac{5G-12L'}{15}$ & $0$ & $-i\frac{G}{3}$ & $-\frac{1}{2}$ & $-iG$ & $\frac{1}{2}$ & $i\frac{-5G+12L'}{3}$ \\
\hline
$1$& $0$ & $i\frac{G-3L'}{3\sqrt{2}}$ & $0$ & $-i\frac{G+L'}{\sqrt{2}}$ & $\frac{1}{\sqrt{2}}$ & $i\frac{-G+3L'}{\sqrt{2}}$ & $0$ \\
\hline
$0$ & $i\frac{4G-9L'}{3}$ & $0$ & $-iL'$ & $1$ & $iL'$ & $0$ & $i\frac{-4G+9L'}{3}$ \\
\hline
$-1$& $0$ & $i\frac{G-3L'}{\sqrt{2}}$ & $-\frac{1}{\sqrt{2}}$ & $i\frac{G+L'}{\sqrt{2}}$ & $0$ & $i\frac{-G+3L'}{3\sqrt{2}}$ & $0$  \\
\hline
$-2$& $i\frac{5G-12L'}{3}$ & $-\frac{1}{2}$ & $iG$ & $\frac{1}{2}$ & $i\frac{G}{3}$ & $0$ &  $i\frac{-5G+12L'}{15}$\\
\hline
\diagbox[dir=SW,height=25pt]{$t$}{$x$}&$-3$&$-2$&$-1$&$0$&$1$&$2$&$3$\\
\hline
\end{tabular}
\qquad
  \caption{The values $\widetilde{A}_1(x,t,1,1)$ and $\widetilde{A}_2(x,t,1,1)$ for small $x,t$ (see Definition~\ref{def-anti-combi} and Example~\ref{ex-simplest-values})}
  \label{table-a4}
\end{table}

In general, the %infinite-lattice
propagator is ``explicitly'' expressed through the Gauss hypergeometric function.
Denote by ${}_2F_{1}(p,q;r;z)$ the principal branch of the function
defined by \cite[9.111]{Gradstein-Ryzhik-63}.

%\mscomm{!!! The proposition is true for arbitrary parity of $(x+t)/\varepsilon$ but the sign needs to be corrected then !!!}

\begin{proposition}[Exact solution] \label{p-mass5}
For each $m,\varepsilon>0$ and $(x,t)\in\varepsilon\mathbb{Z}^2$
we have
\begin{align*}
\widetilde{A}_1(x,t,m,\varepsilon)
&=\pm i
\frac{(-im\varepsilon)^{\frac{t-|x|}{\varepsilon}}}
{\left({1+m^2\varepsilon^2}\right)^{\frac{t}{2\varepsilon}}}
\binom{\frac{t+|x|}{2\varepsilon}-\frac{1}{2}}{|x|/\varepsilon}
{}_2F_1\left(\frac{1}{2} + \frac{|x|-t}{2\varepsilon}, \frac{1}{2} + \frac{|x|-t}{2\varepsilon}; 1+\frac{|x|}{\varepsilon}; -\frac{1}{m^2\varepsilon^2}\right),\\
\widetilde{A}_2(x,t,m,\varepsilon)
&=\pm
\frac{(-im\varepsilon)^{\frac{t-|x|}{\varepsilon}}}
{\left({1+m^2\varepsilon^2}\right)^{\frac{t}{2\varepsilon}}}
\binom{\frac{t+|x|}{2\varepsilon}-1+\theta(x)}{|x|/\varepsilon}
{}_2F_1\left(\frac{|x|-t}{2\varepsilon}, 1+\frac{|x|-t}{2\varepsilon}; 1+\frac{|x|}{\varepsilon}; -\frac{1}{m^2\varepsilon^2}\right),
\end{align*}
where the minus sign in the expression for $\widetilde{A}_k$ is taken for $t<0$ and  $(x+t)/\varepsilon+k$ even, and
\begin{align*}
\theta(x)&:=
\begin{cases}
                            1, & \mbox{if } x\ge0, \\
                            0, & \mbox{if } x<0;
\end{cases}
&
\binom{z}{n}&:=\frac{1}{n!}\prod_{j=1}^{n}(z-j+1).
\end{align*}
\end{proposition}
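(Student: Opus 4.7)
The plan is to evaluate the Fourier integral of Proposition~\ref{th-equivalence} directly. By the $p\mapsto-p$ symmetry of $\omega_p$ together with the $\pm$ sign prescription, it suffices to prove the formula for $x\ge 0$ and $t\ge 0$. Set $\mu:=m\varepsilon$, $X:=x/\varepsilon$, $T:=t/\varepsilon$, $u:=p\varepsilon$, and $W:=\sqrt{\mu^2+\sin^2 u}$. Rewriting~\eqref{eq-omega} as $e^{-i\omega_p\varepsilon}=(\cos u-iW)/\sqrt{1+\mu^2}$ and raising to the $T$-th power, both Fourier integrals become
\[
\frac{1}{2\pi(1+\mu^2)^{T/2}}\int_{-\pi}^{\pi} e^{iuX}(\cos u-iW)^{T}g(u)\,du,\qquad g=\tfrac{1}{W}\ \text{or}\ g=1+\tfrac{\sin u}{W}.
\]

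I would then expand $(\cos u-iW)^{T}$ by the binomial theorem and use $W^{2}=\mu^{2}+\sin^{2}u$ to reduce every term to a polynomial in $\cos u$ and $\sin u$, optionally divided by $W$. The polynomial parts are integrated by standard Fourier-moment identities; the $1/W$ parts are computed by expanding $1/W=\mu^{-1}(1+\sin^{2}u/\mu^{2})^{-1/2}$ in a binomial series in $-1/\mu^{2}$ and integrating termwise. Collecting all contributions and applying the Chu--Vandermonde identity collapses the resulting double series to a single series in $-1/\mu^{2}$. Termwise comparison with the definition $\sum_{n\ge 0}(a)_{n}(b)_{n}z^{n}/((c)_{n}n!)$ identifies this sum as the advertised ${}_{2}F_{1}$, with the half-integer entries in the Pochhammer symbols emerging from the duplication formula $\Gamma(n+1/2)=\sqrt{\pi}(2n)!/(4^{n}n!)$ and the prefactor $\binom{(t+|x|)/(2\varepsilon)-1/2}{|x|/\varepsilon}$ arising from the corresponding Gamma ratio.

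The main technical obstacle is parity bookkeeping. By Theorem~\ref{th-well-defined} the lattice propagator is real-valued for one parity of $(x+t)/\varepsilon+k$ and purely imaginary for the other; the phase factor $(-i\mu)^{(t-|x|)/\varepsilon}$ in Proposition~\ref{p-mass5} is exactly what interpolates between the two cases. The $\theta(x)$ summand in the $\widetilde{A}_{2}$ binomial prefactor arises from the asymmetric factor $\sin u/W$ in the $\widetilde{A}_{2}$ integrand: the $\sin u$-term contributes to the Fourier coefficient only for $X\ge 1$, effectively shifting the upper index of the binomial by one. These parity and sign issues must be handled with care, but no new idea beyond careful computation is required once the expansion-and-resummation scheme is in place.
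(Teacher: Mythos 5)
Your reduction to a Fourier integral is the right starting point (it is Proposition~\ref{th-equivalence}, which the paper also uses), but the core of your argument — expand $(\cos u-iW)^T$, expand $1/W=\mu^{-1}(1+\sin^2u/\mu^2)^{-1/2}$ as a binomial series, integrate termwise, and resum by Chu--Vandermonde — has two genuine problems. First, the binomial series for $1/W$ converges only where $\sin^2u<\mu^2$, so for $m\varepsilon\le 1$ it diverges on most of the integration range and termwise integration is not justified; correspondingly, the target series $\sum_n (a)_n(b)_n(-1/m^2\varepsilon^2)^n/((c)_n n!)$ itself diverges there, and the statement is about the principal branch of ${}_2F_1$. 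You would have to prove the identity for $m\varepsilon>1$ and then invoke analyticity of both sides in $m$, and this step is absent. Second, the resummation is exactly the hard part and is asserted rather than carried out: on the sublattice where $(x+t)/\varepsilon+k$ is odd the upper parameters are half-integers, the hypergeometric series does not terminate, and the answer is a Jacobi function of the second kind (Remark~\ref{rem-jacobi-functions}); collapsing a triple sum (binomial expansion of $(\cos u-iW)^T$, of $W^j$, and of $1/W$) to that object with the correct $\Gamma$-ratio prefactor is not a routine Vandermonde computation, and your explanation of the $\theta(x)$ shift is heuristic.

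For comparison, the paper sidesteps the general-$t$ resummation entirely: after reducing to $x,t\ge 0$ and $\varepsilon=1$ (which for the right-hand side already requires the Euler transformation \cite[9.131.1]{Gradstein-Ryzhik-63} and binomial identities under $t\mapsto -t$, $x\mapsto -x$ — another point your symmetry reduction glosses over for $\widetilde A_2$), it inducts on $t$ using the Dirac equation (Proposition~\ref{p-mass}) together with Gauss contiguous relations, so that only the single one-dimensional integral at $t=0$ must be evaluated; that evaluation is done not by series resummation but by a tabulated integral representation \cite[9.112]{Gradstein-Ryzhik-63} after the substitution $z=(\sqrt{1+m^2}-m)^2$, followed by quadratic and Euler transformations of ${}_2F_1$. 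If you want to salvage your route, you should either restrict the direct expansion to $m\varepsilon>1$ and add the analytic-continuation step, or better, adopt the recurrence structure (Propositions~\ref{p-mass} and~\ref{l-mean} plus contiguous relations) so that only one non-terminating integral ever needs to be computed.
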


\begin{remark} \label{rem-jacobi-functions}
Depending on the parity of $(x+t)/\varepsilon$,
those expressions can be rewritten as \emph{Jacobi polynomials} (see~\cite[Remark~3]{SU-22}) or as
\emph{Jacobi functions of the 2nd kind} of half-integer order (see the definition in \cite[(4.61.1)]{Szego-39}). For instance, for each $(x,t)\in\varepsilon\mathbb{Z}^2$ such that $|x|>t$ and $(x+t)/\varepsilon$ is even we have
$$
\widetilde{A}_1(x,t,m,\varepsilon)
=\frac{2m\varepsilon}{\pi}
\left({1+m^2\varepsilon^2}\right)^{t/2\varepsilon}
Q_{(|x|-t-\varepsilon)/2\varepsilon}^{(0,t/\varepsilon)}(1+2m^2\varepsilon^2).
$$
\end{remark}

\begin{remark} \label{rem-probability-to-die} For even $x/\varepsilon$, the value $\widetilde{A}_1(x,0,m,\varepsilon)$ equals $i(1+\sqrt{1+m^2\varepsilon^2})/m\varepsilon$ times the probability that a planar simple random walk over the sublattice
$\{(x,t)\in\varepsilon\mathbb{Z}^2:(x+t)/\varepsilon\text{ even}\}$ dies at
$(x,0)$, if it starts at $(0,0)$ and dies with the probability $1-1/\sqrt{1+m^2\varepsilon^2}$ before each step.
\blueit{Such \emph{massive random walk} is archetypical in the Makarov--Smirnov program for studying off-critical models \cite{Makarov-Smirnov-09, Chelkak-Wan-21}. Thus we observe a Euclidean theory for $t=0$, which is natural because $t=0$ is both real and purely imaginary.}
Nothing like that is known \blueit{nor expected} for $t\ne 0$.
\end{remark}

\subsection{Asymptotic formulae}
\label{ssec-asymptotic}

%\mscomm{!!! THIS SUBSECTION IS \textbf{NOT} INCLUDED IN CHECKERS.TEX !!!}

The main result of this work is consistency of the model with continuum quantum field theory, that is, the convergence of the lattice propagator to the continuum one as the lattice becomes finer and finer (see Figure~\ref{fig-approx-b} and Theorem~\ref{th-continuum-limit}). More precisely, the former propagator converges to the real part of the latter on certain sublattice,
and it converges to the imaginary part on the dual sublattice.
The limit involves Bessel functions $J_n(z)$ and $Y_n(z)$ of the 1st and 2nd kind respectively, and %as well as
modified Bessel functions $K_n(z)$ of the 2nd kind defined in
\cite[\S8.40]{Gradstein-Ryzhik-63}.

\begin{theorem}[Asymptotic formula in the continuum limit] \label{th-continuum-limit}
%\mscomm{!!! correct $\Delta$ !!! do we really need to bound $\varepsilon$ ??? }
For each $m,\varepsilon>0$ and $(x,t)\in\varepsilon\mathbb{Z}^2$ such that
%\begin{equation}\label{eq-continuum-limit}
$|x|\ne |t|$
%\qquad\text{and}\qquad
%\varepsilon<\sqrt{\left|t^2-x^2\right|}/m (|x|+|t|),
%\end{equation}
we have
\begin{align}
\label{eq1-th-continuum-limit}
\hspace{-0.2cm}\widetilde{A}_1\left(x,t,{m},{\varepsilon}\right)
&=
\begin{cases}
m\varepsilon \left(J_0(ms)+O\left(\varepsilon\Delta\right)\right),
&\text{for }|x|<|t|\text{ and }(x+t)/\varepsilon\text{ odd};\\
-im\varepsilon \left(Y_0(ms)+O\left(\varepsilon\Delta\right)\right),
&\text{for }|x|<|t|\text{ and }(x+t)/\varepsilon\text{ even};\\
0,
&\text{for }|x|>|t|\text{ and }(x+t)/\varepsilon\text{ odd};\\
\mathrlap{2im\varepsilon \left(K_0(ms)+O\left(\varepsilon\Delta\right)\right)/{\pi},}
\hphantom{2im\varepsilon(t+x)\left(K_1(ms)+O\left(\varepsilon\Delta\right)\right)/\pi s,}
&\text{for }|x|>|t|\text{ and }(x+t)/\varepsilon\text{ even};
\end{cases}%\\
\end{align}
\begin{align}
\label{eq2-th-continuum-limit}
\hspace{-0.2cm}\widetilde{A}_2\left(x,t,{m},{\varepsilon}\right)
&=
\begin{cases}
-m\varepsilon(t+x)\left(J_1(ms)+O\left(\varepsilon\Delta\right)\right)/{s},
&\text{for }|x|<|t|\text{ and }(x+t)/\varepsilon\text{ even};\\
im\varepsilon(t+x)\left(Y_1(ms)+O\left(\varepsilon\Delta\right)\right)/{s},
&\text{for }|x|<|t|\text{ and }(x+t)/\varepsilon\text{ odd};\\
0,
&\text{for }|x|>|t|\text{ and }(x+t)/\varepsilon\text{ even};\\
2im\varepsilon(t+x)\left(K_1(ms)+O\left(\varepsilon\Delta\right)\right)/\pi s,
&\text{for }|x|>|t|\text{ and }(x+t)/\varepsilon\text{ odd}.
\end{cases}
\end{align}
$$
\text{where}\qquad
s:=\sqrt{\left|t^2-x^2\right|}\qquad\text{and}\qquad
\Delta:=\frac{1}{\left|\,|x|-|t|\,\right|}+m^2(|x|+|t|).
$$
\end{theorem}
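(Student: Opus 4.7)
The natural starting point is the Fourier integral representation of Proposition~\ref{th-equivalence}, which reduces everything to the asymptotic analysis of oscillatory integrals of the form
$$
I_k(x,t) := \frac{\varepsilon}{2\pi}\int_{-\pi/\varepsilon}^{\pi/\varepsilon} f_k(p,m,\varepsilon)\,e^{ipx - i\omega_p t}\,dp,
$$
where $f_1 = im\varepsilon/\sqrt{m^2\varepsilon^2+\sin^2(p\varepsilon)}$ and $f_2 = 1 + \sin(p\varepsilon)/\sqrt{m^2\varepsilon^2+\sin^2(p\varepsilon)}$, with lattice dispersion relation~\eqref{eq-omega}. The plan is to replace these lattice integrals by their continuum analogues (integrals over $\mathbb{R}$ with $\omega_p$ replaced by the relativistic energy $E_p := \sqrt{m^2+p^2}$), evaluate those exactly in terms of Bessel functions, and then control the error carefully.

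The first step is to express the continuum integrals as Bessel functions. Using standard integral representations (e.g.\ \cite[3.961.2, 3.961.3, 8.421]{Gradstein-Ryzhik-63}), one obtains
$$
\frac{m}{2\pi}\int_{-\infty}^{\infty}\frac{e^{ipx-iE_p t}}{E_p}\,dp =
\begin{cases}
m\,J_0(ms) - im\,\mathrm{sgn}(t)\,Y_0(ms), & |x|<|t|,\\
\tfrac{2im}{\pi}K_0(ms), & |x|>|t|,
\end{cases}
$$
and an analogous formula with an extra factor $p/E_p$ inside produces $J_1, Y_1, K_1$ multiplied by $(t+x)/s$. After taking real and imaginary parts on the appropriate sublattices (as dictated by the parity of $(x+t)/\varepsilon+k$, which tells us which component of Theorem~\ref{th-well-defined} we land on), one recovers the eight cases of~\eqref{eq1-th-continuum-limit}--\eqref{eq2-th-continuum-limit} at the level of main terms.

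The second step is the error analysis, which is the main technical obstacle. I would split it into three contributions: (i)~the \emph{dispersion error} $\omega_p - E_p$, which by a Taylor expansion of~\eqref{eq-omega} satisfies $|\omega_p - E_p| \lesssim \varepsilon^2(p^4+m^4)$ on $|p|\le \pi/\varepsilon$ (presumably this is Lemma~\ref{l-g-difference}); (ii)~the \emph{amplitude error} $f_k(p,m,\varepsilon)$ vs.\ its continuum counterpart, bounded similarly; (iii)~the \emph{truncation error} from restricting the integration domain to $[-\pi/\varepsilon,\pi/\varepsilon]$. For (i) and (ii) one multiplies by $t$ in the exponent, so the factor $m^2(|x|+|t|)$ in $\Delta$ arises from $m^4\varepsilon^2 \cdot t$ integrated over a region of effective width $m$ near the saddle. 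For (iii), integration by parts in $p$ using the non-stationarity of the phase at $|p|\sim \pi/\varepsilon$ gives a gain of $\varepsilon/||x|-|t||$, producing the first summand of $\Delta$.

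The main obstacle is obtaining the error estimate \emph{uniformly} in the spacetime point, particularly when $(x,t)$ approaches (but stays away from) the light cone. In the timelike region $|x|<|t|$ there are two real stationary points $\pm p_0$ determined by $d\omega_p/dp = x/t$, and the saddle-point contributions must be combined coherently; in the spacelike region $|x|>|t|$ the contour must be deformed to complex saddles to extract the exponential $K_0$-decay. In both cases the stationary points collide at $p=\pm\pi/(2\varepsilon)$ as $|x|\to|t|$, degrading the saddle-point estimate; this is precisely what the factor $1/||x|-|t||$ in $\Delta$ records, and why the case $|x|=|t|$ is deferred to the separate Airy-asymptotic theorem~\ref{th-anti-Airy}. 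I would handle the uniformity by a careful bookkeeping of the second derivative $\omega_p''$ at the saddle, which vanishes to order $||x|-|t||$ in the lattice problem just as it does in the continuum one, so that the same $\Delta$ controls both the continuum integrals and their discrete approximations.
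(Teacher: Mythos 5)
Your skeleton --- pass to the Fourier integral of Proposition~\ref{th-equivalence}, evaluate the continuum analogue exactly in terms of Bessel functions, and split the lattice-vs-continuum error into truncation, phase and amplitude parts --- is exactly the paper's plan, and your attribution of the two summands of $\Delta$ to the truncation and dispersion errors is broadly right. However, two of your key technical claims would not close the argument. First, the dispersion bound you posit, $|\omega_p-E_p|\lesssim\varepsilon^2(p^4+m^4)$, is far too weak: multiplied by $t$ and integrated over $|p|\le\pi/2\varepsilon$ it gives $O(t/\varepsilon^3)$, not the needed $O(m^2t\varepsilon)$. The bound actually proved (Lemma~\ref{l-g-difference}) is $|\omega_p-\sqrt{p^2+m^2}|=O\bigl(m^2\varepsilon^2\sqrt{p^2+m^2}\bigr)$, i.e.\ growth at most \emph{linear} in $p$, and that linearity is what makes the phase-error step work. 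Second, your proposed mechanism for uniformity near the light cone --- locating real and complex saddles, deforming contours, tracking $\omega_p''$ --- is both unnecessary and misdirected here. Since the main term is the \emph{exact} continuum integral, no stationary-phase expansion of either integral is performed; the only oscillatory-integral input is the \emph{first} derivative test (Lemma~\ref{l-first-derivative-test}), applied on the ranges where the integrand difference is not small pointwise, together with the lower bound $|x-tp/\sqrt{m^2+p^2}|\ge|x-t|/4$ for $p\ge\pi m(x+t)/2s$ (Lemma~\ref{l-omegaprime}) and a splitting of the momentum range at $\pi/2\varepsilon_\pm$ with $\varepsilon_\pm$ built from $s/m(x+t)$. (Also, the lattice stationary points merge at $p=\pm\pi/2\varepsilon$ when $|x/t|\to1/\sqrt{1+m^2\varepsilon^2}$, the threshold governing Theorem~\ref{th-anti-Airy}, not when $|x|\to|t|$.)

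A third gap concerns $\widetilde{A}_2$. The amplitude $1+p/\sqrt{m^2+p^2}$ tends to $2$ as $p\to+\infty$, so the tail of the continuum Fourier integral for $G^F_{12}$ beyond $|p|=\pi/2\varepsilon$ is only $O(1/|x-t|)$, with no factor of $\varepsilon$, and your truncation estimate fails for this component. The paper first subtracts the massless propagator $G^F_{12}(x,t,0)$ (and correspondingly $\widetilde{A}_2(x,t,0,\varepsilon)$ from Example~\ref{p-massless}) so that the difference of integrands decays and Lemma~\ref{l-tail} yields the required $O(m^2t/\alpha|x-t|)$; this subtraction is absent from your proposal and cannot be skipped.
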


%\mscomm{Here write which parity is known. %Also try to remove %factor $(1+m^2s^2)$ in the remainder? Insert definition of the %Feynman propagator.}

Recall that %notation
$f(x,t,{m},{\varepsilon})=g(x,t,{m},{\varepsilon})+{O}\left(h(x,t,{m},{\varepsilon})\right)$ means that there is a constant $C$ (not depending on $x,t,{m},{\varepsilon}$) such that for each $x,t,{m},{\varepsilon}$ satisfying the assumptions of the theorem we have $|f(x,t,{m},{\varepsilon})-g(x,t,{m},{\varepsilon})|\le C\,h(x,t,{m},{\varepsilon})$.

\begin{physicalinterpretation*} The theorem means that in the continuum limit, the model reproduces the spin-$1/2$ Feynman propagator
%~\eqref{eq-feynman-propagator-fourier}
%describing creation, annihilation and motion of non-interacting electrons and positrons along a line
%in quantum field theory. Recall that Feynman's original model
%reproduces just \emph{retarded propagator} \cite[Theorem~5]{SU-22}.
%The \emph{spin-$1/2$ Feynman propagator} is
%the matrix-valued distribution
%given by the formula
(cf.~\eqref{eq-feynman-propagator-fourier})
\begin{equation}\label{eq-feynman-propagator}
\hspace{-0.2cm}G^F(x,t,m):=
\begin{cases}
\dfrac{m}{4}\,
\begin{pmatrix}
J_0(ms)-iY_0(ms) &
-{\frac{t+x}{s}}\left(J_1(ms)-iY_1(ms)\right) \\ {\frac{t-x}{s}}\left(J_1(ms)-iY_1(ms)\right) &
J_0(ms)-iY_0(ms)
\end{pmatrix}, &\text{if }|x|<|t|;\\
\dfrac{im}{2\pi}\,
\begin{pmatrix}
K_0(ms) & {\frac{t+x}{s}}\,K_1(ms)  \\
{\frac{x-t}{s}}\,K_1(ms) & K_0(ms)
\end{pmatrix}, &\text{if }|x|>|t|,
\end{cases}
\end{equation}
where again
$s:=\sqrt{\left|t^2-x^2\right|}$.
(A common definition involves also a generalized function supported on the lines $t=\pm x$ which we drop
because those lines are excluded anyway.)
%(It involves also a generalized function supported on the lines $t=\pm x$ which we do not specify.)
%%%
%An expression, more common for specialists (cf.~\cite[(6.51)]{Folland}):
%\mscomm{!!! replace !!!}
%\begin{equation}\label{eq-double-fourier-feynman}
%G^F(x,t)=
%%\widehat G^F(p,\omega)=
%  \frac{1}{4\pi^2}
%  \int_{-\infty}^{+\infty}
%  \int_{-\infty}^{+\infty}
%  \lim_{\delta\searrow 0}
%  \begin{pmatrix}
%  m & -ip-i\omega \\
%  -ip+i\omega & m
%  \end{pmatrix}
%  \frac{ e^{i p x-i\omega t}\,dpd\omega}
%  {m^2+p^2-\omega^2-i\delta}.
%\end{equation}
%where the limit is taken in the weak topology of matrix-valued tempered distributions and the integral is understood as the Fourier transform of tempered distributions (cf.~\cite[(6.47)]{Folland}).
%%%
The value $|G_{11}^F(x,t,m)|^2+|G_{12}^F(x,t,m)|^2$
is minus the expected charge density at the point $x$ at the moment~$t$.
Recall that Feynman's original model reproduces %just
\emph{retarded propagator} \cite[Theorem~5]{SU-22}.
\end{physicalinterpretation*}

The asymptotic formulae in Theorem~\ref{th-continuum-limit} were known earlier (with a slightly weaker error estimates) on the sublattice, where the new model coincides with Feynman's original one \cite[Theorem~5]{SU-22}. %(and thus reproduces the retarded propagator there).
Extension to the dual sublattice has required different methods.

In the following corollary, we approximate a point $(x,t)\in\mathbb{R}^2$
by the lattice point
\begin{equation}\label{eq-lattice-approx}
(x_\varepsilon,t_\varepsilon):=\left(2\varepsilon\!\left\lceil \frac{x}{2\varepsilon}\right\rceil,2\varepsilon\!\left\lceil \frac{t}{2\varepsilon}\right\rceil\right).
\end{equation}

\begin{corollary}[Uniform continuum limit; see Figures~\ref{fig-approx-q} and~\ref{fig-approx-b}] \label{cor-anti-uniform}  For each fixed $m\ge 0$ we have
\begin{gather} \notag
\begin{aligned}
  \frac{1}{4\varepsilon}\,\tilde{A}_1\left(x_\varepsilon+\varepsilon,t_\varepsilon,{m},{\varepsilon}\right)
&\rightrightarrows \mathrm{Re}\,G^F_{11}(x,t,m); &
  \frac{1}{4\varepsilon}\,\tilde{A}_1\left(x_\varepsilon,t_\varepsilon,{m},{\varepsilon}\right)
&\rightrightarrows i\mathrm{Im}\,G^F_{11}(x,t,m);\\
  \frac{1}{4\varepsilon}\,\tilde{A}_2\left(x_\varepsilon+\varepsilon,t_\varepsilon,{m},{\varepsilon}\right)
&\rightrightarrows i\mathrm{Im}\,G^F_{12}(x,t,m); &
  \frac{1}{4\varepsilon}\,\tilde{A}_2\left(x_\varepsilon,t_\varepsilon,{m},{\varepsilon}\right)
&\rightrightarrows \mathrm{Re}\,G^F_{12}(x,t,m);
\end{aligned}\\
\label{eq-cor-anti-uniform}
\frac{1}{8\varepsilon^2}\,\left(Q\left(x_\varepsilon,t_\varepsilon,{m},{\varepsilon}\right)
+Q\left(x_\varepsilon+\varepsilon,t_\varepsilon,{m},{\varepsilon}\right)\right)
\rightrightarrows |G^F_{11}(x,t,m)|^2+|G^F_{12}(x,t,m)|^2
%%%%%
%  \frac{1}{4\varepsilon}\,\tilde{A}_1\left(2\varepsilon\!\left\lceil \frac{x}{2\varepsilon}\right\rceil+\varepsilon,2\varepsilon\!\left\lceil \frac{t}{2\varepsilon}\right\rceil,{m},{\varepsilon}\right)
%&\rightrightarrows \mathrm{Re}\,G^F_{11}(x,t);\\
%  \frac{1}{4\varepsilon}\,\tilde{A}_1\left(2\varepsilon\!\left\lceil \frac{x}{2\varepsilon}\right\rceil,2\varepsilon\!\left\lceil \frac{t}{2\varepsilon}\right\rceil,{m},{\varepsilon}\right)
%&\rightrightarrows i\mathrm{Im}\,G^F_{11}(x,t);\\
%  \frac{1}{4\varepsilon}\,\tilde{A}_2\left(2\varepsilon\!\left\lceil \frac{x}{2\varepsilon}\right\rceil,2\varepsilon\!\left\lceil \frac{t}{2\varepsilon}\right\rceil,{m},{\varepsilon}\right)
%&\rightrightarrows \mathrm{Re}\,G^F_{12}(x,t);\\
%\frac{1}{4\varepsilon}\,\tilde{A}_2\left(2\varepsilon\!\left\lceil \frac{x}{2\varepsilon}\right\rceil+\varepsilon,2\varepsilon\!\left\lceil \frac{t}{2\varepsilon}\right\rceil,{m},{\varepsilon}\right)
%&\rightrightarrows i\mathrm{Im}\,G^F_{12}(x,t);\\
%\frac{1}{8\varepsilon^2}\,\left(Q\left(2\varepsilon\!\left\lceil \frac{x}{2\varepsilon}\right\rceil,2\varepsilon\!\left\lceil \frac{t}{2\varepsilon}\right\rceil,{m},{\varepsilon}\right)
%+Q\left(2\varepsilon\!\left\lceil \frac{x}{2\varepsilon}\right\rceil+\varepsilon,2\varepsilon\!\left\lceil \frac{t}{2\varepsilon}\right\rceil,{m},{\varepsilon}\right)\right)
%&\rightrightarrows |G^F_{11}(x,t)|^2+|G^F_{12}(x,t)|^2.
%%%%%
%\qquad\text{as }\quad \varepsilon\to0
\end{gather}
as $\varepsilon\to 0$ uniformly on compact subsets of $\mathbb{R}^2\setminus\{|t|=|x|\}$, %where we use
under notation~\eqref{eq-lattice-approx},\eqref{eq-feynman-propagator},\eqref{eq-q},\eqref{eq-massless-lattice},\eqref{eq-massless}.
\end{corollary}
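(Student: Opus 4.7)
The plan is to deduce the corollary from Theorem~\ref{th-continuum-limit} (for $m>0$) together with the explicit formulae of Example~\ref{p-massless} (for $m=0$), the key point being to upgrade pointwise asymptotics to uniformity. Fix a compact set $K\subset\mathbb{R}^2\setminus\{|t|=|x|\}$. Then there exist constants $c,C>0$ such that $||x|-|t||\ge c$ and $|x|+|t|\le C$ for every $(x,t)\in K$. For $\varepsilon$ small enough, the approximant $(x_\varepsilon,t_\varepsilon)$ from~\eqref{eq-lattice-approx} (as well as $(x_\varepsilon+\varepsilon,t_\varepsilon)$) lies in the $4\varepsilon$-neighborhood of $K$, so it satisfies $||x_\varepsilon|-|t_\varepsilon||\ge c/2$ and $|x_\varepsilon|+|t_\varepsilon|\le C+4\varepsilon$; in particular the sign of $|t_\varepsilon|-|x_\varepsilon|$ agrees with that of $|t|-|x|$, and the error parameter $\Delta$ in Theorem~\ref{th-continuum-limit} is bounded by an absolute constant depending only on $K,m$. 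Hence the error term $O(\varepsilon\Delta)$ in~\eqref{eq1-th-continuum-limit}--\eqref{eq2-th-continuum-limit} is uniformly $O(\varepsilon)$ on $K$.

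Next I would match the leading terms case by case. Since $x_\varepsilon/\varepsilon$ and $t_\varepsilon/\varepsilon$ are even by construction, the point $(x_\varepsilon,t_\varepsilon)$ has $(x_\varepsilon+t_\varepsilon)/\varepsilon$ even, while $(x_\varepsilon+\varepsilon,t_\varepsilon)$ has this sum odd. For $(x,t)\in K$ with $|x|<|t|$, Theorem~\ref{th-continuum-limit} gives
$$
\tfrac{1}{4\varepsilon}\widetilde{A}_1(x_\varepsilon+\varepsilon,t_\varepsilon,m,\varepsilon)=\tfrac{m}{4}J_0(m s_\varepsilon)+O(\varepsilon),\qquad
\tfrac{1}{4\varepsilon}\widetilde{A}_1(x_\varepsilon,t_\varepsilon,m,\varepsilon)=-\tfrac{im}{4}Y_0(m s_\varepsilon)+O(\varepsilon),
$$
with $s_\varepsilon:=\sqrt{|t_\varepsilon^2-x_\varepsilon^2|}$, and analogously for $\widetilde{A}_2$ with $J_1,Y_1$ and an extra factor $-(t_\varepsilon+x_\varepsilon)/s_\varepsilon$. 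Comparing with the block form of $G^F$ in~\eqref{eq-feynman-propagator}, where $G_{11}^F=\tfrac{m}{4}(J_0(ms)-iY_0(ms))$ and $G_{12}^F=-\tfrac{m(t+x)}{4s}(J_1(ms)-iY_1(ms))$, the leading terms equal exactly $\mathrm{Re}\,G_{11}^F$, $i\,\mathrm{Im}\,G_{11}^F$, $\mathrm{Re}\,G_{12}^F$, $i\,\mathrm{Im}\,G_{12}^F$ respectively, modulo replacing $(x,t,s)$ by $(x_\varepsilon,t_\varepsilon,s_\varepsilon)$. For $|x|>|t|$ the analogous computation with $K_0,K_1$ yields purely imaginary real parts and real imaginary parts of $G^F$, which is consistent with the matrix~\eqref{eq-feynman-propagator} (where $G_{11}^F$ is purely imaginary and $G_{12}^F$ is real in the spacelike region).

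It then remains to replace $(x_\varepsilon,t_\varepsilon,s_\varepsilon)$ by $(x,t,s)$ in the leading terms, which is justified by uniform continuity of $J_0,Y_0,J_1,Y_1,K_0,K_1$ and of $(t+x)/s$ on the (compact) image of $K$ under $(x,t)\mapsto(m\sqrt{|t^2-x^2|},(t+x)/\sqrt{|t^2-x^2|})$; the differences are $O(\varepsilon)$ uniformly on $K$. For the residual case $m=0$, the asymptotic theorem is inapplicable, but the explicit formulae~\eqref{eq-massless-lattice} and~\eqref{eq-massless} suffice: for instance $\tfrac{1}{4\varepsilon}\widetilde{A}_2(x_\varepsilon+\varepsilon,t_\varepsilon,0,\varepsilon)=\tfrac{i}{2\pi(x_\varepsilon+\varepsilon-t_\varepsilon)}\rightrightarrows \tfrac{i}{2\pi(x-t)}=G^F_{12}(x,t,0)$ on $K$ (we use $|x|\ne|t|$ to avoid the singular lines), and the other four assertions are immediate because both sides vanish for $|x|\ne|t|$ and $m=0$. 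Finally, the charge-density convergence~\eqref{eq-cor-anti-uniform} follows from the four preceding convergences by squaring and adding: $\tfrac{1}{2}|\widetilde{A}_k(x_\varepsilon,t_\varepsilon)|^2+\tfrac{1}{2}|\widetilde{A}_k(x_\varepsilon+\varepsilon,t_\varepsilon)|^2$ recovers $(\mathrm{Re}\,G_{1k}^F)^2+(\mathrm{Im}\,G_{1k}^F)^2=|G_{1k}^F|^2$ after dividing by $8\varepsilon^2$ and passing to the limit, using that convergence on a compact set together with uniform boundedness (which follows from the same asymptotic bounds) implies convergence of squares in the uniform norm.

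The main obstacle is bookkeeping the eight parity/sign/timelike-spacelike cases and checking that in each one the leading-order term from Theorem~\ref{th-continuum-limit} matches the designated real or imaginary part of the corresponding entry of $G^F$; all other steps are routine once the error $O(\varepsilon\Delta)$ is seen to be uniform on $K$.
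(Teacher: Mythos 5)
Your proposal is correct and follows essentially the same route as the paper, whose proof is just the condensed version of your argument: for $m>0$ the claim follows from Theorem~\ref{th-continuum-limit} since $\Delta$ is uniformly bounded on compact subsets of $\mathbb{R}^2\setminus\{|t|=|x|\}$ and the limiting Bessel-function expressions are uniformly continuous there, and for $m=0$ it follows from the explicit formulae of Example~\ref{p-massless}. Your parity bookkeeping ($(x_\varepsilon+t_\varepsilon)/\varepsilon$ even, $(x_\varepsilon+\varepsilon+t_\varepsilon)/\varepsilon$ odd), the matching with $\mathrm{Re}$ and $\mathrm{Im}$ of~\eqref{eq-feynman-propagator}, and the squaring step for~\eqref{eq-cor-anti-uniform} are all exactly the details the authors leave implicit.
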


%\mscomm{!!! Corollary on expected charge density !!!}

\begin{figure}[htbp]
  \centering
\includegraphics[width=0.36\textwidth]{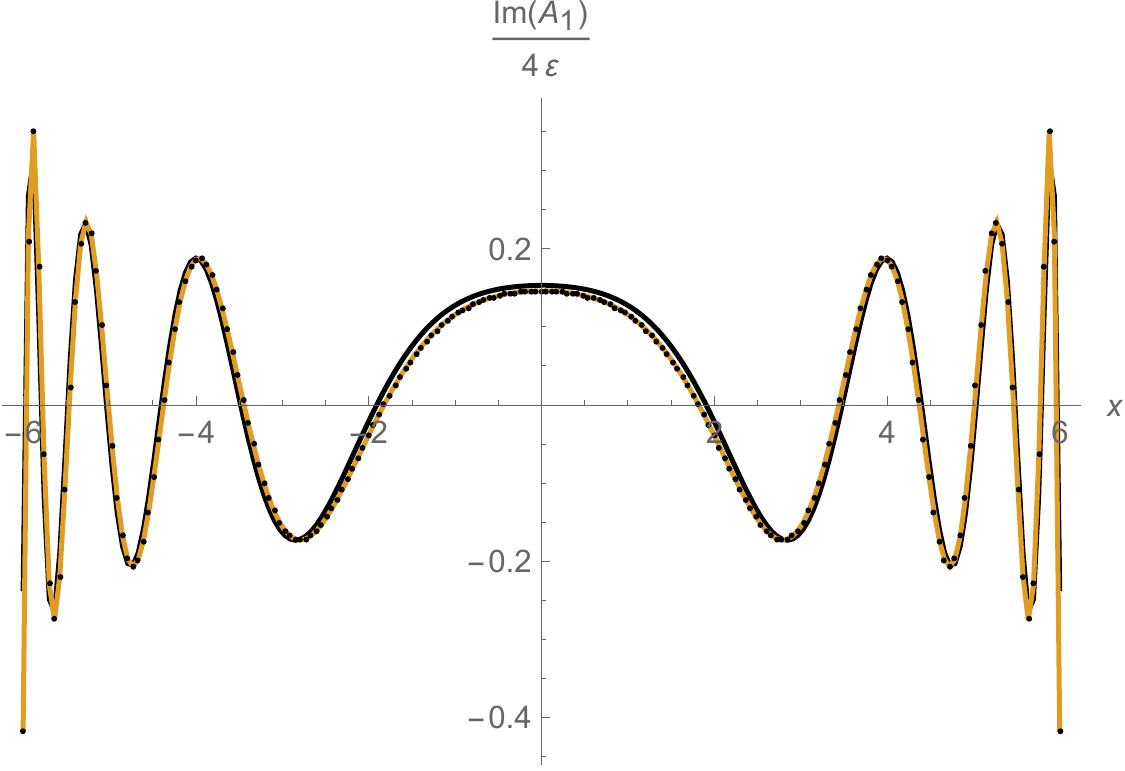}
%\dots
\includegraphics[width=0.36\textwidth]{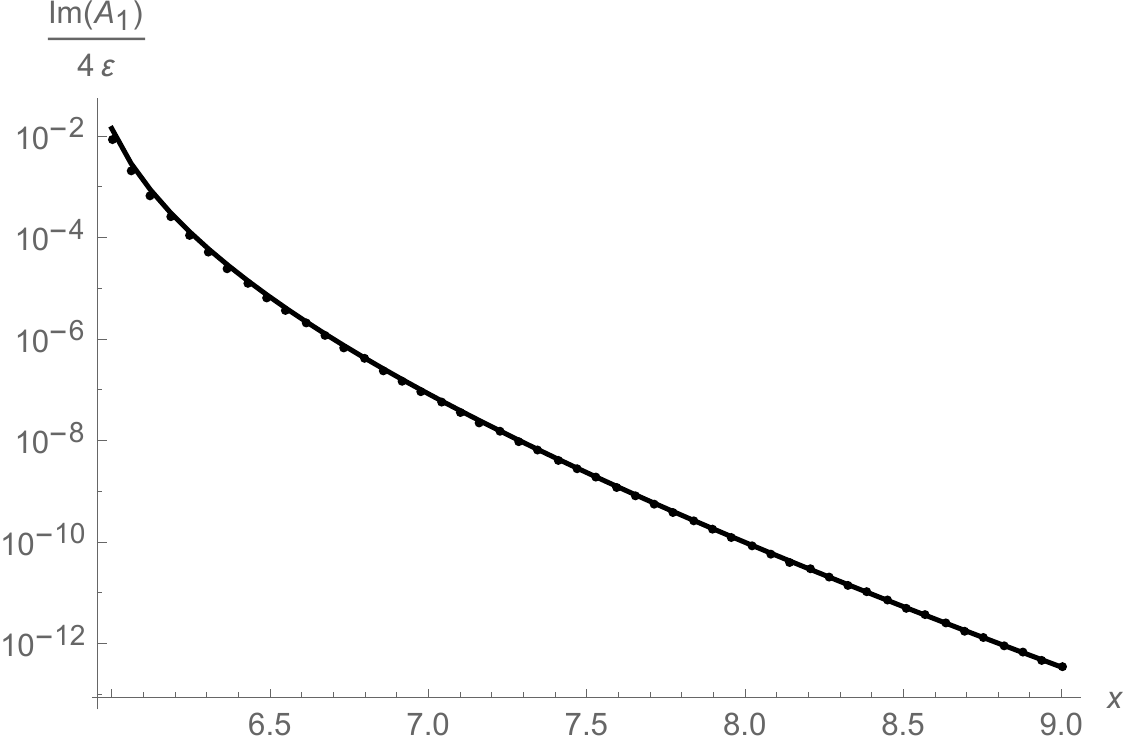}
%\quad
\\
\includegraphics[width=0.36\textwidth]{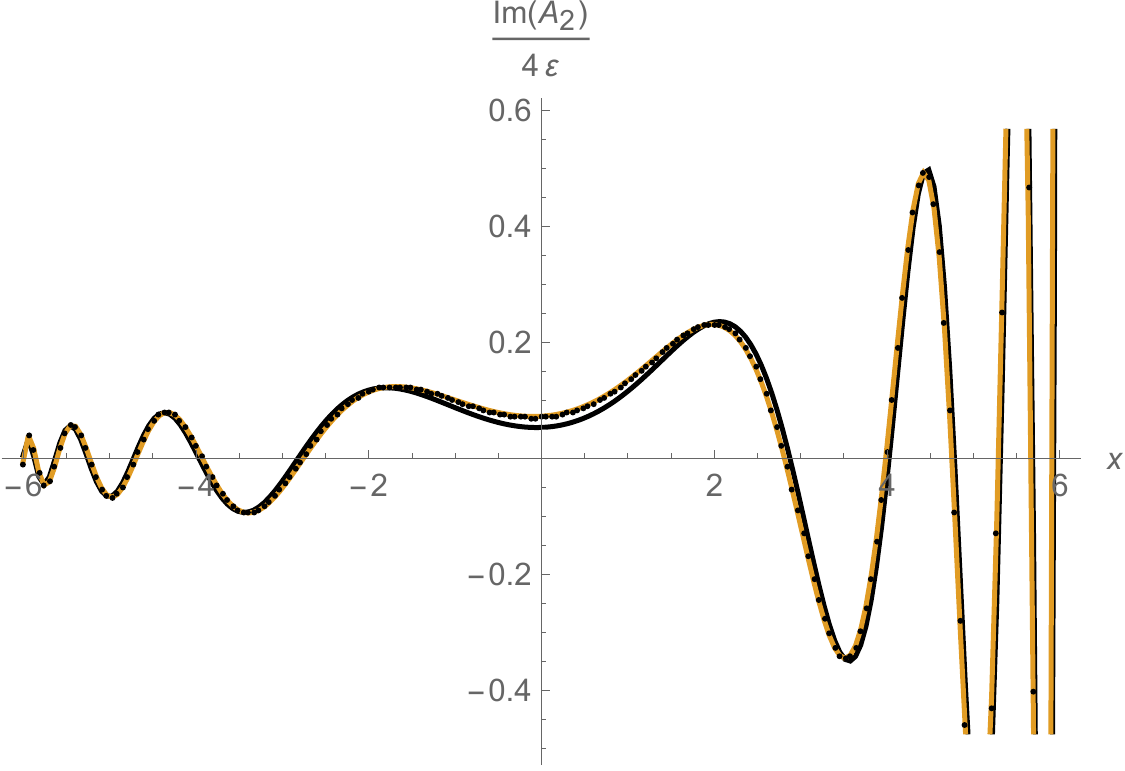}
%\dots
\includegraphics[width=0.36\textwidth]{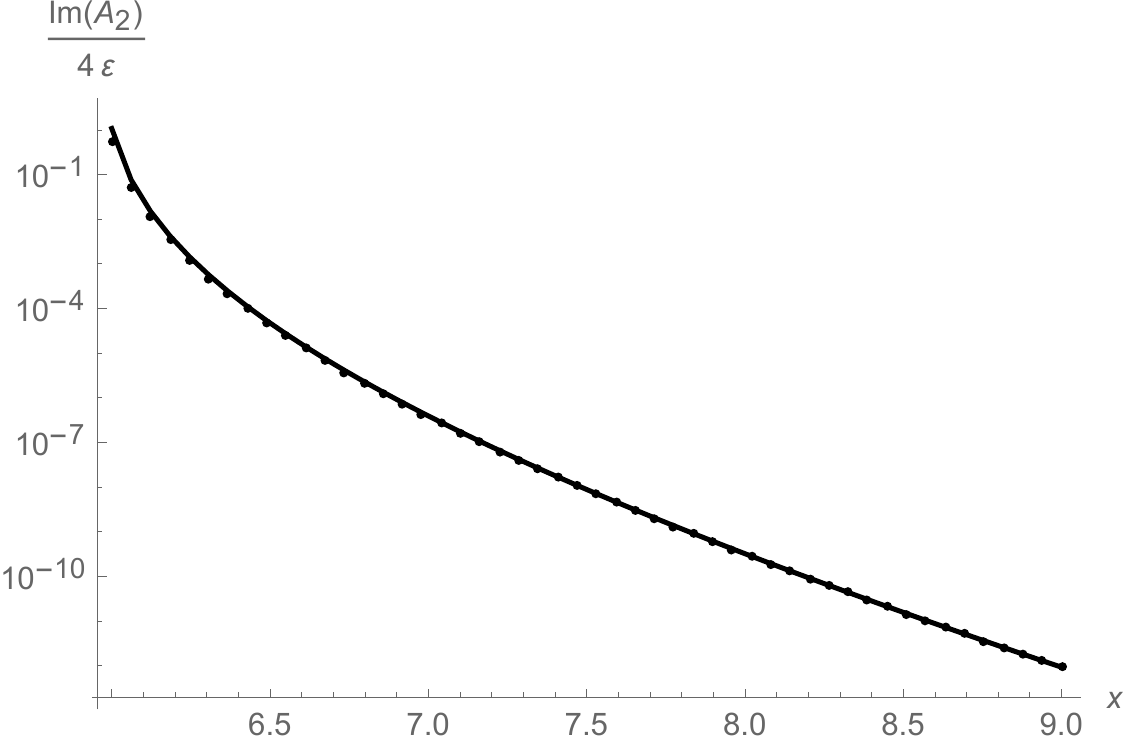}
  \caption{Plots of (the normalized imaginary part of)
  the lattice propagators
  $\mathrm{Im}\,\widetilde{A}_1(x,6,4,0.03)/0.12$ (top, dots) and
  $\mathrm{Im}\,\widetilde{A}_2(x,6,4,0.03)/0.12$ (bottom, dots)
  for $x/0.06\in\mathbb{Z}$ and $x/0.06+1/2\in\mathbb{Z}$ respectively, their
  analytic approximations from Theorems~\ref{th-continuum-limit} (dark curve) and~\tobereplaced{\ref{th-anti-ergenium}}{\ref{th-anti-Airy}} (light curve). The former approximation is the imaginary part of the spin-$1/2$ Feynman propagator
  $\mathrm{Im}\,G^F_{11}(x,6,4)$ (top, dark) and $\mathrm{Im}\,G^F_{12}(x,6,4)$ (bottom, dark) given by~\eqref{eq-feynman-propagator}.} %For $x>6$ we show semi-logarithmic plots.}
  \label{fig-approx-b}
\end{figure}

The following result follows from
%is an immediate consequence of
Proposition~\ref{th-equivalence} and~\cite[Theorems~2 and~7]{SU-22}; cf.~\cite{Drmota-etal, Ozhegov-21, Zakorko-21}.

\begin{theorem}%\theoremprime\
[Large-time asymptotic formula between the peaks; see Figure~\ref{fig-approx-b}]
\label{th-anti-ergenium}
For each $\Delta>0$ there is $C_\Delta>0$ such that for each $m,\varepsilon>0$ and each $(x,t)\in\varepsilon\mathbb{Z}^2$ satisfying
\begin{equation}\label{eq-case-A}
 |x|/t<1/\sqrt{1+m^2\varepsilon^2}-\Delta, \qquad
 \varepsilon\le 1/m, \qquad t>C_\Delta/m,
\end{equation}
we have
\begin{align*}
\widetilde{A}_1\left(x,t,m,\varepsilon\right)
&=
\begin{cases}
{\varepsilon}\sqrt{\frac{2m}{\pi}}
\frac{\sin (\pi/4-\bluevar{t\mathcal{L}}(x/t,m,\varepsilon))}
{\left(t^2-(1+m^2\varepsilon^2)x^2\right)^{1/4}}
+O_\Delta\left(\frac{\varepsilon}{m^{1/2}t^{3/2}}\right),
&\text{for }(x+t)/\varepsilon\text{ odd};\\
\mathrlap{i{\varepsilon}\sqrt{\frac{2m}{\pi}}
\frac{\cos (\pi/4-\bluevar{t\mathcal{L}}(x/t,m,\varepsilon))}
{\left(t^2-(1+m^2\varepsilon^2)x^2\right)^{1/4}}
+O_\Delta\left(\frac{\varepsilon}{m^{1/2}t^{3/2}}\right),}
\hphantom{-i{\varepsilon}\sqrt{\frac{2m}{\pi}}
\sqrt{\frac{t+x}{t-x}}\frac{\sin (\pi/4-t\mathcal{L}(x/t,m,\varepsilon))}
{\left(t^2-(1+m^2\varepsilon^2)x^2\right)^{1/4}}
+O_\Delta\left(\frac{\varepsilon}{m^{1/2}t^{3/2}}\right),}
&\text{for }(x+t)/\varepsilon\text{ even};
\end{cases}
\\
\widetilde{A}_2\left(x,t,m,\varepsilon\right)
&=
\begin{cases}
{\varepsilon}\sqrt{\frac{2m}{\pi}}
\sqrt{\frac{t+x}{t-x}}\frac{\cos (\pi/4-\bluevar{t\mathcal{L}}(x/t,m,\varepsilon))}
{\left(t^2-(1+m^2\varepsilon^2)x^2\right)^{1/4}}
+O_\Delta\left(\frac{\varepsilon}{m^{1/2}t^{3/2}}\right),
&\text{for }(x+t)/\varepsilon\text{ even};\\
-i{\varepsilon}\sqrt{\frac{2m}{\pi}}
\sqrt{\frac{t+x}{t-x}}\frac{\sin (\pi/4-\bluevar{t\mathcal{L}}(x/t,m,\varepsilon))}
{\left(t^2-(1+m^2\varepsilon^2)x^2\right)^{1/4}}
+O_\Delta\left(\frac{\varepsilon}{m^{1/2}t^{3/2}}\right),
&\text{for }(x+t)/\varepsilon\text{ odd},
\end{cases}
\intertext{where}
%\label{eq-theta}
\bluevar{\mathcal{L}(v,m,\varepsilon)}&\bluevar{:=
-\frac{1}{\varepsilon}\arcsin%\left(
\frac{m\varepsilon} {\sqrt{\left(1+m^2\varepsilon^2\right)\left(1-v^2\right)}}%\right)
+\frac{v}{\varepsilon}\arcsin%\left(
\frac{m\varepsilon v}{\sqrt{1-v^2}}.}%\right)
%+\frac{\pi }{4}.
\end{align*}
\end{theorem}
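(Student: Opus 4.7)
The plan is to deduce Theorem~\ref{th-anti-ergenium} directly from the Fourier integral representation of Proposition~\ref{th-equivalence} by applying the stationary phase method, essentially in the form already worked out in \cite[Theorems~2 and~7]{SU-22}. The key observation is that all four oscillatory integrals expressing $\widetilde A_k(x,t,m,\varepsilon)$ share the same phase $e^{i p x - i \omega_p t}$ and differ only in slowly-varying amplitudes: the factor $m\varepsilon/\sqrt{m^2\varepsilon^2+\sin^2(p\varepsilon)}$ for $\widetilde A_1$, and $1 + \sin(p\varepsilon)/\sqrt{m^2\varepsilon^2+\sin^2(p\varepsilon)}$ for $\widetilde A_2$. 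Hence a single stationary-phase calculation, with the amplitude evaluated appropriately, handles all four cases.

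First, I would compute the stationary point of $\phi(p):=px/t-\omega_p$. Differentiating \eqref{eq-omega} gives
$$\omega_p'=\frac{\sin(p\varepsilon)}{\sqrt{m^2\varepsilon^2+\sin^2(p\varepsilon)}},$$
so $\phi'(p)=0$ admits a unique solution $p_0\in(-\pi/\varepsilon,\pi/\varepsilon)$ with $\sin(p_0\varepsilon)=m\varepsilon x/\sqrt{t^2-x^2}$, which lies in $(-1,1)$ precisely under hypothesis~\eqref{eq-case-A}; this condition also keeps $p_0$ bounded away from $\pm\pi/(2\varepsilon)$, so that $\omega_{p_0}''\neq 0$ uniformly and no caustic arises. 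A direct computation shows $t\phi(p_0)=p_0x-\omega_{p_0}t=-\theta(x,t,m,\varepsilon)$, with $\sin(\omega_{p_0}\varepsilon)=m\varepsilon t/\sqrt{(1+m^2\varepsilon^2)(t^2-x^2)}$, matching the phase written in the theorem. The second derivative $\omega_{p_0}''$ works out to a multiple of $\sqrt{t^2-(1+m^2\varepsilon^2)x^2}$, producing the $(t^2-(1+m^2\varepsilon^2)x^2)^{-1/4}$ factor via the $|\omega_{p_0}''|^{-1/2}$ in stationary phase; the coefficient $\sqrt{2m/\pi}$ comes from this and from evaluating the $\widetilde A_1$ amplitude at $p_0$, while the $\sqrt{(t+x)/(t-x)}$ factor in the $\widetilde A_2$ asymptotics arises because $1+\sin(p_0\varepsilon)/\sqrt{m^2\varepsilon^2+\sin^2(p_0\varepsilon)}=1+x/t=(t+x)/t$, which combines with the overall $1/\sqrt{t^2-x^2}$ factor to give the square-root form.

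Finally, I would invoke \cite[Theorem~7]{SU-22}, which establishes the stationary-phase expansion with the uniform $O(t^{-3/2})$ remainder needed here (the uniformity in the region \eqref{eq-case-A} comes from being bounded away from the caustic), and note that for $\widetilde A_1$ on the sublattice $(x+t)/\varepsilon$ odd the integral coincides up to normalization with the one handled by \cite[Theorem~2]{SU-22}. The main technical obstacle is bookkeeping: one has to track (i) the $\pm$ signs in Proposition~\ref{th-equivalence}, (ii) the extra $i$ prefactor in the $\widetilde A_1$ integral, and (iii) the parity of $(x+t)/\varepsilon+k$ governing which sublattice is reached, and verify that the leading term $e^{-i\theta(x,t,m,\varepsilon)+i\pi/4}$ assembles, after taking real or imaginary part as dictated by Theorem~\ref{th-well-defined}, into precisely the $\sin(\theta+\pi/4)$ or $\cos(\theta+\pi/4)$ combination stated. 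Modulo this sign and parity bookkeeping, no new analytical input beyond the Fourier formula and \cite[Theorems~2 and~7]{SU-22} is required.
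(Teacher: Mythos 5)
Your proposal is correct and follows essentially the same route as the paper, which derives this theorem precisely from the Fourier integral of Proposition~\ref{th-equivalence} combined with the stationary-phase asymptotics of \cite[Theorems~2 and~7]{SU-22}; your stationary-point computation ($\sin(p_0\varepsilon)=m\varepsilon x/\sqrt{t^2-x^2}$, $p_0x-\omega_{p_0}t=-\theta$, and the amplitude ratio producing $\sqrt{(t+x)/(t-x)}$) checks out. The paper gives no further detail beyond this reduction, so your write-up is if anything more explicit than the original.
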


Here the notation $f(x,t,m,\varepsilon)
=g(x,t,m,\varepsilon)+{O}_{\Delta}\left(h(x,t,m,\varepsilon)\right)$
means that there is a constant $C(\Delta)$ (depending on $\Delta$ but \emph{not} on $x,t,m,\varepsilon$) such that for each $x,t,m,\varepsilon$ satisfying the assumptions of the theorem we have $|f(x,t,m,\varepsilon)-g(x,t,m,\varepsilon)|\le C(\Delta)\,h(x,t,m,\varepsilon)$.

%Regime around the peaks involves the \emph{Airy function}
%$$
%\mathrm{Ai}(\lambda) := \frac{1}{\pi} \int_{0}^{+\infty} \cos\left(\lambda p + \frac{p^{3}}{3}\right) \,dp.
%$$

\tobeadded

The following result is an immediate consequence of Proposition~\ref{th-equivalence} %below %, \cite[Theorem~3]{SU-22},
and \cite[Remark after Theorem~3]{Zakorko-21}; cf.~\cite{Ozhegov-21}. \mscomm{!!! Update references and theorem numbers !!!}

%\mscomm{Here insert the most recent large-time asymptotic form by Polina!}

\begin{theorem}[Large-time asymptotic formula]  \label{th-anti-Airy} \textup{(See \cite[Remark after Theorem~3]{Zakorko-21}.)}
%\mscomm{!!! Fix the formulae, state in terms of Bessel functions, update assumption and reference, remove other large-time asymptotic formulae !!!}
For each $m,\varepsilon>0$ and each $(x,t)\in\varepsilon\mathbb{Z}^2$ satisfying
$0<|x/t|<1/\sqrt{1+m^2\varepsilon^2}$
%\begin{equation}\label{eq-anti-Airy}
%0<t/\sqrt{1+m^2\varepsilon^2}-x<\sqrt{t}/100
%\end{equation}
we have
\begin{align*}%\label{eq-anti-Airy-re}
\tilde{A}_1\left(x,t,m,\varepsilon\right)
&=
\frac{i^{\frac{|x|-|t|+\varepsilon}{\varepsilon}}\varepsilon\sqrt{m}\,\theta(x,t,m,\varepsilon)^{1/2}}
{\sqrt{3}(t^2-(1+m^2\varepsilon^2)x^2)^{1/4}}
\left(J_{ 1/3}\left(\theta(x,t,m,\varepsilon)\right)+
      J_{-1/3}\left(\theta(x,t,m,\varepsilon)\right)\right)+
      O_{m,\varepsilon}\left(\frac{1}{|t|}\right),\\
%\label{eq-anti-Airy-im}
\tilde{A}_2\left(x,t,m,\varepsilon \right)
&= \sqrt{\frac{t + x}{t - x}}
\frac{i^{\frac{|x|-|t|}{\varepsilon}}\varepsilon\sqrt{m}\,\theta(x,t,m,\varepsilon)^{1/2}}
{\sqrt{3}(t^2-(1+m^2\varepsilon^2)x^2)^{1/4}}
\left(J_{ 1/3}\left(\theta(x,t,m,\varepsilon)\right)+
      J_{-1/3}\left(\theta(x,t,m,\varepsilon)\right)\right)+
      O_{m,\varepsilon}\left(\frac{1}{|t|}\right),\\
%%%
%\left(-i\right)^{\frac{t - x}{\varepsilon}}\left( \sqrt{1+m^2\varepsilon^2} \pm 1 \right)\left( \frac{2}{mt} \right)^{1/3} \mathrm{Ai} \left(\Delta(x,t,m,\varepsilon) \right)+ O_{m,\varepsilon}\left(\frac{1}{\sqrt{t}}\right),
%%%
\intertext{where}
%\label{eq-Delta}
\theta(x,t,m,\varepsilon)&:=
 \frac{t}{\varepsilon}\arctan\frac{\sqrt{t^2-(1+m^2\varepsilon^2)x^2}}{m \varepsilon t}
-\frac{x}{\varepsilon}\arctan\frac{\sqrt{t^2-(1+m^2\varepsilon^2)x^2}}{m \varepsilon x}.
%\Delta(x,t,m,\varepsilon)&:=\left( \frac{2}{mt} \right)^{1/3} \left(\sqrt{1+m^2\varepsilon^2}\,x-t \right).
\end{align*}
\end{theorem}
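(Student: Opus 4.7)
The plan is to express $\widetilde{A}_1$ and $\widetilde{A}_2$ as oscillatory integrals via Proposition~\ref{th-equivalence} and then apply the uniform two-saddle stationary-phase analysis of \cite[Remark after Theorem~3]{Zakorko-21}, which is set up for exactly this class of integrals. Both propagators have Fourier representations with the common phase $\phi(p):=px-\omega_p t$; differentiating \eqref{eq-omega} implicitly yields $\omega_p'=\sin(p\varepsilon)/\sqrt{m^2\varepsilon^2+\sin^2(p\varepsilon)}$. The hypothesis $0<|x/t|<1/\sqrt{1+m^2\varepsilon^2}$ guarantees that $\omega_p'=x/t$ has exactly two simple solutions $p_\pm\in(-\pi/\varepsilon,\pi/\varepsilon)$, characterized by
\begin{equation*}
\sin(p_\pm\varepsilon)=\pm\frac{m\varepsilon x}{\sqrt{t^2-x^2}},\qquad
\cos(\omega_{p_\pm}\varepsilon)=\frac{\sqrt{t^2-(1+m^2\varepsilon^2)x^2}}{\sqrt{(1+m^2\varepsilon^2)(t^2-x^2)}}.
\end{equation*}
A direct computation then identifies half the phase gap at the saddles with $\theta(x,t,m,\varepsilon)$ modulo a $\pi$-multiple, and shows that $\omega_p''(p_\pm)$ contributes the prefactor $(t^2-(1+m^2\varepsilon^2)x^2)^{-1/4}\sqrt{m}$.

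Next, on the sublattices $(x+t)/\varepsilon+k\in 2\mathbb{Z}$ Theorem~\ref{th-well-defined} reduces $\widetilde{A}_k$ (up to a sign and a unit translation in $t$) to Feynman's original propagator $a_k$ of Definition~\ref{def-mass}, so the claim is a direct quotation of \cite[Remark after Theorem~3]{Zakorko-21}. On the complementary dual sublattices, the amplitudes in Proposition~\ref{th-equivalence} differ from Feynman's by a smooth non-vanishing factor plus, in the case of $\widetilde{A}_2$, an additive term $\pm\varepsilon/(2\pi)\int e^{i\phi(p)}\,dp$ whose linear phase has no saddles and is absorbed into the $O_{m,\varepsilon}(1/|t|)$ error by a single integration by parts. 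The multiplicative modifications only alter the prefactor at each saddle; in particular the extra $\sqrt{(t+x)/(t-x)}$ in the $\widetilde{A}_2$ formula arises from evaluating $1+\sin(p\varepsilon)/\sqrt{m^2\varepsilon^2+\sin^2(p\varepsilon)}$ at $p=p_\pm$. The Airy-type leading term produced by uniform stationary phase is then converted into the Bessel form via the standard identity $\mathrm{Ai}(-z)=\frac{1}{3}\sqrt{z}\,[J_{1/3}(\frac{2}{3}z^{3/2})+J_{-1/3}(\frac{2}{3}z^{3/2})]$ with $\frac{2}{3}z^{3/2}=\theta(x,t,m,\varepsilon)$.

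The main obstacle will be the bookkeeping of signs, $i$-factors and parities: the $\pm$ signs of Proposition~\ref{th-equivalence}, the parity-dependent choice of sublattice, the linear-in-$(x,t)$ phase shift $\pi(t\pm x)/(2\varepsilon)$ accumulated between the two saddles, and the translation $t\mapsto|t|+\varepsilon$ in Theorem~\ref{th-well-defined} must all combine into the single prefactor $i^{(|x|-|t|+\varepsilon)/\varepsilon}$ and into one uniform expression for $\theta$. The genuinely analytic content --- non-degeneracy of the saddles away from the caustic $|x|/t=1/\sqrt{1+m^2\varepsilon^2}$, the Chester--Friedman--Ursell cubic reduction, and the error estimate $O_{m,\varepsilon}(1/|t|)$ --- is already contained in \cite{Zakorko-21}, so once the integrals are matched up, the proof is essentially a reduction.
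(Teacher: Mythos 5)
Your approach is essentially the paper's: the authors give no independent proof of this theorem, deriving it precisely as you do, namely as an immediate consequence of the Fourier integral in Proposition~\ref{th-equivalence} combined with the cited uniform Airy-type asymptotics of Zakorko for that class of oscillatory integrals. One minor slip in your sketch: the two coalescing saddles of $\phi(p)=px-\omega_pt$ both satisfy $\sin(p\varepsilon)=m\varepsilon x/\sqrt{t^2-x^2}$ (they sit at $p\varepsilon=\arcsin(\cdot)$ and $\pi-\arcsin(\cdot)$ and merge at $p\varepsilon=\pi/2$ on the caustic), not $\sin(p_\pm\varepsilon)=\pm m\varepsilon x/\sqrt{t^2-x^2}$ as written.
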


Here the notation $f(x,t,m,\varepsilon)
=g(x,t,m,\varepsilon)+{O}_{m,\varepsilon}\left(h(x,t)\right)$
means that there is a constant $C(m,\varepsilon)$ (depending on $m,\varepsilon$ but \emph{not} on $x,t$) such that for each $x,t,m,\varepsilon$ satisfying the assumptions of the theorem we have $|f(x,t,m,\varepsilon)-g(x,t,m,\varepsilon)|\le C(m,\varepsilon)\,h(x,t)$.

\endtobeadded

\begin{physicalinterpretation*}
\blueit{One interpretes $\mathcal{L}(v,m,\varepsilon)$ as the \emph{Lagrangian}.
As $\varepsilon\to 0$, it tends to the Lagrangian
$-m\sqrt{1-v^2}$ of a free relativistic particle. For each $\varepsilon>0$,}
the well-known relation between energy~\eqref{eq-omega}, momentum $p$, and Lagrangian $\mathcal{L}$ holds:
$\omega_p=pv-\mathcal{L}$
for $p=\partial\mathcal{L}/\partial v$.
%%%%%%%%%%%%%%%%%% BACKUP WITH THETA: %%%%%%%%%%%%%%%
%Here $-\theta(x,t,m,\varepsilon)$ has the meaning of \emph{action}.
%As $\varepsilon\to 0$, it tends to the action
%$-m\sqrt{t^2-x^2}$ of free relativistic particle (moving from the origin to $(x,t)$ with constant speed). If we introduce the \emph{Lagrangian}
%$\mathcal{L}(v):=-\theta(vt,t,m,\varepsilon)/t$,
%then the well-known relation between energy~\eqref{eq-omega}, momentum $p$, and Lagrangian $\mathcal{L}$ holds:
%$\omega_p=pv-\mathcal{L}$
%for $p=\partial\mathcal{L}/\partial v$.
%%%%%%%%%%%%%%%%%%%%%%%%%%%%%%%%%
\end{physicalinterpretation*}

\subsection{Identities}
\label{ssec-identities}

%\mscomm{This is not included in checkers.tex !!!}

Now we establish the following informal assertion
(see Propositions~\ref{p-mass}--\ref{p-triple} for formal ones).

\begin{consistencyprinciple*}
The new model satisfies %exactly
the same identities as %original
Feynman's one.
\end{consistencyprinciple*}

In particular, the identities %for the infinite-lattice propagator presented
in this subsection are known (and easy to deduce from
Definition~\ref{def-mass}) for the sublattice,
where the new model coincides with the original one \cite[Propositions~5--10]{SU-22}.
For the dual sublattice, these results are not so easy to prove. Further, for the former sublattice,
%where the lattice propagator is real,
there are a few ``exceptions'' to the identities;
but on the dual lattice, the imaginary part
$b_k(x,t,m,\varepsilon)$ defined in \cite[Definition~5]{SU-22}
% of the propagator
satisfies known identities \cite[Propositions~5--8 and 10]{SU-22}
literally for both $t>0$ and $t\le 0$.

In what follows we fix $m,\varepsilon>0$ and
omit the arguments $m,\varepsilon$ of the propagators.

\begin{proposition}[Dirac equation]\label{p-mass}
For each $(x,t)\in \varepsilon\mathbb{Z}^2$
%, where  $(x,t)\ne(0,0)$,
we have
\begin{align}\label{eq-Dirac-source1}
\tilde A_1(x,t%,m,\varepsilon
) &= \frac{1}{\sqrt{1+m^2\varepsilon^2}}
(\tilde A_1(x+\varepsilon,t-\varepsilon%,m,\varepsilon
)
+ m \varepsilon\, \tilde A_2(x,t-\varepsilon%,m,\varepsilon
)),\\
\label{eq-Dirac-source2}
\tilde A_2(x,t%,m,\varepsilon
) &= \frac{1}{\sqrt{1+m^2\varepsilon^2}}
(\tilde A_2(x-\varepsilon,t-\varepsilon%,m,\varepsilon
)
- m \varepsilon\, \tilde A_1(x,t-\varepsilon%,m,\varepsilon
))+2\delta_{x0}\delta_{t0}.
\end{align}
%For $(x,t)=(0,0)$ the former equation still holds, and
%the difference between the left and the right side of the latter equation equals $2$.
\end{proposition}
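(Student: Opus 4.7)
The plan is to chain axioms~1 and~2 from Definition~\ref{def-anti-alg}: axiom~1 expresses $A_k(x,t)$ at a point of $\varepsilon\mathbb{Z}^2$ in terms of values on the \emph{dual} sublattice at $(x\pm\varepsilon/2, t-\varepsilon/2)$, and axiom~2 applied at those dual-lattice points then re-expresses them in terms of values back on $\varepsilon\mathbb{Z}^2$ at $(x, t-\varepsilon)$ and $(x\pm\varepsilon, t-\varepsilon)$. After substitution and a clean-up, one sends $\delta\searrow 0$ and invokes Theorem~\ref{th-well-defined} to replace $A_k$ by $\widetilde{A}_k$.

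Concretely, fix $(x,t)\in\varepsilon\mathbb{Z}^2$. First I would apply axiom~1 to write
\[
A_1(x,t)=\frac{1}{\sqrt{1+m^2\varepsilon^2}}\Bigl(A_1\bigl(x+\tfrac{\varepsilon}{2},t-\tfrac{\varepsilon}{2}\bigr)+m\varepsilon\,A_2\bigl(x+\tfrac{\varepsilon}{2},t-\tfrac{\varepsilon}{2}\bigr)\Bigr),
\]
and analogously for $A_2(x,t)$ with shift $(-\varepsilon/2,-\varepsilon/2)$ and the extra source term $2\delta_{x0}\delta_{t0}$. Since $2(x\pm\varepsilon/2)/\varepsilon$ and $2(t-\varepsilon/2)/\varepsilon$ are odd, axiom~2 is applicable at these intermediate points; substituting it yields
\[
A_1(x,t)=\frac{1}{\sqrt{1+m^2\varepsilon^2}\,\sqrt{1-\delta^2}}\Bigl(A_1(x+\varepsilon,t-\varepsilon)+m\varepsilon\,A_2(x,t-\varepsilon)+i\delta\bigl(m\varepsilon\,A_1(x,t-\varepsilon)-A_2(x+\varepsilon,t-\varepsilon)\bigr)\Bigr),
\]
and a parallel identity for $A_2(x,t)$ involving $A_2(x-\varepsilon,t-\varepsilon)$, $A_1(x,t-\varepsilon)$ and the source term. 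The right-hand sides now involve only values on $\varepsilon\mathbb{Z}^2$.

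Finally, I would let $\delta\searrow 0$. The $1/\sqrt{1-\delta^2}$ prefactor tends to $1$, the $i\delta(\cdots)$ correction vanishes (its bracket is a finite sum of $A_k$ values, whose limits exist by Theorem~\ref{th-well-defined}), and on the left $A_k(x,t)$ tends to $\widetilde{A}_k(x,t)$ by~\eqref{eq-def-anti-alg}. This produces exactly~\eqref{eq-Dirac-source1} and~\eqref{eq-Dirac-source2}. No step is really an obstacle: the only point that could cause trouble is legitimacy of the limit, and that is precisely what Theorem~\ref{th-well-defined} guarantees; the identification of the correct intermediate-point applications of axiom~2 (with the right sign of $\pm\varepsilon/2$) is just a bookkeeping check. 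So the argument is a short axiom-chase, and the interesting physical content---that the shift-by-$\varepsilon/2$ recurrences on two dual sublattices collapse to a single shift-by-$\varepsilon$ Dirac equation on the original lattice---is already built into the way Definition~\ref{def-anti-alg} was set up.
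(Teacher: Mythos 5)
Your proposal is correct and is exactly the paper's argument: the paper proves Proposition~\ref{p-mass} by substituting axiom~2 into axiom~1 (producing precisely the intermediate identity~\eqref{eq-substituted} you derive) and then letting $\delta\searrow 0$, with Theorem~\ref{th-well-defined} guaranteeing the limits exist so the $i\delta(\cdots)$ corrections drop out.
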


This is slightly different from \cite[Eq.~(9)]{Arrighi-et-al},
where the coefficients are $\cos(m\varepsilon)$ and $\sin(m\varepsilon)$.

In the limit $\varepsilon\to 0$, this reproduces the \emph{Dirac equation in $1$ space- and $1$ time-dimension}
$$\begin{pmatrix}
m  & \partial/\partial x-\partial/\partial t \\
\partial/\partial x+\partial/\partial t & m
\end{pmatrix}
\begin{pmatrix}
%\widetilde{A}_2
G^F_{12}(x,t%,m,\varepsilon
) \\ %\widetilde{A}_1
G^F_{11}(x,t%,m,\varepsilon
)
\end{pmatrix}%=0 \qquad\text{\cite[Remark~2]{SU-22}}
=
\begin{pmatrix}
0\\
\delta(x)\delta(t)
\end{pmatrix}
.$$

\begin{proposition}[Klein--Gordon equation] \label{p-Klein-Gordon-mass} For each $k\in\{1,2\}$ and each $(x,t)\in \varepsilon\mathbb{Z}^2$, where $(x,t)\ne(0,0)$ for $k=1$ and $(x,t)\ne(-\varepsilon,0),(0,-\varepsilon)$ for $k=2$, we have
\begin{align*}
\sqrt{1+m^2\varepsilon^2}\,\widetilde{A}_k(x,t+\varepsilon%,m, \varepsilon
)
+\sqrt{1+m^2\varepsilon^2}\,\widetilde{A}_k(x,t-\varepsilon%,m, \varepsilon
)
-\widetilde{A}_k(x+\varepsilon,t%,m, \varepsilon
)
-\widetilde{A}_k(x-\varepsilon,t%,m, \varepsilon
)=0.
\end{align*}
\end{proposition}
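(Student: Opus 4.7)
The plan is to derive the Klein--Gordon equation from the Dirac equation of Proposition~\ref{p-mass} by algebraically ``squaring the Dirac operator,'' mirroring the continuum derivation in which applying the operator $(i\gamma^{\mu}\partial_{\mu}+m)$ to the Dirac equation yields Klein--Gordon. Since Proposition~\ref{p-mass} is already established and holds at every lattice point of $\varepsilon\mathbb{Z}^2$ (with an explicit source only in \eqref{eq-Dirac-source2} at the origin), this reduces the problem to a short algebraic manipulation, and tracking where the source terms land will produce exactly the exception points listed.

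For $k=1$, I would first use \eqref{eq-Dirac-source1} at the points $(x,t+\varepsilon)$ and $(x-\varepsilon,t)$ to express
\begin{align*}
m\varepsilon\,\widetilde{A}_2(x,t)
&=\sqrt{1+m^2\varepsilon^2}\,\widetilde{A}_1(x,t+\varepsilon)-\widetilde{A}_1(x+\varepsilon,t),\\
m\varepsilon\,\widetilde{A}_2(x-\varepsilon,t-\varepsilon)
&=\sqrt{1+m^2\varepsilon^2}\,\widetilde{A}_1(x-\varepsilon,t)-\widetilde{A}_1(x,t-\varepsilon).
\end{align*}
Substituting both into $m\varepsilon$ times \eqref{eq-Dirac-source2} at $(x,t)$ eliminates $\widetilde{A}_2$ entirely; collecting the coefficient $1+m^2\varepsilon^2$ in front of $\widetilde{A}_1(x,t-\varepsilon)$, then dividing through by $\sqrt{1+m^2\varepsilon^2}$, should yield the Klein--Gordon identity with right-hand side $2m\varepsilon\,\delta_{x0}\delta_{t0}$. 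This vanishes precisely away from $(x,t)=(0,0)$, matching the stated exception.

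For $k=2$, I would run the symmetric argument: use \eqref{eq-Dirac-source2} at $(x,t+\varepsilon)$ and at $(x+\varepsilon,t)$ to express $m\varepsilon\,\widetilde{A}_1(x,t)$ and $m\varepsilon\,\widetilde{A}_1(x+\varepsilon,t-\varepsilon)$ in terms of $\widetilde{A}_2$ plus possible $2\sqrt{1+m^2\varepsilon^2}\,\delta_{\cdot 0}\delta_{\cdot 0}$ source contributions, and substitute into $-m\varepsilon$ times \eqref{eq-Dirac-source1} at $(x,t)$. The source terms now kick in exactly when the shifted evaluation point is the origin, i.e.\ when $(x,t)=(0,-\varepsilon)$ or $(x,t)=(-\varepsilon,0)$, reproducing the two exceptional points in the statement.

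The computation is essentially routine; the only thing that requires care is the bookkeeping of the $2\delta_{x0}\delta_{t0}$ source in \eqref{eq-Dirac-source2}, which must be tracked through all shifts to confirm that the set of exceptional lattice points is exactly $\{(0,0)\}$ for $k=1$ and $\{(-\varepsilon,0),(0,-\varepsilon)\}$ for $k=2$, and no smaller. I do not anticipate any substantive obstacle beyond this bookkeeping.
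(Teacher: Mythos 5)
Your proposal is correct, and the algebra does go through as planned: for $k=1$, multiplying \eqref{eq-Dirac-source2} at $(x,t)$ by $m\varepsilon\sqrt{1+m^2\varepsilon^2}$ and eliminating $m\varepsilon\widetilde{A}_2(x,t)$ and $m\varepsilon\widetilde{A}_2(x-\varepsilon,t-\varepsilon)$ via \eqref{eq-Dirac-source1} taken at $(x,t+\varepsilon)$ and at $(x-\varepsilon,t)$ yields the Klein--Gordon stencil with right-hand side $2m\varepsilon\,\delta_{x0}\delta_{t0}$, and the $k=2$ case symmetrically produces the right-hand side $2\sqrt{1+m^2\varepsilon^2}\,\delta_{x0}\delta_{t,-\varepsilon}-2\delta_{x,-\varepsilon}\delta_{t0}$, which vanishes exactly off the stated exceptional points. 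This is, however, a genuinely different route from the paper's. The paper does not square the Dirac operator algebraically; instead it takes \eqref{eq-Dirac-source1} at $(x,t+\varepsilon)$ and then manufactures the time-reversed companion identity $\widetilde{A}_1(x,t-\varepsilon)=\bigl(\widetilde{A}_1(x-\varepsilon,t)-m\varepsilon\,\widetilde{A}_2(x,t)\bigr)/\sqrt{1+m^2\varepsilon^2}$ by changing the signs of $x$ and $t$ and invoking the skew-symmetry $\widetilde{A}_1(x,t)=\widetilde{A}_1(-x,-t)$, $\widetilde{A}_2(x,t)=-\widetilde{A}_2(-x,-t)$ of Corollary~\ref{cor-symmetry} (valid only for $(x,t)\ne(0,0)$); adding the two identities cancels the $\widetilde{A}_2(x,t)$ terms at once. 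Your derivation buys self-containedness --- it uses only Proposition~\ref{p-mass}, not the Fourier-based symmetry corollary --- and it exhibits the exceptional points transparently as the support of an explicit lattice source term, so the ``and no smaller'' claim comes for free (given $m>0$). The paper's version is shorter and avoids the double substitution, but imports the symmetry as an extra ingredient and obtains the exceptional points only as the domain restriction of that symmetry.
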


In the limit $\varepsilon\to 0$, this gives the \emph{Klein--Gordon equation}
$\left(\tfrac{\partial^2}{\partial t^2}-\tfrac{\partial^2}{\partial x^2}+m^2\right)G^F_{1k}(x,t)=0$.

%Notice that \emph{finite}-lattice propagator does \emph{not} exhibit charge conservation (see Example~\ref{ex-1x2}).
The infinite-lattice propagator has the same reflection symmetries
as the continuum one.

\begin{proposition}[Skew-symmetry]\label{p-symmetry-mass} %\textup{(Cf.~\cite[Lemma~6]{Ambainis-etal-01})}
For each $(x,t)\in \varepsilon\mathbb{Z}^2$, where $(x,t)\ne(0,0)$, we have
\begin{gather*}
  \widetilde{A}_1(x,t%,m, \varepsilon
  )=\widetilde{A}_1(-x,t%,m, \varepsilon
  )=\widetilde{A}_1(x,-t%,m, \varepsilon
  )=\widetilde{A}_1(-x,-t%,m, \varepsilon
  ),\\
  \widetilde{A}_2(x,t%,m, \varepsilon
  )=-\widetilde{A}_2(-x,-t%,m, \varepsilon
  ),
  \qquad\qquad
  (t-x)\,\widetilde{A}_2(x,t%,m, \varepsilon
  )
  =(t+x)\,\widetilde{A}_2(-x,t%,m, \varepsilon
  ).
\end{gather*}
\end{proposition}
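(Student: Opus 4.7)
The plan is to derive all three identities directly from the Fourier integral representation in Proposition~\ref{th-equivalence}, exploiting the evenness in $p$ of $\omega_p$ and $\sqrt{m^2\varepsilon^2+\sin^2(p\varepsilon)}$, the oddness of $\sin(p\varepsilon)$, and the reality/imaginariness character of the propagators supplied by Theorem~\ref{th-well-defined}. Abbreviate
$$
I_1(x,t):=\int_{-\pi/\varepsilon}^{\pi/\varepsilon}\frac{e^{ipx-i\omega_pt}\,dp}{\sqrt{m^2\varepsilon^2+\sin^2(p\varepsilon)}},\qquad J_0(x,t):=\int_{-\pi/\varepsilon}^{\pi/\varepsilon}e^{ipx-i\omega_pt}\,dp,\qquad J_1(x,t):=\int_{-\pi/\varepsilon}^{\pi/\varepsilon}\frac{\sin(p\varepsilon)\,e^{ipx-i\omega_pt}\,dp}{\sqrt{m^2\varepsilon^2+\sin^2(p\varepsilon)}}.
$$
The substitution $p\mapsto -p$ yields evenness of $I_1,J_0$ and oddness of $J_1$ in $x$; complex conjugation followed by $p\mapsto -p$ yields $\overline{I_1(x,t)}=I_1(x,-t)$, $\overline{J_0(x,t)}=J_0(x,-t)$, and $\overline{J_1(x,t)}=-J_1(x,-t)$.

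The $x$-parity $\widetilde{A}_1(-x,t)=\widetilde{A}_1(x,t)$ is then immediate, since the $\pm$-convention depends only on the parity of $(x+t)/\varepsilon$, preserved under $x\mapsto -x$ for $x/\varepsilon\in\mathbb{Z}$. For the $t$-parity $\widetilde{A}_1(x,-t)=\widetilde{A}_1(x,t)$, combine $I_1(x,-t)=\overline{I_1(x,t)}$ with Theorem~\ref{th-well-defined}, which forces $I_1$ to be real when $(x+t)/\varepsilon$ is even and purely imaginary when $(x+t)/\varepsilon$ is odd (because of the imaginary prefactor $im\varepsilon^2/2\pi$). In the real case $\overline{I_1}=I_1$ and both sign-conventions are $+$; in the imaginary case $\overline{I_1}=-I_1$, and the extra minus prescribed for $t<0$, $(x+t)/\varepsilon$ odd, exactly cancels it. Combining both parities gives the full four-term chain.

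For the skew-symmetry $\widetilde{A}_2(x,t)=-\widetilde{A}_2(-x,-t)$, I would first treat $t>0$. The chain
$$
J_0(-x,-t)+J_1(-x,-t)=J_0(x,-t)-J_1(x,-t)=\overline{J_0(x,t)}+\overline{J_1(x,t)}=\overline{J_0(x,t)+J_1(x,t)}
$$
gives $\widetilde{A}_2(-x,-t)=\pm\overline{\widetilde{A}_2(x,t)}$, and the sign produced by the $\pm$-convention (which is $-$ precisely when $(x+t)/\varepsilon$ is even) together with the reality/imaginariness of $\widetilde{A}_2$ from Theorem~\ref{th-well-defined} delivers the desired minus in both parity subcases. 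The case $t<0$ is equivalent by the involution $(x,t)\leftrightarrow(-x,-t)$. For $t=0$, $x\ne 0$, a direct computation gives $J_0(x,0)=2\sin(\pi x/\varepsilon)/x=0$, so $\widetilde{A}_2(x,0)=\tfrac{\varepsilon}{2\pi}J_1(x,0)$, and the oddness of $J_1$ in $x$ finishes the argument.

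The third identity reduces to $tJ_1(x,t)=xJ_0(x,t)$, which I would prove by integration by parts. A direct differentiation gives $\omega_p'=\sin(p\varepsilon)/\sqrt{m^2\varepsilon^2+\sin^2(p\varepsilon)}$, hence $(x-t\omega_p')e^{ipx-i\omega_pt}=\tfrac{1}{i}\tfrac{d}{dp}e^{ipx-i\omega_pt}$, and
$$
xJ_0(x,t)-tJ_1(x,t)=\frac{1}{i}\bigl[e^{ipx-i\omega_pt}\bigr]_{p=-\pi/\varepsilon}^{p=\pi/\varepsilon}=0,
$$
since $e^{\pm i\pi x/\varepsilon}=(-1)^{x/\varepsilon}$ for $x/\varepsilon\in\mathbb{Z}$ and $\omega_{\pi/\varepsilon}=\omega_{-\pi/\varepsilon}$. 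Rewriting $(t-x)\widetilde{A}_2(x,t)-(t+x)\widetilde{A}_2(-x,t)$ using the evenness of $J_0$ and oddness of $J_1$ in $x$ reduces it to a multiple of $tJ_1-xJ_0=0$. The main obstacle I foresee is the careful bookkeeping of the $\pm$-sign convention in Proposition~\ref{th-equivalence} across the four cases determined by the parity of $(x+t)/\varepsilon$ and the sign of $t$, and its interplay with the reality/imaginariness from Theorem~\ref{th-well-defined}; the integration-by-parts step itself is short once one notices the boundary cancellation coming from $x/\varepsilon,t/\varepsilon\in\mathbb{Z}$.
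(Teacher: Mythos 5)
Your argument is correct, and every step checks out: the parity bookkeeping of the $\pm$ convention, the conjugation identities $\overline{I_1(x,t)}=I_1(x,-t)$, $\overline{J_1(x,t)}=-J_1(x,-t)$, the use of $\omega_p'=\sin(p\varepsilon)/\sqrt{m^2\varepsilon^2+\sin^2(p\varepsilon)}$, and the vanishing boundary term in the integration by parts all hold. However, your route is not the one the paper takes. The paper's proof of this proposition is a one-line reference to Corollaries~\ref{cor-symmetry} and~\ref{cor-skew-symmetry}, both of which are extracted from the \emph{double} space-time Fourier representation of Proposition~\ref{cor-double-fourier-anti}: the reflection identities come from the substitutions $(p,\omega)\mapsto(\pm p,-\omega)$, which in that representation require no sign conventions, no complex conjugation, and no appeal to the reality/imaginariness of the propagator; and the weighted identity $(t-x)\widetilde{A}_2(x,t)=(t+x)\widetilde{A}_2(-x,t)$ is obtained by integrating by parts \emph{twice} (once in $p$, once in $\omega$) inside the double integral. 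You instead work from the single momentum-space integral of Proposition~\ref{th-equivalence}, which buys a shorter computation for the last identity (one integration by parts, using the explicit formula for $\omega_p'$, with the boundary term killed by $x/\varepsilon\in\mathbb{Z}$ and periodicity of $\omega_p$), but at the cost of tracking the $\pm$ convention across four cases and importing the real/imaginary dichotomy from Theorem~\ref{th-well-defined} (equivalently Corollary~\ref{cor-alternation}) to convert conjugation into a sign. Both routes are legitimate since Proposition~\ref{th-equivalence} and Theorem~\ref{th-well-defined} are proved independently of this proposition; the paper's double-integral route is more uniform and sign-free, while yours is more self-contained at the level of elementary one-variable calculus.
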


\begin{proposition}[Charge conservation] \label{p-mass2}
For each $t\in\varepsilon\mathbb{Z}$,
$\sum\limits_{x\in\varepsilon\mathbb{Z}}
\dfrac{|\widetilde{A}_1\left(x,t%,m,\varepsilon
\right)|^2+
|\widetilde{A}_2\left(x,t%,m,\varepsilon
\right)|^2}{2}=1$.
\end{proposition}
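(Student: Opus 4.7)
The plan is to apply Plancherel's theorem to the Fourier integral representation from Proposition~\ref{th-equivalence}. Fix $t\in\varepsilon\mathbb{Z}$ and introduce the Fourier amplitudes
$$\hat{A}_1(p):=\frac{im\varepsilon\,e^{-i\omega_p t}}{\sqrt{m^2\varepsilon^2+\sin^2(p\varepsilon)}},\qquad \hat{A}_2(p):=\left(1+\frac{\sin(p\varepsilon)}{\sqrt{m^2\varepsilon^2+\sin^2(p\varepsilon)}}\right)e^{-i\omega_p t},$$
so that Proposition~\ref{th-equivalence} reads $\widetilde{A}_k(x,t)=\pm\tfrac{\varepsilon}{2\pi}\int_{-\pi/\varepsilon}^{\pi/\varepsilon}\hat{A}_k(p)\,e^{ipx}\,dp$. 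The sign depends on $x,t,k$, but it is irrelevant for $|\widetilde{A}_k(x,t)|^2$; moreover, since $m,\varepsilon>0$, both $\hat{A}_k$ are bounded (by $1$ and $2$, respectively) on the compact interval $[-\pi/\varepsilon,\pi/\varepsilon]$ and thus lie in $L^2$.

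The second step is Plancherel's identity for the discrete-time Fourier transform on $\varepsilon\mathbb{Z}$ (derived from the standard $\mathbb{Z}$-version by the rescaling $\theta=p\varepsilon$), which yields
$$\sum_{x\in\varepsilon\mathbb{Z}}|\widetilde{A}_k(x,t)|^2=\frac{\varepsilon}{2\pi}\int_{-\pi/\varepsilon}^{\pi/\varepsilon}|\hat{A}_k(p)|^2\,dp.$$
A short algebraic computation---using that $1+\sin(p\varepsilon)/\sqrt{m^2\varepsilon^2+\sin^2(p\varepsilon)}$ is real---then gives the key pointwise identity
$$|\hat{A}_1(p)|^2+|\hat{A}_2(p)|^2=2+\frac{2\sin(p\varepsilon)}{\sqrt{m^2\varepsilon^2+\sin^2(p\varepsilon)}}.$$

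The last step is a parity observation: the second summand is odd in $p$, so it integrates to zero over the symmetric interval $[-\pi/\varepsilon,\pi/\varepsilon]$. Summing the two Plancherel identities and dividing by $2$ leaves $\tfrac{1}{2}\cdot\tfrac{\varepsilon}{2\pi}\cdot 2\cdot\tfrac{2\pi}{\varepsilon}=1$, as required. I do not anticipate a serious obstacle: the argument reduces to an easy $L^2$ bound, a textbook application of Plancherel, and a parity cancellation. The structural reason it works is that $|\hat{A}_1|^2+|\hat{A}_2|^2\equiv 2$ modulo an odd function, which is the lattice counterpart of the standard normalization of Dirac spinors splitting the field into positive- and negative-frequency modes.
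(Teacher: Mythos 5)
Your proposal is correct and is essentially identical to the paper's own proof: the paper likewise applies the Plancherel theorem to the Fourier representation of Proposition~\ref{th-equivalence}, arrives at the same pointwise identity $|\hat{A}_1(p)|^2+|\hat{A}_2(p)|^2=2+2\sin(p\varepsilon)/\sqrt{m^2\varepsilon^2+\sin^2(p\varepsilon)}$, and kills the second term by the same oddness-in-$p$ observation. The arithmetic and the handling of the irrelevant $\pm$ sign are both fine.
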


There are two versions of Huygens' principle
(cf.~\cite[Proposition~9]{SU-22}).
%for the infinite-lattice propagator.

\begin{proposition}[Huygens' principle]\label{p-Huygens2}
For each $x,t,t'\in\varepsilon\mathbb{Z}$, where $t\ge t'\ge 0$, we have
\begin{align*}
\widetilde{A}_1(x,t)
&=\frac{1}{2}\sum\limits_{\substack{x'\in\varepsilon\mathbb{Z}}}
\left(\widetilde{A}_2(x',t')\widetilde{A}_1(x-x',t-t')
+\widetilde{A}_1(x',t')\widetilde{A}_2(x'-x,t-t')\right),\\
\widetilde{A}_2(x,t)
&=\frac{1}{2}\sum\limits_{\substack{x'\in\varepsilon\mathbb{Z}}}
\left(
\widetilde{A}_2(x',t')\widetilde{A}_2(x-x',t-t')
-\widetilde{A}_1(x',t')\widetilde{A}_1(x'-x,t-t')\right).
\end{align*}
\end{proposition}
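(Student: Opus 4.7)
The plan is to pass to Fourier space via Proposition~\ref{th-equivalence}. Write $\alpha(p):=m\varepsilon/\sqrt{m^2\varepsilon^2+\sin^2(p\varepsilon)}$ and $\beta(p):=\sin(p\varepsilon)/\sqrt{m^2\varepsilon^2+\sin^2(p\varepsilon)}$, so that $\alpha^2+\beta^2\equiv 1$, $\alpha(-p)=\alpha(p)$, $\beta(-p)=-\beta(p)$, and $\omega_{-p}=\omega_p$. Since $t\ge t'\ge 0$ and $t-t'\ge 0$, no sign flips occur in Proposition~\ref{th-equivalence}, so every propagator entering the identities admits the representation
\[
\widetilde{A}_1(x,t)=\tfrac{i\varepsilon}{2\pi}\!\int_{-\pi/\varepsilon}^{\pi/\varepsilon}\!\alpha(p)\,e^{ipx-i\omega_p t}\,dp,\qquad
\widetilde{A}_2(x,t)=\tfrac{\varepsilon}{2\pi}\!\int_{-\pi/\varepsilon}^{\pi/\varepsilon}\!(1+\beta(p))\,e^{ipx-i\omega_p t}\,dp.
\]

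Next I apply the discrete convolution theorem on $\varepsilon\mathbb{Z}$: for $f,g\in\ell^2(\varepsilon\mathbb{Z})$ with DTFTs $\hat f,\hat g$,
\[
\sum_{x'}f(x')g(x-x')=\tfrac{\varepsilon}{2\pi}\!\int\!\hat f(p)\hat g(p)\,e^{ipx}\,dp,\qquad
\sum_{x'}f(x')g(x'-x)=\tfrac{\varepsilon}{2\pi}\!\int\!\hat f(p)\hat g(-p)\,e^{ipx}\,dp,
\]
where the $\ell^2$ hypothesis is supplied by charge conservation (Proposition~\ref{p-mass2}). For the first identity, the straight convolution has Fourier symbol $i\alpha(p)(1+\beta(p))e^{-i\omega_p t}$ and the reflected one has $i\alpha(p)(1+\beta(-p))e^{-i\omega_p t}=i\alpha(p)(1-\beta(p))e^{-i\omega_p t}$; the odd $\beta$-terms cancel in the half-sum, leaving $i\alpha(p)\,e^{-i\omega_p t}$, which is exactly the Fourier symbol of $\widetilde{A}_1(x,t)$.

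For the second identity, the straight convolution has symbol $(1+\beta(p))^2e^{-i\omega_p t}$, while the reflected convolution of $\widetilde{A}_1$ with itself has symbol $i\alpha(p)\cdot i\alpha(-p)\,e^{-i\omega_p t}=-\alpha(p)^2\,e^{-i\omega_p t}$. Their half-difference equals $\tfrac12\bigl((1+\beta(p))^2+\alpha(p)^2\bigr)e^{-i\omega_p t}=(1+\beta(p))e^{-i\omega_p t}$ by the Pythagorean relation $\alpha^2+\beta^2\equiv 1$, which is precisely the Fourier symbol of $\widetilde{A}_2(x,t)$. Inverse Fourier transformation then yields the two claimed identities pointwise.

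The main obstacle is justifying the interchange of summation and integration rigorously, since the product of two $\ell^2$-sequences need not be in $\ell^1$. The cleanest way is to note that for $m>0$ the symbols $\alpha$ and $\beta$ are smooth and bounded, so each DTFT factor lies in $L^\infty\cap C^\infty([-\pi/\varepsilon,\pi/\varepsilon])$; the product of any two factors therefore lies in $L^1$, and its inverse transform is continuous and equals the corresponding pointwise convolution (which is well-defined term-by-term by Cauchy--Schwarz). Alternatively, one regularizes with the small imaginary mass $\delta>0$ from Definition~\ref{def-anti-alg}, for which the propagator decays exponentially in $x$ and all sums converge absolutely; performing the Fourier manipulation for $\delta>0$ and passing $\delta\searrow 0$ by dominated convergence gives the same conclusion.
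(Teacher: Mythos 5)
Your proof is correct and follows essentially the same route as the paper's: the paper likewise combines the Fourier representation of Proposition~\ref{th-equivalence} with the convolution theorem and verifies exactly the two symbol identities you wrote down (the cancellation of the odd part of $\beta$ for $\widetilde{A}_1$ and the Pythagorean identity $(1+\beta)^2+\alpha^2=2(1+\beta)$ for $\widetilde{A}_2$). Your additional care in justifying the interchange of sum and integral (bounded smooth symbols, or $\delta$-regularization) only makes explicit what the paper leaves implicit under the phrase ``the convolution theorem.''
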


%\mscomm{Wolfram Pr12! - Done in feynman.nb 28.04.22}
In the following version of Huygens' principle,
there are finitely many nonzero summands.
%the sums are actually finite.

\begin{proposition}[Huygens' principle]\label{p-Huygens}
For each $x,t,t'\in\varepsilon\mathbb{Z}$, where $t\ge t'\ge 0$, we have
\begin{align*}
\widetilde{A}_1(x,t)
&=\sum\limits_{\substack{x'\in\varepsilon\mathbb{Z}:\\
(x+x'+t+t')/\varepsilon \text{ odd}}}
\widetilde{A}_2(x',t')\widetilde{A}_1(x-x',t-t')
+\sum\limits_{\substack{x'\in\varepsilon\mathbb{Z}:\\
(x+x'+t+t')/\varepsilon \text{ even}}}
\widetilde{A}_1(x',t')\widetilde{A}_2(x'-x,t-t'),\\
\widetilde{A}_2(x,t)
&= \sum\limits_{\substack{x'\in\varepsilon\mathbb{Z}:\\
(x+x'+t+t')/\varepsilon \text{ even}}}
\widetilde{A}_2(x',t')\widetilde{A}_2(x-x',t-t')
 - \sum\limits_{\substack{x'\in\varepsilon\mathbb{Z}:\\
(x+x'+t+t')/\varepsilon \text{ odd}}}
 \widetilde{A}_1(x',t')\widetilde{A}_1(x'-x,t-t').
\end{align*}
\end{proposition}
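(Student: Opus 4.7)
The plan is to deduce Proposition~\ref{p-Huygens} from Proposition~\ref{p-Huygens2} by decomposing each restricted sum into parity classes and showing that the resulting parity-anomalous contributions cancel. First I rewrite the indicators
\[
\mathbf{1}[(x+x'+t+t')/\varepsilon\text{ odd}]
= \tfrac{1}{2}\bigl(1 - (-1)^{(x+t+t')/\varepsilon}(-1)^{x'/\varepsilon}\bigr),
\]
and analogously for ``even''. Each restricted sum in Proposition~\ref{p-Huygens} then decomposes as half of the corresponding unrestricted sum plus or minus a \emph{parity-shifted sum} carrying an extra factor $(-1)^{x'/\varepsilon}$. The two halves of the unrestricted sums combine, by Proposition~\ref{p-Huygens2}, into $\widetilde A_k(x,t)$, so what remains is to show that the parity-shifted sums cancel.

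Explicitly, for the first equation of Proposition~\ref{p-Huygens} I set
\begin{align*}
\Phi_1 &:= \sum_{x'\in\varepsilon\mathbb{Z}} (-1)^{x'/\varepsilon}\,\widetilde A_2(x',t')\,\widetilde A_1(x-x',t-t'),\\
\Phi_2 &:= \sum_{x'\in\varepsilon\mathbb{Z}} (-1)^{x'/\varepsilon}\,\widetilde A_1(x',t')\,\widetilde A_2(x'-x,t-t'),
\end{align*}
and reduce the claim to $\Phi_1 = \Phi_2$. Analogously, the second equation reduces to $\Psi_g + \Psi_f = 0$, where $\Psi_g, \Psi_f$ are obtained from $\Phi_1,\Phi_2$ by replacing the products by $\widetilde A_2(x',t')\widetilde A_2(x-x',t-t')$ and $\widetilde A_1(x',t')\widetilde A_1(x'-x,t-t')$ respectively. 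Both identities are verified via the Fourier integral of Proposition~\ref{th-equivalence}: writing $(-1)^{x'/\varepsilon} = e^{i\pi x'/\varepsilon}$ and performing the inner sum over $x'$ forces the two momentum variables to differ by $\pi/\varepsilon$ on the torus $[-\pi/\varepsilon, \pi/\varepsilon]$, so one Fourier symbol is evaluated at the shifted momentum $p - \pi/\varepsilon$. The key symmetries $\sin((p-\pi/\varepsilon)\varepsilon) = -\sin(p\varepsilon)$ and $\omega_{p-\pi/\varepsilon} = \pi/\varepsilon - \omega_p$ (the latter from $\arccos(-y) = \pi - \arccos(y)$ applied to~\eqref{eq-omega}) let me reduce both $\Phi_1$ and $\Phi_2$ to the same single integral over $p$, and similarly reduce $\Psi_g$ and $-\Psi_f$ to the same integral.

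The main obstacle is the careful Fourier bookkeeping for $\Phi_2$ and $\Psi_f$: the reversed orientation of the second argument ($\widetilde A_k(x'-x,\cdot)$ instead of $\widetilde A_k(x-x',\cdot)$) forces a change of variable $p \mapsto -p$ that interacts nontrivially with the momentum shift $p \mapsto p - \pi/\varepsilon$. The coincidence of the integrands ultimately hinges on the parity of $\sin(p\varepsilon)$ (odd) and $\omega_p$ (even) in~$p$, which is precisely what makes the parity-restricted version of Huygens' principle hold without the factor $1/2$ present in Proposition~\ref{p-Huygens2}.
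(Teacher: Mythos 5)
Your proof is correct, but it takes a genuinely different route from the paper. The paper proves Proposition~\ref{p-Huygens} directly by induction on $t$: the base case $t=t'$ is the initial-value Lemma~\ref{l-initial} ($\widetilde{A}_k(x,0)=\delta_{k2}\delta_{x0}$ for $k+x/\varepsilon$ even), and the inductive step substitutes the lattice Dirac equation (Proposition~\ref{p-mass}) into the inductive hypothesis and regroups the sums by the shifted parity; Proposition~\ref{p-Huygens2} is not used at all. You instead take Proposition~\ref{p-Huygens2} as input and show that the parity-twisted correction terms vanish on the Fourier side. I checked your reduction and the key cancellations: with $D(p):=\sqrt{m^2\varepsilon^2+\sin^2(p\varepsilon)}$, the symbols satisfy $D(p-\pi/\varepsilon)=D(p)=D(-p)$, $\sin((p-\pi/\varepsilon)\varepsilon)=-\sin(p\varepsilon)$, $\omega_{p-\pi/\varepsilon}=\pi/\varepsilon-\omega_p$, $\omega_{-p}=\omega_p$, and a short computation shows that the integrands of $\Phi_1$ and $\Phi_2$ coincide pointwise (both equal $e^{-i\pi t'/\varepsilon}\bigl(1-\sin(p\varepsilon)/D(p)\bigr)\,im\varepsilon\, D(p)^{-1}e^{-i\omega_p(t-2t')}e^{ipx}$ up to the common normalization), while those of $\Psi_g$ and $\Psi_f$ differ exactly by the sign coming from $(im\varepsilon)^2=-m^2\varepsilon^2$ versus $1-\sin^2(p\varepsilon)/D(p)^2=m^2\varepsilon^2/D(p)^2$; the absolute convergence needed for the convolution/Parseval step follows from $\ell^2$-summability (Proposition~\ref{p-mass2}), and the sign convention in Proposition~\ref{th-equivalence} is harmless since $t\ge t'\ge 0$. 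What each approach buys: the paper's induction is more elementary and self-contained (it needs only the Dirac equation and the initial values), whereas your argument explains structurally why the factor $1/2$ of Proposition~\ref{p-Huygens2} disappears in the parity-restricted version — the twisted convolutions are forced to agree by the momentum-shift symmetries of the dispersion relation~\eqref{eq-omega}. To make your write-up fully rigorous you would only need to display the two-line integrand computation rather than assert it.
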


%\mscomm{??? Replace by the sums over all $x'$ and add a factor of $1/2$ the right side ???}

%\mscomm{!!! The finite-lattice propagator satisfies the Dirac equation, hence charge conservation --- use Smirnov's - like argument !!!}

%%%%%%%%%%%%%%%%%%%%
%\begin{proposition} \textup{(Identities)}\label{p-mass4}
%The analogues of Propositions~\ref{p-mass}--\ref{p-equal-time} remain true literally, if $a_1$ and $a_2$ are replaced by $b_1$ and $b_2$ respectively. The assumption $t>0$ can then be dropped.
%\end{proposition}
%%%%%%%%%%%%%%%%%%%%
%\mscomm{!!! Huygens and Equal-time recurrence for $t\ne \varepsilon$, $m\varepsilon\ne 1$ have not been proved !!!}
%For each $(x,t)\in \varepsilon\mathbb{Z}^2$ we have
%\begin{align*}%\label{eq-Dirac-mass3}
%b_1(x,t+\varepsilon,m, \varepsilon) &= \frac{1}{\sqrt{1+m^2\varepsilon^2}}
%(b_1(x+\varepsilon,t,m, \varepsilon)
%+ m \varepsilon\, b_2(x+\varepsilon,t,m, \varepsilon)),\\
%b_2(x,t+\varepsilon,m, \varepsilon) &= \frac{1}{\sqrt{1+m^2\varepsilon^2}}
%(b_2(x-\varepsilon,t,m, \varepsilon)
%- m \varepsilon\, b_1(x-\varepsilon,t,m, \varepsilon)).
%\end{align*}

\begin{proposition}[Equal-time mixed recurrence]\label{l-mean}
For each $(x,t)\in \varepsilon\mathbb{Z}^2$ we have
%\mscomm{Wolfram it! - Done in feynman.nb 28.04.22}
\begin{align}\label{eq-mean1}
2m\varepsilon x\widetilde{A}_1(x,t)
&=(x-t-\varepsilon)\widetilde{A}_2(x-\varepsilon,t)
 -(x-t+\varepsilon)\widetilde{A}_2(x+\varepsilon,t),
\\ \label{eq-mean2}
2m\varepsilon x\widetilde{A}_2(x,t)
&=(x+t)\widetilde{A}_1(x-\varepsilon,t)-(x+t)\widetilde{A}_1(x+\varepsilon,t).
\end{align}
\end{proposition}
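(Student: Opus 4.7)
The plan is to derive both identities from the Fourier integral representation in Proposition~\ref{th-equivalence}. Set $\phi(p):=e^{ipx-i\omega_p t}$, $f(p):=1/\sqrt{m^2\varepsilon^2+\sin^2(p\varepsilon)}$, and $g(p):=1+\sin(p\varepsilon)f(p)$, and abbreviate $\int h\,\phi := \int_{-\pi/\varepsilon}^{\pi/\varepsilon} h(p)\phi(p)\,dp$, so that $\widetilde{A}_1(x,t)=\pm\tfrac{im\varepsilon^2}{2\pi}\int f\phi$ and $\widetilde{A}_2(x,t)=\pm\tfrac{\varepsilon}{2\pi}\int g\phi$. A parity check shows the $\pm$ sign is common to both sides of each identity: passing from the propagator on the left-hand side to either propagator on the right-hand side shifts both $x$ by $\pm\varepsilon$ and $k$ by $\pm 1$, so the parity of $(x+t)/\varepsilon+k$ is preserved and the sign convention of Proposition~\ref{th-equivalence} is unchanged.

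The key input is $\omega_p'=\sin(p\varepsilon)f(p)=g(p)-1$, together with $\partial_p\phi=(ix-it\omega_p')\phi$. Since $\omega_p$, $f$, $g$ are smooth and $2\pi/\varepsilon$-periodic in $p$ and $e^{ipx}$ is $2\pi/\varepsilon$-periodic for $x\in\varepsilon\mathbb{Z}$, integration by parts over $[-\pi/\varepsilon,\pi/\varepsilon]$ has no boundary contribution and gives the master formula
\[
x\int h\,\phi \;=\; t\int \omega_p'\, h\,\phi \;+\; i\int h'\,\phi \qquad(\star)
\]
for any smooth $2\pi/\varepsilon$-periodic $h$.

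Identity~\eqref{eq-mean2} now reduces, after substituting the Fourier representations and using $g-1=\sin(p\varepsilon)f$, to the claim $x\int g\,\phi=(x+t)\int(g-1)\,\phi$. Rearranging and using $g-1=\omega_p'$, this simplifies to $x\int \phi = t\int \omega_p'\,\phi$, which is exactly $(\star)$ with $h\equiv 1$.

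Identity~\eqref{eq-mean1} is more involved. After substitution it becomes $im^2\varepsilon^2\,x\int f\,\phi = -i(x-t)\int g\sin(p\varepsilon)\,\phi - \varepsilon\int g\cos(p\varepsilon)\,\phi$. I would expand the left side via $(\star)$ with $h=f$, using $f'=-\varepsilon\sin(p\varepsilon)\cos(p\varepsilon)f^3$, and the term $x\int g\sin(p\varepsilon)\,\phi$ on the right via $(\star)$ with $h=g\sin(p\varepsilon)$, using $g'=m^2\varepsilon^3\cos(p\varepsilon)f^3$. The coefficients of $t$ on both sides then match because of the algebraic identity
\[
g(g-2)=\bigl(1+\sin(p\varepsilon)f\bigr)\bigl(\sin(p\varepsilon)f-1\bigr)=\sin^2(p\varepsilon)f^2-1=-m^2\varepsilon^2 f^2,
\]
and the remaining pieces (including the $-\varepsilon\int g\cos(p\varepsilon)\phi$ contribution) cancel by direct inspection. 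The main obstacle I expect is spotting this algebraic miracle $g(g-2)=-m^2\varepsilon^2 f^2$; once it is in hand, the remainder is routine trigonometric bookkeeping.
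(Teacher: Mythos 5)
Your proof is correct and follows essentially the same route as the paper's: both identities are obtained from the Fourier integrals of Proposition~\ref{th-equivalence} by integration by parts over the period (the paper routes \eqref{eq-mean2} through the skew-symmetry $(t-x)\widetilde{A}_2(x,t)=(t+x)\widetilde{A}_2(-x,t)$, itself proved by parts, and for \eqref{eq-mean1} integrates by parts once in each of $2m\varepsilon x\widetilde{A}_1(x,t)$ and $(x-t)\widetilde{A}_2(x,t)$ so that the integrands match — the same algebra your identity $g(2-g)=m^2\varepsilon^2 f^2$ encodes). Your parity argument for the common $\pm$ sign replaces the paper's reduction to $t\ge 0$ via Proposition~\ref{p-symmetry-mass}, and the cancellations you leave to "direct inspection" in \eqref{eq-mean1} do check out.
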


In particular, $\widetilde{A}_2(x,0)
=\left(\widetilde{A}_1(x-\varepsilon,0)-\widetilde{A}_1(x+\varepsilon,0)\right)/2m\varepsilon$
for $x\ne 0$.

\begin{proposition}[Equal-time recurrence relation]\label{p-triple} For each $(x,t)\in\varepsilon\mathbb{Z}^2$ we have
\begin{multline*}%\label{eq-p-equal-time}
(x+\varepsilon)((x-\varepsilon)^2-t^2)
\widetilde{A}_1(x-2\varepsilon,t%,m,\varepsilon
) +(x-\varepsilon)((x+\varepsilon)^2-t^2)
\widetilde{A}_1(x+2\varepsilon,t%,m,\varepsilon
)=
\\
=2x \left((1+2m^2\varepsilon^2)(x^2-\varepsilon^2)-t^2\right)
\widetilde{A}_1(x,t%,m,\varepsilon
),
\end{multline*}
\vspace{-1.0cm}
\begin{multline*}
(x+\varepsilon)((x-\varepsilon)^2-(t+\varepsilon)^2)
\widetilde{A}_2(x-2\varepsilon,t%,m,\varepsilon
) +(x-\varepsilon)((x+\varepsilon)^2-(t-\varepsilon)^2)
\widetilde{A}_2(x+2\varepsilon,t%,m,\varepsilon
)=
\\
=2x\left((1+2m^2\varepsilon^2)(x^2-\varepsilon^2)-t^2+\varepsilon^2\right)
\widetilde{A}_2(x,t%,m,\varepsilon
).
\end{multline*}
\end{proposition}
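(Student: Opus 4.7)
The plan is to derive both identities by purely algebraic elimination from the mixed recurrences of Proposition~\ref{l-mean}: I will kill $\widetilde A_2$ to get the $\widetilde A_1$-recurrence, and symmetrically kill $\widetilde A_1$ to get the $\widetilde A_2$-recurrence. No further input is needed, and I expect no conceptual obstacle beyond careful bookkeeping.

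For the first identity, I would multiply \eqref{eq-mean1} by $2m\varepsilon(x^2-\varepsilon^2)$, obtaining
\[
4m^2\varepsilon^2 x(x^2-\varepsilon^2)\,\widetilde A_1(x,t) = (x+\varepsilon)(x-t-\varepsilon)\cdot 2m\varepsilon(x-\varepsilon)\widetilde A_2(x-\varepsilon,t) - (x-\varepsilon)(x-t+\varepsilon)\cdot 2m\varepsilon(x+\varepsilon)\widetilde A_2(x+\varepsilon,t).
\]
Then I would replace the two products $2m\varepsilon(x\mp\varepsilon)\widetilde A_2(x\mp\varepsilon,t)$ using the shifts $x\to x\mp\varepsilon$ of \eqref{eq-mean2}. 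After the substitution the right-hand side depends only on $\widetilde A_1$ at $x-2\varepsilon$, $x$, and $x+2\varepsilon$. Using the factorizations $(x-t-\varepsilon)(x+t-\varepsilon)=(x-\varepsilon)^2-t^2$ and $(x-t+\varepsilon)(x+t+\varepsilon)=(x+\varepsilon)^2-t^2$, the coefficients of $\widetilde A_1(x\pm 2\varepsilon,t)$ come out as in the statement. The coefficient of $\widetilde A_1(x,t)$ then simplifies via $(x+\varepsilon)((x-\varepsilon)^2-t^2)+(x-\varepsilon)((x+\varepsilon)^2-t^2)=2x(x^2-\varepsilon^2)-2xt^2$ to $2x[(1+2m^2\varepsilon^2)(x^2-\varepsilon^2)-t^2]$, matching the claim.

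The $\widetilde A_2$-identity follows by the mirror calculation: multiply \eqref{eq-mean2} by $2m\varepsilon(x^2-\varepsilon^2)$ and then substitute the shifts $x\to x\mp\varepsilon$ of \eqref{eq-mean1} for $2m\varepsilon(x\mp\varepsilon)\widetilde A_1(x\mp\varepsilon,t)$. The relevant factorizations are now $(x\pm\varepsilon)^2-(t\mp\varepsilon)^2=(x-t\pm 2\varepsilon)(x+t)$, which produces the common factor $(x+t)$ in both endpoint coefficients required by the statement. Collecting the two contributions to the middle term gives coefficient $2x[x^2-t^2+2m^2\varepsilon^2(x^2-\varepsilon^2)]=2x[(1+2m^2\varepsilon^2)(x^2-\varepsilon^2)-t^2+\varepsilon^2]$, as desired.

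The main thing to keep clean is the handling of the edge values $x=0,\pm\varepsilon$, where some factors $x\pm\varepsilon$ vanish. Multiplying by $2m\varepsilon(x^2-\varepsilon^2)$ \emph{before} substitution circumvents this: every intermediate line is a polynomial identity in $x$ obtained by integer-linear combinations of Proposition~\ref{l-mean}, so at $x=0,\pm\varepsilon$ both sides simply vanish or collapse to the relevant reflection symmetry from Proposition~\ref{p-symmetry-mass}. Thus the whole proof reduces to matching polynomial coefficients on the two sides, which I have already verified above.
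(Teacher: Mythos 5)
Your proposal is correct and is essentially the paper's own proof: the paper likewise obtains the first identity by substituting the shifts $x\mapsto x\pm\varepsilon$ of \eqref{eq-mean2} into \eqref{eq-mean1}, and the second by interchanging the roles of the two equations. Your extra care of clearing denominators by multiplying through by $2m\varepsilon(x^2-\varepsilon^2)$ first is a harmless (and slightly tidier) way of organizing the same elimination, and your coefficient computations check out.
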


See \cite{Kuyanov-Slizkov-22} for an application.
Analogous identities can be written for any $3$ neighboring lattice points
by means of Proposition~\ref{p-mass5} and Gauss contiguous relations \cite[9.137]{Gradstein-Ryzhik-63}.

%To write a formula for $b_k(x,t,m,\varepsilon)$,

\begin{figure}[htbp]
  \centering
\includegraphics[height=2.2cm]{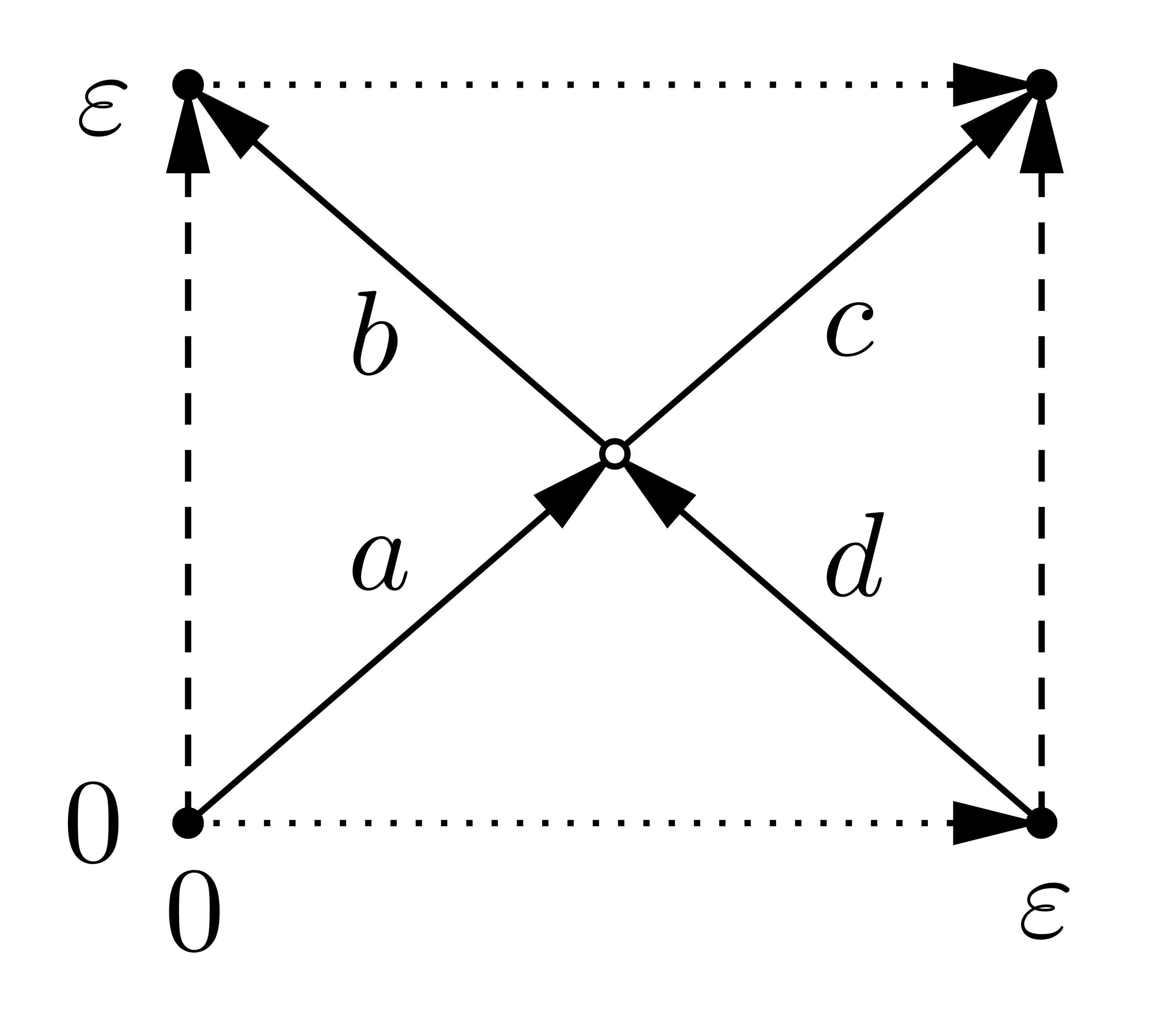}
\includegraphics[height=2.3cm]{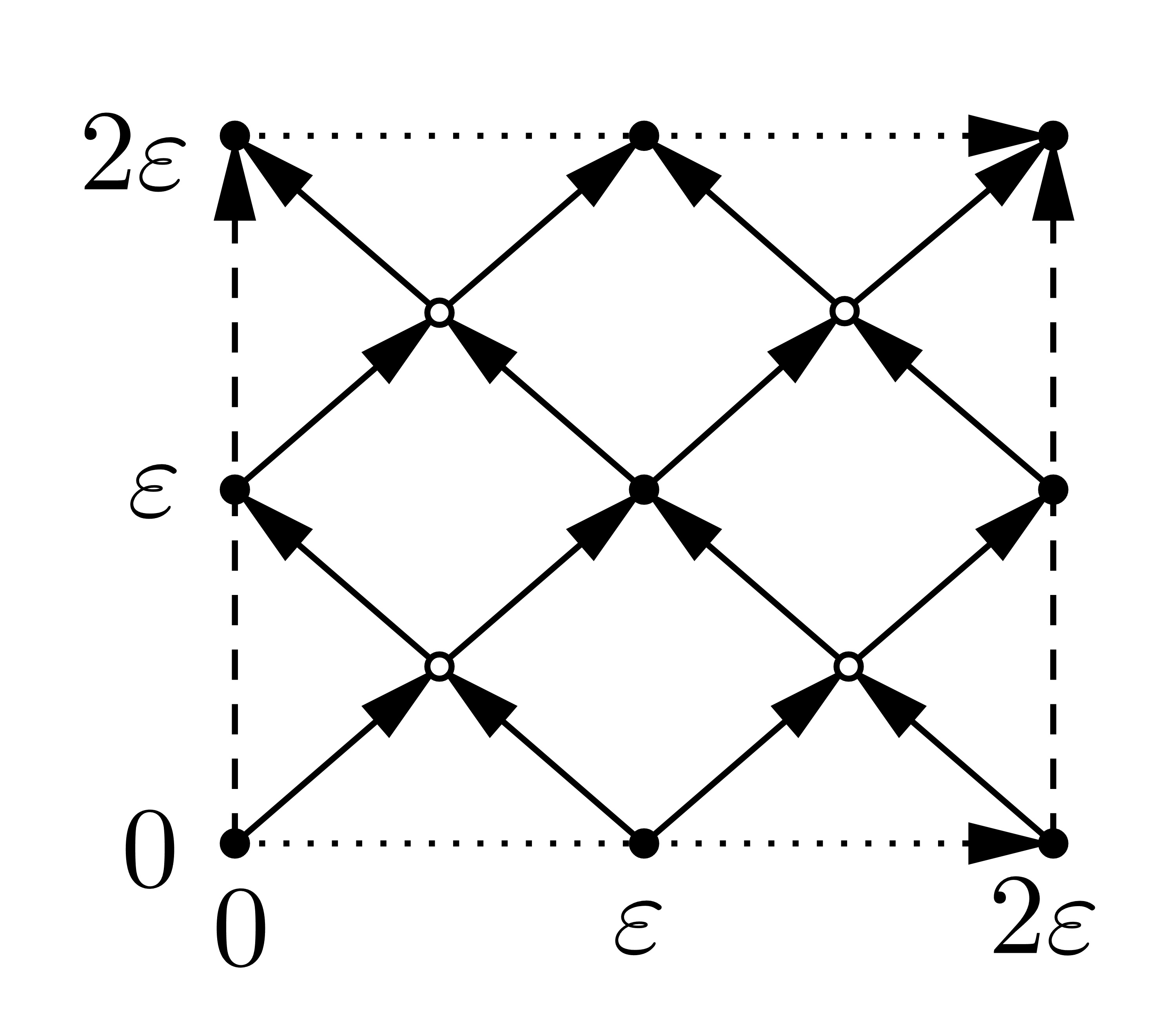}
\includegraphics[height=2.09cm]{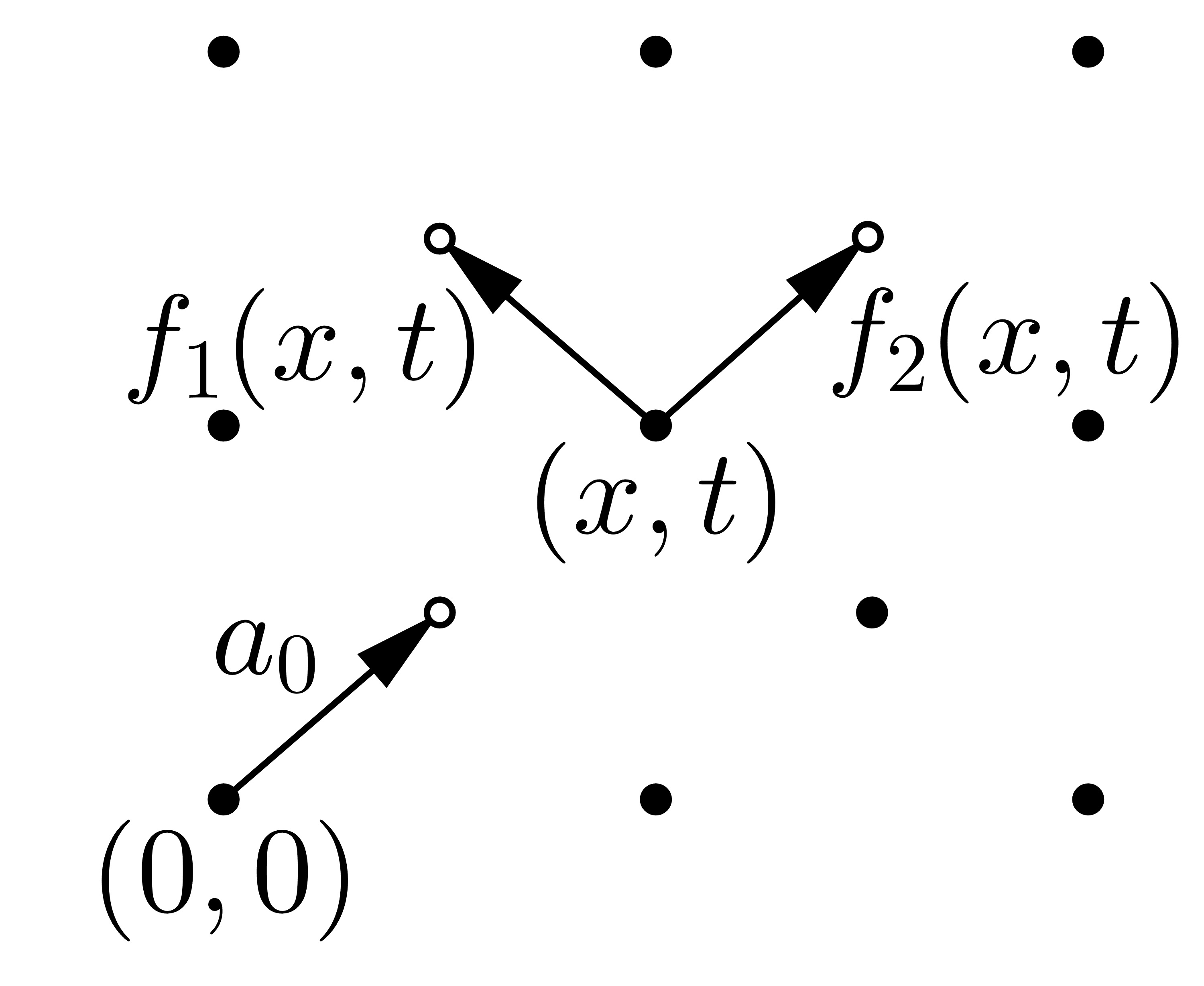}
\includegraphics[height=2.09cm]{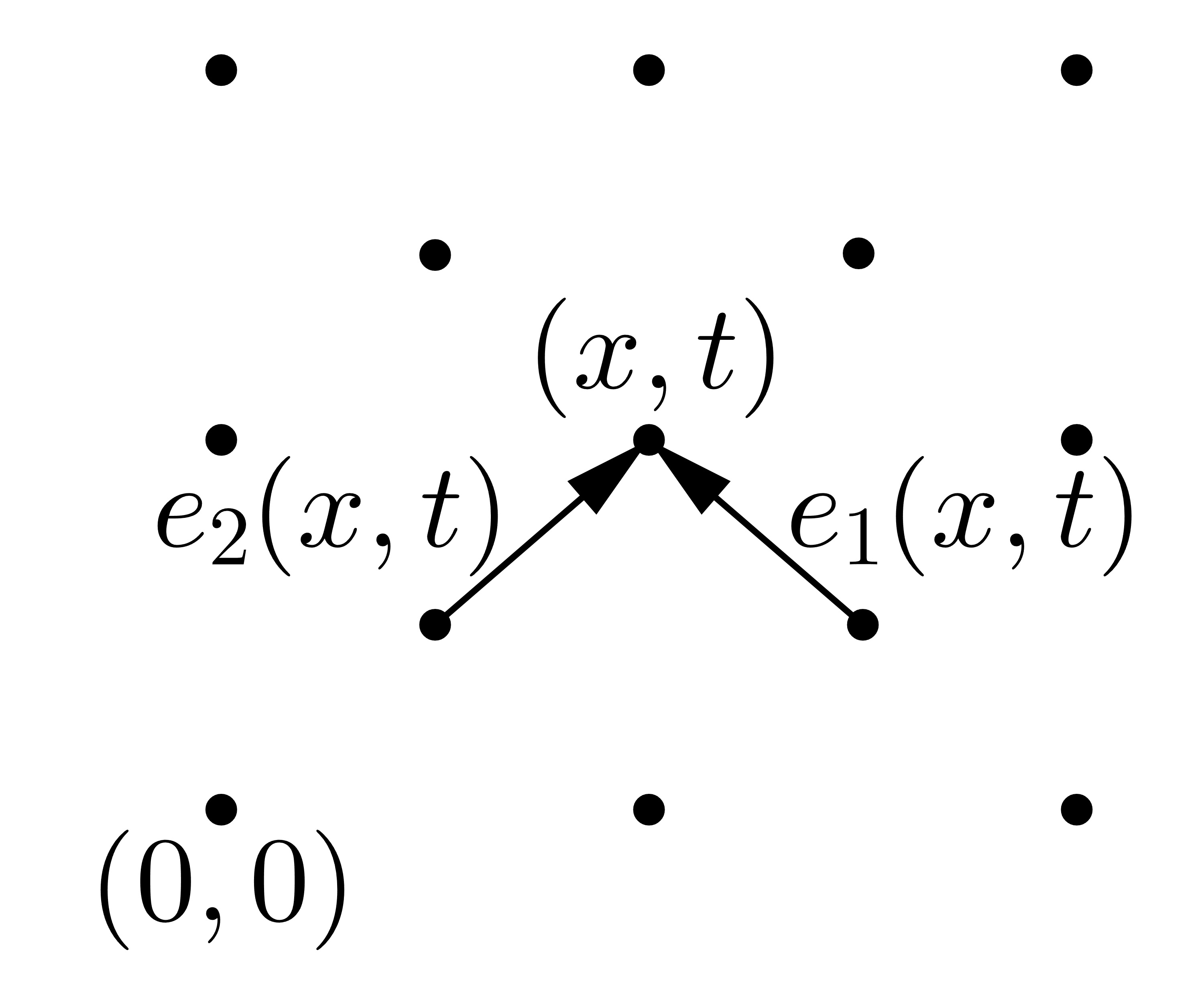}
\includegraphics[height=2.1cm]{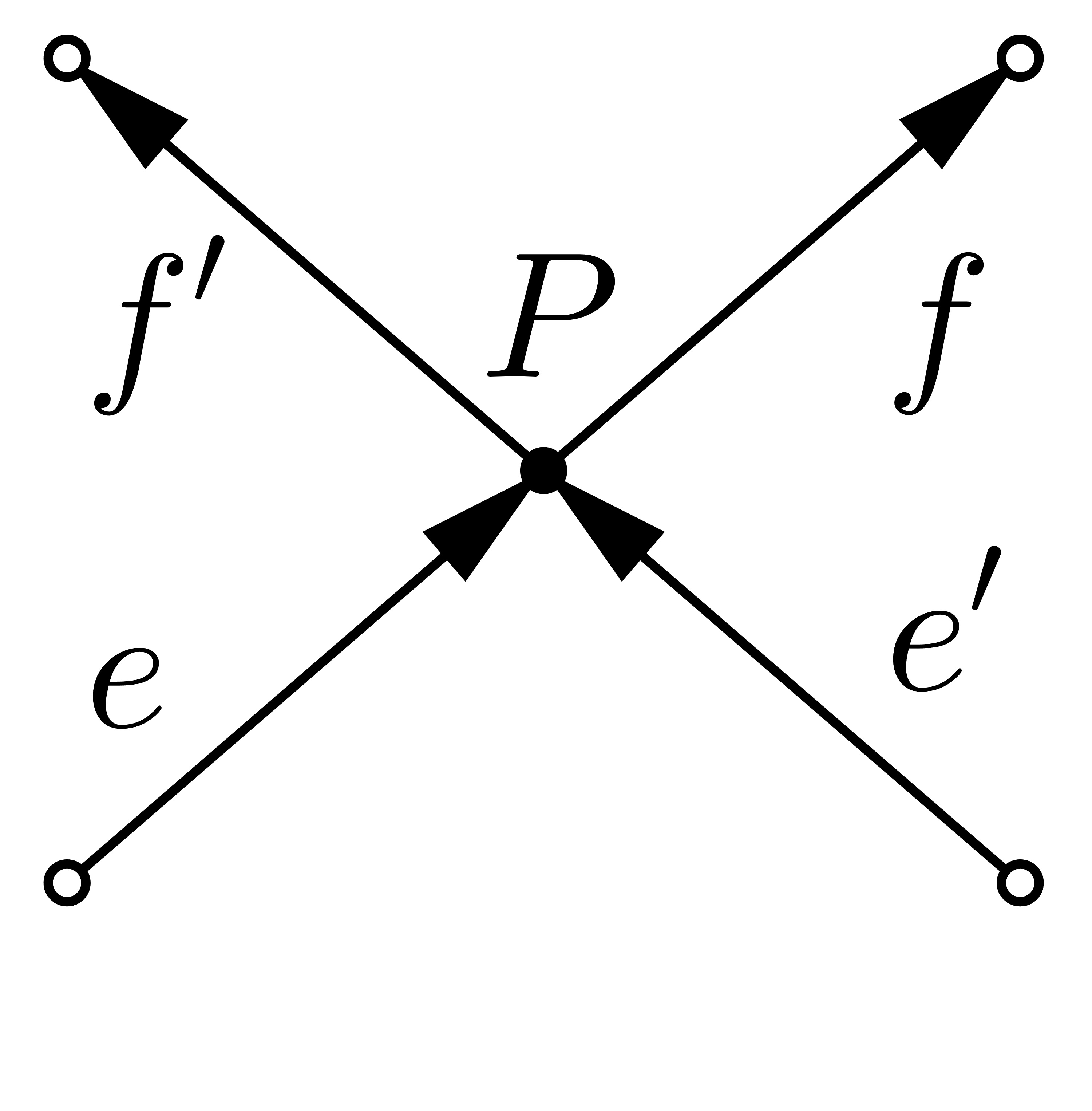}
\caption{Lattices of sizes $1$ and $2$ (left); see Example~\ref{ex-1x1}.
Notation for edges
%Complementary pairs %of edges
(right).%; see Proposition~\ref{p-dirac-finite} and Definition~\ref{def-flip}
}
  \label{fig-1x1}
%  \vspace{-0.5cm}
\end{figure}

\subsection{Combinatorial definition}
\label{ssec-combi-def}

Now we realize the plan from the end of~\S\ref{ssec-outline}, but switch
the role of the lattice and its dual. \blueit{Compared to \cite[Definition~6]{SU-22},
we changed the notation $\delta$ to $\delta\varepsilon$ for symmetry; $\delta\searrow 0$ anyway.
}

%\mscomm{Redefine $f_k$ !!!}

\begin{definition} \label{def-anti-combi}
Fix $T\in\mathbb{Z}$ and $\varepsilon,m,\delta>0$ called \emph{lattice size, lattice step, particle mass}, and \emph{small imaginary mass} respectively. Assume $T>0$ and $\delta\varepsilon<1$. The \emph{lattice} is the quotient set
$$
\faktor{
\{\,(x,t)\in[0,T\varepsilon]^2:
2x/\varepsilon,2t/\varepsilon,(x+t)/\varepsilon\in\mathbb{Z}\,\}
}
{
\forall x,t:(x,0)\sim (x,T\varepsilon)\,\&\,(0,t)\sim (T\varepsilon,t).
}
$$
(This is a finite subset of the torus obtained from the square $[0,T\varepsilon]^2$ by an identification of the opposite sides; see Figure~\ref{fig-1x1} to the left.)
%$$\{\,(x,t)\in\mathbb{R}/(T\varepsilon\mathbb{Z})
%:2x/\varepsilon,2t/\varepsilon,(x+t)/\varepsilon\subset\mathbb{Z}\,\}.
%$$
A lattice point $(x,t)$ is \emph{even} (respectively, \emph{odd}), if $2x/\varepsilon$ is even (respectively, odd).
An \emph{edge} is a vector starting from a lattice point $(x,t)$ and ending at the lattice point $(x+\varepsilon/2,t+\varepsilon/2)$ or $(x-\varepsilon/2,t+\varepsilon/2)$.
%An edge is \emph{even} (respectively \emph{odd}), if $2x/\varepsilon$ is even (respectively odd).

A \emph{generalized checker path} (or just a \emph{path})
is a finite sequence of distinct edges such that the endpoint of each edge is the starting point of the next one. A \emph{cycle} is defined analogously, only the sequence has the unique repetition: the first and the last edges coincide, and there is at least one edge in between. (In particular, a %generalized checker
path such that the endpoint of the last edge is the starting point of the first one is \emph{not} yet a cycle; coincidence of the first and the last \emph{edges} is required. The first and the last edges of a %generalized checker
\emph{path} coincide only if the path has a single edge. Thus in our setup, a path is \emph{never} a cycle.) \emph{Changing the starting edge} of a cycle means removal of the first edge from the sequence, then a cyclic permutation, and then adding the last edge of the resulting sequence at the beginning. A \emph{loop} is a cycle up to changing of the starting edge.
%A generalized checker path is \emph{closed}, if the endpoint of the last edge is the starting point of the first one. A \emph{checker loop} is a closed generalized checker path up to a cyclic permutation of edges. We strongly distinguish between closed paths and loops.

A \emph{node} of a path or loop $s$ is {an ordered} pair of consecutive edges in $s$ {(the order of the edges in the pair is the same as in~$s$)}. %(the first and the last edges of a loop are viewed as consecutive).
A \emph{turn} is a node such that the two edges are orthogonal. A node or turn is \emph{even} (respectively, \emph{odd}), if the {endpoint of the first edge in the pair}
%common point of the two edges
is even (respectively, odd). Denote by $\mathrm{eventurns}(s)$, $\mathrm{oddturns}(s)$, $\mathrm{evennodes}(s)$, $\mathrm{oddnodes}(s)$ the number of even and odd turns and nodes in $s$. The \emph{arrow} (or \emph{weight)} of $s$ is
\begin{equation}\label{eq-def-anti3}
{A}(s):={A}(s,m,\varepsilon,\delta)
:=\pm\frac{(-im\varepsilon)^{\mathrm{oddturns}(s)}
(-\delta\varepsilon)^{\mathrm{eventurns}(s)}}
{(1+m^2\varepsilon^2)^{\mathrm{oddnodes}(s)/2}
(1-\delta^2\varepsilon^2)^{\mathrm{evennodes}(s)/2}},
\end{equation}
where the overall minus sign is taken when $s$ is a loop \blueit{(``fermionic minus sign'' \cite[Appendix~C]{Zee-10})}.
%(Beware that the arrows of a closed path and a loop are different.)

A set of checker paths or loops is \emph{edge-disjoint}, if
no two of them have a common edge. An edge-disjoint set of loops is a \emph{loop configuration}. A \emph{loop configuration with a source $a$ and a sink $f$} is an edge-disjoint set of any number of loops and exactly one %generalized checker
path starting with the edge $a$ and ending with the edge $f$.
%A \emph{loop configuration with a source and a sink at the %same edge $e$} is an edge-disjoint set of loops not containing %$e$.
The \emph{arrow} $A(S):=A(S,m,\varepsilon,\delta)$ of a loop configuration $S$ (possibly with a source and a sink) is the product of arrows of all loops and paths in the configuration. An empty product is set to be $1$.

The \emph{arrow from an edge $a$ to an edge $f$} (or \emph{finite-lattice propagator}) is
\begin{equation}\label{eq-def-finite-lattice-propagator}
{A}(a\to f):={A}(a\to f;m,\varepsilon,\delta,T)
:=\frac{\sum\limits_{\substack{\text{loop configurations $S$}\\
\text{with the source $a$ and the sink $f$}}}A(S,m,\varepsilon,\delta)}
{\sum\limits_{\substack{\text{loop configurations $S$}}}A(S,m,\varepsilon,\delta)}.
\end{equation}

Now take a point $(x,t)\in \varepsilon\mathbb{Z}^2$ and set $ x':=x\mod T\varepsilon$, $t':=t\mod T\varepsilon$.
Denote by $a_0,f_1=f_1(x,t),f_2=f_2(x,t)$ the edges starting at $(0,0)$, $(x',t')$, $(x',t')$ and ending at $(\varepsilon/2,\varepsilon/2)$, $(x'-\varepsilon/2,t'+\varepsilon/2)$, $(x'+\varepsilon/2,t'+\varepsilon/2)$ respectively;
%Denote by $a_0$ the edge starting at $(0,0)$ and ending at $(\varepsilon/2,\varepsilon/2)$.
%Denote by $f_1=f_1(x,t)\perp a_0$ and $f_2=f_2(x,t)\parallel a_0$ the two edges ending at $(x',t')$;
see Figure~\ref{fig-1x1} to the middle-right.
The \emph{arrow of the point $(x,t)$} (or \emph{infinite-lattice propagator}) is the pair of complex numbers
\begin{equation}\label{eq-def-infinite-lattice-propagator}
%\hspace{-0.3cm}\widetilde{A}_k(x,t):=
\widetilde{A}_k^{\bluevar{\circ}}(x,t,m,\varepsilon):=2i^{2-k}\,
\lim_{\delta\searrow 0}\lim_{\substack{T\to\infty}}
%\\ \text{$T$ even}}}
{A}(a_0\to f_k(x,t);m,\varepsilon,\delta,T)
\qquad\text{for $k=1,2$.}
\end{equation}
\end{definition}

\begin{table}[htb]
  \centering
\begin{tabular}{|l|c|c|c|c|c|c|c|c|c|}
  \hline
  $S$ & $\varnothing$ & $\{aba\}$ & $\{cdc\}$ & $\{aca\}$ & $\{bdb\}$ & $\{abdca\}$ & $\{acdba\}$ & $\{aba,cdc\}$ & $\{aca,bdb\}$ \\
  $A(S)$ & $1$ & $\frac{-im\varepsilon^2\delta}{n}$ & $\frac{-im\varepsilon^2\delta}{n}$ & $-\frac{1}{n}$ & $-\frac{1}{n}$ & $\frac{m^2\varepsilon^2}{n^2}$ & $\frac{-\delta^2\varepsilon^2}{n^2}$ & $\frac{-m^2\varepsilon^4\delta^2}{n^2}$ & $\frac{1}{n^2}$ \\
  \hline
\end{tabular}
  \caption{All loop configurations on the lattice of size $1$
  and their arrows; see Example~\ref{ex-1x1}.
  % and Figure~\ref{fig-1x1}  to the left.
  }\label{tab-1x1}
\end{table}

\begin{example}[Lattice $1\times 1$; see Figure~\ref{fig-1x1} to the left] \label{ex-1x1}
The lattice of size $1$ lies on the square $[0,\varepsilon]^2$ with the opposite sides identified.
The lattice has $2$ points: the midpoint and the identified vertices of the square.
It has $4$ edges $a,b,c,d$ shown in Figure~\ref{fig-1x1} to the left. Note that the %generalized checker
paths $abdc$, $acdb$, $bacd$ are distinct although they contain the same edges. Their arrows are $\tfrac{-m^2\varepsilon^2}{\sqrt{1-\delta^2\varepsilon^2}(1+m^2\varepsilon^2)}$,
$\tfrac{-\delta\varepsilon}{\sqrt{1-\delta^2\varepsilon^2}(1+m^2\varepsilon^2)}$,
$\tfrac{\delta^2\varepsilon^2}{(1-\delta^2\varepsilon^2)\sqrt{1+m^2\varepsilon^2}}$
respectively. Those paths are not the same as the cycles $acdba$, $bacdb$. The two cycles determine the same loop with the arrow $\tfrac{-\delta^2\varepsilon^2}{(1-\delta^2\varepsilon^2)(1+m^2\varepsilon^2)}$.
All the $9$ loop configurations and their arrows
are listed in Table~\ref{tab-1x1},
%There are totally $9$ loop configurations: $\varnothing$, $\{aba\}$, $\{cdc\}$, $\{aca\}$, $\{bdb\}$, $\{abdca\}$, $\{acdba\}$, $\{aba,cdc\}$, $\{aca,bdb\}$. Their arrows are $1$, ${-im\varepsilon\delta}/n$, ${-im\varepsilon\delta}/n$,
%${-1}/n$, ${-1}/n$, ${m^2\varepsilon^2}/n^2$, ${-\delta^2}/n^2$,
%${-m^2\varepsilon^2\delta^2}/n^2$, ${1}/n^2$ respectively,
where $n:={\sqrt{1-\delta^2\varepsilon^2}\sqrt{1+m^2\varepsilon^2}}$.
%\mscomm{!!! THE REST OF THIS SUBSECTION IS \textbf{NOT} INCLUDED IN CHECKERS.TEX !!!}
%\mscomm{!!!The following formula IS \textbf{NOT} INCLUDED IN CHECKERS.TEX !!!}
We obtain
\begin{align*}
  {A}(a\to b%,m,\varepsilon,\delta,1
  ) &= \frac{-im\varepsilon\sqrt{1-\delta^2\varepsilon^2}
  -\delta\varepsilon\sqrt{1+m^2\varepsilon^2}}
  {2(\bluevar{n}-1-im\varepsilon^2\delta)},
  &
  {A}(a\to c%,m,\varepsilon,\delta,1
  ) &= \frac{\sqrt{1-\delta^2\varepsilon^2}-\sqrt{1+m^2\varepsilon^2}}
  {2(\bluevar{n}-1-im\varepsilon^2\delta)},
  \\
  {A}(a\to a%,m,\varepsilon,\delta,1
  ) &= \frac{1}{2},
  &
  {A}(a\to d%,m,\varepsilon,\delta,1
  ) &=
\frac{-im\varepsilon-\delta\varepsilon}
{2(\bluevar{n}-1-im\varepsilon^2\delta)}.
\end{align*}
%%%
%\begin{equation*}
%\begin{pmatrix}
%  {A}(a\to b%,m,\varepsilon,\delta,1
%  ) &
%  {A}(a\to c%,m,\varepsilon,\delta,1
%  ) \\
%  {A}(a\to a%,m,\varepsilon,\delta,1
%  ) &
%  {A}(a\to d%,m,\varepsilon,\delta,1
%  )
%\end{pmatrix}
%=
%\frac{1}{2}
%\begin{pmatrix}
%\frac{-im\varepsilon\sqrt{1-\delta^2}-\delta\sqrt{1+m^2\varepsilon^2}}
%{\sqrt{1+m^2\varepsilon^2}\sqrt{1-\delta^2}-1-im\varepsilon\delta}&
%\frac{\sqrt{1-\delta^2}-\sqrt{1+m^2\varepsilon^2}}
%{\sqrt{1+m^2\varepsilon^2}\sqrt{1-\delta^2}-1-im\varepsilon\delta}
%\\
%1 &
%\frac{-im\varepsilon-\delta}
%{\sqrt{1+m^2\varepsilon^2}\sqrt{1-\delta^2}-1-im\varepsilon\delta}
%\end{pmatrix}.
%\end{equation*}
%%%
\end{example}
%\mscomm{Compute the finite-lattice propagator for the 1x1 lattice here}

%The constructed model indeed generalizes Feynman's original one:

\begin{theorem}[Equivalence of definitions%Consistency with Feynman's model
]\label{p-real-imaginary}
Both finite- and infinite-lattice propagators are well-defined, that is,
the denominator of~\eqref{eq-def-finite-lattice-propagator} is nonzero,
limit~\eqref{eq-def-infinite-lattice-propagator} exists and equals
$\widetilde{A}_k(x,t,m,\varepsilon)$.
%for each $m,\varepsilon>0$, $(x,t)\in\varepsilon \mathbb{Z}^2$ and $k\in\{1,2\}$.
%%%%
%For $(x+t)/\varepsilon+k$ even, the limit is real and given by
%\begin{align*}
%\widetilde{A}_1(x,t,m,\varepsilon)
%&=a_1(x,|t|+\varepsilon,m,\varepsilon),
%&&\text{for $(x+t)/\varepsilon$ odd,}\\
%\widetilde{A}_2(x,t,m,\varepsilon)&=
%\pm a_2(\pm x+\varepsilon,|t|+\varepsilon,m,\varepsilon),
%&&\text{for $(x+t)/\varepsilon$ even,}
%\end{align*}
%where the minus signs are taken when $t<0$.
%For $(x+t)/\varepsilon+k$ odd, limit~\eqref{eq-def-infinite-lattice-propagator} is purely imaginary.
\end{theorem}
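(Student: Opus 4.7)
The plan is to identify the ratio~\eqref{eq-def-finite-lattice-propagator} with a matrix element of a discrete Green's function via the classical loop expansion, to verify that the resulting function satisfies axioms~1--3 of Definition~\ref{def-anti-alg} (on the torus, then after $T\to\infty$), and then to appeal to Theorem~\ref{th-well-defined} for uniqueness.

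First I would encode the local per-node weights from~\eqref{eq-def-anti3} in a complex matrix $K$ indexed by the edges of the torus-lattice of Definition~\ref{def-anti-combi}: the entry $K_{ef}$ is nonzero precisely when the endpoint of $e$ is the startpoint of $f$, and equals the contribution of the node $ef$ to the arrow~\eqref{eq-def-anti3}. A standard loop expansion --- the same device underlying Kirchhoff's matrix-tree theorem and the hopping expansion of \cite[\S12.3]{Rothe-12} --- gives
\[
\sum_{\text{loop configurations } S} A(S) = \det(I-K),
\qquad
\sum_{\substack{\text{loop configurations with}\\\text{source } a,\ \text{sink } f}} A(S) = \det(I-K)\cdot\bigl((I-K)^{-1}\bigr)_{fa},
\]
the overall minus sign prefixed to each loop in~\eqref{eq-def-anti3} being precisely what is needed to absorb the cycle sign in the Leibniz permutation expansion of $\det(I-K)$. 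Hence $A(a\to f) = ((I-K)^{-1})_{fa}$ whenever $\det(I-K)\ne 0$.

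Second I would establish nonvanishing of $\det(I-K)$ and extract the recurrences. Since $K$ commutes with the discrete shifts on the torus, it is block-diagonalised by the Fourier transform $(p,q)\in\tfrac{2\pi}{T\varepsilon}\mathbb{Z}_T^2$ into small blocks; for $\delta>0$ one checks directly that every such block of $I-K$ is invertible, which settles both well-definedness of $A(a\to f)$ and an explicit Fourier-sum expression. On the other hand, reading the linear identity $(I-K)x=e_{a_0}$ at the level of edges --- equivalently, decomposing a loop configuration with source $a_0$ and sink $f_k(x,t)$ according to the edge preceding $f_k$ --- shows that $-2(-i)^k A(a_0\to f_k(x,t))$ satisfies the two-component recurrences of axioms~1 and~2 on the torus, with the source $2\delta_{x0}\delta_{t0}$ tracing back to the special role of the edge $a_0$ and the prefactor $-2(-i)^k$ absorbing the difference between edge- and vertex-indexing.

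Third I would pass to $T\to\infty$ and then to $\delta\searrow 0$. The finite Fourier sums converge pointwise to the integrals of Proposition~\ref{th-equivalence}, and the $\ell^2$-norm of axiom~3 is uniformly bounded in $T$ thanks to the spectral gap provided by $\delta>0$. Hence the limit satisfies axioms~1--3 on $\varepsilon\mathbb{Z}^2$, and by the uniqueness part of Theorem~\ref{th-well-defined} it coincides with $A_k(x,t,m,\varepsilon,\delta)$; sending $\delta\searrow 0$ and invoking~\eqref{eq-def-anti-alg} finishes the identification with $\widetilde{A}_k(x,t,m,\varepsilon)$. The main obstacle I foresee is the careful sign audit in the loop expansion --- matching the overall $-1$ for loops, the orientation-dependent signs of turning weights, and the even/odd asymmetry of~\eqref{eq-def-anti3} to the alternating signs generated by cycle decompositions in $\det(I-K)$ --- after which the remaining steps reduce to finite-dimensional linear algebra on the torus and to Fourier-analytic continuity in $T$ and $\delta$.
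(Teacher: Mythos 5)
Your proposal follows essentially the same route as the paper: the loop expansion identifying the denominator of~\eqref{eq-def-finite-lattice-propagator} with $\det(I-U)$ and the arrows with entries of $(I-U)^{-1}$ is the paper's Lemma~\ref{l-loop-expansion}; the Fourier block-diagonalization on the torus proving $\det(I-U)\ne 0$ and yielding explicit sums is Lemma~\ref{l-discrete-fourier} and Proposition~\ref{cor-double-fourier-finite}; the translation of the matrix identity into axioms~1--2 is Lemma~\ref{l-equivalence-equations}; and the passage $T\to\infty$, $\delta\searrow 0$ via convergence of Riemann sums to the Fourier integral is the paper's final step.

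The one place where your sketch is too optimistic is the source bookkeeping. When edges are indexed by their \emph{endpoints} (which is what makes the edge-level equation $\alpha(f)=\alpha(e)A(ef)+\alpha(e')A(e'f)+\delta_{b_2f}$ match axioms~1--2 with source $2\delta_{x0}\delta_{t0}$ at the even point $(0,0)$), the natural source edge is $b_2$, the edge \emph{ending} at the origin --- not $a_0$, which ends at the odd point $(\varepsilon/2,\varepsilon/2)$ and would place the delta term in the wrong equation. So $-2(-i)^kA(a_0\to f_k)$ does not directly satisfy axioms~1--2; the paper converts $A(a_0\to f_k)$ into a $\delta$-dependent linear combination of the four arrows $A(b_l\to e_j)$ using the Dirac and adjoint Dirac equations (Propositions~\ref{p-dirac-finite}--\ref{p-dirac-adjoint}) plus a reflection, and only in the limit $\delta\searrow 0$ does this collapse to the single term $-i^k\widetilde{A}_k(x,t)/2$. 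This is a fixable but genuinely nontrivial step, not a constant prefactor, and it is the main thing missing from your outline.
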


%Once again we see that the real and imaginary parts live on dual sublattices.

%%%
%\textbf{Physical interpretation.} One interprets
%\begin{equation}\label{eq-q}
%Q\left(x,t,m,\varepsilon\right):=\frac{1}{2}\left|\widetilde{A}_1\left(x,t,m,\varepsilon\right)\right|^2+
%\frac{1}{2}\left|\widetilde{A}_2\left(x,t,m,\varepsilon\right)\right|^2
%\end{equation}
%as the \emph{expected charge} in the interval of length $\varepsilon$ around the point $x$ at the time $t>0$, if an electron of mass $m$ was emitted from the origin at the time $0$. Unlike the original Feynman checkers, this value cannot be interpreted as probability \cite[\S9.2]{SU-22}. Informally, the loops in Definition~\ref{def-anti-combi}
%form the \emph{Dirac sea} of electrons filling the whole space, and the edges not in the loops form paths of holes in the sea, that is, positrons.
%%%

%\textbf{The finite-lattice propagator.}
We conclude this section with a few identities for the finite-lattice propagator.
%They do not repeat the ones in Feynman's model, but still have much in common.
The first one is an analogue of the equality $a(\varepsilon,\varepsilon,m,\varepsilon)=1$
up to a factor of $2$ coming from% the normalizaion in
~\eqref{eq-def-infinite-lattice-propagator}.

\begin{proposition}[Initial value] \label{p-initial} For each %$m,\varepsilon,\delta>0$, integer $T>0$, and each
edge $a$ we have
$
{A}(a\to a%;m,\varepsilon,\delta,T
)={1}/{2}.
$
\end{proposition}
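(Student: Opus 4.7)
The plan is to translate the statement into a linear-algebra identity via the standard loop-expansion formalism and reduce it to a $2 \times 2$ matrix identity via Schur complement.

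Let $M$ be the $|E| \times |E|$ edge-to-edge transfer matrix with $M_{fe}$ equal to the weight of the node $(e,f)$ from~\eqref{eq-def-anti3} when $(e,f)$ is a valid node, and $0$ otherwise. As in the proof of Theorem~\ref{p-real-imaginary}, the cycle expansion of the determinant yields $\det(I - M) = \sum_S A(S)$ summing over loop configurations, and the analogous cofactor expansion computes the numerator of~\eqref{eq-def-finite-lattice-propagator}, so that $A(a \to f) = [(I - M)^{-1}]_{fa}$. For $f = a$, a path of distinct edges starting and ending with $a$ must be the single-edge path $(a)$, so the numerator reduces to $Z_{\bar a} := \det((I - M)_{\hat a \hat a})$, and the claim is equivalent to $\det(I - M) = 2 Z_{\bar a}$.

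I would then apply Schur complement to the $2$-dimensional subspace $V := \langle a, a'\rangle$ of outgoing edges at $v_0 := \mathrm{start}(a)$. Since $M|_V = 0$ (two edges sharing a starting vertex cannot form a valid node), Schur complement yields
\[
\det(I - M) = Z' \det(I_V - R), \qquad Z_{\bar a} = Z' (1 - R_{22}),
\]
where $Z' := \det(I_{V^c} - M|_{V^cV^c})$ and $R := M|_{V,V^c}(I_{V^c} - M|_{V^cV^c})^{-1} M|_{V^c, V}$ is the $2 \times 2$ return amplitude matrix. The claim thereby reduces to the $2 \times 2$ identity $\det(I_V - R) = 2(1 - R_{22})$.

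Two structural observations conclude the argument. First, the vertical reflection $(x, t) \mapsto (2x_0 - x, t)$ through $v_0 = (x_0, t_0)$ is a weight-preserving lattice automorphism: it fixes $v_0$, swaps the two outgoing (resp.\ incoming) edges at $v_0$, preserves the parity of every vertex (since $4x_0/\varepsilon$ is always an even integer), and preserves the turn/non-turn structure of nodes. Hence $R_{11} = R_{22}$ and $R_{12} = R_{21}$. Second, factoring $M = BT$ with $T$ the shift permutation on edges and $B$ block-diagonal in the starting vertex, each $2 \times 2$ vertex block $B_v$ satisfies $\det B_v = 1$ by direct verification from the node weights in~\eqref{eq-def-anti3}; from the block structure of $M|_{V, V^c}$ one obtains $R = B_{v_0} Q$, whence $\det R = \det Q$. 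With $R_{11} = R_{22}$, the identity $\det(I_V - R) = 2(1 - R_{22})$ simplifies algebraically to $\det R = 1$, equivalently $\det Q = 1$.

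The main obstacle is establishing $\det Q = 1$ on arbitrary torus lattices. I expect this to follow from the global $SL_2$ structure of $M = BT$, where $\det B = \prod_v \det B_v = 1$: a careful accounting of how determinants distribute through the Schur decomposition (or iterated Schur complements peeling one vertex at a time) should give $\det Q = 1$. Example~\ref{ex-1x1} verifies the algebraic identity directly on the $1 \times 1$ lattice. An alternative route is a direct combinatorial weight-preserving involution between loop configurations containing edge $a$ and those avoiding it, constructed by a local modification at $v_0$ exploiting $\det B_{v_0} = 1$.
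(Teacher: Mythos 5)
Your reduction is correct and even elegant up to its last step: the loop-expansion identity $A(a\to a)=\det((I-U)_{\hat a\hat a})/\det(I-U)$ is exactly Lemma~\ref{l-loop-expansion}, the Schur complement on $V=\mathrm{Out}(v_0)$ is legitimate (modulo the unproved invertibility of $I_{V^c}-M_{V^cV^c}$, a minor point), and the reflection $x\mapsto 2x_0-x$ is indeed a weight-preserving automorphism fixing $v_0$ and swapping $a\leftrightarrow a'$ (cf.\ Lemma~\ref{l-invariance}), so $R_{11}=R_{22}$ and $R_{12}=R_{21}$ do hold. But the resulting statement $\det R=1$ is, given that symmetry, \emph{equivalent} to Proposition~\ref{p-initial} itself: you have $A(a\to a)=(1-R_{22})/\det(I_V-R)$, so you have reformulated the claim rather than proved it. The step you defer --- ``a careful accounting of how determinants distribute through the Schur decomposition should give $\det Q=1$'' --- is the entire content of the proposition, and the proposed mechanism does not obviously work: after peeling a vertex, the effective transfer matrix $M_{V_1^cV_1^c}+M_{V_1^c,V_1}M_{V_1,V_1^c}$ no longer decomposes into $2\times2$ vertex blocks of determinant $1$, and the global fact $\det M=\pm\prod_v\det B_v=\pm1$ constrains $\det(-M)$, not the first-return matrix $R=M_{V,V^c}(I-M_{V^cV^c})^{-1}M_{V^c,V}$, which sums over paths of all winding numbers. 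So as written there is a genuine gap exactly where the work has to happen.

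For comparison, the paper closes this gap in two independent ways, neither of which is a ``local'' argument at $v_0$. The first proof reduces to $a=b_2$ by translation invariance and reads the value off the explicit discrete Fourier formula of Proposition~\ref{cor-double-fourier-finite}: under $(p,\omega)\mapsto(-p,-\omega)$ the oscillatory sum for $A(b_2\to e_2(0,0))$ cancels and only the term $\tfrac12\delta_{x0}\delta_{t0}$ survives. The second proof is the combinatorial involution you gesture at in your last sentence, but it is necessarily \emph{global}: one passes from loop configurations to ``currents'' (Definition~\ref{def-anti-subgraphs}, Proposition~\ref{p-equivalence}) and uses the complement $s\mapsto\bar s$, which exchanges currents containing $a$ with those avoiding $a$ and preserves the weight by the Complement formula (Proposition~\ref{p-complement}, whose proof needs a nontrivial parity count, Lemma~\ref{l-complete-loop}); a local modification at $v_0$ cannot toggle membership of $a$ while preserving the current condition. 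If you want to salvage your linear-algebra route, the cleanest fix is to prove $\det R=1$ by exhibiting $R$ itself as conjugate to its inverse via complementation, which again lands you on the paper's second argument.
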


%The next one has no analogue in the original model as
%the latter degenerates for $t<0$.
%%%
%We start with two additional identities not present
% The first two identities have no analogues
%in the original model.
%We start with a few identities for finite-lattice propagator.

\begin{proposition}[Skew-symmetry] \label{p-symmetry}
For each pair of edges $a\ne f$ we have
$$
{A}(a\to f%;m,\varepsilon,\delta,T
)=
\begin{cases}
{A}(f\to a%;m,\varepsilon,\delta,T
), &\text{for $a\perp f$};\\
-{A}(f\to a%;m,\varepsilon,\delta,T
), &\text{for $a\parallel f$}.
\end{cases}
$$
%%%%%
% $
%{A}(a\to f%;m,\varepsilon,\delta,T
%)=\pm {A}(f\to a%;m,\varepsilon,\delta,T
%),
%$
%where the minus sign is taken when $a\parallel f$.
\end{proposition}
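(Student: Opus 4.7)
The plan is to identify $A(a\to f)$ with a matrix element of the inverse of a transfer matrix $M$ on the finite edge space of the torus, and then to deduce the skew-symmetry via Fourier diagonalization on the torus.

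Let $E$ denote the finite set of edges of the torus and define $M\colon\mathbb{C}^{E}\to\mathbb{C}^{E}$ by setting $M_{e_{1}e_{2}}$ equal to the node weight of $(e_{2},e_{1})$ from~\eqref{eq-def-anti3} (that is, the factor contributed by one node of a path or loop) whenever the endpoint of $e_{2}$ is the starting point of $e_{1}$, and $M_{e_{1}e_{2}}=0$ otherwise. By the standard Amitsur/hopping identity
\[\det(I-M)=\exp\!\Bigl(-\sum_{k\ge 1}\tfrac{1}{k}\operatorname{tr}(M^{k})\Bigr),\]
together with the identification of $\operatorname{tr}(M^{k})/k$ with the sum over loops of length $k$ of the product of their node weights (the overall factor $(-1)^{\#\text{loops}}$ from~\eqref{eq-def-anti3} matching the $(-1)^{\#\text{cycles}}$ produced by the exponential expansion), the denominator of~\eqref{eq-def-finite-lattice-propagator} equals $\det(I-M)$. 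A parallel cofactor computation identifies the numerator, giving $A(a\to f)=((I-M)^{-1})_{fa}$ up to an overall sign (which can be pinned down by Proposition~\ref{p-initial}).

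Next, exploit the translation symmetry of the torus to decompose $M$ via discrete Fourier transform. For each dual momentum $p\in(\mathbb{Z}/T\mathbb{Z})^{2}$, the matrix $M$ restricts to a $4\times 4$ block $M(p)$ (further reducible to $2\times 2$ using the bipartite even/odd vertex structure). Explicitly inverting $I-M(p)$ and reassembling gives $((I-M)^{-1})_{fa}$ as a discrete Fourier sum directly parallel to the integral in Proposition~\ref{th-equivalence}. The desired skew-symmetry then becomes a pointwise identity between the integrands at momenta $p$ and $-p$: the off-diagonal entries of $(I-M(p))^{-1}$ connecting the edge-type of $a$ to that of $f$ carry a factor of $\sin(p\varepsilon)$ (odd under $p\mapsto -p$) precisely when $a\parallel f$, and a factor that is even in $p$ when $a\perp f$. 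Pairing momenta $\pm p$ in the Fourier sum yields the sign $-1$ in the parallel case and $+1$ in the perpendicular case, as claimed.

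The main obstacle is the explicit inversion of the small blocks $(I-M(p))^{-1}$ and the case-by-case verification of the sign behaviour of each entry under $p\mapsto -p$ for each pair of edge-types. The loop-to-determinant identification in the first paragraph also requires careful matching of the Kasteleyn-type signs in~\eqref{eq-def-anti3} with those in the exponential expansion, parallel to the standard verifications in the matrix-tree / Kasteleyn / Kenyon framework mentioned in the introduction.
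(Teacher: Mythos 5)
Your overall route is essentially the paper's: Lemma~\ref{l-loop-expansion} identifies $A(a\to f)$ with $((I-U)^{-1})_{fa}$ where $U_{fe}=A(ef)$ is exactly your transfer matrix, Proposition~\ref{cor-double-fourier-finite} performs the discrete Fourier diagonalization, and Corollary~\ref{cor-symmetry-finite} extracts the symmetry from the parity of the Fourier summand under $(p,\omega)\mapsto(\pm p,-\omega)$; the source--sink swap is then reduced to that corollary by the translation and reflection invariance of Lemma~\ref{l-invariance} (so one only ever needs the two columns of $(I-U(p))^{-1}$ with source type $b_2$, not the full block inverse).

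There is, however, one genuine misstep in your first paragraph. The identity $\det(I-M)=\exp\bigl(-\sum_{k\ge1}\tfrac1k\operatorname{tr}(M^k)\bigr)$ does \emph{not} identify the denominator of~\eqref{eq-def-finite-lattice-propagator} term by term with loop configurations: $\operatorname{tr}(M^k)/k$ is a sum over \emph{closed walks} of length $k$, which may repeat edges, whereas the loops of Definition~\ref{def-anti-combi} are sequences of \emph{distinct} edges and loop configurations are edge-disjoint. The expansion of the exponential produces arbitrary multisets of closed walks, and reducing that to the edge-disjoint sum requires a separate cancellation argument (and raises convergence issues for the logarithm). The correct and direct identification is the Leibniz permutation expansion of $\det(I-U)$: the cycles of a permutation contributing a nonzero term are automatically edge-disjoint and self-avoiding, and the factor $(-1)^{k-1}$ per $k$-cycle matches the overall minus sign per loop in~\eqref{eq-def-anti3}. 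This is exactly what Lemma~\ref{l-loop-expansion} does, after which your Fourier argument goes through. (The paper also gives a second, purely combinatorial proof in Appendix~\ref{ssec-proofs-alternative}: the complement map $s\mapsto\bar s$ on currents is a bijection between configurations with source $a$, sink $f$ and those with source $f$, sink $a$, and Proposition~\ref{p-complement} shows it preserves or flips the weight according to whether $a\perp f$ or $a\parallel f$ --- an approach that avoids the Fourier computation entirely.)
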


Now we state the crucial property analogous to \cite[Proposition~5]{SU-22}
and \cite[Definition~3.1]{Smirnov-10}. %\mscomm{check!}.
\blueit{The path consisting of edges $e$ and $f$ is denoted by $ef$, and its arrow is denoted by~$A(ef)$.}
% \blueit{Hereafter $ef$ is a path consisting of edges $e$ and $f$, and $A(ef)$ is the arrow of the path.}

\begin{proposition}[Dirac equation/\bluevar{massive spin-}holomorphicity] \label{p-dirac-finite}
%Take $m,\varepsilon,\delta>0$, and integer $T>0$.
Let $f$ be an edge starting at a lattice point $P$.
Denote by $e$ and $e'$ the two edges ending
at $P$ such that $e\parallel f$ and $e'\perp f$
(see Figure~\ref{fig-1x1} to the right).
%Then for each edge $a\ne f$ we have
Then for each edge $a$ we have
\begin{align*}
{A}(a\to f%;m,\varepsilon,\delta,T
  )
  &= {A}(a\to e)A(ef)+{A}(a\to e')A(e'f)+\delta_{af}
  \\&=
\begin{cases}
  \frac{1}{\sqrt{1-\delta^2\varepsilon^2}}
  ({A}(a\to e%;m,\varepsilon,\delta,T
  )
  -\delta\varepsilon {A}(a\to e'%;m,\varepsilon,\delta,T
  ))+\delta_{af}, &\text{for $P$ even;}\\
  \frac{1}{\sqrt{1+m^2\varepsilon^2}}
  ({A}(a\to e%;m,\varepsilon,\delta,T
  )
  -im\varepsilon {A}(a\to e'%;m,\varepsilon,\delta,T
  ))+\delta_{af}, &\text{for $P$ odd.}
\end{cases}
\end{align*}
%For $a=f$ the difference between the left and the right side equals $1$.
\end{proposition}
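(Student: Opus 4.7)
The plan is to identify the finite-lattice propagator as a matrix element of the resolvent $(I-T)^{-1}$ of a suitable transfer operator on edges. Let $E$ denote the finite set of all edges on the torus and define a matrix $T$ indexed by $E$ by setting $T(g,f) := A(gf)$ when the two-edge sequence $gf$ is a valid path (i.e., the endpoint of $g$ equals the starting point of $f$) and $T(g,f) := 0$ otherwise. The nonzero entries are precisely the node factors from~\eqref{eq-def-anti3}; in particular, $T(e,f) = A(ef)$ and $T(e',f) = A(e'f)$ are computed directly from the definition as $1/\sqrt{1-\delta^2}$ versus $-\delta/\sqrt{1-\delta^2}$ at even $P$ and $1/\sqrt{1+m^2\varepsilon^2}$ versus $-im\varepsilon/\sqrt{1+m^2\varepsilon^2}$ at odd $P$. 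Hence the second equality of the proposition reduces to the first.

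The first equality will follow from the two combinatorial identities $\det(I-T) = Z$ and $\operatorname{adj}(I-T)_{f,a} = N(a\to f)$, where $Z$ and $N(a\to f)$ are the denominator and numerator of~\eqref{eq-def-finite-lattice-propagator}. For $\det(I-T)$: since $T(g,g) = 0$, each permutation of $E$ decomposes into fixed points plus nontrivial cycles, the nontrivial cycles biject with edge-disjoint systems of loops in the sense of Definition~\ref{def-anti-combi}, and for each cycle of length $k$ the sign $(-1)^{k-1}$ from $\operatorname{sgn}$ combines with $(-1)^k$ from the $k$ factors of $-T_{i,\sigma(i)}$ to produce a single $-1$ per loop, matching the overall minus sign built into $A(\mathrm{loop})$ in~\eqref{eq-def-anti3}. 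For $\operatorname{adj}(I-T)_{f,a}$, Cramer's rule gives $\sum_{\tilde\pi\,:\,\tilde\pi(f)=a}\operatorname{sgn}(\tilde\pi)\prod_{i\neq f}(I-T)_{i,\tilde\pi(i)}$; the cycle of $\tilde\pi$ through $a$ and $f$ corresponds to a generalized checker path from $a$ to $f$ (after omitting the ``return'' $f\to a$), the other cycles correspond to loops, and the same sign computation yields $-1$ per loop but no spurious sign on the path. The degenerate case $a = f$ corresponds to $\tilde\pi$ fixing $f$, the ``path'' being the single edge $a$, and contributes $Z_{\hat a}$.

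By Theorem~\ref{p-real-imaginary} the determinant $Z = \det(I-T)$ is nonzero, hence $G := (I-T)^{-1}$ exists and $A(a\to f) = N(a\to f)/Z = G(a,f)$. Reading $G(I-T) = I$ entry-wise gives $G(a,f) = \sum_g G(a,g)\,T(g,f) + \delta_{af}$, and since $T(g,f) \neq 0$ only for $g \in \{e,e'\}$, this is exactly the claimed first equality. The main obstacle is the sign bookkeeping: reconciling the $(-1)^{a+f}$ of Cramer's rule with the cycle-decomposition signs so that loops pick up the $-1$ of~\eqref{eq-def-anti3} while the path carries none. A more pedestrian alternative, avoiding matrix inversion entirely, is to partition $N(a\to f)$ by the second-to-last edge of the path and reduce the claim to a cancellation between configurations with source $a$ and sink in $\{e,e'\}$ which additionally use the edge $f$: a local swap at $P$ exchanging the roles of $e$ and $e'$ relative to $f$ pairs such configurations $S$, $S'$ with arrows satisfying $A(ef)A(S) + A(e'f)A(S') = 0$, at the price of treating the case $a = f$ separately.
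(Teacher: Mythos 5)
Your main argument is exactly the paper's proof: the paper's Lemma~\ref{l-loop-expansion} establishes $Z=\det(I-U)$ and $A=(I-U)^{-1}$ by the same cycle-decomposition and sign bookkeeping (one net factor $-1$ per loop, none on the path), and the proposition is then read off entrywise from $(I-U)A=I$, with the second displayed equality being the direct evaluation of the node factors $A(ef)$, $A(e'f)$. The ``pedestrian alternative'' you sketch at the end is also in the paper, as the second, purely combinatorial proof in Appendix~\ref{ssec-proofs-alternative} (the swap pairing is identity~\eqref{eq-proof-dirac-finite-2}, with $a=f$ treated separately).
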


Let us state a simple corollary of the previous three identities.

\begin{proposition}[Adjoint Dirac equation] \label{p-dirac-adjoint}
Under the assumptions of Proposition~\ref{p-dirac-finite},
\begin{equation*}
{A}(f\to a)= A(ef)\left({A}(e\to a)-\delta_{ea}\right)
-A(e'f)\left({A}(e'\to a)-\delta_{e'a}\right).
\end{equation*}
\end{proposition}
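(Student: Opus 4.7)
The plan is to substitute the skew-symmetry (Proposition~\ref{p-symmetry}) and the initial value (Proposition~\ref{p-initial}) into the Dirac equation (Proposition~\ref{p-dirac-finite}). As a first step I would combine those two propositions into a single uniform identity
$$
A(x\to y)+A(y\to x)=\delta_{xy}\ \text{if }x\parallel y,
\qquad
A(x\to y)=A(y\to x)\ \text{if }x\perp y,
$$
valid for \emph{all} pairs of edges $x,y$ (the case $x=y$ falls under $x\parallel y$, where both $A$-values equal $1/2$ and the right-hand side is $1$). I would then start from the first line of Proposition~\ref{p-dirac-finite},
$$
A(a\to f)=A(a\to e)\,A(ef)+A(a\to e')\,A(e'f)+\delta_{af},
$$
and use the uniform identity to rewrite each $A(a\to\cdot)$ in terms of $A(\cdot\to a)$.

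The key geometric observation is that since $e\parallel f$ and $e'\perp f$, the orientation of $a$ relative to $f$ simultaneously fixes its orientations relative to $e$ and~$e'$: either $a\parallel f,\ a\parallel e,\ a\perp e'$, or $a\perp f,\ a\perp e,\ a\parallel e'$. So only two cases arise. In the first, I substitute $A(a\to f)=-A(f\to a)+\delta_{af}$, $A(a\to e)=-A(e\to a)+\delta_{ea}$, and $A(a\to e')=A(e'\to a)$; the two $\delta_{af}$ contributions cancel, and rearranging yields the target identity (with $\delta_{e'a}=0$ for free, since $a\parallel f$ forces $a\ne e'$). In the second case, $\delta_{af}=\delta_{ea}=0$ automatically and the $\delta_{e'a}$ correction now comes from the skew-symmetry of $A(a\to e')$; the same rearrangement gives the desired formula.

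There is no serious obstacle, and indeed the proposition is advertised as ``a simple corollary'' of the previous three. The only care required is the bookkeeping of Kronecker deltas on the diagonal: three potential coincidences $a=f$, $a=e$, $a=e'$ are allowed, but the geometric dichotomy above ensures that at most two of them can occur in each case, and these two are precisely absorbed by the $-\delta_{ea}$ and $-\delta_{e'a}$ corrections appearing on the right-hand side of the target identity.
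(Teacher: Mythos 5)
Your proposal is correct and is exactly the paper's argument: the paper's proof is the one-line remark that the identity ``follows from Propositions~\ref{p-initial}--\ref{p-dirac-finite} because $e\parallel f$ and $e'\perp f$'', and your case analysis (with the uniform identity merging skew-symmetry and the initial value, and the observation that $a\parallel f$ forces $a\parallel e$, $a\perp e'$ and vice versa) is precisely the bookkeeping that remark leaves implicit. The only cosmetic slip is that at most \emph{one} of the coincidences $a=f$, $a=e$, $a=e'$ can occur (the three edges are pairwise distinct), not two, but this does not affect the computation.
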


The following proposition is a simple generalization of Dirac equation.

\begin{proposition}[Huygens' principle] \label{cor-dirac}
For each $n\le 2T$ and each pair of edges $a, f$ we have
$$
A(a\to f)=\sum_{e\dots f\text{ of length }n}A(a\to e)A(e\dots f)+
\sum_{a\dots f\text{ of length }<n}A(a\dots f),
$$
where the first sum is over all the %generalized checker
paths $e\dots f$ of length exactly $n$ ending with $f$ and the second sum is over
all the %generalized checker
paths $a\dots f$
of length less than $n$ starting with $a$ and ending with $f$.
%%%
%For $T>1$ and each pair of edges $a, f$ such that the endpoint of $a$ is distinct
%from the starting point of $f$ we have
%$$
%A(a\to f)-\sum_{def}A(a\to d)A(def)=
%\begin{cases}
%1, &\text{for }a=f,\\
%0, &\text{for }a\ne f,
%\end{cases}
%$$
%where the sum is over $4$ generalized checker paths of length $3$ ending at $f$.
%%%
\end{proposition}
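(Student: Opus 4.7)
The approach is induction on $n$, with the Dirac equation of Proposition~\ref{p-dirac-finite} driving the inductive step. The intuition is that each application of the Dirac equation peels off one edge from the beginning of an extendable path $e\dots f$: either we hit the starting edge $a$ (freezing the term into the second sum) or we continue with a genuine longer path (a term of the updated first sum).

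The base case $n=1$ is immediate: the only length-$1$ path ending with $f$ is $f$ itself, with weight $A(f)=1$ (empty product of node-factors), so the first sum equals $A(a\to f)$ and the second sum is empty.

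For the inductive step the key auxiliary identity is the \emph{multiplicativity of weights over nodes}: if $e_0$ is an edge ending at the starting point of an edge $e$, and $s=e\,e_1\dots e_{k-1}$ is a path starting with $e$ such that $e_0 s$ is again a generalized checker path, then by~\eqref{eq-def-anti3}
\[
A(e_0\,e)\cdot A(s)=A(e_0 s),
\]
because the nodes of $e_0 s$ are exactly the node $(e_0,e)$ together with the nodes of $s$, and $A(e_0\,e)$ records precisely the former while $A(s)$ records the latter (together with the sign, since none of these paths are loops). Applying Proposition~\ref{p-dirac-finite} to each factor $A(a\to e)$ in the first sum of the inductive hypothesis, with $e_1,e_2$ the two edges ending at the starting point of $e$, yields
\[
A(a\to e)\,A(s)=\sum_{i=1,2}A(a\to e_i)\,A(e_i s)+\delta_{ae}\,A(s).
\]
Summing over all length-$n$ paths $s=e\dots f$: the first group reorganizes as $A(a\to e')A(s')$ ranging over all length-$(n+1)$ paths $s'=e'\dots f$ (each such $s'$ occurring exactly once, namely as the extension of its own length-$n$ tail by its first edge), while the $\delta_{ae}\,A(s)$ terms contribute exactly the weights $A(a\dots f)$ of length-$n$ paths starting with $a$ and ending with $f$. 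Combining the latter with the inherited second sum over length-${<}n$ paths produces the second sum for $n+1$, completing the induction.

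The main delicate point is ensuring that the length-$(n+1)$ extensions $e_i\,e\dots f$ produced by the Dirac equation are all \emph{genuine} generalized checker paths (distinct edges) on the torus, so that the combinatorial right-hand side is an honest sum of path weights rather than an algebraic formal expression. This is the sole role of the hypothesis $n\le 2T$: the minimal length of a closed cycle on the torus of size $T$ is $2T$, so for $n<2T$ a length-$n$ path cannot close up into a cycle or be forced to revisit one of its own edges when prepended by a single edge; the boundary case $n=2T$ is handled by verifying directly that any such accidental repetition would produce a zero contribution after collecting terms over the two choices $e_1,e_2$.
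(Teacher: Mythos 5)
Your proof is correct and is essentially the second proof the paper gives in Appendix~\ref{ssec-proofs-alternative}: induction on $n$, with Proposition~\ref{p-dirac-finite} applied to the factor $A(a\to e)$ for the first edge $e$ of each path, and the multiplicativity $A(de)\,A(e\dots f)=A(de\dots f)$ of weights over nodes used to absorb the prepended edge. The paper's primary proof (in \S\ref{ssec-proofs-fourier}) is different and purely algebraic: by Lemma~\ref{l-loop-expansion} one has $(I-U)A=I$, hence $(I+U+\dots+U^{n-1})(I-U)A=I+U+\dots+U^{n-1}$, i.e.\ $A=U^nA+I+U+\dots+U^{n-1}$, with the hypothesis $n\le 2T$ serving only to identify the entries of the powers $U^k$ with sums over genuine paths of distinct edges. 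One small remark: your worry about a boundary case $n=2T$ is unnecessary --- when you prepend an edge to a length-$n$ tail to form a path of length $n+1\le 2T$, the new edge could coincide with the $j$-th edge of the tail only if the corresponding closed chain had length $j$ a positive multiple of $2T$ (each edge advances $t$ by $\varepsilon/2$ on the torus of height $T\varepsilon$), which is impossible for $j\le n\le 2T-1$; no separate cancellation argument is needed.
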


Note that the \emph{finite}-lattice propagator does \emph{not} exhibit %exact
charge conservation
(see Example~\ref{ex-2x2}).

%%%%
%\begin{figure}[htbp]
%  \centering
%\includegraphics[height=2cm]{complementary-v2.jpeg}
%\caption{Complementary pairs of edges; see Proposition~\ref{p-dirac-finite} and
%Definition~\ref{def-flip}.
%}
%  \label{fig-flip}
%\end{figure}
%%%

%\textbf{The infinite-lattice propagator.}

\addcontentsline{toc}{myshrink}{}

\section{Generalizations to several particles}

\label{sec-variations}

%\mscomm{!!! THIS SECTION IS \textbf{NOT} INCLUDED IN CHECKERS.TEX !!!}

In this section we upgrade the model to describe motion of \emph{several} non-interacting electrons and positrons,
then introduce interaction and establish perturbation expansion.

\subsection{Identical particles in Feynman checkers}
\label{sec-identical1}

As a warm up, we upgrade Feynman's original model (see Definition~\ref{def-mass})
to two identical electrons. This upgrade takes into account
\emph{chirality} of electrons, which can be either \emph{right} or \emph{left}
\cite[\S4]{SU-22}, but does not yet incorporate creation and
annihilation of electron-positron pairs.

%, respecting exclusion principle, locality, and probability conservation.

\begin{definition}\label{def-identical}
Under the notation of Definition~\ref{def-mass}, take $m=\varepsilon=1$. Fix integer points $A=(0,0)$, $A'=(x_0,0)$, $F=(x,t)$, $F'=(x',t)$ and their diagonal neighbors $B=(1,1)$, $B'=(x_0+1,1)$,
$E=(x-1,t-1)$, $E'=(x'-1,t-1)$, where $x_0\ne 0$ and $x'\ge x$.
Denote%\footnote{Here it is essential that $s$ and $s'$ are paths of particles of \emph{the same sort}, e.g., two electrons. Otherwise the 2nd sum is omitted. The sign before the 2nd sum is changed to plus for some other sorts of particles, e.g., \emph{photons} (particles of light).}
$$
{a}(AB,A'B'\to EF,E'F'):=
\sum_{\substack{s=AB\dots EF\\s'=A'B'\dots E'F'}} {a}(s){a}(s')-
\sum_{\substack{s=AB\dots E'F'\\s'=A'B'\dots EF}} {a}(s){a}(s'),
$$
where the first sum is over all pairs consisting of a checker path $s$ starting with the move $AB$ and ending with the move $EF$, and a path $s'$ starting with the move $A'B'$ and ending with the move $E'F'$, whereas in the second sum the final moves are interchanged.

The length square $P(AB,A'B'\to EF,E'F'):=\left|{a}(AB,A'B'\to EF,E'F')\right|^2$ is called the \emph{probability %\footnote{One should think of $y$ as fixed, and the unordered quadruples $\{F,F',E,E'\}$ as the possible outcomes of an experiment.}
to find right electrons at $F$ and $F'$, if they are emitted from $A$ and~$A'$}.
(In particular, $P(AB,A'B'\to EF,EF)=0$, i.e., two right electrons cannot be found at the same point; this is called \emph{exclusion principle}.)

Define $P(AB,A'B'\to EF,E'F')$ similarly for $E=(x\pm 1,t-1)$, $E'=(x'\pm 1,t-1)$. Here we require $x'\ge x$, if both signs in $\pm$ are the same, and allow arbitrary $x'$ and $x$, otherwise.
(The latter requirement is introduced not to count twice
the contribution of physically indistinguishable final states
$(EF,E'F')$ and $(E'F',EF)$.)
\end{definition}

\begin{remark}
  \blueit{This is equivalent to the six-vertex model with the weights in Figure~\ref{fig-6v} to the top.}
\end{remark}

%Clearly, $P(AB,A'B'\to EF,EF)=0$, i.e., two right electrons cannot be found at the same point; this is called \emph{exclusion principle}.

\begin{proposition}[Locality]
\label{p-independence-1}
For $x_0\ge 2t$, $x'>x$, $E=(x-1,t-1)$, and $E'=(x'-1,t-1)$ we have
$P(AB,A'B'\to EF,E'F')=|a_2(x,t,1,1)|^2|a_2(x'-x_0,t,1,1)|^2$.
\end{proposition}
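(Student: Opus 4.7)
The plan is to exploit the distance gap $x_0\ge 2t$ to show that the ``exchange'' sum in the definition of $a(AB,A'B'\to EF,E'F')$ is empty, and then to factor the remaining ``direct'' sum into two independent single-electron amplitudes, each of which will be identified with $i\,a_2$ by parity-of-turns considerations.

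First I would dispose of the second (swapped) sum. A checker path from $(0,0)$ to $(y,t)$ exists only when $|y|\le t$, and from $(x_0,0)$ to $(y,t)$ only when $|y-x_0|\le t$. A swapped pair $s:A\to F'$, $s':A'\to F$ is therefore nonempty only if $|x'|\le t$ and $|x-x_0|\le t$. The latter gives $x\ge x_0-t\ge t$, while the original pair (needed for $F,F'$ to even be reachable from $A,A'$) forces $|x|\le t$, so $x=t$. Symmetrically, $|x'|\le t$ combined with $|x'-x_0|\le t$ (also required for the original $s'$) forces $x'=t$, and hence $x=x'$, contradicting the hypothesis $x'>x$. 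Thus the second sum vanishes term-by-term.

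Next I would factor the first sum. Since the summation variables $s$ and $s'$ are independent,
$$
\sum_{\substack{s=AB\dots EF\\s'=A'B'\dots E'F'}} a(s)a(s')
=\Bigl(\sum_{s=AB\dots EF}a(s)\Bigr)\Bigl(\sum_{s'=A'B'\dots E'F'}a(s')\Bigr).
$$
Each admissible $s$ has first and last move equal to $(+1,+1)$, hence an even number of turns, so the weight $a(s)=(1+m^2\varepsilon^2)^{1-n/2}\,i\,(-im\varepsilon)^{\mathrm{turns}(s)}$ is purely imaginary. Comparing with Definition~\ref{def-mass}, the sum $\sum_s a(s)$ runs over exactly those checker paths from $(0,0)$ to $(x,t)$ through $(1,1)$ whose contribution is imaginary, so this sum equals $i\,a_2(x,t,1,1)$. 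For the second factor, translating every point by $(-x_0,0)$ is a bijection between admissible $s'$ from $A'=(x_0,0)$ to $F'=(x',t)$ and admissible paths from $(0,0)$ to $(x'-x_0,t)$ preserving the turn count and the length $n$; hence this sum equals $i\,a_2(x'-x_0,t,1,1)$.

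Combining, $a(AB,A'B'\to EF,E'F')=-\,a_2(x,t,1,1)\,a_2(x'-x_0,t,1,1)$, and taking the modulus squared yields the claimed factorization of $P$. The only delicate step is the first one: one must squeeze the equalities $x=t$ and $x'=t$ out of the four reachability inequalities, using both $x_0\ge 2t$ and the strict inequality $x'>x$; everything else is bookkeeping about the parity of turns and translation invariance of the weight $a(s)$.
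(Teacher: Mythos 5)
Your proof is correct and follows essentially the same route as the paper's: the swapped sum is empty, the direct sum factors into two independent single-path sums, and each factor is identified with $i\,a_2$ via the parity of turns and translation invariance (the paper silently drops the factor of $i$, which is harmless after taking the modulus). One small remark on your first step: you invoke reachability of the original pair ($|x|\le t$ and $|x'-x_0|\le t$), which is not among the hypotheses; the detour is unnecessary, since the two inequalities you extract from the swapped pair alone, namely $x\ge x_0-t\ge t$ and $x'\le t$, already give $x'\le t\le x<x'$, a contradiction --- and this shorter version also covers the degenerate case where the original pair is unreachable and both sides of the identity vanish.
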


This means that two sufficiently distant electrons move independently.

\begin{proposition}[Probability conservation] \label{p-transfer-matrix}
For each $t>0$ and $x_0\ne 0$
we have the identity $\sum_{E,E',F,F'} P(AB,A'B'\to EF,E'F')=1$,
where the sum is over all quadruples $F=(x,t)$, $F'=(x',t)$,
$E=(x\pm 1,t-1)$, $E'=(x'\pm 1,t-1)$, such that $x'\ge x$, if the latter two signs in $\pm$ are the same.
\end{proposition}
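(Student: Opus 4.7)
The plan is to recognize the two-electron amplitude as a $2\times 2$ Slater determinant and then combine single-particle probability conservation with the orthogonality of distinct initial edges. For edges $ab, ef$ I would set
$$
\alpha(ab \to ef) \;:=\; \sum_{s = ab \dots ef} a(s),
$$
summed over Feynman checker paths that begin with $ab$ and end with $ef$. Definition~\ref{def-identical} then expresses the two-electron amplitude as the determinant
$$
a(AB, A'B' \to EF, E'F') \;=\; \alpha(AB{\to}EF)\,\alpha(A'B'{\to}E'F') \;-\; \alpha(AB{\to}E'F')\,\alpha(A'B'{\to}EF),
$$
so that $P = |a|^2$ expands into two ``direct'' squares and one interference cross-term.

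Next I would sum $P$ over \emph{all} quadruples $(F, F', E, E')$, ignoring the $x' \ge x$ constraint. Each direct term factorizes as a product of two single-particle sums, and by probability conservation for the single-electron Feynman checkers (whose inductive proof over $t$ carries over verbatim), each such sum equals $1$, contributing $1 + 1 = 2$ in total. The cross term factorizes as $|\langle AB | A'B'\rangle|^2$, where $\langle ab | a'b'\rangle := \sum_{F, E}\overline{\alpha(ab \to EF)}\,\alpha(a'b' \to EF)$; by unitarity of the Feynman-checkers transfer matrix applied to the initially orthogonal edges $AB$ and $A'B'$ (which start at distinct vertices since $x_0 \ne 0$), this inner product is preserved in time from its zero value at $t = 0$. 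Hence the unconstrained sum equals $2$.

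To finish, I would halve this using the involution $(F, E) \leftrightarrow (F', E')$, which flips the sign of the determinant and therefore preserves $|a|^2$, while identifying $4$-tuples that encode the same antisymmetric final state; the diagonal $F = F'$ with coincident chirality contributes $0$ by Pauli exclusion. The constraint ``$x' \ge x$ when the two signs agree'' picks exactly one representative in the same-chirality orbits. The main obstacle I foresee is the opposite-chirality stratum: the stated constraint does not restrict it, so the argument must track chirality orderings carefully (including the case $F = F'$ with different chiralities at the same point) to confirm that the residual pairs do not inflate the sum beyond~$1$.
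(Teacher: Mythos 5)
Your route is genuinely different from the paper's and, modulo one bookkeeping point, it works. The paper argues \emph{locally}: it proves a one-step conservation identity $\sum_{F,F'}P(AB,A'B'\to EF,E'F')=\sum_{D,D'}P(AB,A'B'\to DE,D'E')$ by induction on $t$, using that the $4\times 4$ matrix of one-step amplitudes is a Kronecker product of two unitary $2\times 2$ coins when $E\ne E'$, and a separate direct computation (the factor $\tfrac{1}{\sqrt2}\tfrac{1}{\sqrt2}-\tfrac{i}{\sqrt2}\tfrac{i}{\sqrt2}=1$) for the coincidence case $E=E'$. You instead compute the \emph{global} unconstrained ordered sum: the Slater-determinant expansion gives $1+1-2\,|\langle AB|A'B'\rangle|^2$, the direct terms are handled by single-particle probability conservation, and the cross term vanishes because unitarity of the single-particle step propagates the orthogonality of the distinct initial edges ($x_0\ne 0$) from time $1$ to time $t$. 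This is correct, avoids the coincidence-point case analysis entirely, and generalizes at once to $n$ particles and to arbitrary $m,\varepsilon$; the paper's local argument, in exchange, never needs the global orthogonality observation.

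The obstacle you flag at the end is real only at the level of reading the index set, not of the mathematics. The parenthetical in Definition~\ref{def-identical} states that the constraint is imposed precisely ``not to count twice the contribution of physically indistinguishable final states $(EF,E'F')$ and $(E'F',EF)$'', so the sum is to be read as running over one representative of each orbit of the exchange involution $(E,F)\leftrightarrow(E',F')$. That involution preserves $P$ (the determinant changes sign), its only fixed points are the same-position, same-chirality quadruples, and those contribute $0$ by exclusion; hence the constrained sum is exactly half of your unconstrained value $2$, i.e.\ $1$. You are right that under a fully literal reading of ``allow arbitrary $x'$ and $x$ otherwise'' the opposite-chirality orbits would be counted twice and the identity would fail, so it is worth stating the representative-choice convention explicitly before halving; once that is done, your proof is complete.
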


%\mscomm{!!! Prove the promised properties !!!}

\subsection{Identical particles in Feynman anticheckers}
\label{sec-identical}

Now we generalize the new model (see Definition~\ref{def-anti-combi}) to several non-interacting electrons and positrons which
can be created and annihilated during motion.

\begin{definition} \label{def-multipoint}
A \emph{loop configuration $S$ with sources $a_1,\dots,a_n$ and sinks $f_1,\dots,f_n$} is an edge-disjoint set of any number of loops and exactly $n$ %generalized checker
paths starting with the edges $a_1,\dots,a_n$ and ending with the edges $f_{\sigma(1)},\dots,f_{\sigma(n)}$ respectively, where $\sigma$ is an arbitrary permutation of $\{1,\dots,n\}$. The \emph{arrow} $A(S,m,\varepsilon,\delta)$ \emph{of %a loop configuration
$S$} %with sources and sinks
is the permutation sign $\mathrm{sgn}(\sigma)$ times the product of arrows of all loops and paths in the configuration. Define the \emph{arrow}
${A}(a_1,\dots,a_n\to f_1,\dots,f_n%;m,\varepsilon,\delta,T
)$ \emph{from $a_1,\dots,a_n$ to $f_1,\dots,f_n$}
analogously to $A(a\to f)$, only the sum in the numerator
of~\eqref{eq-def-finite-lattice-propagator} is over all loop
configurations $S$ with the sources $a_1,\dots,a_n$
and the sinks $f_1,\dots,f_n$.
%%%
%Denote
%\begin{equation*}
%{A}(a_1,\dots,a_n\to f_1,\dots,f_n%;m,\varepsilon,\delta,T
%)
%:=\frac{\sum\limits_{\substack{\text{loop configurations $S$}\\
%\text{with the sources $a_1,\dots,a_n$ and the sinks %$f_1,\dots,f_n$}}}A(S%,m,\varepsilon,\delta
%)}
%{\sum\limits_{\substack{\text{loop configurations %$S$}}}A(S%,m,\varepsilon,\delta
%)}.
%\end{equation*}
(This expression vanishes, if some two sources or some two sinks coincide.)

Given edges $a,e,f$, define
${A}(a\to f \text{ pass } e%;m,\varepsilon,\delta,T
)$ analogously to
${A}(a\to f%;m,\varepsilon,\delta,T
)$,
only the sum in the numerator of~\eqref{eq-def-finite-lattice-propagator} is now over loop configurations with the source $a$ and the sink $f$ containing the edge $e$ (the sum in the denominator remains the same).
\end{definition}

\begin{physicalinterpretation*} Assume that the edges $a_1,\dots,a_k,f_1,\dots,f_l$ start on a horizontal line $t=t_1$ and the remaining sources and sinks start on a horizontal line $t=t_2$, where $t_2>t_1$. Then the model describes a transition from a state with $k$ electrons and $l$ positrons at the time $t_1$ to $n-l$ electrons and $n-k$ positrons at the time $t_2$. Beware that analogously to \cite[\S9.2]{SU-22} one cannot speak of any transition probabilities.
But one can express, say, the expected charge through the arrows.
\blueit{The arrows can be also viewed as $2n$-point functions of the %underlying quantum field
theory.}
%\mscomm{??? Charge density ???}
\end{physicalinterpretation*}

\begin{proposition}[Determinant formula] \label{p-det}
For each %$m,\varepsilon,\delta>0$, integer $T>0$,
edges $a_1,\dots,a_n,f_1,\dots,f_n$ we have
$$
{A}(a_1,\dots,a_n\to f_1,\dots,f_n%;m,\varepsilon,\delta,T
)
=\sum_{\sigma} \mathrm{sgn}(\sigma)
A(a_1\to f_{\sigma(1)}%;m,\varepsilon,\delta,T
)\dots
A(a_n\to f_{\sigma(n)}%;m,\varepsilon,\delta,T
)
=\det\left(A(a_k\to f_l%;m,\varepsilon,\delta,T
)\right)_{k,l=1}^n,
$$
where the sum is over all permutations $\sigma$ of $\{1,\dots,n\}$.
\end{proposition}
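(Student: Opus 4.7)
The plan is to prove Proposition~\ref{p-det} via a Lindstr\"om--Gessel--Viennot style sign-reversing involution, applied on both sides of the identity to reduce them to the same signed sum. Let $Z$ denote the denominator of~\eqref{eq-def-finite-lattice-propagator} (which is nonzero by Theorem~\ref{p-real-imaginary}) and $Z(a_1,\dots,a_n\to f_1,\dots,f_n)$ and $Z(a_k\to f_l)$ the corresponding numerators. Multiplying through by $Z^n$ reduces the claim to the polynomial identity
\begin{equation*}
Z^{n-1}\cdot Z(a_1,\dots,a_n\to f_1,\dots,f_n)=\det\bigl(Z(a_k\to f_l)\bigr)_{k,l=1}^n.
\end{equation*}

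Expanding both sides, the right-hand side becomes a signed sum over $n$-tuples $(U_1,\dots,U_n;\sigma)$, where $\sigma$ is a permutation of $\{1,\dots,n\}$ and each $U_k$ is a loop configuration with source $a_k$ and sink $f_{\sigma(k)}$, weighted by $\mathrm{sgn}(\sigma)\prod_k A(U_k)$. The left-hand side becomes a sum over $n$-tuples $(S_0,T_1,\dots,T_{n-1})$, where $S_0$ is a multi-source-sink loop configuration (already carrying the sign $\mathrm{sgn}(\sigma_{S_0})$ via Definition~\ref{def-multipoint}) and $T_1,\dots,T_{n-1}$ are plain loop configurations, weighted by $A(S_0)\prod_i A(T_i)$. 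Within each individual component of a tuple, the edges are disjoint by definition, but edges may be reused across distinct components. Fix once and for all a linear order on the edges of the lattice, and call a tuple \emph{bad} if some edge is shared between two distinct components. I will define a sign-reversing involution on bad tuples on each side: take the smallest shared edge $e$ and the two smallest indices $k<k'$ of components containing~$e$; then swap the continuations of those two components at~$e$. When both are paths, the surgery transposes two entries of $\sigma$ (respectively of $\sigma_{S_0}$); when at least one is a loop, the surgery changes the total number of loops in the tuple by $\pm 1$. In either case the weight changes sign: in the first case by $\mathrm{sgn}(\cdot)$, in the second case through the overall ``$\pm$ for loops'' convention in~\eqref{eq-def-anti3}. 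Thus the contributions of bad tuples cancel.

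The uncancelled tuples on both sides are those with pairwise edge-disjoint components; in each case the combined edge set is a multi-source-sink loop configuration $W$ together with a decoration assigning each loop of $W$ to one of $n$ slots (the $U_k$'s on the right, or $S_0,T_1,\dots,T_{n-1}$ on the left). The edge-local multiplicativity of~\eqref{eq-def-anti3} together with matching $\mathrm{sgn}(\sigma_W)$ factors shows that every such decorated configuration contributes weight $A(W)$, and there are $n^{\ell(W)}$ decorations for fixed $W$ with $\ell(W)$ loops. Both sides therefore collapse to $\sum_W n^{\ell(W)}\,A(W)$, establishing the polynomial identity; dividing by $Z^n$ recovers Proposition~\ref{p-det}.

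The chief obstacle is the case analysis of the surgery at the shared edge~$e$ when $e$ lies on loops rather than paths. The swap may merge two loops into one or split one loop into two, and one has to verify (i)~that after the swap $e$ with the same pair $(k,k')$ is still the minimal bad witness, so that the map is genuinely an involution, and (ii)~that the multiset of turns and nodes touched at~$e$ rearranges so as to preserve the edge-local weight in~\eqref{eq-def-anti3}, while the net change in loop count produces exactly one sign flip. A convenient way to handle (i) is to parametrize each loop by starting from its minimal-order edge and traversing in a fixed orientation; (ii) then reduces to a local computation at the two edges incident to~$e$ in each of the two components, which is standard but tedious.
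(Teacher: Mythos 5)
Your route is genuinely different from the paper's: the paper identifies both sides with minors of the explicit matrix $I-U$ (Lemmas~\ref{l-loop-expansion} and~\ref{l-loop-expansion3}) and then quotes Jacobi's identity on complementary minors of a matrix and its inverse, so no recombination of configurations is ever needed. Your direct involution, however, has a gap at its core. The surgery ``swap the continuations of the two components at the minimal shared edge $e$'' transplants the entire post-$e$ tail of one component into the other, and by Definitions~\ref{def-anti-combi} and~\ref{def-multipoint} each component must be an edge-disjoint set of walks, each consisting of distinct edges. If $U_k$ and $U_{k'}$ share a second edge $e''>e$ that lies \emph{after} $e$ in the walk of $U_{k'}$ through $e$ but \emph{before} $e$ (or on a separate loop) in $U_k$, then after the swap $e''$ occurs twice inside the new $k$-th component, which is therefore not a loop configuration at all; already for $n=2$ and two paths sharing two edges in opposite relative order this happens. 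So the map does not send bad tuples to tuples in the index set, and a fortiori is not an involution. This is exactly the obstruction that confines the naive Lindstr\"om--Gessel--Viennot tail swap to acyclic graphs: here the lattice is a torus with directed cycles and the objects are constrained to be edge-disjoint, so concatenating a prefix of one configuration with a suffix of another is not a closed operation. Your proposed remedies (orienting loops from their minimal edge, local weight bookkeeping at $e$) address the sign and turn accounting but not this well-definedness failure, and I do not see a local patch: one would need either Fomin-style loop erasure or a different set of combinatorial objects.

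If you want to keep a combinatorial flavor, the repair is essentially to retreat to the paper's argument: Lemma~\ref{l-loop-expansion} expands $\det(I-U)$ as $\sum_S A(S)$ by the permutation expansion (where the cancellation happens inside a single determinant, among permutations, and never requires gluing two edge-disjoint configurations), Lemma~\ref{l-loop-expansion3} does the same for the multi-source numerator, and the identity you are after then \emph{is} the complementary-minor identity $\det\left(A_{f_ja_i}\right)_{i,j=1}^n=\det(A)\cdot\det\bigl((I-U)\text{ with modified rows}\bigr)$, which can itself be proved combinatorially on pairs of permutations rather than on tuples of loop configurations. Your final counting of the surviving terms (the $n^{\ell(W)}$ decorations and the matching signs $\mathrm{sgn}(\sigma_W)(-1)^{\ell(W)}$) is fine, but it only becomes a proof once the cancellation step is sound.
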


%\mscomm{Prove using minors of the inverse matrix.}

\begin{proposition}[Pass-or-loop formula] \label{cor-pass}
For each %$m,\varepsilon,\delta>0$, integer $T>0$,
edges $a,e,f$ we have
$$
{A}(a\to f \text{ pass } e%;m,\varepsilon,\delta,T
)=
%\frac{1}{2}
{A}(a\to f%;m,\varepsilon,\delta,T
)
{A}(e\to e%;m,\varepsilon,\delta,T
)+
{A}(a\to e%;m,\varepsilon,\delta,T
){A}(e\to f%;m,\varepsilon,\delta,T
)
=\frac{1}{2}
{A}(a\to f%;m,\varepsilon,\delta,T
)
+
{A}(a\to e%;m,\varepsilon,\delta,T
){A}(e\to f%;m,\varepsilon,\delta,T
).
$$
\end{proposition}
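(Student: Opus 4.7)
The plan is to deduce both claimed equalities at once from the determinant formula (Proposition~\ref{p-det}), the initial value $A(e\to e)=\tfrac12$ (Proposition~\ref{p-initial}), and a short combinatorial identification.

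First I would apply Proposition~\ref{p-det} with sources $(a,e)$ and sinks $(f,e)$, obtaining
\[
A(a,e\to f,e)\;=\;A(a\to f)\,A(e\to e)\;-\;A(a\to e)\,A(e\to f).
\]

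Second I would evaluate $A(a,e\to f,e)$ directly from Definition~\ref{def-multipoint}: by definition, it is a signed sum over edge-disjoint configurations of two paths (matched to the sources and sinks via a permutation $\sigma$ of $\{1,2\}$) together with loops. If $\sigma$ is the transposition, the two paths are forced to be $a\to e$ and $e\to f$, and both contain the edge~$e$, so edge-disjointness fails and this case contributes nothing. If $\sigma$ is the identity, the required path from $e$ to $e$ must be the single edge~$(e)$ itself, because edges within a path are distinct; by~\eqref{eq-def-anti3} this trivial path has weight~$1$ (no turns, no nodes). Adjoining or removing this one-edge component yields a weight-preserving bijection between the surviving $\sigma=\mathrm{id}$ configurations and the loop configurations with source~$a$, sink~$f$ that avoid the edge~$e$. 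Hence
\[
A(a,e\to f,e)\;=\;A(a\to f)\;-\;A(a\to f\text{ pass }e).
\]

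Equating the two expressions for $A(a,e\to f,e)$ and using $1-A(e\to e)=A(e\to e)=\tfrac12$ simultaneously produces both stated forms in one line. The only delicate point is the combinatorial identification in the second step: one must verify carefully that the degenerate one-edge path $(e)$ carries weight exactly $1$ and that the adjoin/remove map preserves total weights (including signs). Once that is pinned down, everything else is pure rearrangement of identities already available in the excerpt.
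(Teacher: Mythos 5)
Your proposal is correct and follows essentially the same route as the paper's proof: both reduce $A(a\to f\text{ pass }e)$ to $A(a\to f)-A(a,e\to f,e)$ via the weight-preserving bijection $S\mapsto S\cup\{e\}$ (the one-edge path $(e)$ having no nodes and hence weight $1$), then expand $A(a,e\to f,e)$ by Proposition~\ref{p-det} and finish with $A(e\to e)=\tfrac12$ from Proposition~\ref{p-initial}. Your extra remarks --- that the transposed matching is killed by edge-disjointness and that a path from $e$ to $e$ must be the single edge $e$ --- are exactly the justification implicit in the paper's bijection claim.
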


\subsection{Fermi theory and Feynman diagrams}
\label{sec-fermi}

Now we couple two copies of the model in a way resembling Fermi'\blueit{s} theory,
which describes one type of weak interaction between electrons
and muons, slightly heavier \blueit{analogs} of electrons.

\begin{definition} \label{def-fermi} Fix $g, m_\mathrm{e}, m_\mu>0$ called \emph{coupling constant, electron}, and \emph{muon mass} respectively. Denote by $\mathrm{commonedges}(S_\mathrm{e},S_\mu)$ the number of common edges in two loop configurations $S_\mathrm{e},S_\mu$ (possibly with sources and sinks). The \emph{arrow from edges $a_\mathrm{e}$ and $a_\mu$ to edges $f_\mathrm{e}$ and $f_\mu$} is
\begin{multline}\label{eq-def-fermi}
{A}(a_\mathrm{e},a_\mu\to f_\mathrm{e},f_\mu%;m_\mathrm{e}, m_\mu,\varepsilon,g,\delta,T
)
:=\\:=\frac{
\sum\limits_{\substack{\text{loop configurations $S_\mathrm{e}$}\\
\text{with the source $a_\mathrm{e}$}\\
\text{and the sink $f_\mathrm{e}$}}}\quad
\sum\limits_{\substack{\text{loop configurations $S_\mu$}\\
\text{with the source $a_\mu$}\\
\text{and the sink $f_\mu$}}}
A(S_\mathrm{e},m_\mathrm{e},\varepsilon,\delta)A(S_\mu,m_\mu,\varepsilon,\delta)
(1+g)^{\mathrm{commonedges}(S_\mathrm{e},S_\mu)}}
{\sum\limits_{\substack{\text{loop configurations $S_\mathrm{e}$}}}\quad
\sum\limits_{\substack{\text{loop configurations $S_\mu$}}}
A(S_\mathrm{e},m_\mathrm{e},\varepsilon,\delta)A(S_\mu,m_\mu,\varepsilon,\delta)
(1+g)^{\mathrm{commonedges}(S_\mathrm{e},S_\mu)}}.
\end{multline}
\end{definition}

Informally, the powers of $(1+g)$ are explained as follows:
An interaction may or may not occur on each common edge of $S_\mathrm{e}$
and $S_\mu$. Each occurance gives a factor of $g$.

%%%
%\begin{proposition} \mscomm{!!! Prove !!!}
%For each edges $a_\mathrm{e},a_\mu,f_\mathrm{e},f_\mu$
%the arrow from $a_\mathrm{e}$ and $a_\mu$ to $f_\mathrm{e}$ and $f_\mu$
%is well-defined, that is, the denominator of~\eqref{eq-def-fermi}
%is nonzero.
%\end{proposition}
%%%

\begin{proposition}[Perturbation expansion] \label{p-perturbation}
For $g$ sufficiently small in terms of
$m_\mathrm{e},m_\mu,\varepsilon,\delta,T$ and for  \blueit{any} edges $a_\mathrm{e},a_\mu,f_\mathrm{e},f_\mu$,
the arrow from $a_\mathrm{e}$ and $a_\mu$ to $f_\mathrm{e}$ and $f_\mu$
is well-defined, that is, the denominator of~\eqref{eq-def-fermi}
is nonzero.  \blueit{As $g\to 0$,} we have
\begin{multline*}
{A}(a_\mathrm{e},a_\mu\to f_\mathrm{e},f_\mu%;m_\mathrm{e}, m_\mu,\varepsilon,g,\delta,T
)
={A}(a_\mathrm{e}\overset{\mathrm{e}}{\to} f_\mathrm{e})
{A}(a_\mu\overset{\mu}{\to} f_\mu)
%+\\%\hspace{-0.9cm}
+g\sum_{e}
\left(
{A}(a_\mathrm{e}\overset{\mathrm{e}}{\to} e)
{A}(e\overset{\mathrm{e}}{\to} f_\mathrm{e})
{A}(a_\mu\overset{\mu}{\to}e)
{A}(e\overset{\mu}{\to} f_\mu)
\right.+\\+\left.
{A}(a_\mathrm{e}\overset{\mathrm{e}}{\to} f_\mathrm{e})
{A}(e\overset{\mathrm{e}}{\to} e)
{A}(a_\mu\overset{\mu}{\to} e)
{A}(e\overset{\mu}{\to} f_\mu)%\right.+\\+\left.\frac{1}{2}
+
{A}(a_\mathrm{e}\overset{\mathrm{e}}{\to} e)
{A}(e\overset{\mathrm{e}}{\to} f_\mathrm{e})
{A}(a_\mu\overset{\mu}{\to} f_\mu)
{A}(e\overset{\mu}{\to} e)
\right)
+o(g),
%+O_{m_\mathrm{e}, m_\mu,\varepsilon,\delta,T}(g^2),
\end{multline*}
where the sum is over all edges $e$ and we denote
$A(a\overset{\mathrm{e}}{\to} f):= A(a\to f;m_\mathrm{e},\varepsilon,\delta,T)$ and
$A(a\overset{\mu}{\to} f):= A(a\to f;m_\mu,\varepsilon,\delta,T)$.
%%%
%$$
%A(a\to f):=
%\begin{cases}
%  A(a\to f,m_\mathrm{e},\varepsilon,\delta,T), & \mbox{if }a=a_\mathrm{e}\text{ or }f=f_\mathrm{e},\\
%  A(a\to f,m_\mu,\varepsilon,\delta,T), & \mbox{if }a=a_\mu\text{ or }f=f_\mu.
%\end{cases}
%$$
%%%
\end{proposition}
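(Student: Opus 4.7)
The plan is to Taylor-expand both the numerator $N(g)$ and denominator $D(g)$ of~\eqref{eq-def-fermi} at $g=0$ via $(1+g)^k = 1 + kg + O(g^2)$, and extract the two leading coefficients of the ratio. Both $N(g)$ and $D(g)$ are polynomials in $g$, and the constant coefficient of $D(g)$ is the product $Z_\mathrm{e}\,Z_\mu$ of the single-species partition functions appearing in the denominator of~\eqref{eq-def-finite-lattice-propagator}; this product is nonzero by Theorem~\ref{p-real-imaginary}, so $D(g)\neq 0$ for all sufficiently small~$g$, ensuring well-definedness. Dividing the constant coefficients gives exactly $A(a_\mathrm{e}\overset{\mathrm{e}}{\to}f_\mathrm{e})\,A(a_\mu\overset{\mu}{\to}f_\mu)$, which is the $g^0$ contribution in the stated expansion.

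Differentiating once at $g=0$ replaces $(1+g)^{\mathrm{commonedges}(S_\mathrm{e},S_\mu)}$ with $\mathrm{commonedges}(S_\mathrm{e},S_\mu) = \sum_e \mathbf{1}_{e\in S_\mathrm{e}}\,\mathbf{1}_{e\in S_\mu}$, which factorises the resulting double sums edgewise. Using Definition~\ref{def-multipoint}, I obtain
\[
\frac{d}{dg}\bigg|_{g=0}\frac{N(g)}{D(g)} = \sum_e A(a_\mathrm{e}\overset{\mathrm{e}}{\to}f_\mathrm{e}\text{ pass }e)\,A(a_\mu\overset{\mu}{\to}f_\mu\text{ pass }e) - A(a_\mathrm{e}\overset{\mathrm{e}}{\to}f_\mathrm{e})A(a_\mu\overset{\mu}{\to}f_\mu)\sum_e \frac{Z_\mathrm{e}^{\ni e}}{Z_\mathrm{e}}\frac{Z_\mu^{\ni e}}{Z_\mu},
\]
where $Z^{\ni e}:=\sum_{S\ni e}A(S)$ sums over source/sink-free loop configurations containing the edge $e$.

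The technical crux, and the one step I expect to be non-obvious, is the lemma $Z^{\ni e}/Z = \tfrac12$ for every edge $e$. To prove it, I observe that a path with distinct edges starting and ending at the same edge $e$ must be the singleton $\{e\}$ itself; its arrow is $1$ (no nodes, no turns). Hence every loop configuration with source and sink both equal to $e$ consists of $\{e\}$ together with a loop configuration avoiding $e$, giving $A(e\to e) = Z^{\not\ni e}/Z$, which equals $\tfrac12$ by Proposition~\ref{p-initial}. The complementary identity $Z^{\ni e}/Z = \tfrac12$ follows immediately.

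Finally, Proposition~\ref{cor-pass} rewrites each pass-factor as $A(a\to f\text{ pass }e) = \tfrac12 A(a\to f) + A(a\to e)A(e\to f)$. Substituting this into the first sum above and expanding the product yields four terms: one of them, $\tfrac14 A(a_\mathrm{e}\overset{\mathrm{e}}{\to}f_\mathrm{e})A(a_\mu\overset{\mu}{\to}f_\mu)$ summed over all edges, exactly cancels the subtracted sum by the lemma. Rewriting the remaining factors of $\tfrac12$ as $A(e\overset{\mathrm{e}}{\to}e)$ or $A(e\overset{\mu}{\to}e)$ via Proposition~\ref{p-initial} recovers precisely the three summands displayed in the proposition, concluding the proof.
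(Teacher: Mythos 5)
Your proposal is correct and follows essentially the same route as the paper: expand numerator and denominator of~\eqref{eq-def-fermi} to first order in $g$, observe that the zeroth-order denominator is $Z_\mathrm{e}Z_\mu\ne 0$ by Theorem~\ref{p-real-imaginary}, reduce the first-order numerator to pass-arrows, and cancel the extra terms using the identity $\sum_{S\ni e}A(S)=\tfrac12\sum_S A(S)$ (the paper's Lemma on loop configurations containing $e$, proved exactly as you do via the bijection $S\mapsto S\cup\{e\}$ and Proposition~\ref{p-initial}) together with Proposition~\ref{cor-pass}. The only cosmetic difference is that you differentiate the ratio directly while the paper expands numerator and denominator separately before dividing.
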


%Hereafter the notation $f(x,t,m_\mathrm{e},m_\mu,\varepsilon,g,\delta,T)
%={O}_{x,t,m_\mathrm{e},m_\mu,\varepsilon,\delta,T}\left(g^2\right)$ means that there is a constant $C(x,t,m_\mathrm{e},m_\mu,\varepsilon,\delta,T)$ (depending on $x,t,m_\mathrm{e},m_\mu,\varepsilon,\delta,T$ but \emph{not} on $g$) such that for each $x,t,m_\mathrm{e},m_\mu,\varepsilon,g,\delta,T$ we have $|f(x,t,m_\mathrm{e},m_\mu,\varepsilon,g,\delta,T)|\le C(x,t,m_\mathrm{e},m_\mu,\varepsilon,\delta,T)\,g^2$.

%\mscomm{Prove using Corollary~\ref{cor-pass}!}

\begin{figure}[htbp]
  \centering
  \small
\begin{tabular}{|c|c|}
\hline
${A}(a_\mathrm{e}\overset{\mathrm{e}}{\to} f_\mathrm{e})
{A}(a_\mu\overset{\mu}{\to} f_\mu)$
&
$g\,{A}(a_\mathrm{e}\overset{\mathrm{e}}{\to} e)
{A}(e\overset{\mathrm{e}}{\to} f_\mathrm{e})
{A}(a_\mu\overset{\mu}{\to}e)
{A}(e\overset{\mu}{\to} f_\mu)$
\\[3pt]
\includegraphics[height=2.2cm]{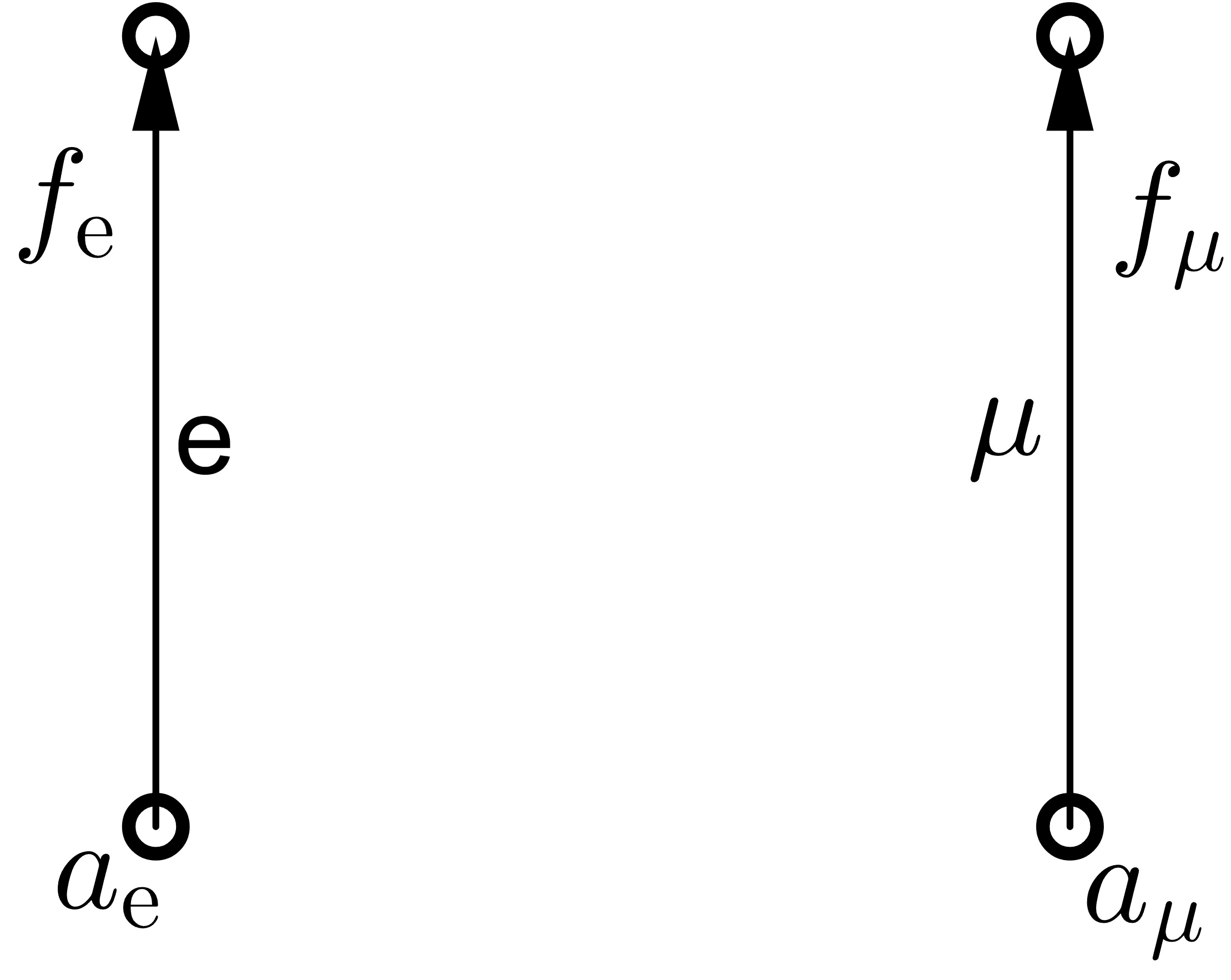} &
\includegraphics[height=2.2cm]{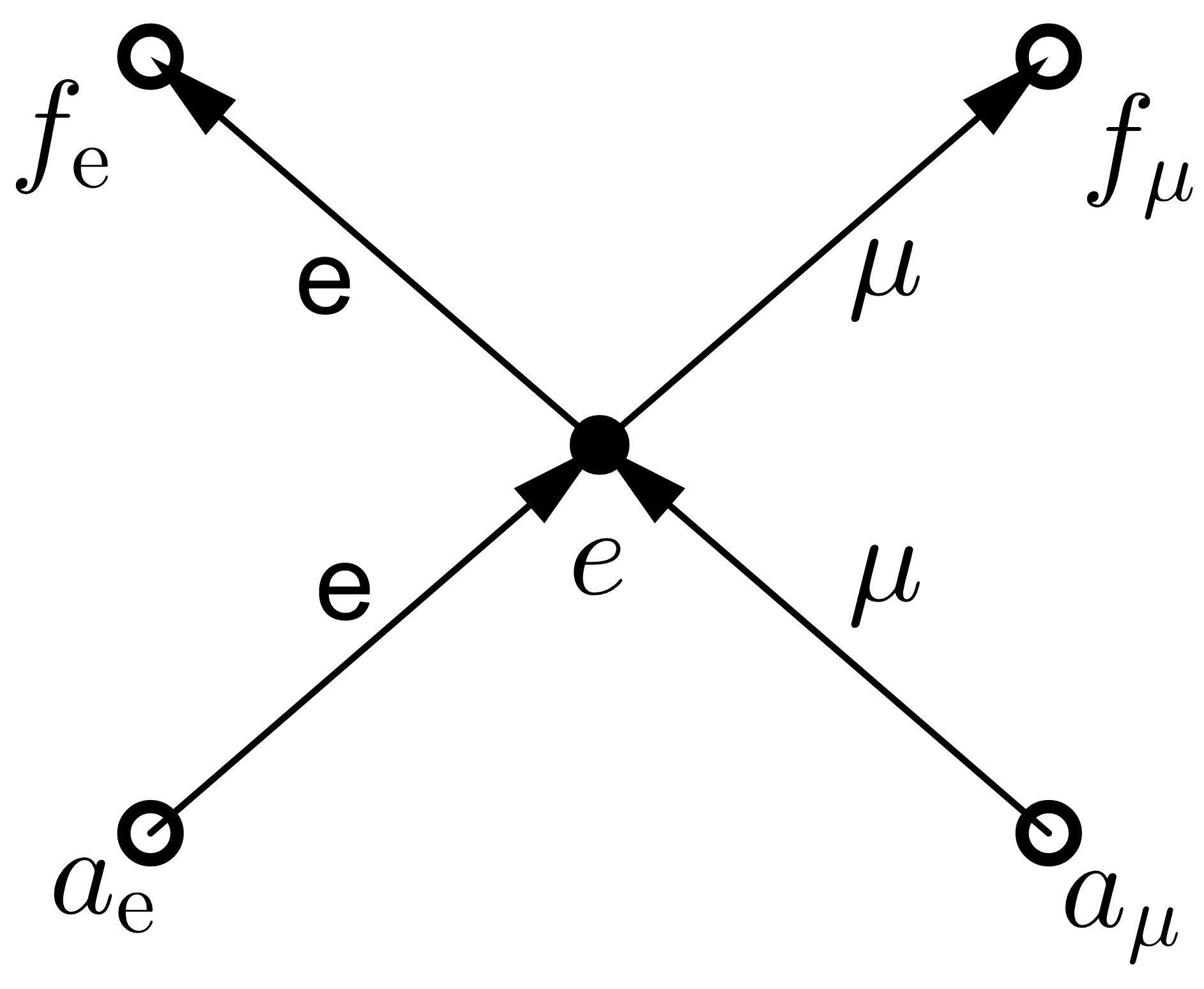}
\\
\includegraphics[height=2.2cm]{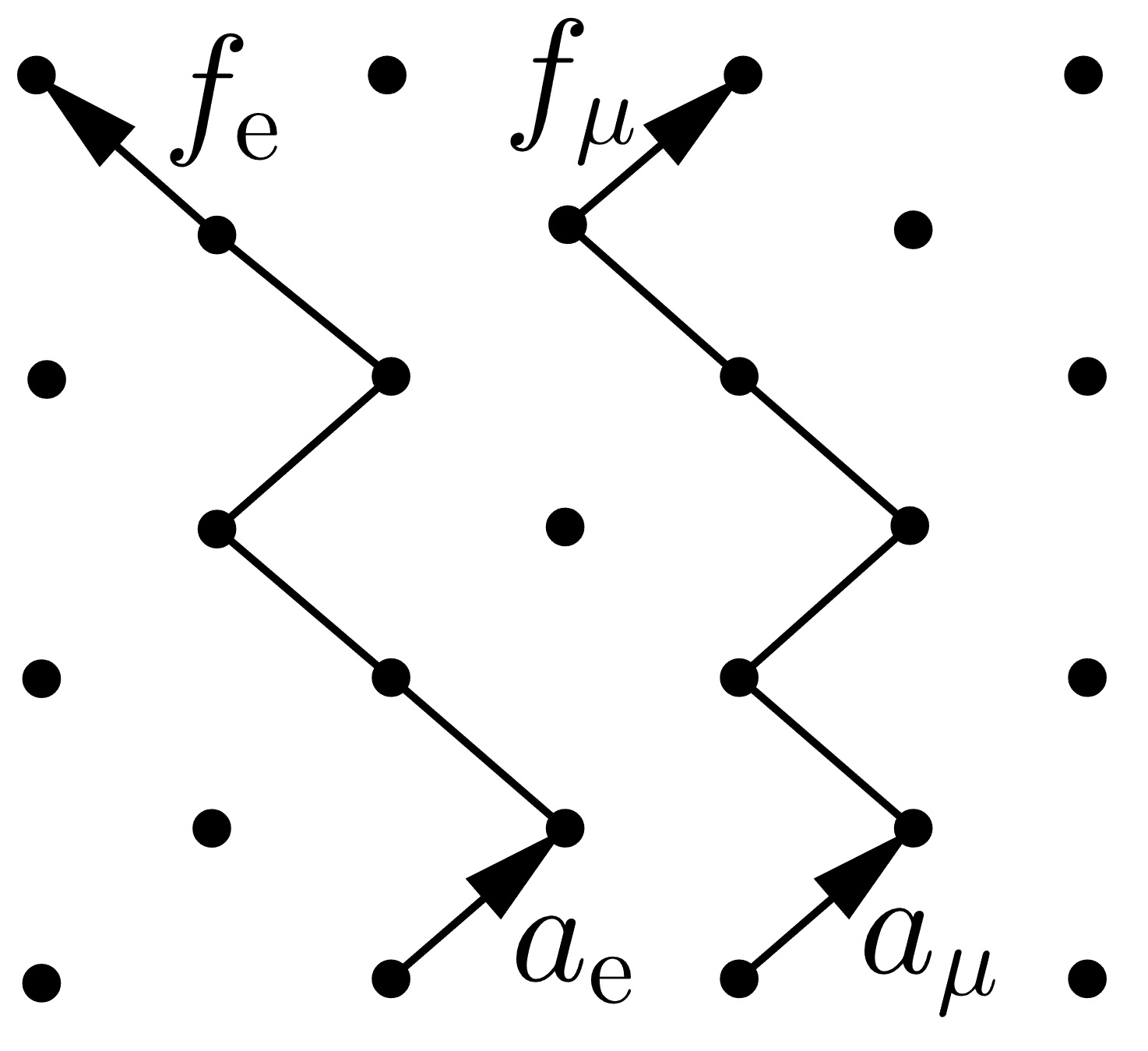} &
\includegraphics[height=2.2cm]{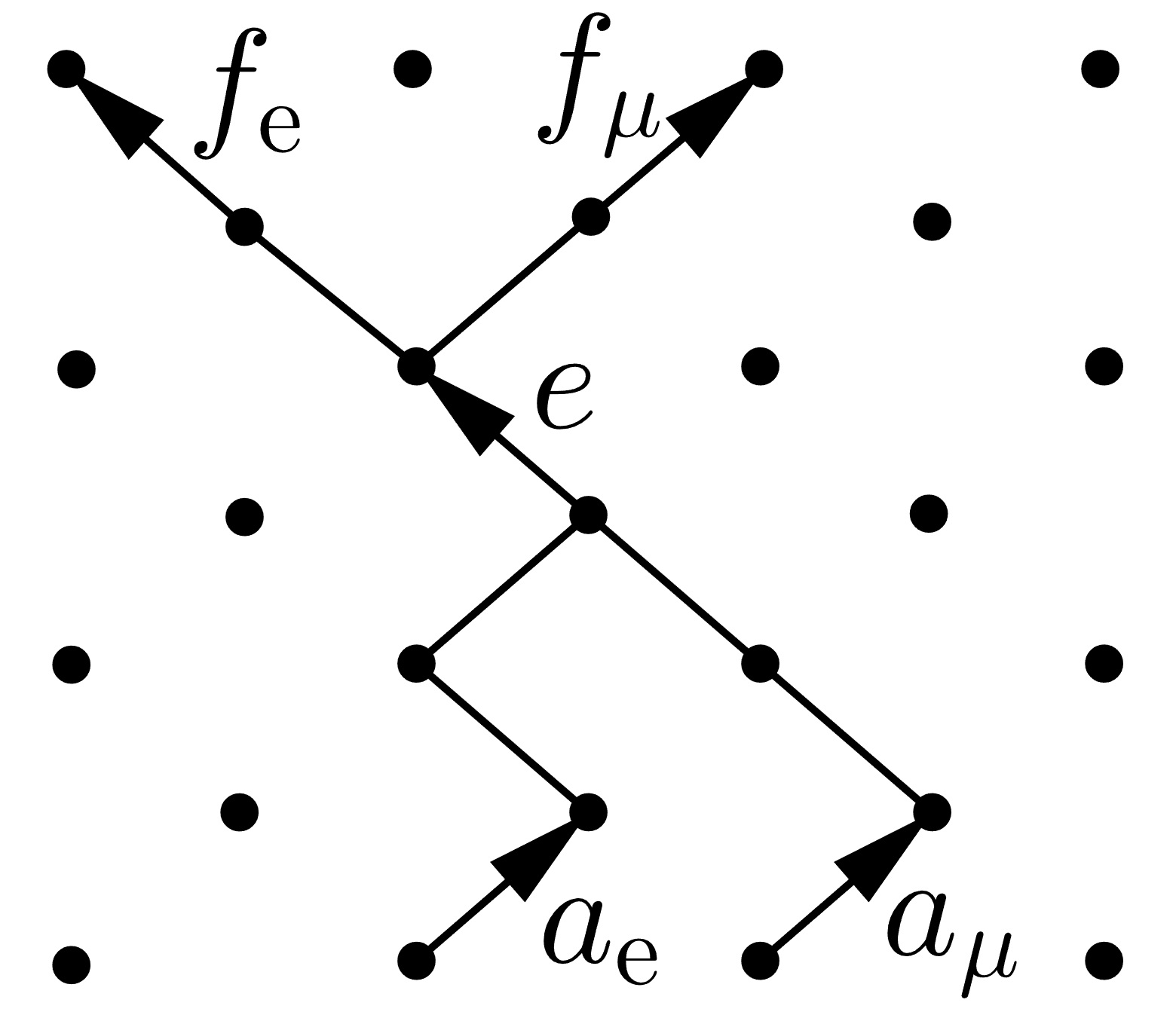}
\\
\hline
$g\,{A}(a_\mathrm{e}\overset{\mathrm{e}}{\to} f_\mathrm{e})
{A}(e\overset{\mathrm{e}}{\to} e)
{A}(a_\mu\overset{\mu}{\to} e)
{A}(e\overset{\mu}{\to} f_\mu)$
&
$g\,{A}(a_\mathrm{e}\overset{\mathrm{e}}{\to} e)
{A}(e\overset{\mathrm{e}}{\to} f_\mathrm{e})
{A}(a_\mu\overset{\mu}{\to} f_\mu)
{A}(e\overset{\mu}{\to} e)$\\[3pt]
\includegraphics[height=2.2cm]{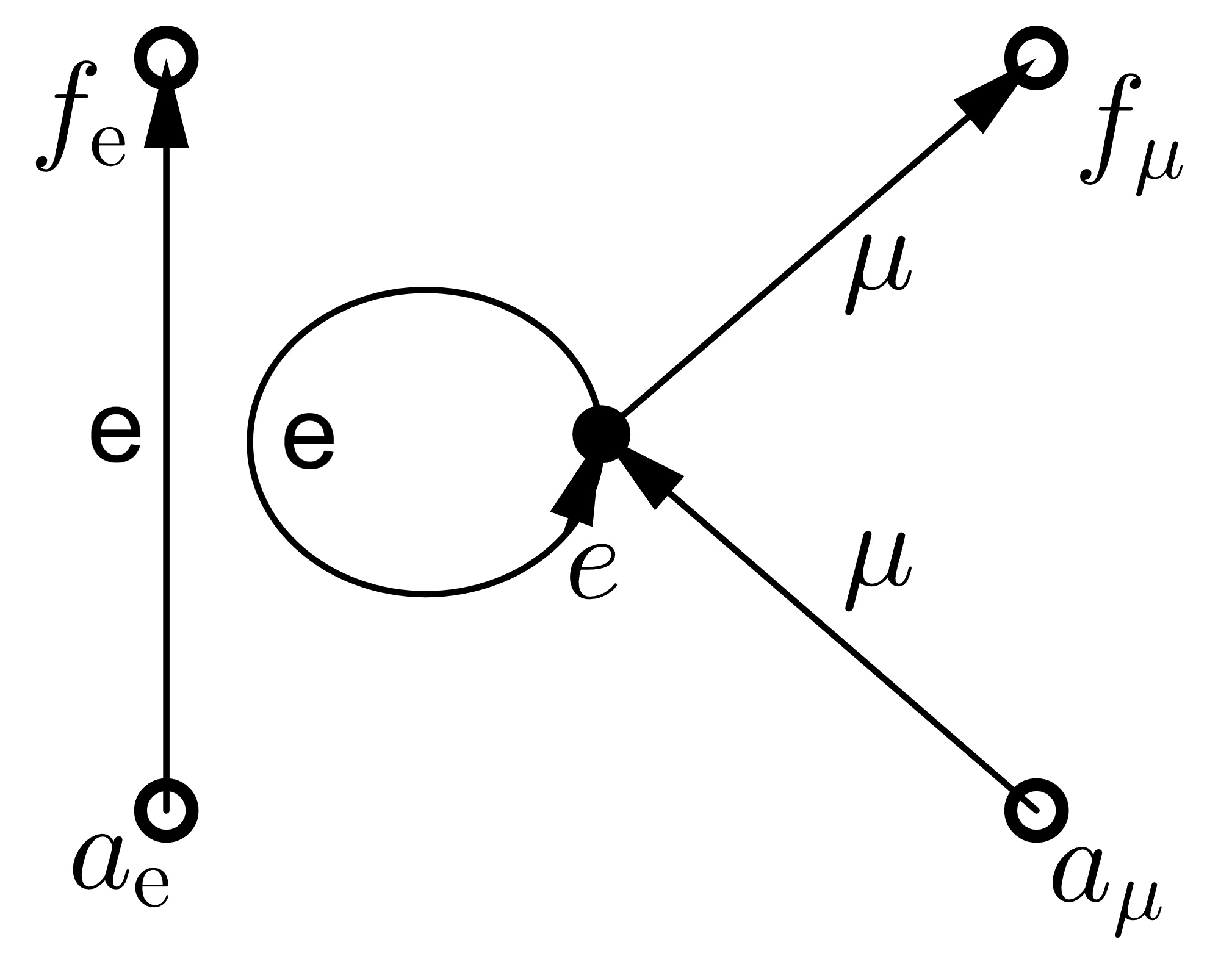} &
\includegraphics[height=2.2cm]{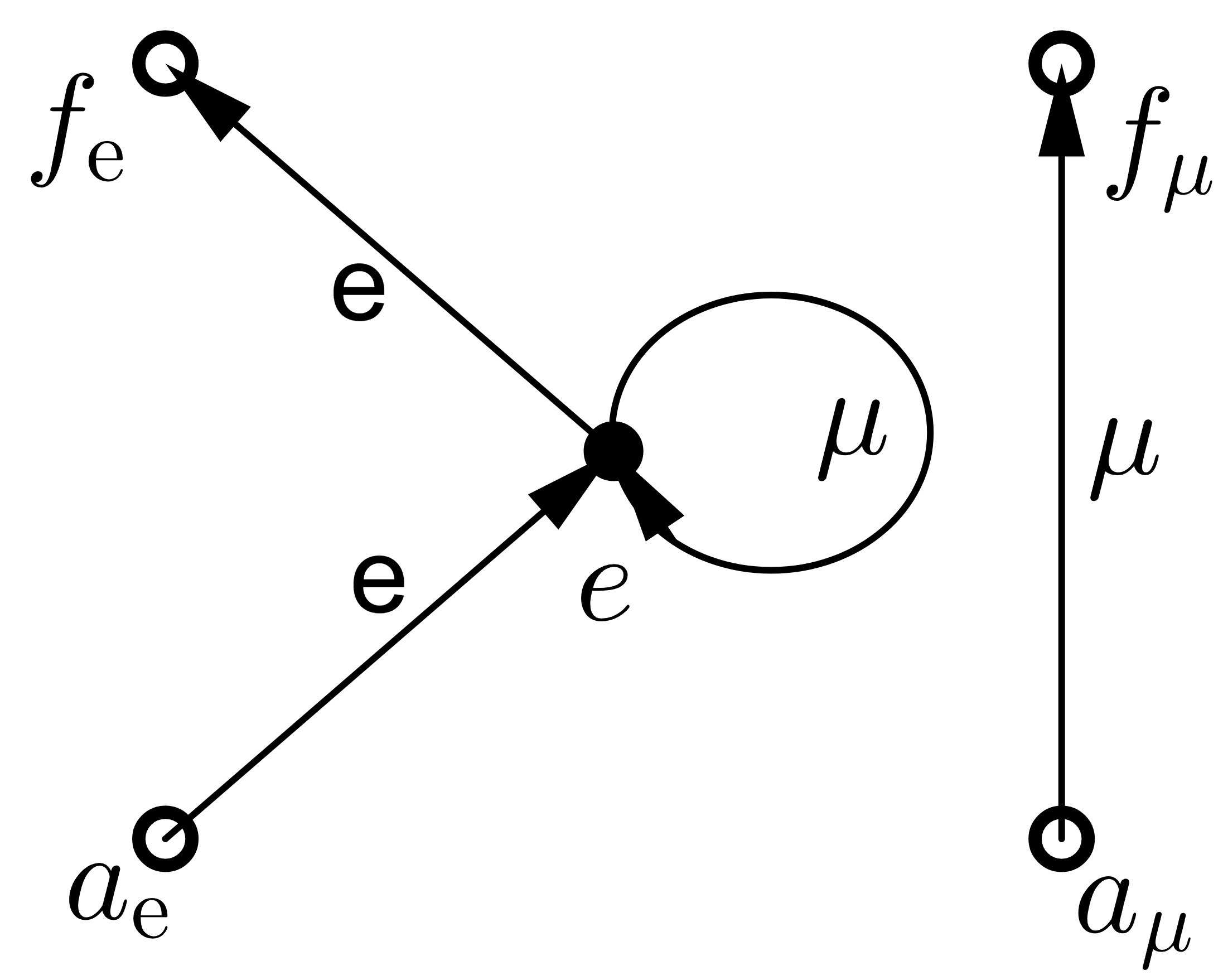} \\
\includegraphics[height=2.2cm]{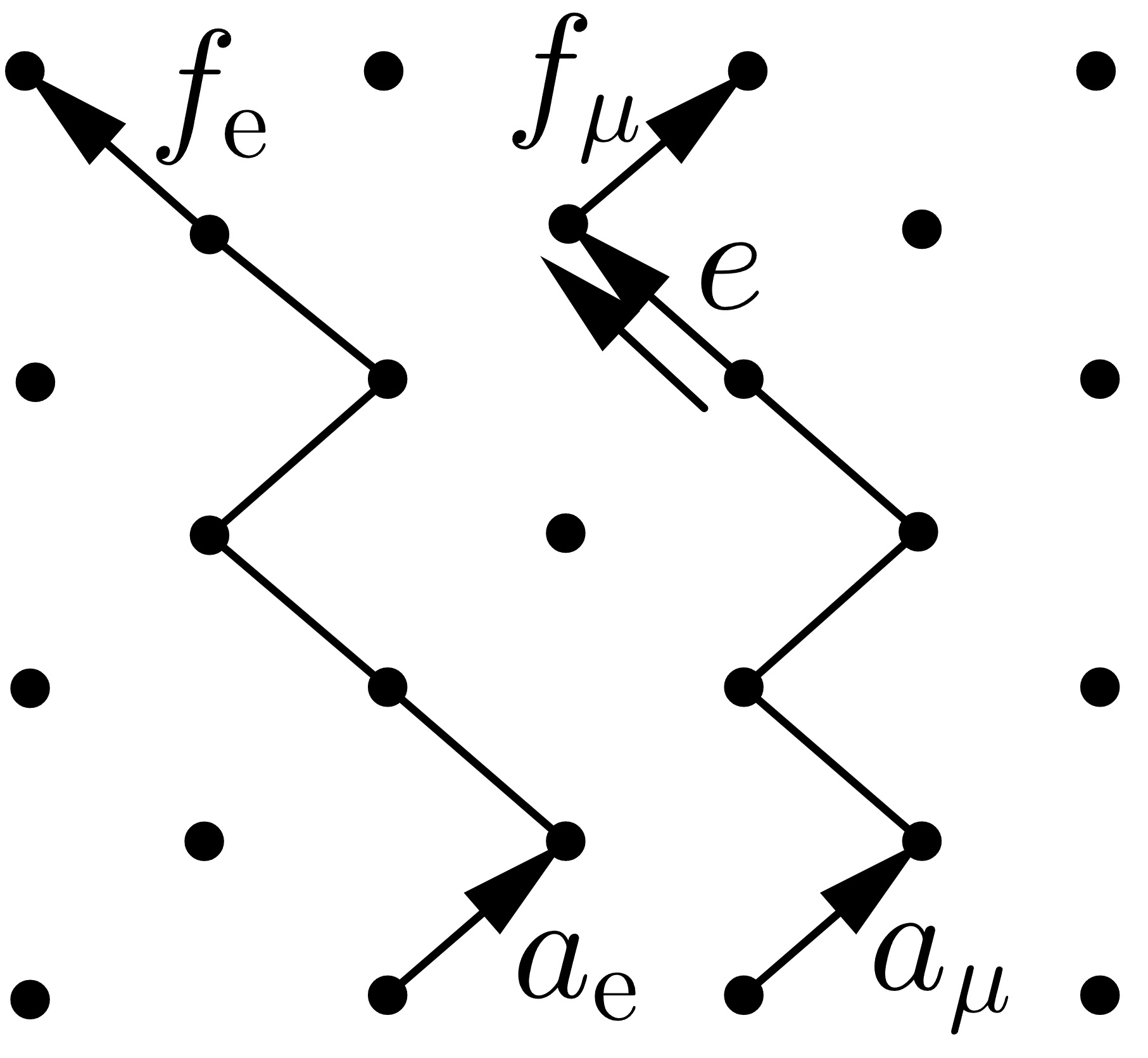} &
\includegraphics[height=2.2cm]{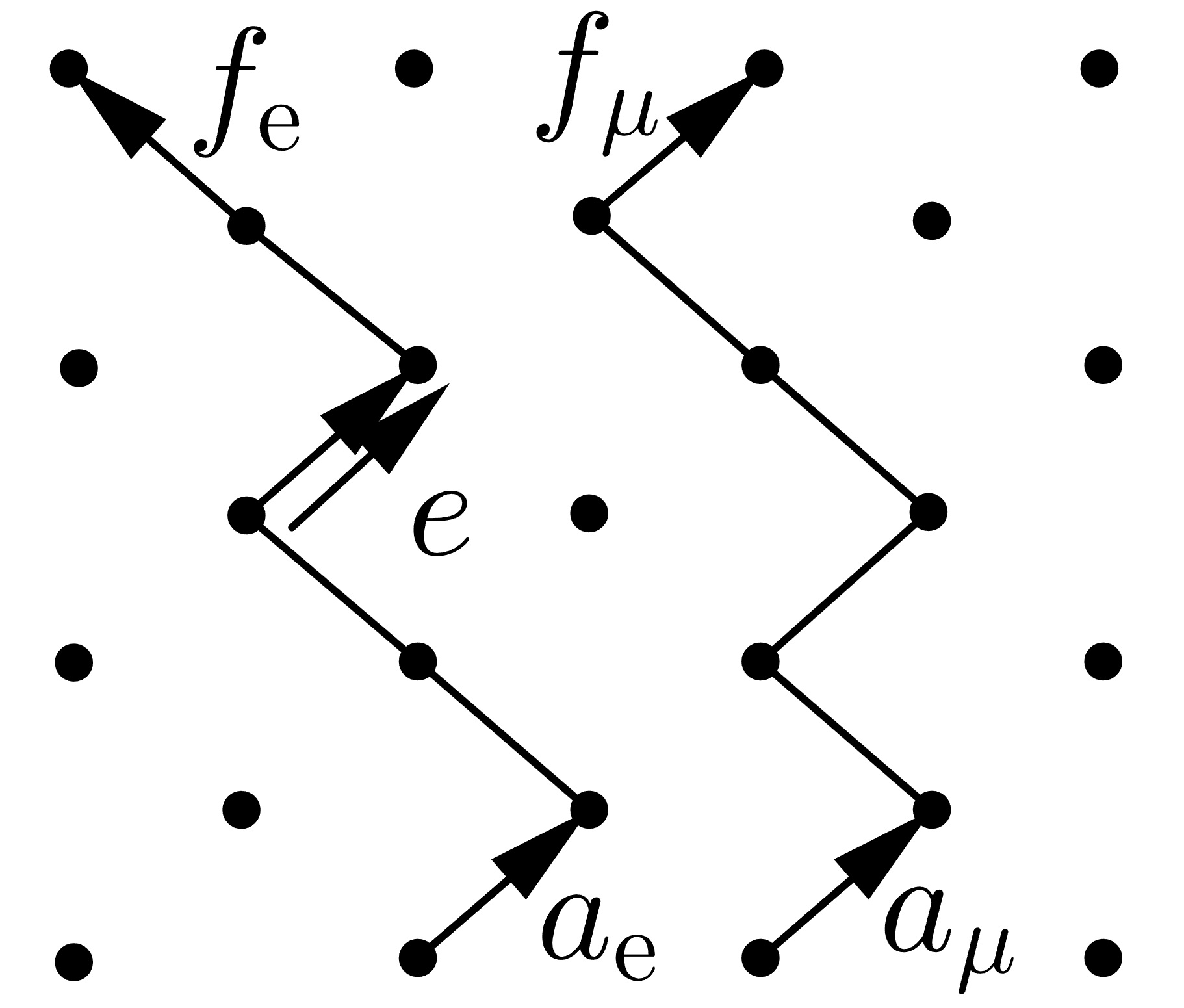}
\\
\hline
\end{tabular}
\caption{Terms in perturbation expansion, their Feynman diagrams,
and %examples of
collections of loop configurations
contributing to the terms (from top to bottom in each cell); see Proposition~\ref{p-perturbation}.}
  \label{fig-feynman}
%  \vspace{-0.5cm}
\end{figure}

The perturbation expansion can be extended to any order in $g$.
The terms are depicted as so-called \emph{Feynman diagrams} as follows (see Figure~\ref{fig-feynman}). For each edge
in the left side, draw a white vertex. For each edge that is a
summation variable in the right side, draw a black vertex. For each factor
of the form $A(a\overset{\mathrm{e}}{\to} f)$
or $A(a\overset{\mu}{\to} f)$, draw an arrow from the vertex drawn for $a$
to the vertex drawn for $f$ (a loop, if $a=f$) labeled by letter ``$\mathrm{e}$''
or ``$\mu$'' respectively.

We conjecture that those Feynman diagrams have usual properties:
each black vertex is the starting point of exactly two arrows
labeled by ``$\mathrm{e}$'' and ``$\mu$'',
is the endpoint of exactly two arrows
also labeled by ``$\mathrm{e}$'' and ``$\mu$'',
and is joined with a white vertex by a sequence of arrows.

Let us give a few comments for specialists. %If one ignores the summands involving $\frac{1}{2}$, then this is a computation of Feynman diagram \mscomm{(draw!)}. The factors $\frac{1}{2}$ have the meaning of ${A}(e\to e)$; they contribute to the following $2$ Feynman diagrams \mscomm{(draw!)}. The latter contribution blows up as $\varepsilon\to 0$ because all lattice propagators are of order $\varepsilon$ by Theorem~\ref{th-continuum-limit}. This suggests that the model has no \emph{naive} continuum limit. As usual, the \emph{true} continuum limit requires \emph{renormalization}, that is, choosing a lattice-dependent coupling $g(\varepsilon)$ in a wise way. In physics literature Fermi model in $(1+1)$ dimensions is known to be renormalizable; thus one expects that the true continuum limit exists. Of course, proving the existence mathematically is as hard as for any other model with interaction.
As $\varepsilon\to 0$, the contribution of Feynman diagrams
involving loops blows up because
$A(e\to e)=1/2$ by Proposition~\ref{p-initial} whereas
the other arrows are of order
$\varepsilon$ by Theorem~\ref{th-continuum-limit}.
This suggests that the model has no \emph{naive} continuum limit.
As usual, the \emph{true} continuum limit requires \emph{renormalization}, that is, choosing a lattice-dependent coupling $g(\varepsilon)$ in a wise way. Fermi model in $1$ space and $1$ time dimension is known to be \emph{renormalizable} \cite[\S III.3, top of p.~180]{Zee-10}; thus one expects that the true continuum limit exists. Proving the existence mathematically is as hard as for any other model with interaction.

\subsection{Open problems}

The new model is only a starting point of
the missing Minkowskian lattice quantum field theory.
Here we pick up a few informal open problems among
a variety of research directions.

We start with the ones relying on
Definition~\ref{def-anti-alg} only. As a warm-up, we suggest the following.

\begin{pr} (Cf.~\cite[Theorem~1]{Novikov-20},\cite{Kuyanov-Slizkov-22}) %\cite[\mscomm{Theorem~N}]{Kuyanov-Slizkov-22})
Does expected charge~\eqref{eq-q} vanish somewhere?
%at a lattice point $(x,t)\in\varepsilon\mathbb{Z}^2$?
\end{pr}

The most shouting problem is to find a large-time asymptotic formula\tobereplaced{,}{ for $|x|>|t|/\sqrt{1+m^2\varepsilon^2}$
and} especially for $|x|>|t|$.

\begin{pr} (Cf.~Theorem~\tobereplaced{\ref{th-anti-ergenium}}{\ref{th-anti-Airy}} \blueit{and }\cite{Drmota-etal})
Prove that for each $m,\varepsilon,\Delta>0$ and each $(x,t)\in\varepsilon\mathbb{Z}^2$ satisfying
$1/\sqrt{1+m^2\varepsilon^2}<|x/t|<1-\Delta$
we have
\begin{align*}%\label{eq-anti-Airy-re}
\tilde{A}_1\left(x,t,m,\varepsilon\right)
&=
\frac{i^{\frac{|x|-|t|+\varepsilon}{\varepsilon}}\varepsilon
%\sqrt{m}
\,(\bluevar{-m\mathcal{L}}(x/t,m,\varepsilon))^{1/2}}
{\pi((1+m^2\varepsilon^2)x^2/t^2-1)^{1/4}}
K_{ 1/3}\left(\bluevar{-t\mathcal{L}}(x/t,m,\varepsilon)\right)
\left(1+O_{m,\varepsilon,\Delta}\left(\frac{1}{|t|}\right)\right),\\
%\label{eq-anti-Airy-im}
\tilde{A}_2\left(x,t,m,\varepsilon \right)
&= \sqrt{\frac{t + x}{t - x}}\cdot
\frac{i^{\frac{|x|-|t|}{\varepsilon}}\varepsilon%\sqrt{m}
\,(\bluevar{-m\mathcal{L}}(x/t,m,\varepsilon))^{1/2}}
{\pi((1+m^2\varepsilon^2)x^2/t^2-1)^{1/4}}
K_{ 1/3}\left(\bluevar{-t\mathcal{L}}(x/t,m,\varepsilon)\right)
\left(1+O_{m,\varepsilon,\Delta}\left(\frac{1}{|t|}\right)\right),\\
%%%
%\left(-i\right)^{\frac{t - x}{\varepsilon}}\left( \sqrt{1+m^2\varepsilon^2} \pm 1 \right)\left( \frac{2}{mt} \right)^{1/3} \mathrm{Ai} \left(\Delta(x,t,m,\varepsilon) \right)+ O_{m,\varepsilon}\left(\frac{1}{\sqrt{t}}\right),
%%%
\intertext{where}
%\label{eq-Delta}
\bluevar{\mathcal{L}(v,m,\varepsilon)}&:=
\bluevar{\frac{1}{\varepsilon}\,\mathrm{arcosh}\,%\left(
\frac{m\varepsilon } {\sqrt{\left(1+m^2\varepsilon^2\right)\left(1-v^2\right)}}%\right)
-\frac{|v|}{\varepsilon}\,\mathrm{arcosh}\,%\left(
\frac{m\varepsilon |v|}{\sqrt{1-v^2}}.}
\end{align*}
%%%%%%%%%%%%%%%%%% BACKUP WITH THETA: %%%%%%%%%%%%%%%
%\begin{align*}%\label{eq-anti-Airy-re}
%\tilde{A}_1\left(x,t,m,\varepsilon\right)
%&=
%\frac{i^{\frac{|x|-|t|+\varepsilon}{\varepsilon}}\varepsilon\sqrt{m}\,\theta(x,t,m,\varepsilon)^{1/2}}
%{\pi((1+m^2\varepsilon^2)x^2-t^2)^{1/4}}
%K_{ 1/3}\left(\theta(x,t,m,\varepsilon)\right)
%\left(1+O_{m,\varepsilon,\Delta}\left(\frac{1}{|t|}\right)\right),\\
%\tilde{A}_2\left(x,t,m,\varepsilon \right)
%&= \sqrt{\frac{t + x}{t - x}}\cdot
%\frac{i^{\frac{|x|-|t|}{\varepsilon}}\varepsilon\sqrt{m}\,\theta(x,t,m,\varepsilon)^{1/2}}
%{\pi((1+m^2\varepsilon^2)x^2-t^2)^{1/4}}
%K_{ 1/3}\left(\theta(x,t,m,\varepsilon)\right)
%\left(1+O_{m,\varepsilon,\Delta}\left(\frac{1}{|t|}\right)\right),\\
%%%
%%\left(-i\right)^{\frac{t - x}{\varepsilon}}\left( \sqrt{1+m^2\varepsilon^2} \pm 1 \right)\left( \frac{2}{mt} \right)^{1/3} \mathrm{Ai} \left(\Delta(x,t,m,\varepsilon) \right)+ O_{m,\varepsilon}\left(\frac{1}{\sqrt{t}}\right),
%%%
%\intertext{where}
%%\label{eq-Delta}
%\theta(x,t,m,\varepsilon)&:=
%\tobereplaced{
%-\frac{t}{\varepsilon}\,\mathrm{arcosh}\,%\left(
%\frac{m\varepsilon t} {\sqrt{\left(1+m^2\varepsilon^2\right)\left(t^2-x^2\right)}}%\right)
%+\frac{x}{\varepsilon}\,\mathrm{arcosh}\,%\left(
%\frac{m\varepsilon x}{\sqrt{t^2-x^2}}.
%}
%{
%-\frac{t}{\varepsilon}\,\mathrm{artanh}\,\frac{\sqrt{(1+m^2\varepsilon^2)x^2-t^2}}{m \varepsilon t}
%+\frac{x}{\varepsilon}\,\mathrm{artanh}\,\frac{\sqrt{(1+m^2\varepsilon^2)x^2-t^2}}{m \varepsilon x}.
%}
%\end{align*}
%%%%%%%%%%%%%%%%%%%%%%%%%%%%%%%%%
\end{pr}

The limit of small lattice step also deserves attention.
Corollary~\ref{cor-anti-uniform} %gives uniform limit on compact subsets of the set
assumes $|x|\ne |t|$, hence misses the main contribution to the %expected
charge. Now we ask for the weak limit detecting the peak.

\begin{pr} \label{p-weak} (Cf.~Corollary~\ref{cor-anti-uniform})
Find the distributional limits $\lim_{\varepsilon\searrow 0}  \widetilde{A}\left(x_\varepsilon,t_\varepsilon,{m},{\varepsilon}\right)/{4\varepsilon}$ and $\lim_{\varepsilon\searrow 0} {Q}\left(x_\varepsilon,t_\varepsilon,{m},{\varepsilon}\right)/{8\varepsilon^2}$
on the whole $\mathbb{R}^2$. Is the former limit equal to propagator~\eqref{eq-feynman-propagator-fourier} including the generalized function supported on the lines $t=\pm x$?
%What is the physical interpretation of the latter limit (providing a value to the ill-defined square of the propagator)?
\end{pr}

The infinite-lattice propagator seems to be unique to
satisfy the variety of properties from \S\S\ref{ssec-asymptotic}--\ref{ssec-identities}.
But there still could be different \emph{finite}-lattice
propagators with the same limit.
%reproducing the same infinite-lattice one  as the lattice size tends to infinity.

\begin{pr} (Cf.~Definition~\ref{def-anti-combi}, Example~\ref{ex-2x2})
Find a combinatorial construction of a finite-lattice propagator
having the following properties:
\begin{description}
  \item[consistency:] it has the same limit~\eqref{eq-def-infinite-lattice-propagator};
  \item[charge conservation:] it satisfies an analogue of
  Proposition~\ref{p-mass2} before passing to the limit;
  \item[other boundary conditions:] it does not require time-periodic boundary conditions.
\end{description}
\end{pr}

The new model describes a free massive spin-$1/2$ quantum field
but can be easily adopted to other spins via known relations between
propagators for different spins.
For instance, \emph{spin-$0$ and spin-$1$ massive infinite-lattice propagators}
are defined to be $\frac{\sqrt{1+m^2\varepsilon^2}}{2m\varepsilon}\widetilde{A}_1\left(x,t,{m},{\varepsilon}\right)$
and $\frac{\sqrt{1+m^2\varepsilon^2}}{2m\varepsilon}\left(
\begin{smallmatrix}
  \widetilde{A}_1\left(x,t,{m},{\varepsilon}\right) & 0 \\
  0 & \widetilde{A}_1\left(x,t,{m},{\varepsilon}\right)
\end{smallmatrix}
\right)$ respectively. Consistency with continuum theory
is %then
automatic by Corollary~\ref{cor-anti-uniform} and
Proposition~\ref{p-Klein-Gordon-mass}. However, it is natural to
modify the combinatorial definition.

\begin{pr} (Cf.~\S\ref{ssec-proofs-fourier})
Find a combinatorial construction of
$\widetilde{A}_1\left(x,t,{m},{\varepsilon}\right)$
%analogous to \S\ref{ssec-proofs-fourier} but
starting from the Klein--Gordon equation (Proposition~\ref{p-Klein-Gordon-mass})
instead of the Dirac one, to make the construction symmetric with respect to
time reversal $t\mapsto -t$.
\end{pr}

\begin{pr} (Cf.~Example~\ref{p-massless})
Find a %direct
combinatorial construction of \emph{massless}
spin-$0$, spin-$1/2$, and spin-$1$ infinite-lattice propagators
(obtained from the massive ones in the limit $m\to 0$).
\end{pr}

\begin{pr}
Modify the combinatorial definition of the model with several identical particles (Definition~\ref{def-multipoint})
for spin $0$ so that the determinant
in Proposition~\ref{p-det} is replaced by the permanent.
\end{pr}

The next challenge is to introduce interaction
and to prove that the continuum limit of the resulting model has natural physical properties
(at least, satisfies Wightman axioms recalled in Appendix~\ref{sec-axioms}).
Particular goals could be quantum electrodynamics and Fermi model (see \S\ref{sec-fermi}).

%\mscomm{!!! Spin 0 and 1, massless case,
%weak continuum limit on the whole plane, combinatorial definition with exact charge conservation,
%other boundary conditions, continuum limit of Fermi theory, implementation on a quantum computer !!!}

\addcontentsline{toc}{myshrink}{}

\section{Proofs}\label{sec-proofs}

%\mscomm{!!! THIS SECTION IS \textbf{NOT} INCLUDED IN CHECKERS.TEX !!!}

%\mscomm{Draw results-dependence chart !!!}

Let us present a chart showing
the dependence of the above results and further subsections:
{%\small
$$
\xymatrix{
\boxed{\text{\ref{ssec-proofs-fourier-series}.
(Theorem~\ref{th-well-defined},
Proposition~\ref{th-equivalence})}}
\ar[d]\ar[r]\ar[dr]
&
\boxed{\text{\ref{ssec-proofs-continuum}. (Theorem~\ref{th-continuum-limit}, Corollary~\ref{cor-anti-uniform})}}
\\
\boxed{\text{\ref{ssec-proofs-identities}. (Propositions~\ref{p-basis}--\ref{p-triple})}}
&
\boxed{\text{\ref{ssec-proofs-fourier}.
(Theorem~\ref{p-real-imaginary},
Propositions~\ref{p-initial}--\ref{cor-dirac})}}
\ar[dl]
\\
\boxed{\text{\ref{ssec-proofs-variations}. (Propositions~\ref{p-independence-1}--\ref{p-perturbation})}}
&
\boxed{\text{Appendix~\ref{ssec-proofs-alternative}. (Propositions~\ref{p-initial}--\ref{cor-dirac})}}
}
$$
}

Section~\ref{ssec-proofs-variations} relies only on
Theorem~\ref{p-real-imaginary},
Proposition~\ref{p-initial}, and~Lemma~\ref{l-loop-expansion}
among the results proved in \S\S\ref{ssec-proofs-fourier-series},\ref{ssec-proofs-fourier}.
Appendix~\ref{ssec-proofs-alternative} contains alternative proofs and is
independent from \S\ref{sec-proofs}.

Throughout this section we use notation~\eqref{eq-omega}.

%\subsection{Dirac equation/S-holomorphicity}

%In this section we prove Propositions~\ref{p-dirac-finite}--\ref{cor-dirac}.

\subsection{Fourier integral %Exact solution and Fourier series
(Theorem~\ref{th-well-defined},
Proposition~\ref{th-equivalence})}
\label{ssec-proofs-fourier-series}

In this section we compute the functions in Definition~\ref{def-anti-alg}
by Fourier method (see Proposition~\ref{l-double-fourier-finite}). Then we
obtain Proposition~\ref{th-equivalence} by contour integration
(this step has been already performed in \cite{SU-22}). Finally,
we discuss direct consequences (Corollaries~\ref{cor-scaling}--\ref{cor-halve-interval} and Theorem~\ref{th-well-defined}).
Although the method is analogous to the computation of the continuum
propagator, it is the new idea of putting imaginary mass to the dual lattice what
makes it successful (see Remark~\ref{rem-delta-zero}).

\begin{proposition}[Full space-time Fourier transform] \label{l-double-fourier-finite}
There exists a unique pair of functions satisfying axioms 1--3 in Definition~\ref{def-anti-alg}.
Under notation $x^*:=x+\frac{\varepsilon}{2}$, $t^*:=t+\frac{\varepsilon}{2}$, it is given by
\begin{equation*}
  %\label{eq-def-anti2}
  \begin{aligned}
  A_1(x,t%,m,\varepsilon
  )&=
  \begin{cases}
  \frac{\varepsilon^2}{4\pi^2}
  \int_{-\pi/\varepsilon}^{\pi/\varepsilon}
  \int_{-\pi/\varepsilon}^{\pi/\varepsilon}
  \frac{m\varepsilon-i\delta\varepsilon e^{ip\varepsilon}}
  {\sqrt{1-\delta^2\varepsilon^2}\sqrt{1+m^2\varepsilon^2}\cos(\omega\varepsilon)
  -\cos(p\varepsilon)-im\varepsilon^2\delta}
  e^{i p x-i\omega t}\,d\omega dp,
  &\text{for $2x/\varepsilon$ even},\\%(x,t)\in \varepsilon\mathbb{Z}^2,\\
  %\mathrlap{
 \frac{\varepsilon^2}{4\pi^2}
 \int_{-\pi/\varepsilon}^{\pi/\varepsilon}
 \int_{-\pi/\varepsilon}^{\pi/\varepsilon}
 \frac{m\varepsilon\sqrt{1-\delta^2\varepsilon^2} e^{i\omega\varepsilon}-i\delta\varepsilon\sqrt{1+m^2\varepsilon^2}}
 {\sqrt{1-\delta^2\varepsilon^2}\sqrt{1+m^2\varepsilon^2}\cos(\omega\varepsilon)
  -\cos(p\varepsilon)-im\varepsilon^2\delta}e^{i p x^*-i\omega t^*}\,d\omega dp,
  %}
  %\hphantom{\frac{\varepsilon^2}{4\pi^2}
  %\int_{-\pi/\varepsilon}^{\pi/\varepsilon}
  %\int_{-\pi/\varepsilon}^{\pi/\varepsilon}
  %\frac{\sqrt{1+m^2\varepsilon^2}e^{-i\omega\varepsilon}-\sqrt{1-\delta^2}e^{ip\varepsilon}}
  %{\sqrt{1-\delta^2}\sqrt{1+m^2\varepsilon^2}\cos(\omega\varepsilon)
  %-\cos(p\varepsilon)-im\varepsilon\delta}
  %e^{i p x^*-i\omega t^*}\,d\omega dp,}
  &\text{for $2x/\varepsilon$ odd};%(x,t)\in (\varepsilon\mathbb{Z}^2)^*;
  \end{cases}
  \\
  %\label{eq-def-anti2}
  %\hspace{-0.5cm}
  A_2(x,t%,m,\varepsilon
  )&=
  \begin{cases}
  \frac{%-i
  \varepsilon^2}{4\pi^2}
  \int_{-\pi/\varepsilon}^{\pi/\varepsilon}
  \int_{-\pi/\varepsilon}^{\pi/\varepsilon}
 \frac{\sqrt{1-\delta^2\varepsilon^2}\sqrt{1+m^2\varepsilon^2}%\sin(\omega\varepsilon)+\sin(p\varepsilon)
 e^{-i\omega\varepsilon}-e^{ip\varepsilon}-im\varepsilon^2\delta}
  {\sqrt{1-\delta^2\varepsilon^2}\sqrt{1+m^2\varepsilon^2}
  \cos(\omega\varepsilon)
  -\cos(p\varepsilon)-im\varepsilon^2\delta}
  e^{i p x-i\omega t}
  \,d\omega dp,%+\delta_{x\varepsilon}\delta_{t\varepsilon},
  &\text{for $2x/\varepsilon$ even},%(x,t)\in \varepsilon\mathbb{Z}^2,
  \\
 \frac{\varepsilon^2}{4\pi^2}
  \int_{-\pi/\varepsilon}^{\pi/\varepsilon}
  \int_{-\pi/\varepsilon}^{\pi/\varepsilon}
  \frac{\sqrt{1+m^2\varepsilon^2}e^{-ip\varepsilon}
  -\sqrt{1-\delta^2\varepsilon^2}e^{i\omega\varepsilon}}
  {\sqrt{1-\delta^2\varepsilon^2}\sqrt{1+m^2\varepsilon^2}\cos(\omega\varepsilon)
  -\cos(p\varepsilon)-im\varepsilon^2\delta}
  e^{i p x^*-i\omega t^*}\,d\omega dp,
  &\text{for $2x/\varepsilon$ odd}.%(x,t)\in (\varepsilon\mathbb{Z}^2)^*.
  \end{cases}
  \end{aligned}
  \end{equation*}
\end{proposition}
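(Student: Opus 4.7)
The plan is to apply discrete Fourier analysis to the restrictions of $A_k$ to the primary sublattice $\varepsilon\mathbb{Z}^2$ and to the dual sublattice $\varepsilon\mathbb{Z}^2+(\varepsilon/2,\varepsilon/2)$, which I shall denote $A_k^{\mathrm{even}}$ and $A_k^{\mathrm{odd}}$ respectively. By axiom~3 each of the four functions $A_k^{\mathrm{even/odd}}$ is square-summable, so their Fourier transforms
\[
\hat{A}_k^{\mathrm{even/odd}}(p,\omega):=\sum_{(x,t)}A_k(x,t)\,e^{-ipx+i\omega t}
\]
are well-defined $L^2$-functions on the torus $(p,\omega)\in[-\pi/\varepsilon,\pi/\varepsilon]^2$, and $A_k$ is recovered by inversion. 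This reduces both existence and uniqueness to pinning down these four Fourier transforms.

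Writing $\alpha:=e^{i(p+\omega)\varepsilon/2}$, $\beta:=e^{i(-p+\omega)\varepsilon/2}$, $M:=\sqrt{1+m^2\varepsilon^2}$, $D:=\sqrt{1-\delta^2}$, the shift $(x,t)\mapsto(x\pm\varepsilon/2,t-\varepsilon/2)$ passes through the Fourier transform as multiplication by $e^{i(\pm p+\omega)\varepsilon/2}$, and the source $2\delta_{x0}\delta_{t0}$ becomes the constant $2$. Axioms~1 and~2 thus reduce to the $4\times 4$ linear system
\begin{align*}
M\hat{A}_1^{\mathrm{even}}&=\alpha(\hat{A}_1^{\mathrm{odd}}+m\varepsilon\hat{A}_2^{\mathrm{odd}}),\\
M\hat{A}_2^{\mathrm{even}}&=\beta(\hat{A}_2^{\mathrm{odd}}-m\varepsilon\hat{A}_1^{\mathrm{odd}})+2M,\\
D\hat{A}_1^{\mathrm{odd}}&=\alpha(\hat{A}_1^{\mathrm{even}}-i\delta\hat{A}_2^{\mathrm{even}}),\\
D\hat{A}_2^{\mathrm{odd}}&=\beta(\hat{A}_2^{\mathrm{even}}+i\delta\hat{A}_1^{\mathrm{even}}).
\end{align*}
Substituting the last two equations into the first two yields a $2\times 2$ system for $(\hat{A}_1^{\mathrm{even}},\hat{A}_2^{\mathrm{even}})$.

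After dividing through by $\alpha\beta=e^{i\omega\varepsilon}$ and using the identity $(1-\delta^2)(1+m^2\varepsilon^2)=M^2D^2$, the determinant of this $2\times 2$ system factors as
\[
2MDe^{-i\omega\varepsilon}\bigl(MD\cos(\omega\varepsilon)-\cos(p\varepsilon)-im\varepsilon\delta\bigr).
\]
The second factor has imaginary part $-m\varepsilon\delta<0$, hence never vanishes. Cramer's rule then produces the integrands displayed for $A_k$ on the primary sublattice. Substituting back into the axiom~2 equations and gathering the phase factors in the form $e^{i(p-\omega)\varepsilon/2}e^{ipx-i\omega t}=e^{ipx^{*}-i\omega t^{*}}$ yields the claimed formulas on the dual sublattice.

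Existence is then completed by verifying that the inverse Fourier integrals displayed in the claim satisfy axioms~1--3: axioms~1--2 follow by running the shift computation backwards, while axiom~3 is immediate from boundedness of the integrands on the compact torus combined with Parseval. Uniqueness is automatic, since two $L^2$ solutions have Fourier transforms satisfying the same linear system with nonvanishing determinant. The main obstacle is purely bookkeeping: tracking the phase factors $e^{i(\pm p\pm\omega)\varepsilon/2}$ and the sublattice parities correctly, and recognizing the determinant factorization above after division by $e^{i\omega\varepsilon}$; the remaining linear algebra and Fourier inversion are routine.
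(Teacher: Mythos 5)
Your proof is correct and follows essentially the same route as the paper's: Fourier transform over the sublattices, elimination of the dual-sublattice unknowns to reduce to a $2\times 2$ system on $\varepsilon\mathbb{Z}^2$, the observation that the determinant's essential factor $\sqrt{1-\delta^2}\sqrt{1+m^2\varepsilon^2}\cos(\omega\varepsilon)-\cos(p\varepsilon)-im\varepsilon\delta$ has imaginary part $-m\varepsilon\delta\neq 0$ and hence never vanishes, followed by Cramer's rule, back-substitution into axiom~2 for the dual sublattice, and Fourier inversion. One small inaccuracy: axiom~3 sums only over $\varepsilon\mathbb{Z}^2$, so square-summability of the dual-sublattice restrictions is not directly given by it, but it follows at once since axiom~2 expresses those values as bounded linear combinations of even-sublattice values; this does not affect the argument.
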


\begin{proof} Substituting axiom~2 into axiom~1 for each $(x,t)\in\varepsilon\mathbb{Z}^2$, we get
\begin{equation}\label{eq-substituted}
\begin{aligned}
  %\hspace{-0.5cm}
  A_1(x,t)&=
\frac{
  A_1(x+\varepsilon,t-\varepsilon)
  +im\varepsilon^2\delta A_1(x,t-\varepsilon)
  -i\delta\varepsilon A_2(x+\varepsilon,t-\varepsilon)
  +m\varepsilon A_2(x,t-\varepsilon)
}{\sqrt{1+m^2\varepsilon^2}\sqrt{1-\delta^2\varepsilon^2}},\\
  %\hspace{-0.5cm}
  A_2(x,t)&=
\frac{
  i\delta\varepsilon A_1(x-\varepsilon,t-\varepsilon)
  -m\varepsilon A_1(x,t-\varepsilon)
  +A_2(x-\varepsilon,t-\varepsilon)
  +im\varepsilon^2\delta A_2(x,t-\varepsilon)
}{\sqrt{1+m^2\varepsilon^2}\sqrt{1-\delta^2\varepsilon^2}}+2\delta_{x0}\delta_{t0}.
\end{aligned}
\end{equation}
It suffices to solve system~\eqref{eq-substituted} on $\varepsilon\mathbb{Z}^2$ (that is, for $2x/\varepsilon$ even) under the restriction given by
axiom~3; then the values for $2x/\varepsilon$ odd are
uniquely determined (and computed) by axiom~2.

We use Fourier method. To a function $A_k(x,t)$ satisfying
axiom~3 assign the Fourier series
$$
\widehat{A}_k(p,\omega):\overset{L^2}{=}
\sum_{(x,t)\in\varepsilon\mathbb{Z}^2}
A_k(x,t)e^{-ipx+i\omega t}\in L^2([-\pi/\varepsilon,\pi/\varepsilon]^2).
$$
Here the summands are understood as functions on $[-\pi/\varepsilon,\pi/\varepsilon]^2$
and mean-square convergence of the series is assumed.
By Plancherel theorem, this assignment is a bijection between the space
of functions satisfying axiom~3 and the space $L^2([-\pi/\varepsilon,\pi/\varepsilon]^2)$
of square-integrable functions $[-\pi/\varepsilon,\pi/\varepsilon]^2\to\mathbb{C}$
up to change on a set of measure zero.

Under this bijection, the shifts $x\mapsto x\pm\varepsilon$
and  $t\mapsto t-\varepsilon$ are taken to multiplication
by $e^{\pm ip\varepsilon}$ and $e^{i\omega\varepsilon}$ respectively,
and $\delta_{x0}\delta_{t0}$ is taken to $1$.
Thus~\eqref{eq-substituted} is transformed to
the following equality almost everywhere
\begin{equation}\label{eq-transformed-equation}
\begin{pmatrix}
            \widehat A_1(p,\omega) \\
            \widehat A_2(p,\omega)
\end{pmatrix}
=
\frac{e^{i\omega\varepsilon}}
{\sqrt{1+m^2\varepsilon^2}\sqrt{1-\delta^2\varepsilon^2}}
%%%
%\begin{pmatrix}
%  1 & -\delta \\
%  -\delta & 1
%\end{pmatrix}
%\begin{pmatrix}
%  e^{ip\varepsilon} & -im\varepsilon \\
%  -im\varepsilon & e^{-ip\varepsilon}
%\end{pmatrix}
%%%
\begin{pmatrix}
  e^{ip\varepsilon}+im\varepsilon^2\delta & -i\delta\varepsilon e^{ip\varepsilon}+m\varepsilon \\
  i\delta\varepsilon e^{-ip\varepsilon}-m\varepsilon & e^{-ip\varepsilon}+im\varepsilon^2\delta
\end{pmatrix}
\begin{pmatrix}
  \widehat A_1(p,\omega) \\
  \widehat A_2(p,\omega)
\end{pmatrix}
+
\begin{pmatrix}
  0 \\
  2
\end{pmatrix}.
\end{equation}
The resulting $2\times 2$ linear system has the unique solution
(this is checked in \cite[\S1]{SU-3})
\begin{align*}
\widehat A_1(p,\omega)
&=\frac{m\varepsilon - i\delta\varepsilon e^{ip\varepsilon}}
{\sqrt{1-\delta^2\varepsilon^2}\sqrt{1+m^2\varepsilon^2}\cos(\omega\varepsilon)
- \cos(p\varepsilon) - im\varepsilon^2\delta},\\
\widehat A_2(p,\omega)
&=\frac{
\sqrt{1-\delta^2\varepsilon^2}\sqrt{1+m^2\varepsilon^2}e^{-i\omega\varepsilon}
  - e^{ip\varepsilon}-im\varepsilon^2\delta}
{\sqrt{1-\delta^2\varepsilon^2}\sqrt{1+m^2\varepsilon^2}\cos(\omega\varepsilon)
- \cos(p\varepsilon) - im\varepsilon^2\delta}.
\end{align*}
It belongs to $L^2([-\pi/\varepsilon,\pi/\varepsilon]^2)$ because $m,\varepsilon,\delta>0$
and the denominator vanishes nowhere.
Now the formula for the Fourier coefficients gives the
desired expressions in the proposition.
\end{proof}

\begin{remark}\label{rem-delta-zero} This argument shows that for $\delta=0$ axioms~1--3 are inconsistent
even if $m$ has imaginary part because
$\widehat A_k(p,\omega)$ blows up at $(\pi/2\varepsilon,\pi/2\varepsilon)$.
Thus Step~2 in \S\ref{ssec-outline} is necessary.
\end{remark}

Passing to the limit $\delta\searrow 0$
in Proposition~\ref{l-double-fourier-finite}, using
$\frac{\varepsilon}{2\pi}\int_{-\pi/\varepsilon}^{\pi/\varepsilon}e^{ipx}\,dp=\delta_{x0}$ for $x\in\varepsilon\mathbb{Z}$,
\blueit{and absorbing the factor $m\varepsilon^2$ into $\delta$ in the denominator,}
we get the following result.

\begin{proposition}[Full space-time Fourier transform] \label{cor-double-fourier-anti}
%Denote $\delta_{x0}:=1$, if $x=0$, and $\delta_{x0}:=0$, if $x\ne 0$.
For each %$m>0$ and
$(x,t)\in \varepsilon\mathbb{Z}^2$ we have
\begin{equation*}
  %\label{eq-def-anti2}
  \begin{aligned}
  \tilde A_1(x,t%,m,\varepsilon
  )&=
  \lim_{\delta\searrow 0}
  \frac{m\varepsilon^3}{4\pi^2}
  \int_{-\pi/\varepsilon}^{\pi/\varepsilon}
  \int_{-\pi/\varepsilon}^{\pi/\varepsilon}
  \frac{ e^{i p x-i\omega t}\,d\omega dp} {\sqrt{1+m^2\varepsilon^2}\cos(\omega\varepsilon)
  -\cos(p\varepsilon)-i\delta},\\
  %\label{eq-def-anti2}
  \hspace{-0.5cm}\tilde A_2(x,t%,m,\varepsilon
  )&=
  \lim_{\delta\searrow 0}
  \frac{-i\varepsilon^2}{4\pi^2}
  \int_{-\pi/\varepsilon}^{\pi/\varepsilon}
  \int_{-\pi/\varepsilon}^{\pi/\varepsilon}
  \frac{\sqrt{1+m^2\varepsilon^2}\sin(\omega\varepsilon)+\sin(p\varepsilon)} {\sqrt{1+m^2\varepsilon^2}\cos(\omega\varepsilon)
  -\cos(p\varepsilon)-i\delta}
  e^{i p x-i\omega t}
  \,d\omega dp+\delta_{x0}\delta_{t0}.
  \end{aligned}
  \end{equation*}
\end{proposition}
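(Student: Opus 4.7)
The plan is to deduce Proposition~\ref{cor-double-fourier-anti} directly from Proposition~\ref{l-double-fourier-finite} by purely algebraic simplification of the integrands, followed by passing to the outer limit $\delta\searrow 0$. Since $(x,t)\in\varepsilon\mathbb{Z}^2$ has $2x/\varepsilon$ even, only the ``even'' branches of Proposition~\ref{l-double-fourier-finite} are needed, and I will treat $\widetilde{A}_1$ and $\widetilde{A}_2$ separately.

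For $\widetilde{A}_1$, the first step is to make the change of variable $\delta\mapsto\delta/(m\varepsilon)$ in the formula of Proposition~\ref{l-double-fourier-finite}; since $m\varepsilon>0$, this preserves the one-sided limit $\delta\searrow 0$, and it transforms the regulator $-im\varepsilon\delta$ in the denominator into $-i\delta$, matching the target. The correction $-i\delta e^{ip\varepsilon}$ in the numerator and the deviation $\sqrt{1-\delta^2}-1$ in the denominator are then $O(\delta)$ and $O(\delta^2)$ respectively and should contribute nothing in the outer limit, leaving precisely the integrand claimed in the proposition.

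For $\widetilde{A}_2$, the key step is the decomposition $N_2=D+(N_2-D)$, where $N_2$ and $D$ denote the numerator and denominator from Proposition~\ref{l-double-fourier-finite}. The identity $e^{\pm i\theta}=\cos\theta\pm i\sin\theta$ collapses the difference to
\begin{equation*}
N_2-D = -i\bigl[\sqrt{1-\delta^2}\sqrt{1+m^2\varepsilon^2}\sin(\omega\varepsilon)+\sin(p\varepsilon)\bigr],
\end{equation*}
so the double integral splits into an ``$N_2/D=1$'' piece and an ``$(N_2-D)/D$'' piece. The first piece, by the orthogonality relation $\frac{\varepsilon}{2\pi}\int_{-\pi/\varepsilon}^{\pi/\varepsilon}e^{ipx}\,dp=\delta_{x0}$ for $x\in\varepsilon\mathbb{Z}$ applied in both $p$ and $\omega$, contributes exactly the $\delta_{x0}\delta_{t0}$ summand. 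The second piece, after the same rescaling as for $\widetilde{A}_1$ and in the outer limit $\delta\searrow 0$, produces the integral written in the statement.

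The main obstacle I anticipate is justifying the passage to the limit, because at $\delta=0$ the denominator $\sqrt{1+m^2\varepsilon^2}\cos(\omega\varepsilon)-\cos(p\varepsilon)$ vanishes on a curve in $[-\pi/\varepsilon,\pi/\varepsilon]^2$, so the unregularized integrand is not absolutely integrable. My strategy is to avoid any limit-interchange argument: I verify the identity between $A_k(x,t,m,\varepsilon,\delta)$ and the rescaled target integral as an identity of \emph{$\delta$-dependent} expressions, up to $O(\delta)$ corrections that are uniformly controlled by the common imaginary regulator, and only then take the outer limit. By Definition~\ref{def-anti-alg} this limit is $\widetilde{A}_k(x,t,m,\varepsilon)$ on the left-hand side, while on the right-hand side it is by construction the limit of integrals in the statement.
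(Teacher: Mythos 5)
Your proposal is correct and follows essentially the same route as the paper, which obtains this proposition by passing to the limit $\delta\searrow 0$ in the even branches of Proposition~\ref{l-double-fourier-finite} and extracting the $\delta_{x0}\delta_{t0}$ term via the orthogonality relation $\frac{\varepsilon}{2\pi}\int_{-\pi/\varepsilon}^{\pi/\varepsilon}e^{ipx}\,dp=\delta_{x0}$, exactly your $N_2=D+(N_2-D)$ decomposition. Your explicit attention to why the $O(\delta)$ numerator corrections and the $\sqrt{1-\delta^2}$ deviation are harmless near the singular curve is more careful than the paper's one-line derivation.
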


\begin{proof}[Proof of Proposition~\ref{th-equivalence}]
This follows from Proposition~\ref{cor-double-fourier-anti} and~\cite[Proposition~17]{SU-22}, which states that the right-hand sides of Proposition~\ref{cor-double-fourier-anti} and Proposition~\ref{th-equivalence} are equal
(and in particular, the limits in Proposition~\ref{cor-double-fourier-anti} exist).
\end{proof}

Performing changes of variables
$(p,\omega)\mapsto (p\varepsilon,\omega\varepsilon)$,
$(\pi/\varepsilon-p,\pi/\varepsilon-\omega)$, $(\pm p,-\omega)$ in
the integrals from Proposition~\ref{cor-double-fourier-anti}, one gets
the following three immediate corollaries.

\begin{corollary}[Scaling symmetry] \label{cor-scaling}
For each %$m,\varepsilon>0$
$k\in\{1,2\}$
and
$(x,t)\in\varepsilon\mathbb{Z}^2$ %where $(x,t)\ne (0,0)$
we have $\widetilde{A}_k(x,t,m,\varepsilon)
=\widetilde{A}_k(x/\varepsilon,t/\varepsilon,m\varepsilon,1)$.
\end{corollary}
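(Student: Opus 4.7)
The plan is to apply Proposition~\ref{cor-double-fourier-anti} to both sides of the claimed identity and then perform a rescaling change of variables in the integral representing the left-hand side. Concretely, I would fix $k\in\{1,2\}$ and $(x,t)\in\varepsilon\mathbb{Z}^2$, write out
$$\widetilde A_k(x,t,m,\varepsilon)$$
using the appropriate Fourier integral from Proposition~\ref{cor-double-fourier-anti}, and then substitute $p':=p\varepsilon$, $\omega':=\omega\varepsilon$ (so that $dp\,d\omega=\varepsilon^{-2}\,dp'\,d\omega'$, and the rescaled domain of integration becomes $[-\pi,\pi]^2$).

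For $k=1$ this immediately gives
$$\widetilde A_1(x,t,m,\varepsilon)=\lim_{\delta\searrow 0}\frac{m\varepsilon}{4\pi^2}\int_{-\pi}^{\pi}\!\!\int_{-\pi}^{\pi}\frac{e^{ip'(x/\varepsilon)-i\omega'(t/\varepsilon)}\,d\omega'\,dp'}{\sqrt{1+(m\varepsilon)^2}\cos(\omega')-\cos(p')-i\delta},$$
which is exactly the right-hand side of Proposition~\ref{cor-double-fourier-anti} applied to the arguments $(x/\varepsilon,t/\varepsilon,m\varepsilon,1)$, so it equals $\widetilde A_1(x/\varepsilon,t/\varepsilon,m\varepsilon,1)$. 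The same substitution works for $k=2$: the factor $\sqrt{1+m^2\varepsilon^2}\sin(\omega\varepsilon)+\sin(p\varepsilon)$ in the numerator becomes $\sqrt{1+(m\varepsilon)^2}\sin(\omega')+\sin(p')$, the denominator transforms as before, the Jacobian factor $\varepsilon^{-2}$ cancels $\varepsilon^2$ in the prefactor, and the Kronecker term $\delta_{x0}\delta_{t0}=\delta_{(x/\varepsilon)\,0}\delta_{(t/\varepsilon)\,0}$ is scale-invariant.

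The only things to check are bookkeeping: (i) that the substitution is legitimate for each fixed $\delta>0$ (this is elementary, since the integrand is continuous on the compact domain), and (ii) that the limits in $\delta$ on both sides exist and are equal, which follows because the two integrals agree identically in $\delta$ before passing to the limit. There is no genuine obstacle here; the content of the corollary is just the dimensional statement that $(x,t,m,\varepsilon)$ enters the propagator only through the dimensionless combinations $x/\varepsilon$, $t/\varepsilon$, and $m\varepsilon$, and the Fourier representation makes this manifest.
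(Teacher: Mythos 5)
Your proof is correct and is exactly the paper's argument: the authors derive Corollary~\ref{cor-scaling} precisely by performing the change of variables $(p,\omega)\mapsto(p\varepsilon,\omega\varepsilon)$ in the integrals of Proposition~\ref{cor-double-fourier-anti}. Your bookkeeping of the Jacobian, the prefactors, and the Kronecker term for $k=2$ all checks out.
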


%\begin{proof}
%This is obtained from Proposition~\ref{cor-double-fourier-anti}
%by the change of variables
%$(p,\omega)\mapsto (p\varepsilon,\omega\varepsilon)$.
%\end{proof}

\begin{corollary}[Alternation of real and imaginary values]
\label{cor-alternation}
Let %$m,\varepsilon>0$
$k\in\{1,2\}$
and
$(x,t)\in\varepsilon\mathbb{Z}^2$.
If $(x+t)/\varepsilon+k$ is even (respectively, odd), then
$\widetilde{A}_k(x,t)$ is real (respectively, purely imaginary).
\end{corollary}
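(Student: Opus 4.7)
The plan is to apply the change of variables $(p,\omega) \mapsto (\pi/\varepsilon - p,\, \pi/\varepsilon - \omega)$ in the Fourier integrals from Proposition~\ref{cor-double-fourier-anti}, and compare the result with the complex conjugate of the original integral. This is precisely the substitution flagged in the sentence preceding the corollary.

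First, I would write $\widetilde{A}_k(x,t) = C_k \lim_{\delta \searrow 0} I_k(\delta) + R_k$, where $C_1 = m\varepsilon^3/4\pi^2$ is real, $C_2 = -i\varepsilon^2/4\pi^2$ is purely imaginary, $R_1 = 0$, $R_2 = \delta_{x0}\delta_{t0}$, and $I_k(\delta)$ denotes the corresponding double integral over $[-\pi/\varepsilon,\pi/\varepsilon]^2$. Set $D(p,\omega) := \sqrt{1+m^2\varepsilon^2}\cos(\omega\varepsilon) - \cos(p\varepsilon)$ (real) and, for $k=2$, $N(p,\omega) := \sqrt{1+m^2\varepsilon^2}\sin(\omega\varepsilon) + \sin(p\varepsilon)$ (also real). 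The key observations are that $\cos(\pi - \theta) = -\cos\theta$ flips the sign of $D$, while $\sin(\pi - \theta) = \sin\theta$ leaves $N$ invariant.

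Second, carrying out the substitution is legitimate by $2\pi/\varepsilon$-periodicity of the integrand on the torus (and the Jacobian is $1$). Under it, the exponential transforms as $e^{ipx - i\omega t} \mapsto e^{i\pi(x-t)/\varepsilon}\, e^{-ipx + i\omega t}$; since $x/\varepsilon, t/\varepsilon \in \mathbb{Z}$ and $2t/\varepsilon$ is even, $(x-t)/\varepsilon \equiv (x+t)/\varepsilon \pmod 2$, hence $e^{i\pi(x-t)/\varepsilon} = (-1)^{(x+t)/\varepsilon}$. The denominator transforms as $1/(D - i\delta) \mapsto -1/(D + i\delta)$, and for $k=2$ the factor $N$ is unchanged. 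Recognizing that the resulting integrand is precisely $-(-1)^{(x+t)/\varepsilon}$ times the complex conjugate of the original integrand (using that $D$ and $N$ are real), I obtain, for both $k=1$ and $k=2$, the identity
\[
I_k(\delta) \;=\; -(-1)^{(x+t)/\varepsilon}\, \overline{I_k(\delta)}.
\]

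Finally, passing to $\delta \searrow 0$ (both sides have a limit by Proposition~\ref{cor-double-fourier-anti}), this forces $\lim_{\delta\searrow 0} I_k(\delta)$ to be purely imaginary when $(x+t)/\varepsilon$ is even and real when it is odd. Multiplying by the prefactor $C_k$ flips the alternative for $k=2$ (since $C_2$ is imaginary) but preserves it for $k=1$; in each of the four cases this matches the parity of $(x+t)/\varepsilon + k$ stated in the corollary. The source term $R_2 = \delta_{x0}\delta_{t0}$ is nonzero only at $(x,t)=(0,0)$, where $(x+t)/\varepsilon + 2$ is even and $\widetilde{A}_2$ is asserted to be real, so the real source is consistent. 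The only real work is bookkeeping: once the sign flip of $D$, the phase $(-1)^{(x+t)/\varepsilon}$, and the conversion $i\delta \leftrightarrow -i\delta$ under conjugation are tracked carefully, all four cases drop out simultaneously.
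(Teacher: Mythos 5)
Your proof is correct and is exactly the paper's argument: the paper derives this corollary by the same change of variables $(p,\omega)\mapsto(\pi/\varepsilon-p,\pi/\varepsilon-\omega)$ in Proposition~\ref{cor-double-fourier-anti}, which it leaves as "immediate" and you have merely written out in full. All the bookkeeping (periodicity justifying the shift of the integration domain, the sign flip of the denominator versus its conjugate, the phase $(-1)^{(x+t)/\varepsilon}$, the real prefactor for $k=1$ versus the imaginary one for $k=2$, and the source term at the origin) checks out.
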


%\begin{proof}
%This is obtained from Proposition~\ref{cor-double-fourier-anti} by the change of variables $(p,\omega)\mapsto (\pm p,-\omega)$. \mscomm{Add more details!}
%\end{proof}

\begin{corollary}[Skew symmetry] \label{cor-symmetry}
For each %$m,\varepsilon>0$ and
$(x,t)\in\varepsilon\mathbb{Z}^2$, where $(x,t)\ne (0,0)$,
we have $\widetilde{A}_1(x,t%,m,\varepsilon
)=\widetilde{A}_1(x,-t%,m,\varepsilon
)=\widetilde{A}_1(-x,-t%,m,\varepsilon
)$ and
$\widetilde{A}_2(x,t%,m,\varepsilon
)=-\widetilde{A}_2(-x,-t%,m,\varepsilon
)$.
\end{corollary}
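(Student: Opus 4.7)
The plan is to derive Corollary~\ref{cor-symmetry} directly from the Fourier integral representations provided by Proposition~\ref{cor-double-fourier-anti}, which already extracts all the combinatorial content into a uniform analytic form. Since the integrands involve only $\cos(p\varepsilon)$, $\cos(\omega\varepsilon)$, $\sin(p\varepsilon)$, $\sin(\omega\varepsilon)$, and complex exponentials in $px - \omega t$, the stated symmetries should reduce to elementary parity changes of variables.

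First I would handle $\widetilde{A}_1$. The denominator $\sqrt{1+m^2\varepsilon^2}\cos(\omega\varepsilon) - \cos(p\varepsilon) - i\delta$ and the integration domain $[-\pi/\varepsilon,\pi/\varepsilon]^2$ are both invariant under $p\mapsto -p$ and $\omega\mapsto -\omega$ separately, since cosine is even. To obtain $\widetilde{A}_1(x,t) = \widetilde{A}_1(x,-t)$ I would substitute $\omega\mapsto -\omega$ in the formula for $\widetilde{A}_1(x,-t)$: the exponential becomes $e^{ipx - i\omega t}$ and everything else is unchanged. To get $\widetilde{A}_1(-x,-t) = \widetilde{A}_1(x,t)$ I would substitute $(p,\omega)\mapsto (-p,-\omega)$ in the formula for $\widetilde{A}_1(-x,-t)$. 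Both substitutions preserve the limit in $\delta$, so the claim follows.

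For $\widetilde{A}_2(x,t) = -\widetilde{A}_2(-x,-t)$ the integrand contains an extra factor $\sqrt{1+m^2\varepsilon^2}\sin(\omega\varepsilon) + \sin(p\varepsilon)$, which is \emph{odd} under $(p,\omega)\mapsto(-p,-\omega)$. Substituting $(p,\omega)\mapsto(-p,-\omega)$ in the formula for $\widetilde{A}_2(-x,-t)$ flips the sign of this factor while restoring the exponent to $e^{ipx - i\omega t}$. This yields $\widetilde{A}_2(-x,-t) = -\widetilde{A}_2(x,t) + 2\delta_{x0}\delta_{t0}$, after accounting for the source term $\delta_{x0}\delta_{t0}$ which appears with the same sign on both sides. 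Imposing $(x,t)\ne(0,0)$ kills the source contribution, giving the desired antisymmetry.

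The only mild subtlety I anticipate is the bookkeeping of the source term $\delta_{x0}\delta_{t0}$ in the $\widetilde{A}_2$ formula, which is responsible for the hypothesis $(x,t)\ne(0,0)$ and explains the asymmetry between the $\widetilde{A}_1$ and $\widetilde{A}_2$ statements. There is no hard analytic step: the limit $\delta\searrow 0$ commutes with the change of variables because the substitutions are smooth bijections of the fixed integration domain, and the existence of the limit is already guaranteed by Proposition~\ref{th-equivalence}.
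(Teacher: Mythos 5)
Your proof is correct and follows exactly the paper's route: the paper derives Corollary~\ref{cor-symmetry} as an immediate consequence of the changes of variables $(p,\omega)\mapsto(\pm p,-\omega)$ in the double Fourier integrals of Proposition~\ref{cor-double-fourier-anti}, which is precisely your argument, including the correct treatment of the source term $\delta_{x0}\delta_{t0}$ that forces the hypothesis $(x,t)\ne(0,0)$.
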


\begin{proof}[Proof of Theorem~\ref{th-well-defined}]
The existence and uniqueness of $A_k(x,t,m,\varepsilon,\delta)$
is %a part of
Proposition~\ref{l-double-fourier-finite}.
The existence of limit~\eqref{eq-def-infinite-lattice-propagator} and the required equalities follow from Proposition~\ref{th-equivalence},
Corollary~\ref{cor-symmetry}, and \cite[Proposition~12]{SU-22},
which states that the integrals from
Proposition~\ref{th-equivalence} equal $a_1(x,t+\varepsilon,m,\varepsilon)$ and
$a_2(x+\varepsilon,t+\varepsilon,m, \varepsilon)$
for $t\ge 0$ and appropriate parity of $(x+t)/\varepsilon$.
%\mscomm{??? Don't we need Proposition~\ref{th-equivalence} to guarantee
%the existence of the limit ???}
The rest is Corollary~\ref{cor-alternation}.
%The required equalities follow from Proposition~\ref{th-equivalence}, Corollary~\ref{cor-symmetry}, and \cite[Proposition~12]{SU-22}, which states that the right-hand sides of the equalities in Proposition~\ref{th-equivalence} equal to $a_1(x,t+\varepsilon,m,\varepsilon)$ and
%$a_2(x+\varepsilon,t+\varepsilon,m, \varepsilon)$ respectively.
%The rest follows from Corollary~\ref{cor-alternation}.
\end{proof}

%The next corollary is slightly more complicated.

\begin{corollary}[Symmetry] \label{cor-skew-symmetry}
For each $(x,t)\in\varepsilon\mathbb{Z}^2$
we have $(t-x)\widetilde{A}_2(x,t%,m,\varepsilon
)=(t+x)\widetilde{A}_2(-x,t%,m,\varepsilon
)$.
\end{corollary}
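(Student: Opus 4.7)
The plan is to reduce the identity to a one-variable integration by parts using the Fourier integral from Proposition~\ref{th-equivalence}. Set $g(p) := \sin(p\varepsilon)/\sqrt{m^2\varepsilon^2+\sin^2(p\varepsilon)}$, so that the bracketed factor in the integrand for $\widetilde{A}_2$ equals $1+g(p)$. The key observation, obtained by a direct computation from~\eqref{eq-omega}, is that
\[
\frac{d\omega_p}{dp} \;=\; g(p),
\]
and in particular $g$ is odd in $p$ while $\omega_p$ is even in $p$.

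First I would check that the prefactor sign in Proposition~\ref{th-equivalence} is the same for $\widetilde{A}_2(x,t)$ and $\widetilde{A}_2(-x,t)$: it depends only on the sign of $t$ and on the parity of $(x+t)/\varepsilon$, and for $(x,t)\in\varepsilon\mathbb{Z}^2$ the integers $(x+t)/\varepsilon$ and $(-x+t)/\varepsilon$ differ by the even number $2x/\varepsilon$. After substituting $p\mapsto -p$ in the Fourier integral for $\widetilde{A}_2(-x,t)$ and splitting $1+g(p)$ into its even and odd parts, I would obtain
\[
\widetilde{A}_2(\pm x,t) \;=\; \pm\frac{\varepsilon}{2\pi}\bigl(I_1 \pm I_2\bigr),\qquad
I_1 := \!\!\int_{-\pi/\varepsilon}^{\pi/\varepsilon}\!\! e^{ipx-i\omega_p t}\,dp,\quad
I_2 := \!\!\int_{-\pi/\varepsilon}^{\pi/\varepsilon}\!\! g(p)\,e^{ipx-i\omega_p t}\,dp,
\]
where the outer $\pm$ is the common prefactor sign and the inner $\pm$ reflects the parity. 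The claim $(t-x)\widetilde{A}_2(x,t)=(t+x)\widetilde{A}_2(-x,t)$ then rearranges algebraically to the scalar equality $t\,I_2 = x\,I_1$.

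To establish $x\,I_1 = t\,I_2$, I would apply the fundamental theorem of calculus to
\[
\frac{d}{dp}\,e^{ipx-i\omega_p t} \;=\; i\bigl(x - t\,g(p)\bigr)\,e^{ipx-i\omega_p t}.
\]
Since $x/\varepsilon\in\mathbb{Z}$, the factor $e^{ipx}$ is $2\pi/\varepsilon$-periodic in $p$, and definition~\eqref{eq-omega} gives $\omega_{\pi/\varepsilon} = \omega_{-\pi/\varepsilon}$; hence the boundary term vanishes and integrating over $[-\pi/\varepsilon,\pi/\varepsilon]$ yields $0 = i(x\,I_1 - t\,I_2)$, as required.

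The only real subtlety will be the sign/parity bookkeeping in the first step (matching the $\pm$ sign in Proposition~\ref{th-equivalence} between $\widetilde{A}_2(x,t)$ and $\widetilde{A}_2(-x,t)$ in all four cases of sign of $t$ and parity of $(x+t)/\varepsilon$); once that is settled, the key identity $d\omega_p/dp = g(p)$ makes the rest a two-line integration by parts.
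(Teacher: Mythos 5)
Your proof is correct, and it takes a genuinely different route from the paper's. The paper starts from the \emph{double} Fourier integral of Proposition~\ref{cor-double-fourier-anti} (with the $\delta\searrow 0$ regularization still present), changes the sign of $p$, adds the two expressions, and then integrates by parts twice --- once in $p$ and once in $\omega$ --- to turn the sum $\widetilde{A}_2(x,t)+\widetilde{A}_2(-x,t)$ into $\tfrac{t}{x}\bigl(\widetilde{A}_2(x,t)-\widetilde{A}_2(-x,t)\bigr)$ (assuming $x\ne 0$). You instead start from the single Fourier integral of Proposition~\ref{th-equivalence}, where the $\omega$-integration has already been carried out, and reduce everything to the scalar identity $x\,I_1=t\,I_2$, which follows from the fact that $i\bigl(x-t\,g(p)\bigr)e^{ipx-i\omega_pt}$ is a total derivative with periodic boundary values; the key input $d\omega_p/dp=\sin(p\varepsilon)/\sqrt{m^2\varepsilon^2+\sin^2(p\varepsilon)}$ is exactly the derivative formula the paper records in Lemma~\ref{l-g-difference}. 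Your sign bookkeeping is right (the $\pm$ in Proposition~\ref{th-equivalence} depends on $x$ only through the parity of $(x+t)/\varepsilon$, which is unchanged under $x\mapsto -x$), and the rearrangement to $x\,I_1=t\,I_2$ is an equivalence, so your argument even covers $x=0$ without a separate case. What your route buys is elementarity: a proper Riemann integral of a continuous function over a compact interval, one application of the fundamental theorem of calculus, and no $\delta$-limit or double-integral manipulations; what the paper's route buys is that it works directly from the more primitive regularized representation without needing the closed form of $\omega_p'$, and it is stated so as to apply verbatim in the two-variable setting. Both are complete proofs.
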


\begin{proof}
Assume that $x\ne 0$.
Changing the sign of the
%integration
variable $p$ in Proposition~\ref{cor-double-fourier-anti} we get
$$
\widetilde{A}_2(-x,t)
=
\lim_{\delta\searrow 0}
  \frac{-i\varepsilon^2}{4\pi^2}
  \int_{-\pi/\varepsilon}^{\pi/\varepsilon}
  \int_{-\pi/\varepsilon}^{\pi/\varepsilon}
  \frac{\sqrt{1+m^2\varepsilon^2}\sin(\omega\varepsilon)-\sin(p\varepsilon)} {\sqrt{1+m^2\varepsilon^2}\cos(\omega\varepsilon)
  -\cos(p\varepsilon)-i\delta}
  e^{i p x-i\omega t}
  \,d\omega dp.
$$
Adding the expression for $\widetilde{A}_2(x,t)$ from Proposition~\ref{cor-double-fourier-anti}
and integrating by parts twice we get
%the required identity in an equivalent form:
\begin{multline*}
\widetilde{A}_2(x,t)+\widetilde{A}_2(-x,t)
=
\lim_{\delta\searrow 0}
  \frac{-i\varepsilon^2}{2\pi^2}
  \int_{-\pi/\varepsilon}^{\pi/\varepsilon}
  \int_{-\pi/\varepsilon}^{\pi/\varepsilon}
  \frac{\sqrt{1+m^2\varepsilon^2}\sin(\omega\varepsilon)} {\sqrt{1+m^2\varepsilon^2}\cos(\omega\varepsilon)
  -\cos(p\varepsilon)-i\delta}e^{i p x-i\omega t}
  \,d\omega dp
=\\=
\lim_{\delta\searrow 0}
  \frac{i\varepsilon^2}{2\pi^2}
  \int_{-\pi/\varepsilon}^{\pi/\varepsilon}
  \int_{-\pi/\varepsilon}^{\pi/\varepsilon}
  \frac{\sqrt{1+m^2\varepsilon^2}\sin(\omega\varepsilon)\sin(p\varepsilon)\varepsilon }
  {(\sqrt{1+m^2\varepsilon^2}\cos(\omega\varepsilon)
  -\cos(p\varepsilon)-i\delta)^2}\cdot\frac{e^{i p x-i\omega t}}{ix}
  \,d\omega dp
=\\=
\lim_{\delta\searrow 0}
  \frac{-i\varepsilon^2}{2\pi^2}
  \int_{-\pi/\varepsilon}^{\pi/\varepsilon}
  \int_{-\pi/\varepsilon}^{\pi/\varepsilon}
  \frac{\sin(p\varepsilon) (it) e^{i p x-i\omega t}
  \,d\omega dp}
  {(\sqrt{1+m^2\varepsilon^2}\cos(\omega\varepsilon)
  -\cos(p\varepsilon)-i\delta)(ix)}
=\frac{t}{x}
\left(\widetilde{A}_2(x,t)-\widetilde{A}_2(-x,t)\right).
\end{multline*}
Here in the second equality, we integrate the exponential
and differentiate the remaining factor with respect to $p$.
In the third equality, we differentiate the exponential
and integrate the remaining factor with respect to $\omega$.
The resulting  identity is equivalent to the required one.
\end{proof}

For the proof of Theorem~\ref{th-continuum-limit},
we halve the integration interval $[-\pi/\varepsilon,\pi/\varepsilon]$
in Proposition~\ref{th-equivalence}.

\begin{corollary} \label{cor-halve-interval}
For each %$m,\varepsilon>0$ and
$(x,t)\in\varepsilon\mathbb{Z}^2$, where $t\ge 0$, we have
\begin{align*}
%  \label{eq-def-anti1}
  \widetilde{A}_1(x,t%,m,\varepsilon
  )&=
  \begin{cases}
  \mathrm{Re}\,\left(\frac{im\varepsilon^2}{\pi}
  \int_{-\pi/2\varepsilon}^{\pi/2\varepsilon}
  \frac{e^{i p x-i\omega_pt}\,dp}
  {\sqrt{m^2\varepsilon^2+\sin^2(p\varepsilon)}}\right),
  &\text{for }(x+t)/\varepsilon\text{ odd};\\
  \mathrlap{i\mathrm{Im}\,\left(\frac{im\varepsilon^2}{\pi}
  \int_{-\pi/2\varepsilon}^{\pi/2\varepsilon}
  \frac{e^{i p x-i\omega_pt}\,dp}
  {\sqrt{m^2\varepsilon^2+\sin^2(p\varepsilon)}}\right),}
  \hphantom{i\mathrm{Im}\,\left(\frac{\varepsilon}{\pi}\int_{-\pi/2\varepsilon}^{\pi/2\varepsilon}
  \left(1+
  \frac{\sin (p\varepsilon)} {\sqrt{m^2\varepsilon^2+\sin^2(p\varepsilon)}}\right) e^{ipx-i\omega_pt}\,dp
\right),}
  &\text{for }(x+t)/\varepsilon\text{ even};
  \end{cases}\\
  \widetilde{A}_2(x,t%,m,\varepsilon
  )&=
  \begin{cases}
  \mathrm{Re}\,\left(\frac{\varepsilon}{\pi}\int_{-\pi/2\varepsilon}^{\pi/2\varepsilon}
  \left(1+
  \frac{\sin (p\varepsilon)} {\sqrt{m^2\varepsilon^2+\sin^2(p\varepsilon)}}\right) e^{ipx-i\omega_pt}\,dp
\right),
  &\text{for }(x+t)/\varepsilon\text{ even};\\
  i\mathrm{Im}\,\left(\frac{\varepsilon}{\pi}\int_{-\pi/2\varepsilon}^{\pi/2\varepsilon}
  \left(1+
  \frac{\sin (p\varepsilon)} {\sqrt{m^2\varepsilon^2+\sin^2(p\varepsilon)}}\right) e^{ipx-i\omega_pt}\,dp
\right),
  &\text{for }(x+t)/\varepsilon\text{ odd}.
  \end{cases}
\end{align*}
\end{corollary}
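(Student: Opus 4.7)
The plan is to derive Corollary~\ref{cor-halve-interval} directly from Proposition~\ref{th-equivalence} by exploiting the symmetry $p \mapsto p \pm \pi/\varepsilon$ on the integration domain. Since $t \ge 0$, the $\pm$ sign in Proposition~\ref{th-equivalence} is $+$, so I start from the plain Fourier integral over $[-\pi/\varepsilon, \pi/\varepsilon]$ and split it as
\[
\int_{-\pi/\varepsilon}^{\pi/\varepsilon} = \int_{-\pi/2\varepsilon}^{\pi/2\varepsilon} + \int_{-\pi/\varepsilon}^{-\pi/2\varepsilon} + \int_{\pi/2\varepsilon}^{\pi/\varepsilon}.
\]
In the last two integrals I substitute $p \mapsto p + \pi/\varepsilon$ and $p \mapsto p - \pi/\varepsilon$ respectively, so that after substitution both are integrals over the central interval $[-\pi/2\varepsilon, \pi/2\varepsilon]$ (or its halves that together tile it).

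The key identities I would then invoke are: $\cos((p \pm \pi/\varepsilon)\varepsilon) = -\cos(p\varepsilon)$, hence by~\eqref{eq-omega} and $\arccos(-y)=\pi-\arccos(y)$ we have $\omega_{p \pm \pi/\varepsilon} = \pi/\varepsilon - \omega_p$; also $\sin^2((p\pm\pi/\varepsilon)\varepsilon) = \sin^2(p\varepsilon)$ and $\sin((p\pm\pi/\varepsilon)\varepsilon) = -\sin(p\varepsilon)$. Using $x/\varepsilon, t/\varepsilon \in \mathbb{Z}$, the exponential factor picks up a sign
\[
e^{\pm i\pi x/\varepsilon}\,e^{-i\pi t/\varepsilon} = (-1)^{(x+t)/\varepsilon}.
\]
Thus the outer-region contribution to the $\widetilde{A}_1$ integrand becomes $(-1)^{(x+t)/\varepsilon}\frac{e^{ipx+i\omega_p t}}{\sqrt{m^2\varepsilon^2+\sin^2(p\varepsilon)}}$, and adding it to the central contribution produces either $2\cos(\omega_p t)$ or $-2i\sin(\omega_p t)$ times $e^{ipx}/\sqrt{\ldots}$ according to the parity of $(x+t)/\varepsilon$. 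A completely analogous manipulation applies to $\widetilde{A}_2$, where the flip of sign of $\sin(p\varepsilon)$ combines with the parity factor to produce the claimed combinations.

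Finally, to re-cast these symmetric combinations as $\mathrm{Re}$ or $i\,\mathrm{Im}$ of the original integrand (restricted to $[-\pi/2\varepsilon, \pi/2\varepsilon]$), I would use the substitution $p \mapsto -p$ on the conjugated term together with $\omega_{-p} = \omega_p$ and the oddness of $\sin(p\varepsilon)$. Concretely, for any integrable $F(p)$ on the central interval,
\[
\mathrm{Re}\!\left(\tfrac{\alpha}{\pi}\!\!\int_{-\pi/2\varepsilon}^{\pi/2\varepsilon}\!\!F(p)\,dp\right)
=\tfrac{\alpha}{2\pi}\!\!\int_{-\pi/2\varepsilon}^{\pi/2\varepsilon}\!\!\bigl(F(p)+\overline{F(-p)}\bigr)dp \quad (\alpha\in\mathbb{R}),
\]
and similarly $i\,\mathrm{Im}$ gives the difference; the parity behaviour of $\omega_p$ and $\sin(p\varepsilon)$ then converts this into precisely the expression obtained above from the $p\mapsto p\pm\pi/\varepsilon$ shifts.

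The argument is elementary; the only real obstacle is bookkeeping. The extra $\sin(p\varepsilon)/\sqrt{\ldots}$ term in $\widetilde A_2$ flips sign under $p\mapsto p\pm\pi/\varepsilon$, so the matching between the ``shift'' form and the ``$\mathrm{Re}/i\,\mathrm{Im}$'' form must be verified by a careful parity case analysis. Apart from this, no new analytic input beyond Proposition~\ref{th-equivalence} is needed.
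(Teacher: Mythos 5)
Your proposal is correct and follows essentially the same route as the paper: the paper performs the single change of variable $p\mapsto \pi/\varepsilon-p$ on the outer half of the Fourier integral from Proposition~\ref{th-equivalence}, which is exactly your shift $p\mapsto p\pm\pi/\varepsilon$ composed with the reflection $p\mapsto -p$, yielding the same conjugate-plus-parity-sign combination. The only caveat is that your displayed identity is stated for real $\alpha$ while the prefactor for $\widetilde{A}_1$ is purely imaginary, but this is absorbed by the parity case analysis you already flag.
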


\begin{proof}
This follows from Proposition~\ref{th-equivalence} by the change of variable $p\mapsto \pi/\varepsilon-p$.
For instance,
\begin{multline*}
 \widetilde{A}_1(x,t%,m,\varepsilon
  )
  =\frac{im\varepsilon^2}{2\pi}
  \int_{-\pi/2\varepsilon}^{\pi/2\varepsilon}
  \frac{e^{i p x-i\omega_pt}\,dp}
  {\sqrt{m^2\varepsilon^2+\sin^2(p\varepsilon)}}
  +
  \frac{im\varepsilon^2}{2\pi}
  \int_{\pi/2\varepsilon}^{3\pi/2\varepsilon}
  \frac{e^{i p x-i\omega_pt}\,dp}
  {\sqrt{m^2\varepsilon^2+\sin^2(p\varepsilon)}}
  =\\=\frac{im\varepsilon^2}{2\pi}
  \int_{-\pi/2\varepsilon}^{\pi/2\varepsilon}
  \frac{e^{i p x-i\omega_pt}\,dp}
  {\sqrt{m^2\varepsilon^2+\sin^2(p\varepsilon)}}
  +(-1)^{(x+t)/\varepsilon}
  \frac{im\varepsilon^2}{2\pi}
  \int_{-\pi/2\varepsilon}^{\pi/2\varepsilon}
  \frac{e^{-i p x+i\omega_pt}\,dp}
  {\sqrt{m^2\varepsilon^2+\sin^2(p\varepsilon)}},
  \quad\text{as required.}\\[-1.3cm]
  %=\mathrm{Re}\,\left(\frac{im\varepsilon^2}{\pi}
  %\int_{-\pi/2\varepsilon}^{\pi/2\varepsilon}
  %\frac{e^{i p x-i\omega_pt}\,dp}
  %{\sqrt{m^2\varepsilon^2+\sin^2(p\varepsilon)}}\right).
\end{multline*}
\end{proof}

\subsection{Asymptotic formulae %and continuum limit
(Theorem~\ref{th-continuum-limit}, Corollary~\ref{cor-anti-uniform})}
\label{ssec-proofs-continuum}

In this subsection we prove Theorem~\ref{th-continuum-limit} and Corollary~\ref{cor-anti-uniform}.

First let us outline the plan of the argument.
%, then prove the theorem modulo some technical lemmas, and finally the lemmas themselves. \mscomm{Update!}
We perform the Fourier transform and estimate the difference of the resulting oscillatory integrals for the discrete and continuum models, using tails exchange and non-stationary-phase method.  The proof of~\eqref{eq1-th-continuum-limit} consists of $3$ steps:
\begin{description}
  \item[Step 1:] we replace the integration interval in the Fourier integral for the continuum model by the one from the discrete model (cutting off large momenta);
  \item[Step 2:] we replace the phase in the discrete model by the one from the continuum model;
  \item[Step 3:] we estimate the difference of the resulting integrals for the continuum and discrete models
      for small and intermediate momenta.
%  \item[Step 4:] estimate the same difference for intermediate momenta.
\end{description}
In the proof of~\eqref{eq2-th-continuum-limit},
we first subtract the massless propagator
from the massive one to make the Fourier integral convergent.
The \blueit{required} integral for the continuum model is as follows.

\begin{lemma}\label{l-feynman-fourier} %\mscomm{!!! Prove or ref !!!}
Under notation~\eqref{eq-feynman-propagator}, % and \eqref{eq-massless},
for each $m>0$, $t\ge 0$, and $x\ne \pm t$ we have
\begin{align*}
G^F_{11}(x,t,m)&=
\frac{im}{4\pi}
  \int_{-\infty}^{+\infty}
  \frac{e^{i p x-i\sqrt{m^2+p^2}t}\,dp}
  {\sqrt{m^2+p^2}},
%  ;\\
%G^F_{12}(x,t,m)-G^F_{12}(x,t,0)&=
%  \frac{1}{4\pi}
%\int_{-\infty}^{+\infty}
%  \left(1+\frac{p}
%  {\sqrt{m^2+p^2}}\right)
%  \left(e^{i|p|t-i\sqrt{m^2+p^2}t}-1\right)
%  e^{i p x-i|p|t}\,dp
%+\\&+\frac{1}{4\pi}
%\int_{-\infty}^{+\infty} \left(\frac{p}
%  {\sqrt{m^2+p^2}}-
%  \mathrm{sgn}(p)
%  \right)
%  e^{i p x-i|p|t}
%  \,dp,
\end{align*}
where the integral is understood as a conditionally convergent improper Riemann integral.
\end{lemma}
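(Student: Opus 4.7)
The plan is to reduce the right-hand side to the standard integral representations of the Hankel function $H_0^{(2)}$ and the Macdonald function $K_0$ via the rapidity substitution $p=m\sinh u$. Under this substitution $\sqrt{m^2+p^2}=m\cosh u$ and $dp=m\cosh u\,du$, so the right-hand side becomes
\[
\frac{im}{4\pi}\int_{-\infty}^{+\infty} e^{im(x\sinh u-t\cosh u)}\,du.
\]
I would then split into the timelike case $|x|<t$ and the spacelike case $|x|>t$, the light cone $|x|=t$ being excluded by hypothesis.

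In the timelike case I would set $t=s\cosh\phi$, $x=s\sinh\phi$ with $s=\sqrt{t^2-x^2}>0$. The hyperbolic addition formula simplifies the exponent to $-ims\cosh(u-\phi)$, and translation invariance of Lebesgue measure reduces the integral to $\int_{-\infty}^{+\infty}e^{-ims\cosh u}\,du=-i\pi H_0^{(2)}(ms)$, the classical Mehler--Sonine representation. Using $H_0^{(2)}=J_0-iY_0$ and multiplying by $\frac{im}{4\pi}$ reproduces $\frac{m}{4}(J_0(ms)-iY_0(ms))$, which is the timelike branch of \eqref{eq-feynman-propagator}.

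In the spacelike case I would set $x=\sigma s\cosh\phi$, $t=s\sinh\phi$ with $\sigma=\operatorname{sgn}(x)\in\{\pm 1\}$, $\phi\ge 0$, and $s=\sqrt{x^2-t^2}>0$. The exponent becomes $i\sigma ms\sinh(u-\sigma\phi)$, and after translating and (if necessary) reflecting $u$ the integral reduces to $\int_{-\infty}^{+\infty}e^{ims\sinh u}\,du$. Substituting $v=\sinh u$ turns this into the classical Fourier pair $\int_{-\infty}^{+\infty}e^{imsv}\,dv/\sqrt{1+v^2}=2K_0(ms)$, and the prefactor $\frac{im}{4\pi}$ then produces $\frac{im}{2\pi}K_0(ms)$, matching the spacelike branch. (The edge case $t=0$ is already contained here with $s=|x|$.)

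The main obstacle is the conditional convergence: the original $p$-integral, the Mehler--Sonine integral, and the $v$-integral each fail to converge absolutely, so the substitution and the translation step need justification. I would handle this by an analytic regularization $m\mapsto m-i\eta$ with $\eta>0$ which gives $\operatorname{Im}\sqrt{m^2+p^2}<0$ and hence an absolutely integrable integrand for $t>0$; all manipulations are then legitimate in the regularized setting. Both sides of the claimed identity depend continuously on $\eta$ as $\eta\searrow 0$: the left-hand side by standard continuity of $J_0$, $Y_0$, $K_0$ in their arguments, and the right-hand side after a single integration by parts in $p$, which exposes uniform $O(1/p^2)$ tail decay on a neighbourhood of $\eta=0$. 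Passing to the limit then extends the identity from $\eta>0$ to $\eta=0$, as required.
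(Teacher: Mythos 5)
Your argument is essentially the paper's: the paper performs the single algebraic substitution $q=tp-x\sqrt{m^2+p^2}$ (respectively $q=xp-t\sqrt{m^2+p^2}$ in the spacelike case), which is exactly the composition of your two steps $p=m\sinh u$ followed by the boost $u\mapsto u-\phi$, and then cites the Gradshteyn--Ryzhik entries for the Mehler--Sonine representation of $H_0^{(2)}$ and the Basset representation of $K_0$ that you also invoke. So the reduction and the final identifications coincide; your version is just written in rapidity coordinates. One caveat: your proposed justification of the convergence manipulations does not work as stated. Replacing $m$ by $m-i\eta$ gives $\mathrm{Im}\sqrt{(m-i\eta)^2+p^2}\approx -m\eta/|p|$ for large $|p|$, so the damping factor $e^{-tm\eta/|p|}$ tends to $1$ and the regularized integrand still decays only like $1/|p|$; it is \emph{not} absolutely integrable. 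Fortunately no regularization is needed: every step you perform is a monotone $C^1$ diffeomorphism of $\mathbb{R}$ onto $\mathbb{R}$ or a translation, and these preserve conditionally convergent improper Riemann integrals (apply them on finite intervals and pass to the limit), after which only the classical convergence of the Mehler--Sonine and Basset integrals is used. Either delete the regularization paragraph in favour of this remark, or replace the regulator by one that actually damps, e.g.\ $t\mapsto t-i\eta$, which multiplies the integrand by $e^{-\eta\sqrt{m^2+p^2}}$.
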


\begin{proof}
For $t>|x|$, use
the change of variables $q=tp-x\sqrt{m^2+p^2}$
and \cite[8.421.11, 8.405.1]{Gradstein-Ryzhik-63}:
$$
\frac{im}{4\pi}
  \int_{-\infty}^{+\infty}
  \frac{e^{i p x-i\sqrt{m^2+p^2}t}\,dp}
  {\sqrt{m^2+p^2}}
=\frac{im}{4\pi}
  \int_{-\infty}^{+\infty}
  \frac{e^{-i\sqrt{m^2(t^2-x^2)+q^2}}\,dq}
  {\sqrt{m^2(t^2-x^2)+q^2}}
%=\frac{m}{4}(J_0(m\sqrt{t^2-x^2}-iY_0(m\sqrt{t^2-x^2})
=G^F_{11}(x,t,m).
$$
For $0\le t<|x|$, use
%this is obtained analogously by
the change of variables $q=xp-t\sqrt{m^2+p^2}$
and \cite[8.432.5]{Gradstein-Ryzhik-63}.
%%%%
%The second formula in the lemma is the limiting case $n\searrow 0$ of the formula
%\begin{multline*}
%  \frac{1}{4\pi}
%\int_{-\infty}^{+\infty}
%  \left(
%  \left(1+\frac{p}
%  {\sqrt{m^2+p^2}}\right)
%  e^{-i\sqrt{m^2+p^2}t}-
%  \left(1+\frac{p}
%  {\sqrt{n^2+p^2}}\right)
%  e^{-i\sqrt{n^2+p^2}t}
%  \right)
%  e^{i p x}\,dp
%=\\=\frac{i}{4\pi}
%\int_{-\infty}^{+\infty}
%\left(
%\frac{\partial}{\partial t}-\frac{\partial}{\partial x}
%\right)
%  \left(\frac{e^{i p x-i\sqrt{m^2+p^2}t}}
%  {\sqrt{m^2+p^2}}-
%\frac{e^{i p x-i\sqrt{n^2+p^2}t}}
%  {\sqrt{n^2+p^2}}
%  \right)\,dp
%%=
%%\left(
%%\frac{\partial}{\partial t}-\frac{\partial}{\partial x}
%%\right)
%%\frac{i}{4\pi}
%%\int_{-\infty}^{+\infty}
%%  \left(\frac{e^{i p x-i\sqrt{m^2+p^2}t}}
%%  {\sqrt{m^2+p^2}}-
%%\frac{e^{i p x-i\sqrt{n^2+p^2}t}}
%%  {\sqrt{n^2+p^2}}
%%  \right)\,dp
%=\\=
%\left(
%\frac{\partial}{\partial t}-\frac{\partial}{\partial x}
%\right)
%\left(\frac{G^F_{11}(x,t,m)}{m}-\frac{G^F_{11}(x,t,n)}{n}\right)
%=G^F_{12}(x,t,m)-G^F_{12}(x,t,n).
%\end{multline*}
%Here the last equality follows from \cite[8.473.4,5]{Gradstein-Ryzhik-63}.
%It remains to explain why we can change the order of the differentiation and the integration.
\end{proof}

%\smallskip

\begin{proof}[Proof of formula~\eqref{eq1-th-continuum-limit} in Theorem~\ref{th-continuum-limit}]
In the case when $|x|>|t|$ and $(x+t)/\varepsilon$ is odd,
formula~\eqref{eq1-th-continuum-limit} follows from Theorem~\ref{th-well-defined} and Definition~\ref{def-mass};
thus we exclude this case in what follows. We may assume that $x,t\ge 0$ because
%the required asymptotic formulae
~\eqref{eq1-th-continuum-limit} %--\eqref{eq2-th-continuum-limit}
is invariant under the transformations $(x,t)\mapsto (\pm x,\pm t)$
by Corollaries~\ref{cor-symmetry}--\ref{cor-skew-symmetry}.
%We may assume that
%\begin{equation}\label{eq-continuum-limit}
%$\varepsilon<s/m(x+t)$
%\end{equation}
%otherwise...
Assume that $(x+t)/\varepsilon$ is even; otherwise the proof
is the same up to an obvious modification of the very first inequality below.
Use notation~\eqref{eq-feynman-propagator}.
%, \eqref{eq-massless-lattice}, \eqref{eq-massless}.
Formula~\eqref{eq1-th-continuum-limit} will follow from
% the sequence of estimates
\begin{multline*}
\left|
\widetilde{A}_1(x,t,m,\varepsilon)
-4\bluevar{i}\varepsilon\, \mathrm{Im}G^F_{11}(x,t,m)
\right|
\le\left|
  \frac{im\varepsilon^2}{\pi}
  \int\limits_{-\pi/2\varepsilon}^{\pi/2\varepsilon}
  \frac{e^{i p x-i\omega_pt}\,dp}
  {\sqrt{m^2\varepsilon^2+\sin^2(p\varepsilon)}}
- \frac{im\varepsilon}{\pi}
  \int\limits_{-\infty}^{+\infty}
  \frac{e^{i p x-i\sqrt{m^2+p^2}t}\,dp}
  {\sqrt{m^2+p^2}}
  \right|\\
\le
  \frac{m\varepsilon}{\pi}
  \left|
  %\int_{\pi/2\varepsilon}^{+\infty}
  \int_{|p|\ge\pi/2\varepsilon}
  \frac{e^{i p x-i\sqrt{m^2+p^2}t}\,dp}
  {\sqrt{m^2+p^2}}
  \right|
+ \frac{m\varepsilon}{\pi}
  \left|
  %\int_{0}^{\pi/2\varepsilon}
  \int_{|p|\le \pi/2\varepsilon}
  \frac{\varepsilon\left(e^{i p x-i\omega_pt}
  -e^{i p x-i\sqrt{m^2+p^2}t}\right)\,dp}
  {\sqrt{m^2\varepsilon^2+\sin^2(p\varepsilon)}}%%%
  \right| \\
%+ \frac{m\varepsilon}{\pi}
%  \left|
%  %\int_{0}^{3m(x+t)/2s}
%  \int_{|p|\le 3m(x+t)/2s}
%  \left(
%  \frac{\varepsilon}{\sqrt{m^2\varepsilon^2+\sin^2(p\varepsilon)}}
%  -\frac{1}{\sqrt{m^2+p^2}}
%  \right)
%  e^{i p x-i\sqrt{m^2+p^2}t}\,dp
%  \right| \\
+ \frac{m\varepsilon}{\pi}
  \left|
  %\int_{3m(x+t)/2s}^{\pi/2\varepsilon}
  \int_{|p|\le\pi/2\varepsilon}
  \left(
  \frac{\varepsilon}{\sqrt{m^2\varepsilon^2+\sin^2(p\varepsilon)}}
  -\frac{1}{\sqrt{m^2+p^2}}
  \right)
  e^{i p x-i\sqrt{m^2+p^2}t}\,dp
  \right|
=\\=m\varepsilon\, O\left(\frac{\varepsilon}{|x-t|}+\frac{m\varepsilon(x+t)}{s}\right)
+ m\varepsilon\, O\left(m^2t\varepsilon\right)
+ m\varepsilon\, O\left(\frac{\varepsilon}{|x-t|}+\frac{m\varepsilon(x+t)}{s}\right)
= m\varepsilon\, O\left(\varepsilon \Delta\right).
\end{multline*}
Here the first inequality follows from Corollary~\ref{cor-halve-interval}, Lemma~\ref{l-feynman-fourier}, and the inequality
$|\mathrm{Im}\,z-\mathrm{Im}\,w|\le |z-w|$. The second inequality
is straightforward. The obtained $3$ integrals are estimated below in Steps~1--3 respectively.
%, which proves the third estimate.
The last bound follows from $2m(x+t)/s\le \Delta$.
%~\eqref{eq-continuum-limit} and the inequality between the arithmetic and geometric means.
%In Steps~1--3
Below we restrict the integrals to $p\ge 0$;
%to integration domain where $p\ge 0$;
the argument for %the symmetric domain where
$p<0$ is analogous.
The estimates are slightly different for $\varepsilon<s/m(x+t)$
and $\varepsilon>s/m(x+t)$.
Denote %$\delta:=|x-t|$, %$\varepsilon_0:=s/m(x+t)$,
\begin{equation}\label{eq-eps-plus}
\begin{aligned}
\varepsilon_+&:=\max\{\varepsilon,s/m(x+t)\};\\
\varepsilon_-&:=\min\{\varepsilon,s/m(x+t)\}.
\end{aligned}
\end{equation}
We use the following known result.

\begin{lemma}[First derivative test] \label{l-first-derivative-test}
\textup{\cite[Lemma~5.1.2]{Huxley-96}}
Let $\alpha,\beta\in\mathbb{R}$ and $\alpha<\beta$.
Assume that $f\in C^1[\alpha,\beta]$ has monotone nonvanishing derivative;
then for each $g\in C^0[\alpha,\beta]$  we have
\begin{equation}\label{eq-l-first-derivative-test}
\left|\int_{\alpha}^{\beta}g(p)e^{if(p)}\,dp\right|
\le \frac{2\max_{[\alpha,\beta]}|g|+2V_\alpha^\beta (g)}
{\min_{[\alpha,\beta]}|f'|}.
\end{equation}
\end{lemma}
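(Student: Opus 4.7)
The plan is to prove Lemma~\ref{l-first-derivative-test} by the classical two-step integration-by-parts argument underlying van der Corput-type oscillatory integral estimates. The first step is to bound uniformly in $p$ the oscillatory primitive $F(p) := \int_\alpha^p e^{if(s)}\,ds$. Since $f'$ is continuous, monotone, and nonvanishing on $[\alpha,\beta]$, the map $f$ is a $C^1$-diffeomorphism onto its image, and the substitution $u=f(s)$ converts the primitive to $F(p) = \int_{f(\alpha)}^{f(p)} \phi(u)\,e^{iu}\,du$ with $\phi(u):=1/f'(f^{-1}(u))$ a monotone function bounded by $1/\min_{[\alpha,\beta]}|f'|$.

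Next I would apply Bonnet's second mean value theorem to the real and imaginary parts of $\phi(u)e^{iu}$ separately. For the monotone factor $\phi$, the integral $\int \phi(u)\cos u\,du$ reduces to $\phi(\text{endpoint})\int_{f(\alpha)}^{c}\cos u\,du$ on some subinterval, and the latter is trivially bounded by $2$. Combining real and imaginary parts yields $|F(p)| \le 2/\min_{[\alpha,\beta]}|f'|$ uniformly in $p\in[\alpha,\beta]$; this uniform bound on the primitive is the crucial analytic input.

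The final step is Riemann--Stieltjes integration by parts. Since $F$ is $C^1$ and (assuming $V_\alpha^\beta(g)<\infty$, for otherwise the target inequality is vacuous) $g$ is of bounded variation, we have
\begin{equation*}
\int_\alpha^\beta g(p)\,e^{if(p)}\,dp
= g(\beta)F(\beta)-g(\alpha)F(\alpha)-\int_\alpha^\beta F(p)\,dg(p),
\end{equation*}
and after taking absolute values and inserting the uniform bound on $|F|$, the boundary contribution is controlled by $2\max_{[\alpha,\beta]}|g|/\min_{[\alpha,\beta]}|f'|$ and the Stieltjes integral by $2V_\alpha^\beta(g)/\min_{[\alpha,\beta]}|f'|$, summing to the desired bound.

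The main obstacle is obtaining the sharp constants: the factor of $2$ in the uniform bound on $|F|$ is delicate and requires a clean application of the second mean value theorem, rather than the looser estimate $4/\min|f'|$ that would arise from a naive direct integration by parts of $e^{if}=(if')^{-1}\,d(e^{if})$ with two boundary terms. A secondary point is that the assumption $g\in C^0$ (not $C^1$) forces the use of Riemann--Stieltjes rather than ordinary integration in the final step, so the bounded-variation hypothesis on $g$ enters implicitly through the finiteness of $V_\alpha^\beta(g)$ on the right-hand side.
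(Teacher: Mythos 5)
The paper does not prove this lemma at all: it is quoted verbatim from Huxley \cite[Lemma~5.1.2]{Huxley-96}, so there is no in-paper argument to compare against. Your overall strategy is the standard one for the first derivative test (bound the oscillatory primitive $F(p)=\int_\alpha^p e^{if(s)}\,ds$ uniformly, then do Riemann--Stieltjes integration by parts against $g$), and the final step is fine: since $F(\alpha)=0$, the boundary contribution is $|g(\beta)F(\beta)|\le 2\max|g|/\min|f'|$ and the Stieltjes term is $\le 2V_\alpha^\beta(g)/\min|f'|$, exactly matching the stated bound, with the inequality being vacuous when $V_\alpha^\beta(g)=\infty$.

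The genuine gap is in your middle step. Applying Bonnet's theorem to the real and imaginary parts of $\phi(u)e^{iu}$ \emph{separately} gives $|\mathrm{Re}\,F|\le 2/\min|f'|$ and $|\mathrm{Im}\,F|\le 2/\min|f'|$ with two unrelated intermediate points $c_1,c_2$, hence only $|F|\le 2\sqrt{2}/\min|f'|$; the claimed combination to $|F|\le 2/\min|f'|$ does not follow, and the route as written proves the lemma only with the constant $2\sqrt{2}$ in place of $2$. (The second mean value theorem in the form $\int\phi h=\phi(A)\int_A^c h$ is simply false for complex-valued $h$, which is why one is forced into the componentwise application and the $\sqrt{2}$ loss.) The repair is to run Abel summation on the complex integral directly: writing $H(u)=\int_A^u e^{iv}\,dv$, so $|H|\le 2$, one has
\begin{equation*}
\int_A^B \phi(u)e^{iu}\,du=\phi(B)H(B)-\int_A^B H(u)\,d\phi(u),
\end{equation*}
whence $|F(p)|\le 2\bigl(|\phi(B)|+V_A^B(\phi)\bigr)=2\max|\phi|\le 2/\min_{[\alpha,\beta]}|f'|$ by the monotonicity and constant sign of $\phi$. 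With that substitution your proof is complete and delivers the stated constants; for the paper's purposes the distinction is immaterial anyway, since the lemma is only used inside $O(\cdot)$ estimates.
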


\textbf{Step 1.} Apply Lemma~\ref{l-first-derivative-test}
for $\alpha=\pi/2\varepsilon_-$, $\beta\to+\infty$, $f(p):=p x-\sqrt{m^2+p^2}\,t$,  and $g(p):=1/\sqrt{m^2+p^2}$.
The derivative $f'(p)$ is monotone because $f''(p)=-m^2t/(m^2+p^2)^{3/2}\le 0$.
Since $g(p)\searrow 0$ as $p\to +\infty$, it follows that
the numerator in the right side of~\eqref{eq-l-first-derivative-test}
tends to $4g(\pi/2\varepsilon_-)\le 8\varepsilon_-/\pi =O(\varepsilon)$ as $\beta\to+\infty$.
To bound the denominator from below (and in particular to check the assumption $f'(p)\ne 0$), we need a lemma.

\begin{lemma} \label{l-omegaprime} If
$p\ge \pi m(x+t)/2s$ then $|x-tp/\sqrt{m^2+p^2}|\ge|x-t|/4$.
%$$
%\frac{1}{x\sqrt{m^2+p^2}-tp}=O\left(\frac{1}{p|x-t|}\right)
%\qquad\text{
%and
%}\qquad
%\frac{\partial}{\partial p}\left(\frac{1}
%{x\sqrt{m^2+p^2}-tp}\right)=O\left(\frac{1}{p^2|x-t|}\right).
%$$
\end{lemma}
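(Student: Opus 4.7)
The plan is to split into cases depending on the sign of $x-t$, after noting that $h(p):=tp/\sqrt{m^2+p^2}$ is monotonically increasing from $h(0)=0$ to $\lim_{p\to\infty}h(p)=t$ (so $0\le h(p)<t$ for all $p\ge 0$). Recall also that in the surrounding proof we have reduced to $x,t\ge 0$, and the hypothesis $s=\sqrt{|t^2-x^2|}>0$ forces $x\ne t$.

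If $x\ge t$, then $h(p)\le t\le x$ immediately gives $|x-h(p)|=x-h(p)\ge x-t=|x-t|$, which is stronger than needed and requires no hypothesis on $p$.

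The remaining case $0\le x<t$ is where the hypothesis on $p$ is used. I would show the stronger statement $h(p)-x\ge (t-x)/4$, i.e.\ $h(p)\ge (3x+t)/4$. Since both sides are nonnegative, I may square the inequality $tp/\sqrt{m^2+p^2}\ge (3x+t)/4$ to an equivalent form $16t^2p^2\ge (3x+t)^2(m^2+p^2)$. Factoring $16t^2-(3x+t)^2=(3t-3x)(5t+3x)=3(t-x)(5t+3x)$ (positive since $t>x\ge 0$), the condition becomes
\[
p^2\ge \frac{(3x+t)^2\,m^2}{3(t-x)(5t+3x)}.
\]
It remains to check that our hypothesis $p\ge \pi m(x+t)/(2s)$ with $s=\sqrt{(t-x)(t+x)}$ implies this bound; squaring and clearing denominators, this reduces to the purely algebraic inequality
\[
3\pi^{2}(x+t)(5t+3x)\ge 4(3x+t)^{2},
\]
which expands to $15\pi^{2}t^{2}+24\pi^{2}xt+9\pi^{2}x^{2}\ge 4t^{2}+24xt+36x^{2}$. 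Comparing coefficients, this follows from $15\pi^{2}\ge 4$, $24\pi^{2}\ge 24$, and $9\pi^{2}\ge 36$, all of which hold since $\pi^{2}>4$.

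There is no serious obstacle; the argument is essentially a monotonicity observation followed by a squaring/factoring step. The one mildly non-obvious point is choosing to bound $h(p)$ away from $x$ from \emph{above} by $(3x+t)/4$ (rather than, say, trying to compare $h(p)$ with $t$ directly), which gives the cleanest factorization $16t^{2}-(3x+t)^{2}=3(t-x)(5t+3x)$ and matches the scaling of the hypothesis in terms of $s=\sqrt{(t-x)(t+x)}$.
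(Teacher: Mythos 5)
Your proof is correct, but it is organized differently from the paper's. The paper treats both signs of $t^2-x^2$ at once: it rationalizes,
$$
\Bigl|x-\tfrac{tp}{\sqrt{m^2+p^2}}\Bigr|
=\frac{\bigl|x^2(1+m^2/p^2)-t^2\bigr|}{x(1+m^2/p^2)+t\sqrt{1+m^2/p^2}},
$$
and then writes $m^2/p^2=\eta(t^2-x^2)/x^2$ with $|\eta|\le 4/\pi^2\le 1/2$ (which is exactly what the hypothesis $p\ge \pi m(x+t)/2s\ge \pi mx/2s$ provides), so the numerator becomes $(1-\eta)|t^2-x^2|\ge\tfrac12|t^2-x^2|$ while the denominator is at most $\tfrac32(x+t)$; the case $x=0$ is set aside separately. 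You instead split on the sign of $x-t$: for $x\ge t$ the monotonicity of $p\mapsto tp/\sqrt{m^2+p^2}$ gives the bound with constant $1$ and no hypothesis on $p$ at all, and for $x<t$ you square the stronger target $tp/\sqrt{m^2+p^2}\ge(3x+t)/4$ and reduce to the coefficientwise-obvious inequality $3\pi^2(x+t)(5t+3x)\ge 4(3x+t)^2$. Your route is more pedestrian but fully explicit and avoids the paper's slightly delicate points (the implicit positivity of the rationalized denominator and the separate $x=0$ case); the paper's route is shorter on the page and makes visible where the constant $4/\pi^2$ enters. Both arguments rely on the standing reduction to $x,t\ge 0$ and $p\ge 0$ from the surrounding proof, which you correctly flag.
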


\begin{proof}
%Let us prove the first estimate.
We may assume that $x\ne0$. Since
$p\ge \pi m(x+t)/2s\ge \pi mx/2\sqrt{|t^2-x^2|}$
it follows that $m/p\le 1$ and ${m^2}/{p^2}=\eta{(t^2-x^2)}/{x^2}$ for some $\eta\in [-4/\pi^2;4/\pi^2]\subset[-1/2;1/2]$.
Thus
$$
\left|x-t\frac{p}{\sqrt{m^2+p^2}}\right|=
\frac{\left|x^2\left(1+m^2/p^2\right)-t^2\right|}
{x\left(1+m^2/p^2\right)+t\sqrt{1+m^2/p^2}}
=\frac{(1-\eta)|t^2-x^2|}{x\left(1+m^2/p^2\right)+t\sqrt{1+m^2/p^2}}
\ge %\frac{(1/2)|t^2-x^2|}{2x+2t}=
\frac{|x-t|}{4}.
\vspace{-0.6cm}
$$
%This proves the first estimate in the lemma.
%
%Interchanging $x$ and $t$, we analogously prove that
%$\left|t\sqrt{m^2+p^2}-xp\right|<4p|x-t|$. Thus
%$$
%\frac{\partial}{\partial p}\left(\frac{1}
%{x\sqrt{m^2+p^2}-tp}\right)
%=
%\frac{t\sqrt{m^2+p^2}-xp}
%{\sqrt{m^2+p^2}(x\sqrt{m^2+p^2}-tp)^2}
%=O\left(\frac{1}{p^2|x-t|}\right).
%\vspace{-0.6cm}
%$$
\end{proof}

Since $\alpha=\pi/2\varepsilon_-\ge \pi m(x+t)/2s$,
by Lemmas~\ref{l-first-derivative-test}--\ref{l-omegaprime}
we get
$\int_{\pi/2\varepsilon_-}^{+\infty}
  \frac{e^{i p x-i\sqrt{m^2+p^2}t}\,dp}{\sqrt{m^2+p^2}}
  =O\left(\frac{\varepsilon}{|x-t|}\right)
$.

For $\varepsilon\le s/m(x+t)$, which is equivalent to $\varepsilon_-=\varepsilon$, this completes step~1.

For $\varepsilon>s/m(x+t)$ we need an additional bound
$$
  \left|\int_{\pi/2\varepsilon}^{\pi/2\varepsilon_-}
  \frac{e^{i p x-i\sqrt{m^2+p^2}t}\,dp}
  {\sqrt{m^2+p^2}}\right|
  \le
  \int_{\pi/2\varepsilon}^{\pi/2\varepsilon_-}
  \frac{dp}
  {\pi/2\varepsilon}
  \le
  \frac{\pi/2\varepsilon_-}{\pi/2\varepsilon}
  =\frac{m\varepsilon(x+t)}{s}.
$$

\textbf{Step 2.} Using that $1-e^{iz}=O(|z|)$ for $z\in\mathbb{R}$
and $\sin z> z/2$ for $0<z<\pi/2$
%$m^2\varepsilon^2+\sin^2(p\varepsilon)\ge (m^2+p^2)\varepsilon^2/2$ for $p\le \pi/2\varepsilon$,
we get
\begin{multline*}
  \int_{0}^{\pi/2\varepsilon}
  \frac{\varepsilon\left(e^{i p x-i\omega_pt}
  -e^{i p x-i\sqrt{m^2+p^2}t}\right)\,dp}
  {\sqrt{m^2\varepsilon^2+\sin^2(p\varepsilon)}}
= \int_{0}^{\pi/2\varepsilon}
  \frac{
  \left(1-e^{i\omega_pt-i\sqrt{m^2+p^2}t}\right)
  \varepsilon e^{i p x-i\omega_pt}\,dp}
  {\sqrt{m^2\varepsilon^2+\sin^2(p\varepsilon)}}
=\\= O\left(\int_{0}^{\pi/2\varepsilon}
  \frac{\left|\omega_p-\sqrt{m^2+p^2}\right|\varepsilon t \,dp}
  {\sqrt{m^2\varepsilon^2+\sin^2(p\varepsilon)}}\right)
= O\left(\int_{0}^{\pi/2\varepsilon}\frac{m^2\varepsilon^2t\sqrt{m^2+p^2}\,dp}
  {\sqrt{m^2+p^2}}\right)
= O\left(\frac{\pi}{2\varepsilon}\cdot m^2\varepsilon^2t\right)
= O(m^2 t\varepsilon).
\end{multline*}
Here the third estimate is proved in the following lemma.

\begin{lemma} \label{l-g-difference} For $|p|\le \pi/2\varepsilon$ we have
$$\omega_p=\sqrt{p^2+m^2}+O\left(m^2\varepsilon^2\sqrt{p^2+m^2}\right)
\qquad\text{
and
}\qquad
\frac{\partial \omega_p}{\partial p}
= \frac{\sin p\varepsilon}{\sqrt{\sin^2p\varepsilon+m^2\varepsilon^2}} =\frac{p}{\sqrt{p^2+m^2}}+O(m^2\varepsilon^2).
$$
\end{lemma}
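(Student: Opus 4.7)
The plan is to derive both statements from the implicit defining relation $\cos(\omega_p\varepsilon) = \cos(p\varepsilon)/\sqrt{1+m^2\varepsilon^2}$ in~\eqref{eq-omega}. Once $m\varepsilon$ exceeds an absolute constant, both claims become trivial: the derivative satisfies $|\partial_p\omega_p|\le 1$ unconditionally (from the middle expression, once established), and $\omega_p\le\pi/\varepsilon$ together with $\sqrt{p^2+m^2}\ge m\ge 1/\varepsilon$ keeps the difference $|\omega_p-\sqrt{p^2+m^2}|$ bounded by a constant multiple of $m^2\varepsilon^2\sqrt{p^2+m^2}$. I therefore assume $m\varepsilon\le 1$ for the substantive argument. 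The middle equality for the derivative follows by implicit differentiation: $-\varepsilon\sin(\omega_p\varepsilon)\,\partial_p\omega_p = -\varepsilon\sin(p\varepsilon)/\sqrt{1+m^2\varepsilon^2}$, combined with $\sin(\omega_p\varepsilon)\ge 0$ (since $\omega_p\varepsilon\in[0,\pi]$) and the identity $\sin^2(\omega_p\varepsilon)=(m^2\varepsilon^2+\sin^2 p\varepsilon)/(1+m^2\varepsilon^2)$.

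To compare the middle expression with its continuum counterpart, set $q:=\sin(p\varepsilon)/\varepsilon$ and $f(x):=x/\sqrt{x^2+m^2}$, so the middle expression equals $f(q)$ while the target is $f(p)$. By the mean value theorem with $f'(\xi)=m^2/(m^2+\xi^2)^{3/2}$, combined with the elementary inequalities $|p-q|\le|p|^3\varepsilon^2/6$ (from $|\sin x-x|\le|x|^3/6$ on $|x|\le\pi/2$) and $|q|\ge 2|p|/\pi$ (from $|\sin x|\ge 2|x|/\pi$), I split into two regimes. If $|p|\le m$, the trivial bound $|f'(\xi)|\le 1/m$ yields $|f(p)-f(q)|\le m^2\varepsilon^2/6$. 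If $|p|>m$, any $\xi$ between $p$ and $q$ satisfies $|\xi|\ge|q|\ge 2|p|/\pi$, so $|f'(\xi)|\le m^2/|\xi|^3\le\pi^3 m^2/(8|p|^3)$, giving $|f(p)-f(q)|\le\pi^3 m^2\varepsilon^2/48$. Either way, the desired $O(m^2\varepsilon^2)$ bound holds.

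Finally, to obtain the asymptotic for $\omega_p$ itself, I integrate: $\omega_p-\sqrt{p^2+m^2}=(\omega_0-m)+\int_0^p\bigl(\partial_q\omega_q-q/\sqrt{q^2+m^2}\bigr)\,dq$. Setting $p=0$ in the defining relation gives $\tan(\omega_0\varepsilon)=m\varepsilon$, and the bound $|\arctan y-y|\le y^3/3$ for $|y|\le 1$ yields $|\omega_0-m|\le m^3\varepsilon^2/3$. Combining with the derivative bound integrated over $[0,p]$ produces $|\omega_p-\sqrt{p^2+m^2}|\le C m^2\varepsilon^2(m+|p|)\le C'm^2\varepsilon^2\sqrt{p^2+m^2}$ (via $m+|p|\le\sqrt{2}\sqrt{m^2+p^2}$), completing the proof. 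The main technical subtlety is the case split in the derivative estimate together with the separate handling of the regime $m\varepsilon\ge 1$ at the outset; everything else reduces to standard Taylor-type comparisons.
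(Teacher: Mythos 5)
Your proof is correct, and its overall skeleton matches the paper's: first establish the exact formula for $\partial\omega_p/\partial p$, then compare it with $p/\sqrt{p^2+m^2}$, separately bound $\omega_0-m$, and finally recover the estimate for $\omega_p$ itself by the mean value theorem (you integrate instead, which is the same thing). The one place where you genuinely diverge is the derivative comparison. The paper views $\frac{\sin p\varepsilon}{\sqrt{\sin^2 p\varepsilon+m^2\varepsilon^2}}$ as a function of $\varepsilon$ interpolating between the continuum expression at $\varepsilon=0$ and the lattice one, and applies Lagrange's theorem in $\varepsilon$; the resulting $\varepsilon$-derivative $\frac{m^2\varepsilon(p\varepsilon\cos p\varepsilon-\sin p\varepsilon)}{(\sin^2 p\varepsilon+m^2\varepsilon^2)^{3/2}}$ carries the factor $m^2$ automatically and is bounded by $O(m^2\varepsilon^2)$ in one stroke via $\sin z-z\cos z=O(z^3)$ and $\sin z>z/2$, with no case analysis and no separate treatment of $m\varepsilon\ge 1$. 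You instead apply the mean value theorem to $f(x)=x/\sqrt{x^2+m^2}$ between $p$ and $q=\sin(p\varepsilon)/\varepsilon$, which forces the split into $|p|\le m$ and $|p|>m$ (to control $f'(\xi)=m^2/(\xi^2+m^2)^{3/2}$ against the cubic growth of $|p-q|$) and the preliminary reduction to $m\varepsilon\le 1$. Both arguments are sound; the paper's choice of differentiation variable is the slicker one here, while your version is perhaps more transparent about why the comparison works (the lattice dispersion is the continuum one evaluated at the distorted momentum $\sin(p\varepsilon)/\varepsilon$). Your treatment of $\omega_0-m$ via $\omega_0\varepsilon=\arctan(m\varepsilon)$ is essentially the paper's Lagrange-in-$m$ step in closed form.
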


\begin{proof}
First we estimate the derivative. By the Lagrange theorem,
there is $\varepsilon'\in[0,\varepsilon]$ such that
\begin{multline*}
%\hspace{-0.5cm}
\frac{\sin p\varepsilon}{\sqrt{\sin^2p\varepsilon+m^2\varepsilon^2}}
-\frac{p}{\sqrt{p^2+m^2}}
=
\left.
\frac{\partial }{\partial \varepsilon}\left(\frac{\sin p\varepsilon}{\sqrt{\sin^2p\varepsilon+m^2\varepsilon^2}}\right)
\right|_{\varepsilon=\varepsilon'}\cdot\varepsilon
=\\=
\left.\frac{m^2\varepsilon(p\varepsilon\cos p\varepsilon-\sin p\varepsilon)}{(\sin^2p\varepsilon+m^2\varepsilon^2)^{3/2}}
\right|_{\varepsilon=\varepsilon'}\cdot\varepsilon
=
O(m^2\varepsilon^2)
\end{multline*}
because $\sin z-z\cos z=O(z^3)$ and $\sin z> z/2$ for $0<z<\pi/2$.

Now we estimate $\omega_p$. By the Lagrange theorem,
there is $m'\in[0,m]$ such that
$$
\omega_0=\left.
\frac{\partial \omega_0}{\partial m}\right|_{m=m'}\cdot m
=\left.
\frac{1}{1+m^2\varepsilon^2}
\right|_{m=m'}\cdot m=m+O(m^3\varepsilon^2).
$$
Then by the estimate for the derivative $\frac{\partial \omega_p}{\partial p}$, for some $p'\in[0,p]$ we have
$$
\omega_p-\sqrt{p^2+m^2}=\omega_0-m+\left.
\left(\frac{\partial \omega_p}{\partial p}-\frac{p}{\sqrt{p^2+m^2}}\right)\right|_{p=p'}\cdot p
%=O(m^3\varepsilon^2)+p\,O(m^2\varepsilon^2)
=O(m^2\varepsilon^2\sqrt{p^2+m^2}).
\vspace{-1.1cm}
$$
\end{proof}

%%% Gap %%%%
%\begin{proof} \mscomm{!!! Gap so far (but the lemma seems to %be correct) !!!}
%By the Lagrange theorem we have
%$$\arccos^2 (1-z)=2z+z^2/3+O(z^3) \qquad\text{for}\qquad z\in %[0;1].$$
%Substituting
%$$z=1-\frac{\cos p\varepsilon}{\sqrt{1+m^2\varepsilon^2}}
%=1-\left(1-\frac{p^2\varepsilon^2}{2}
%+\frac{p^4\varepsilon^4}{24}+O(p^6\varepsilon^6)\right)
%\left(1-\frac{m^2\varepsilon^2}{2}
%+\frac{3m^4\varepsilon^4}{8}+O(m^6\varepsilon^6)\right)
%$$
%and expanding %up to $4$-th order in $\varepsilon$
%(see Wolfram), we get
%$$
%\omega^2_p=p^2+m^2-\frac{1}{3}m^2\varepsilon^2(p^2+2m^2)
%+O((p^2+m^2)^3\varepsilon^4)=
%(p^2+m^2)(1+O(m^2\varepsilon^2)).
%$$
%\mscomm{The latter equality is a serious gap. The whole proof %of the lemma requires changing !!!}
%Taking the square root, we get the required asymptotic %formula.
%\end{proof}
%%%

\textbf{Step 3.}
We have
\begin{multline*}
\hspace{-1.2cm}\int\limits_{0}^{2\pi/\varepsilon_+}
  \left(
  \frac{\varepsilon}{\sqrt{m^2\varepsilon^2+\sin^2(p\varepsilon)}}
  -\frac{1}{\sqrt{m^2+p^2}}
  \right)
  e^{i p x-i\sqrt{m^2+p^2}t}\,dp
=O\left(\int\limits_{0}^{2\pi/\varepsilon_+}p\varepsilon^2\,dp\right)
=O\left(\frac{\varepsilon^2}{\varepsilon_+^2}\right)
%=O\left(\frac{\varepsilon}{\varepsilon_+}\right)
=O\left(\frac{\varepsilon m(x+t)}{s}\right).
\end{multline*}
Here the first estimate is proved in the following lemma.

\begin{lemma} \label{l-difference} For $0\le p\le \pi/2\varepsilon$ we have
$\frac{\varepsilon}{\sqrt{\sin^2p\varepsilon+m^2\varepsilon^2}}-\frac{1}{\sqrt{p^2+m^2}}=O(p\varepsilon^2).
%\qquad\text{
%and
%}\qquad
%\frac{\partial }{\partial p}\left(
%\frac{\varepsilon}{\sqrt{\sin^2p\varepsilon+m^2\varepsilon^2}}
%-\frac{1}{\sqrt{p^2+m^2}}\right)
%=O(\varepsilon^2).
$
\end{lemma}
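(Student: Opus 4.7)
The plan is to rewrite both terms as values of a common function and then apply the mean value theorem. Set
\[
f(r):=\frac{1}{\sqrt{r^2+m^2}},
\qquad
q(p):=\frac{\sin(p\varepsilon)}{\varepsilon},
\]
so that the quantity we want to estimate equals $f(q(p))-f(p)$. For $p=0$ the difference vanishes and there is nothing to prove, so assume $0<p\le\pi/2\varepsilon$.

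By the mean value theorem, $f(q(p))-f(p)=f'(\xi)(q(p)-p)$ for some $\xi$ lying between $q(p)$ and $p$. The first step is to bound $\xi$ from below. Since $0\le p\varepsilon\le\pi/2$, the standard inequality $\sin u\ge 2u/\pi$ on $[0,\pi/2]$ gives $q(p)\ge 2p/\pi$, and clearly $q(p)\le p$; hence $\xi\ge 2p/\pi$. The second step is to bound $|f'(\xi)|$: since $f'(r)=-r/(r^2+m^2)^{3/2}$, we have the $m$-free estimate $|f'(r)|\le r/r^3=1/r^2$, so
\[
|f'(\xi)|\le \frac{1}{\xi^2}\le \frac{\pi^2}{4p^2}.
\]
The third step is to estimate $q(p)-p$ via the cubic Taylor bound $|\sin u-u|\le u^3/6$:
\[
|q(p)-p|=\frac{|\sin(p\varepsilon)-p\varepsilon|}{\varepsilon}\le\frac{(p\varepsilon)^3}{6\varepsilon}=O(p^3\varepsilon^2).
\]

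Multiplying the two bounds yields $|f(q(p))-f(p)|\le\frac{\pi^2}{4p^2}\cdot O(p^3\varepsilon^2)=O(p\varepsilon^2)$, as required.

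The only potentially delicate point is that the bound must be uniform in both $m$ and $p$ on the full interval $[0,\pi/2\varepsilon]$. This is handled by using the $m$-independent bound $|f'(r)|\le 1/r^2$ (rather than the sharper $r/(r^2+m^2)^{3/2}$) and by the sinusoidal lower bound $\sin u\ge 2u/\pi$ that keeps $\xi$ comparable to $p$ all the way down to $p=0$; the two factors of $p$ in the denominator of $|f'(\xi)|$ exactly cancel two of the three factors of $p$ from the cubic remainder. No case split in $m$ versus $p$ is needed.
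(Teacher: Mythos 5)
Your proof is correct. The paper's own argument has the same core ingredients — a one-variable mean value theorem combined with the cubic bound for the sine remainder and a linear lower bound for $\sin$ — but it interpolates in the lattice step rather than in the momentum: it writes the difference as $\left.\frac{\partial}{\partial\varepsilon}\frac{\varepsilon}{\sqrt{\sin^2p\varepsilon+m^2\varepsilon^2}}\right|_{\varepsilon=\varepsilon'}\cdot\varepsilon$ for some $\varepsilon'\in(0,\varepsilon)$, computes that derivative explicitly as $\frac{\sin p\varepsilon\,(\sin p\varepsilon-p\varepsilon\cos p\varepsilon)}{(\sin^2p\varepsilon+m^2\varepsilon^2)^{3/2}}$, and bounds it using $\sin z-z\cos z=O(z^3)$ and $z/2<\sin z<z$. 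You instead substitute $r=\sin(p\varepsilon)/\varepsilon$ and apply the mean value theorem to $f(r)=1/\sqrt{r^2+m^2}$ between $r=q(p)$ and $r=p$. The two routes are interchangeable and of comparable length; yours has the minor advantage that $f'(r)=-r/(r^2+m^2)^{3/2}$ is simpler than the $\varepsilon$-derivative and that uniformity in $m$ is transparent from the crude bound $|f'(r)|\le 1/r^2$, at the cost of introducing the auxiliary variable $q(p)$. Both give an absolute implied constant, as the paper's convention for the $O$-notation requires.
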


\begin{proof}
By the Lagrange theorem, for some $\varepsilon'\in (0;\varepsilon)$ we have
\begin{multline*}
\hspace{-0.5cm}  \frac{\varepsilon}{\sqrt{\sin^2p\varepsilon+m^2\varepsilon^2}} -\frac{1}{\sqrt{p^2+m^2}}
  = \left.\frac{\partial}{\partial\varepsilon} \frac{\varepsilon}{\sqrt{\sin^2p\varepsilon+m^2\varepsilon^2}}
  \right|_{\varepsilon=\varepsilon'} \cdot \varepsilon
  =%\\=
  \left.\frac{\sin p\varepsilon
  (\sin p\varepsilon - p\varepsilon\cos p\varepsilon)}
  {(\sin^2 p\varepsilon+m^2\varepsilon^2)^{3/2}}
  \right|_{\varepsilon=\varepsilon'} \cdot \varepsilon
  = O(p\varepsilon^2)
\end{multline*}
because $\sin z-z\cos z=O(z^3)$ and $z/2<\sin z<z$ for $0<z<\pi/2$.
%Analogously, \mscomm{!!! Add Wolfram link !!!}
%\begin{multline*}
%\frac{\partial }{\partial p}\left(
%\frac{\varepsilon}{\sqrt{\sin^2p\varepsilon+m^2\varepsilon^2}}
%-\frac{1}{\sqrt{p^2+m^2}}\right)
%%=
%%\frac{\varepsilon^2\sin(2p\varepsilon)}{2(\sin^2p\varepsilon+m^2\varepsilon^2)^{3/2}}
%%+\frac{p}{(p^2+m^2)^{3/2}}
%=
%\left.\frac{\partial^2 }{\partial \varepsilon\partial p}
%\frac{\varepsilon}{\sqrt{\sin^2p\varepsilon+m^2\varepsilon^2}}
%%\frac{\sin(2p\varepsilon)}{2(\sin^2p\varepsilon+m^2\varepsilon^2)^{3/2}}
%\right|_{\varepsilon=\varepsilon'} \cdot \varepsilon
%=\\=
%\left.
%\frac{m^2\varepsilon^3(2 p\varepsilon\cos 2p\varepsilon-\sin 2p\varepsilon)
%- \varepsilon \sin^2p\varepsilon(3 p\varepsilon+ p\varepsilon\cos 2p\varepsilon-2\sin 2p\varepsilon)}
%{(\sin^2p\varepsilon+m^2\varepsilon^2)^{5/2}}
%\right|_{\varepsilon=\varepsilon'} \cdot \varepsilon
%=O(\varepsilon^2).
%\end{multline*}
\end{proof}

For $\varepsilon<s/m(x+t)$ we need an additional bound:
%obtained by integration by parts:
%\textbf{Step 4.} Taking the derivative of the numerator and integrating by parts, we get
\begin{equation*}
  \int_{\pi/2\varepsilon_+}^{\pi/2\varepsilon}
  \left(
  \frac{\varepsilon}{\sqrt{m^2\varepsilon^2+\sin^2(p\varepsilon)}}
  -\frac{1}{\sqrt{m^2+p^2}}
  \right)
  e^{i p x-i\sqrt{m^2+p^2}t}\,dp
%=\\= \int_{\pi/2\varepsilon_+}^{\pi/2\varepsilon}
%  \sqrt{m^2+p^2}\cdot\frac{-i}{x\sqrt{m^2+p^2}-tp}\cdot
%  \left(
%  \frac{\varepsilon}{\sqrt{m^2\varepsilon^2+\sin^2(p\varepsilon)}}
%  -\frac{1}{\sqrt{m^2+p^2}}
%  \right)
%  \, de^{i p x-i\omega_pt}
%=\\=
%  O\left(\left.\frac{p\varepsilon^2}{|x-t|}
%  \right|_{\pi/2\varepsilon_+}^{\pi/2\varepsilon}
%  +
%  \int_{\pi/2\varepsilon_+}^{\pi/2\varepsilon}
%  \frac{\varepsilon^2}{|x-t|}
%  \,dp\right)
=O\left( \frac{\varepsilon}{|x-t|}\right)
\end{equation*}
obtained by applying Lemma~\ref{l-first-derivative-test}
for $g(p):={\varepsilon}/{\sqrt{m^2\varepsilon^2+\sin^2(p\varepsilon)}}-1/\sqrt{m^2+p^2}$
and $f(p)=p x-\sqrt{m^2+p^2}\,t$. The lower bound for the denominator of~\eqref{eq-l-first-derivative-test} is obtained by Lemma~\ref{l-omegaprime}. The numerator
is at most $4g(\pi/2\varepsilon)=O(\varepsilon+2\varepsilon/\pi)=O(\varepsilon)$ by the following lemma.
\end{proof}

\begin{lemma}
The function $g(p):={\varepsilon}/{\sqrt{m^2\varepsilon^2+\sin^2(p\varepsilon)}}-1/\sqrt{m^2+p^2}$
increases on $[0,\pi/2\varepsilon]$.
\end{lemma}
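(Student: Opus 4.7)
The plan is to reduce the monotonicity claim to a classical trigonometric inequality by a change of variables, and then verify that inequality by elementary Taylor bounds.

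First I would introduce $u:=p\varepsilon$ and $a:=m\varepsilon$ and rewrite
$$g(p)=\varepsilon\left(\frac{1}{\sqrt{a^2+\sin^2 u}}-\frac{1}{\sqrt{a^2+u^2}}\right)=:\varepsilon\,h(u),$$
so the claim becomes: $h$ is nondecreasing on $[0,\pi/2]$. Differentiating gives
$$h'(u)=\frac{u}{(a^2+u^2)^{3/2}}-\frac{\sin u\cos u}{(a^2+\sin^2 u)^{3/2}},$$
and $h'(u)\ge 0$ is equivalent to
$$u\,(a^2+\sin^2 u)^{3/2}\ge \sin u\cos u\,(a^2+u^2)^{3/2}.$$

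Next I would exploit the elementary identity
$$u^2(a^2+\sin^2 u)-\sin^2 u\,(a^2+u^2)=a^2(u^2-\sin^2 u)\ge 0,$$
which gives $(a^2+\sin^2 u)^{3/2}\ge (\sin u/u)^3(a^2+u^2)^{3/2}$. Substituting this lower bound into the target inequality and dividing both sides by $\sin u(a^2+u^2)^{3/2}>0$ reduces the problem to the purely trigonometric
$$\sin^2 u\ge u^2\cos u\qquad\text{for }u\in[0,\pi/2]. \eqno(\star)$$

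The main (and essentially only nontrivial) step is the inequality $(\star)$. I would verify it by setting $\psi(u):=\sin^2 u-u^2\cos u$, noting $\psi(0)=0$, and showing $\psi'(u)\ge 0$ on $[0,\pi/2]$. Indeed, $\psi'(u)=-2\cos u\,(u-\sin u)+u^2\sin u$, and combining the standard Taylor bounds $u-\sin u\le u^3/6$ and $\sin u\ge u-u^3/6$ (both valid on $[0,\pi/2]$) yields
$$\psi'(u)\ge -\tfrac{u^3}{3}+u^3-\tfrac{u^5}{6}=\tfrac{u^3(4-u^2)}{6}\ge 0,$$
since $u\le \pi/2<2$. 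This completes the reduction and proves the lemma. The only conceptually delicate point is $(\star)$ — a sharp form of the Cusa–Huygens-type inequality $(\sin u/u)^2\ge \cos u$ — which is handled by the Taylor estimates above; all remaining steps are routine manipulations.
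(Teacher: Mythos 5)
Your proof is correct, and it takes a genuinely different route from the paper's. The paper differentiates $g$ in $p$, notes that $\partial g/\partial p$ vanishes as $\varepsilon\to 0$, and then shows the mixed partial $\partial^2 g/\partial p\,\partial\varepsilon\ge 0$ --- the same Lagrange-theorem-in-$\varepsilon$ device it uses for the neighbouring lemmas --- with the resulting algebraic identity delegated to the auxiliary computation file and the final positivity obtained from $\sin z\le z\le\tan z$. You instead rescale to $u=p\varepsilon$, $a=m\varepsilon$, eliminate the parameter $a$ completely via the identity $u^2(a^2+\sin^2u)-\sin^2u\,(a^2+u^2)=a^2(u^2-\sin^2u)\ge0$, and reduce the whole lemma to the single parameter-free inequality $\sin^2u\ge u^2\cos u$ on $[0,\pi/2]$, which you then verify by Taylor bounds. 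Every step checks out: the derivative computation, the equivalence after clearing positive denominators, the substitution of the lower bound for $(a^2+\sin^2u)^{3/2}$, and the proof of $(\star)$ via $\psi'(u)\ge u^3(4-u^2)/6\ge 0$ (with $u=0$ handled trivially). What your route buys is a fully self-contained, externally-verifiable argument that isolates the real content of the lemma in a clean Cusa--Huygens-type inequality; note that $(\star)$, equivalently $\tan u\,\sin u\ge u^2$, does not follow termwise from $\sin z\le z\le\tan z$, so the extra Taylor estimates you supply are genuinely needed. What the paper's route buys is uniformity with its other asymptotic lemmas, at the cost of an unilluminating computation hidden in a supplementary file.
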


\begin{proof}
It suffices to prove that
$$
\frac{\partial g}{\partial p}
= -\frac{\varepsilon^2\sin(p\varepsilon)\cos(p\varepsilon)}{(m^2\varepsilon^2+\sin^2(p\varepsilon))^{3/2}}
  +\frac{p}{(m^2+p^2)^{3/2}}\ge 0.
$$
Since this expression clearly tends to $0$ as $\varepsilon\to 0$,
it suffices to prove that
$$
\frac{\partial^2 g}{\partial p\partial\varepsilon}
=
\frac{\varepsilon\left(2\sin^2 p \varepsilon
\left(p\varepsilon(1-\cos p\varepsilon)^2
+2(p\varepsilon-\sin p \varepsilon)\cos p\varepsilon\right)
+m^2 \varepsilon^2
(\sin 2 p \varepsilon-2p \varepsilon \cos 2 p\varepsilon)\right)}
{2(m^2\varepsilon^2+\sin^2(p\varepsilon))^{5/2}}
%=\frac{\varepsilon(5 p\varepsilon
%- 4 \varepsilon (1 + 2 m^2 \varepsilon^2)
%p \cos 2 p\varepsilon - p \varepsilon \cos 4 p\varepsilon -
%  4 \sin 2 p \varepsilon
%+ 4 m^2 \varepsilon^2\sin 2 p \varepsilon
%+2 \sin 4 p \varepsilon)}
%{8(m^2\varepsilon^2+\sin^2(p\varepsilon))^{5/2}}
\ge 0.
$$
The equality and the inequality follow from \cite[\S3]{SU-3}
and $\sin z\le z\le \tan z$ for $z\in[0,\pi/2]$.
\end{proof}

This completes the proof of~\eqref{eq1-th-continuum-limit}.
For the proof of~\eqref{eq2-th-continuum-limit} we need two lemmas
establishing the Fourier integral for the continuum model.

\begin{lemma}\label{l-feynman-fourier2} %\mscomm{!!! Prove or ref !!!}
Under notation~\eqref{eq-feynman-propagator} and \eqref{eq-massless}, for each
$m, t\ge 0$ and $x\ne \pm t$ we have
\begin{align*}
G^F_{12}(x,t,m)-G^F_{12}(x,t,0)&=
  \frac{1}{4\pi}
\int_{-\infty}^{+\infty}
  \left(
  \left(1+\frac{p}
  {\sqrt{m^2+p^2}}\right)
  e^{-i\sqrt{m^2+p^2}t}-
  \left(1+\mathrm{sgn}(p)
 \right)
  e^{-i|p|t}
  \right)
  e^{i p x}\,dp.
%%%%%
%  \frac{1}{4\pi}
%\int_{-\infty}^{+\infty}
%  %\left(
%  \left(1+\frac{p}
%  {\sqrt{m^2+p^2}}\right)
%  \left(e^{i|p|t-i\sqrt{m^2+p^2}t}-1\right)
%  e^{i p x-i|p|t}\,dp
%+\\&+\frac{1}{4\pi}
%\int_{-\infty}^{+\infty} \left(\frac{p}
%  {\sqrt{m^2+p^2}}-
%  \mathrm{sgn}(p)
%  %\frac{p}{|p|}
%  \right)
%  e^{i p x-i|p|t}%\right)
%  \,dp,
%%%%%%%
\end{align*}
where the integral is understood as conditionally convergent improper Riemann integral.
\end{lemma}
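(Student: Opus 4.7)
The plan is to express the right-hand side as a difference $J_m-J_0$ of Abel-regularized Fourier integrals, compute $J_0=G^F_{12}(x,t,0)$ explicitly, and reduce $J_m$ to a standard Hankel integral, analogously to the proof of Lemma~\ref{l-feynman-fourier}. The extra subtraction in the statement is forced by the fact that the naive Fourier integral for $G^F_{12}(x,t,m)$ is not conditionally convergent on its own. First I would check the conditional convergence of the combined integrand: for $p\to-\infty$ both $1+p/\sqrt{m^2+p^2}=O(m^2/p^2)$ and $1+\mathrm{sgn}(p)=0$, giving absolute decay, while for $p\to+\infty$ the expansion $\sqrt{m^2+p^2}=p+m^2/(2p)+O(1/p^3)$ produces a cancellation of the leading terms and an $O(1/p)$ remainder carrying the phase $e^{ip(x-t)}$. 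The first derivative test (Lemma~\ref{l-first-derivative-test}) then gives convergence for $x\ne t$ (the condition $x\ne-t$ is needed only to make the right-hand side of the lemma well-defined).

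Next I insert convergence factors $e^{-\eta\sqrt{m^2+p^2}}$ and $e^{-\eta|p|}$ into the two pieces, so that the right-hand side equals $\lim_{\eta\searrow 0}(J_m^\eta-J_0^\eta)$. The massless piece yields
$J_0=\frac{1}{2\pi}\lim_{\eta\searrow 0}\int_0^\infty e^{ip(x-t)-\eta p}\,dp=\frac{i}{2\pi(x-t)}=G^F_{12}(x,t,0)$
by~\eqref{eq-massless}. For $J_m$, assume first $|x|<|t|$ and (WLOG) $t>0$; substitute $p=m\sinh\phi$, noting that $(1+p/\sqrt{m^2+p^2})\,dp=me^{\phi}\,d\phi$, parametrize $t=s\cosh\alpha$, $x=s\sinh\alpha$ with $s=\sqrt{t^2-x^2}$ and $e^{\alpha}=(t+x)/s$, and then set $\psi=\phi-\alpha$. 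The phase becomes $-ims\cosh\psi$, the odd $\sinh\psi$ part drops out by parity, and
\begin{equation*}
J_m=\frac{m(t+x)}{2\pi s}\int_0^\infty\cosh\psi\cdot e^{-ims\cosh\psi}\,d\psi=-\frac{m(t+x)}{4s}\bigl(J_1(ms)-iY_1(ms)\bigr)=G^F_{12}(x,t,m),
\end{equation*}
where the middle equality is obtained by differentiating $\int_0^\infty e^{-iz\cosh\psi}\,d\psi=-\tfrac{i\pi}{2}H_0^{(2)}(z)$ (\cite[8.421.11]{Gradstein-Ryzhik-63}) in $z$ and using $(H_0^{(2)})'=-H_1^{(2)}$ together with $H_1^{(2)}=J_1-iY_1$, and the final equality is the formula in~\eqref{eq-feynman-propagator}. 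The region $|x|>|t|$ is handled by analytic continuation $t\mapsto t-i\eta$, $\eta\searrow 0$, using $H_0^{(2)}(iz)=-\tfrac{2i}{\pi}K_0(z)$, which produces the $K_1$ expression in~\eqref{eq-feynman-propagator}.

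The main obstacle will be justifying the limit interchanges rigorously: that the Abel-regularized integrals $J_m^\eta$ and $J_0^\eta$ converge, that the hyperbolic change of variables survives under the regulator (equivalently, under the replacement $t\mapsto t-i\eta$), and that $\eta\searrow 0$ can be pulled inside the resulting Bessel integral representation. This is handled by combining dominated convergence on compact momentum intervals with uniform tail bounds from Lemma~\ref{l-first-derivative-test}, exactly as in the proof of Lemma~\ref{l-feynman-fourier}.
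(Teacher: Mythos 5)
Your computation is correct, but it takes a genuinely different route from the paper. The paper does not evaluate the right-hand side directly: it writes $G^F_{12}(x,t,m)-G^F_{12}(x,t,n)=\left(\tfrac{\partial}{\partial t}-\tfrac{\partial}{\partial x}\right)\left(\tfrac{1}{m}G^F_{11}(x,t,m)-\tfrac{1}{n}G^F_{11}(x,t,n)\right)$ via the Bessel recurrences \cite[8.473.4,5]{Gradstein-Ryzhik-63}, plugs in the already-proved Fourier representation of $G^F_{11}$ (Lemma~\ref{l-feynman-fourier}), and differentiates under the integral sign, with the interchange justified by the uniform tail bounds of Lemma~\ref{l-tail}; the massless subtraction then appears as the limit $n\searrow 0$. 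Your approach instead attacks the integral head-on: Abel regularization, the explicit evaluation $J_0=i/2\pi(x-t)=G^F_{12}(x,t,0)$, and the hyperbolic substitution $p=m\sinh\phi$ with $(1+p/\sqrt{m^2+p^2})\,dp=me^{\phi}\,d\phi$ reducing $J_m$ to the Hankel integral $-\tfrac{\pi}{2}H_1^{(2)}(ms)$ — all of which I checked and which reproduces \eqref{eq-feynman-propagator} exactly. The paper's route buys a shorter argument that recycles Lemma~\ref{l-feynman-fourier} and needs only one new estimate (Lemma~\ref{l-tail}, which is reused later in the proof of \eqref{eq2-th-continuum-limit}); yours is more self-contained and makes the massless subtraction conceptually transparent, at the price of redoing the contour/substitution work and of the analytic continuation argument for $|x|>|t|$.

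One point in your write-up deserves more care than "dominated convergence plus tail bounds": you apply \emph{different} regulators, $e^{-\eta\sqrt{m^2+p^2}}$ and $e^{-\eta|p|}$, to the two separately divergent pieces. Abel's theorem for conditionally convergent integrals lets you insert a \emph{single} monotone bounded regulator into the combined integrand; to split afterwards you must additionally show that $\int f_m(p)\bigl(e^{-\eta\sqrt{m^2+p^2}}-e^{-\eta|p|}\bigr)\,dp\to 0$. This does hold — the difference of regulators is $O\bigl(\min(\eta m,\eta m^2/|p|)\bigr)$ and the integral is $O(\eta\log(1/\eta))$ — but it is a genuine extra step, not a consequence of the first derivative test.
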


\begin{proof}
This is the limiting case $n\searrow 0$ of the following formula:
\begin{multline*}
G^F_{12}(x,t,m)-G^F_{12}(x,t,n)
=\left(
\frac{\partial}{\partial t}-\frac{\partial}{\partial x}
\right)
\left(\frac{G^F_{11}(x,t,m)}{m}-\frac{G^F_{11}(x,t,n)}{n}\right)
=\\
%=
%\left(
%\frac{\partial}{\partial t}-\frac{\partial}{\partial x}
%\right)
%\frac{i}{4\pi}
%\int_{-\infty}^{+\infty}
%  \left(\frac{e^{i p x-i\sqrt{m^2+p^2}t}}
%  {\sqrt{m^2+p^2}}-
%\frac{e^{i p x-i\sqrt{n^2+p^2}t}}
%  {\sqrt{n^2+p^2}}
%  \right)\,dp
\hspace{-0.5cm}=
\int_{-\infty}^{+\infty}
\left(
\frac{\partial}{\partial t}-\frac{\partial}{\partial x}
\right)
  \left(\frac{e^{i p x-i\sqrt{m^2+p^2}t}}
  {\sqrt{m^2+p^2}}-
\frac{e^{i p x-i\sqrt{n^2+p^2}t}}
  {\sqrt{n^2+p^2}}
  \right)\frac{i \,dp}{4\pi}
=
\int_{-\infty}^{+\infty}
  \left(
  e^{ipx-i\sqrt{m^2+p^2}t}-
  e^{ipx-i\sqrt{n^2+p^2}t}
  \right)
  \,\frac{dp}{4\pi}
+\\+
\int_{-\infty}^{+\infty}
  \left(
  \frac{p}
  {\sqrt{m^2+p^2}}
  e^{-i\sqrt{m^2+p^2}t}-
  \frac{p}
  {\sqrt{n^2+p^2}}
  e^{-i\sqrt{n^2+p^2}t}
  \right)
  \frac{e^{i p x}\, dp}{4\pi}
%%%
%  \frac{1}{4\pi}
%\int_{-\infty}^{+\infty}
%  \left(
%  \left(1+\frac{p}
%  {\sqrt{m^2+p^2}}\right)
%  e^{-i\sqrt{m^2+p^2}t}-
%  \left(1+\frac{p}
%  {\sqrt{n^2+p^2}}\right)
%  e^{-i\sqrt{n^2+p^2}t}
%  \right)
%  e^{i p x}\,dp.
%%%
\end{multline*}
Here we first applied \cite[8.473.4,5]{Gradstein-Ryzhik-63},
then Lemma~\ref{l-feynman-fourier}. We can change the order of the differentiation and the integration
(and pass to the limit $n\searrow 0$ under the integral) by \cite[Proposition~6 in \S2.3 in Ch.~7]{Zorich-97}
because the latter two integrals converge uniformly
on compact subsets of $\mathbb{R}^2\setminus\{|x|=|t|\}$ by the following lemma.
\end{proof}

\begin{lemma} \label{l-tail} For each $m,n,x,t\ge 0$, $\alpha>0$, and $x\ne t$ we have
\begin{align*}
\int_{\alpha}^{+\infty}
  \left(
  \frac{p}
  {\sqrt{m^2+p^2}}
  e^{-i\sqrt{m^2+p^2}t}-
  \frac{p}
  {\sqrt{n^2+p^2}}
  e^{-i\sqrt{n^2+p^2}t}
  \right)
  e^{i p x}\,dp
  &=O\left(\frac{(m^2+n^2)t}{\alpha|x-t|}\right);\\
 \int_{\alpha}^{+\infty}
  \left(
  e^{ipx-i\sqrt{m^2+p^2}t}-
  e^{ipx-i\sqrt{n^2+p^2}t}
  \right)
  \,dp&=O\left(\frac{(m^2+n^2)t}{\alpha|x-t|}\right).
\end{align*}

%%%
%$$
%\int_{\alpha}^{+\infty}
%  \left(
%  \left(1+\frac{p}
%  {\sqrt{m^2+p^2}}\right)
%  e^{-i\sqrt{m^2+p^2}t}-
%  \left(1+\frac{p}
%  {\sqrt{n^2+p^2}}\right)
%  e^{-i\sqrt{n^2+p^2}t}
%  \right)
%  e^{i p x}\,dp
%  =O\left(\frac{(m^2+n^2)t}{\alpha|x-t|}\right).
%$$
%%%
\end{lemma}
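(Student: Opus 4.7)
The plan is to reduce each difference integral to a single oscillatory integral via the fundamental theorem of calculus in the mass parameter, and then apply the first derivative test (Lemma~\ref{l-first-derivative-test}).

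Starting from the identity
\[
e^{-i\sqrt{m^2+p^2}\,t}-e^{-i\sqrt{n^2+p^2}\,t}=-it\int_n^m\frac{s\,e^{-i\sqrt{s^2+p^2}\,t}}{\sqrt{s^2+p^2}}\,ds,
\]
and the analogous identity for the first integrand (which produces an additional amplitude $ps/(s^2+p^2)^{3/2}$ from the product rule), I would swap the $s$- and $p$-integrations (justified by truncation at a finite upper limit $R$ and passage to the limit, using the tail estimates derived a posteriori) to obtain
\[
I_2=-it\int_n^m s\,J_s\,ds,\qquad J_s:=\int_\alpha^\infty\frac{e^{if_s(p)}}{\sqrt{s^2+p^2}}\,dp,\qquad f_s(p):=px-\sqrt{s^2+p^2}\,t,
\]
and an analogous formula for $I_1$ involving an extra absolutely convergent summand of size $O((m^2-n^2)/\alpha)$. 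Showing that $|J_s|\le C/(\alpha|x-t|)$ uniformly in $s$ would then yield $|I_2|\le Ct(m^2+n^2)/(2\alpha|x-t|)$ by integrating $s\cdot|J_s|$ over $s\in[\min(m,n),\max(m,n)]$.

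To bound $J_s$, I apply Lemma~\ref{l-first-derivative-test}: the amplitude $g(p)=1/\sqrt{s^2+p^2}$ is non-negative and decreasing, with $g(\alpha)$ and $V_\alpha^\infty(g)$ both at most $1/\alpha$, so the numerator of~\eqref{eq-l-first-derivative-test} is at most $4/\alpha$. For the denominator, $f_s''(p)=-s^2t/(s^2+p^2)^{3/2}\le 0$, so $f_s'(p)=x-pt/\sqrt{s^2+p^2}$ is monotone decreasing from $f_s'(\alpha)$ to the limit $x-t$ at infinity. When $x>t$ one has $|f_s'(p)|\ge x-t=|x-t|$ throughout $[\alpha,\infty)$, and Lemma~\ref{l-first-derivative-test} immediately gives $|J_s|\le 4/(\alpha|x-t|)$, yielding the desired bound.

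The main obstacle is the case $x<t$. Here $f_s$ can have a stationary point $p_0(s):=sx/\sqrt{t^2-x^2}$ (for $x>0$; if $x=0$ there is none) inside $[\alpha,\infty)$, and the first derivative test fails when $p_0(s)\ge\alpha$. I would address this by splitting $[\alpha,\infty)$ into a ``near'' interval containing the stationary point and a ``far'' interval on which the monotonicity of $f_s'$ forces $|f_s'|$ to be at least a fixed fraction of $|x-t|$ (so Lemma~\ref{l-first-derivative-test} still applies). On the near piece either a direct application of Lemma~\ref{l-first-derivative-test} with the smaller (but nonzero) lower bound $|f_s'(\alpha)|$, or a van der Corput second-derivative estimate using $|f_s''(p)|\ge c(t^2-x^2)^{3/2}/(st^2)$ near $p_0$, gives a contribution of the same order $O(1/(\alpha|x-t|))$. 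Careful bookkeeping of the relative sizes of $\alpha$, $s$, $p_0(s)$, and $|x-t|$ in this split is the technical heart of the argument.
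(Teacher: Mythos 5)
Your reduction to the one\--parameter family $J_s$ is where the argument breaks down, and the difficulty you yourself flag as ``the technical heart'' is not merely technical: the uniform bound $|J_s|\le C/(\alpha|x-t|)$ that your plan requires is \emph{false}. When $0<x<t$ and the stationary point $p_0(s)=sx/\sqrt{t^2-x^2}$ lies well inside $[\alpha,+\infty)$, the stationary phase method gives the genuine asymptotics
$$|J_s|\;\asymp\;\frac{g(p_0)}{\sqrt{|f_s''(p_0)|}}\;=\;\frac{\sqrt{t^2-x^2}}{st}\cdot\frac{t\sqrt{s}}{(t^2-x^2)^{3/4}}\;=\;\frac{1}{\sqrt{s}\,(t^2-x^2)^{1/4}},$$
which does not decay in $\alpha$ at all. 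For example, with $t=2$, $x=1$, $n\asymp m\asymp\alpha$ and $\alpha$ large, every $J_s$ has $|J_s|\asymp\alpha^{-1/2}$ while your target is $C/\alpha$, and $t\int_n^m s\,|J_s|\,ds\asymp\alpha^{3/2}$ exceeds the claimed right\--hand side $\asymp\alpha$ by a factor $\sqrt{\alpha}$. Neither of your two proposed fixes closes this: on an interval containing $p_0$ the first derivative test is vacuous since $f_s'$ vanishes there, and the van der Corput second\--derivative bound gives $\|g\|_\infty|f_s''(p_0)|^{-1/2}\asymp\alpha^{-1}t\sqrt{s}\,(t^2-x^2)^{-3/4}$, which is not $O(1/(\alpha|x-t|))$ for large $s$ or $t$. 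The only way to rescue the scheme would be a second oscillation argument in $s$ (the stationary value is $f_s(p_0(s))=-s\sqrt{t^2-x^2}$), which you do not invoke and which degenerates as $x\to t$. The root cause is that differentiating in the mass and taking absolute values in $s$ destroys the cancellation between the two exponentials that the lemma actually relies on.

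The paper avoids the stationary point entirely by a different decomposition. It writes the integrand as an amplitude difference plus a phase difference: the term $\bigl(\tfrac{p}{\sqrt{m^2+p^2}}-\tfrac{p}{\sqrt{n^2+p^2}}\bigr)e^{ipx-i\sqrt{n^2+p^2}t}$ is bounded trivially by $\int_\alpha^\infty(m^2-n^2)p^{-2}\,dp=O((m^2+n^2)/\alpha)$, while the remaining term is written as $\int_\alpha^\infty g(p)e^{ip(x-t)}\,dp$ with the \emph{linear} phase $f(p)=p(x-t)$ and the amplitude $g(p)=\tfrac{p}{\sqrt{m^2+p^2}}\bigl(e^{ipt-i\sqrt{m^2+p^2}t}-e^{ipt-i\sqrt{n^2+p^2}t}\bigr)$. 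Here $\min|f'|=|x-t|$ exactly, so Lemma~\ref{l-first-derivative-test} applies with no stationary point, and the pointwise smallness $|g(p)|\le t(m^2-n^2)/2p$ together with the analogous bound on $\int_\alpha^\infty|g'|$ produces the factor $(m^2+n^2)t/\alpha$. If you want to salvage your write\--up, adopt this step: keep the two exponentials together inside the amplitude rather than separating them via the fundamental theorem of calculus in the mass.
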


\begin{proof} Assume $m\ge n$ without loss of generality.
Let us prove the first formula; the second one is proved analogously.
Rewrite the integral as a sum of two ones:
\begin{multline*}
  \hspace{-0.5cm}\int_{\alpha}^{+\infty}
  \left(\frac{p}
  {\sqrt{m^2+p^2}}-\frac{p}
  {\sqrt{n^2+p^2}}\right)
  e^{ipx-i\sqrt{n^2+p^2}t}
  \,dp+
  \int_{\alpha}^{+\infty}
  \frac{p\left(
  e^{ipt-i\sqrt{m^2+p^2}t}-
  e^{ipt-i\sqrt{n^2+p^2}t}
  \right)
  e^{i p (x-t)}\,dp}
  {\sqrt{m^2+p^2}}.
\end{multline*}
%%%
%
%
%\begin{multline*}
%  \int_{\alpha}^{+\infty}
%  \left(\frac{p}
%  {\sqrt{m^2+p^2}}-\frac{p}
%  {\sqrt{n^2+p^2}}\right)
%  e^{ipx-i\sqrt{n^2+p^2}t}
%  \,dp
%  +\\+
%  \int_{\alpha}^{+\infty}
%  \left(1+\frac{p}
%  {\sqrt{m^2+p^2}}\right)
%  \left(
%  e^{ipt-i\sqrt{m^2+p^2}t}-
%  e^{ipt-i\sqrt{n^2+p^2}t}
%  \right)
%  e^{i p (x-t)}\,dp.
%\end{multline*}

The first integral is estimated immediately as
$\int_{\alpha}^{+\infty}(m^2-n^2)\,dp/p^2=O\left((m^2+n^2)/\alpha\right)$.

To estimate the second integral, apply Lemma~\ref{l-first-derivative-test}
for $\beta\to+\infty$, $f(p):=p(x-t)$, and
\begin{align*}
g(p)&:=%\left(1+\frac{p}{\sqrt{m^2+p^2}}\right)
\frac{p}{\sqrt{m^2+p^2}}
\left(e^{ipt-i\sqrt{m^2+p^2}t}-e^{ipt-i\sqrt{n^2+p^2}t}\right).
\end{align*}
(Clearly, the lemma remains true for a complex-valued function $g$,
if %$g$ is replaced by $|\mathrm{Re}\,g|+|\mathrm{Im}\,g|$ in
the right side of~\eqref{eq-l-first-derivative-test} is multiplied by $2$.)
The right side of~\eqref{eq-l-first-derivative-test}
is not greater than
\begin{multline*}
\frac{4\int_{\alpha}^{+\infty}|g'(p)|\,dp}{|x-t|}
=\\=O\left(\int_{\alpha}^{+\infty}
\left(
\frac{m^2}{(m^2+p^2)^{3/2}}
\left(\sqrt{m^2+p^2}-\sqrt{n^2+p^2}\right)
+2-\frac{p}{\sqrt{n^2+p^2}}-\frac{p}{\sqrt{m^2+p^2}}\right)
\frac{t\,dp}{|x-t|}\right)
=\\=
O\left(\int_{\alpha}^{+\infty}
\left(1-\frac{p}{\sqrt{m^2+p^2}}\right)\frac{t\,dp}{|x-t|}\right)
=O\left(\int_{\alpha}^{+\infty}
\frac{m^2t\,dp}{|x-t| p^2}\right)
=O\left( \frac{(m^2+n^2)t}{|x-t|\alpha}\right),
\end{multline*}
where we use the Leibniz differentiation rule and the bounds
$e^{iz}-e^{iw}=O(|z-w|)$ for $z,w\in\mathbb{R}$,
$\frac{p}{\sqrt{m^2+p^2}}=O(1)$, and
$1-\frac{p}{\sqrt{m^2+p^2}}=O(\frac{m^2}{p^2})$.
\end{proof}

\begin{proof}[Proof of formula~\eqref{eq2-th-continuum-limit} in Theorem~\ref{th-continuum-limit}]
This is a modification of the proof of formula~\eqref{eq1-th-continuum-limit} above.
In particular, we use conventions from the first paragraph of that proof
except that now we assume that $(x+t)/\varepsilon$ is odd.
Use notation~\eqref{eq-feynman-propagator}, \eqref{eq-massless-lattice}, \eqref{eq-massless},
\eqref{eq-eps-plus}. Formula~\eqref{eq2-th-continuum-limit} follows from the estimates
obtained from Example~\ref{p-massless}, Corollary~\ref{cor-halve-interval},
Lemma~\ref{l-feynman-fourier2} and Steps~1--3 below:
%follows from the sequence of estimates
\begin{multline*}
\hspace{-0.5cm}\left|
\widetilde{A}_2(x,t,m,\varepsilon)
-4\bluevar{i}\varepsilon \, \mathrm{Im}G^F_{12}(x,t,m)
\right|
= \left|
\widetilde{A}_2(x,t,m,\varepsilon)-\widetilde{A}_2(x,t,0,\varepsilon)
-4\bluevar{i}\varepsilon\, \mathrm{Im}\left(G^F_{12}(x,t,m)-G^F_{12}(x,t,0)\right)
\right|
\\
\le\left|
\frac{\varepsilon}{\pi}
  \int_{-\pi/2\varepsilon}^{\pi/2\varepsilon}
  \left(\left(1+\frac{\sin p\varepsilon }
  {\sqrt{m^2\varepsilon^2+\sin^2 p\varepsilon}}\right)
  e^{-i\omega_pt}
  -
  \left(1+\mathrm{sgn}(p)
 \right)
  e^{-i|p|t}
  \right)
  e^{i p x}\,dp
%%%
%  \frac{\varepsilon}{\pi}
%  \int\limits_{-\pi/2\varepsilon}^{\pi/2\varepsilon}
%  \left(\left(1+\frac{\sin p\varepsilon }
%  {\sqrt{m^2\varepsilon^2+\sin^2 p\varepsilon}}\right)
%  \left(e^{i|p|t-i\omega_pt}-1\right)
%  + \frac{\sin p\varepsilon}
%  {\sqrt{m^2\varepsilon^2+\sin^2 p\varepsilon}}-
%  \mathrm{sgn}(p)
%  %\frac{p}{|p|}
%  \right)
%  e^{i p x-i|p|t}\,dp
-\right.\\
-\left. \frac{\varepsilon}{\pi}
\int_{-\infty}^{+\infty}
  \left(
  \left(1+\frac{p}
  {\sqrt{m^2+p^2}}\right)
  e^{-i\sqrt{m^2+p^2}t}-
  \left(1+\mathrm{sgn}(p)
 \right)
  e^{-i|p|t}
  \right)
  e^{i p x}\,dp
%%%
%  \int\limits_{-\infty}^{+\infty}
%  \left(\left(1+\frac{p}
%  {\sqrt{m^2+p^2}}\right)
%  \left(e^{i|p|t-i\sqrt{m^2+p^2}t}-1\right)
%  +\frac{p}{\sqrt{m^2+p^2}}-
%  \mathrm{sgn}(p)
%  %\frac{p}{|p|}
%  \right)
%  e^{i p x-i|p|t}\,dp
  \right|\\
\le
  \frac{\varepsilon}{\pi}
  \left|\int_{|p|\ge \pi/2\varepsilon}
  \left(
  \left(1+\frac{p}
  {\sqrt{m^2+p^2}}\right)
  e^{-i\sqrt{m^2+p^2}t}-
  \left(1+\mathrm{sgn}(p)
 \right)
  e^{-i|p|t}
  \right)
  e^{i p x}\,dp
%%%%
%  \left(\left(1+\frac{p}
%  {\sqrt{m^2+p^2}}\right)
%  \left(e^{i|p|t-i\sqrt{m^2+p^2}t}-1\right)
%  +\frac{p}{\sqrt{m^2+p^2}}-
%  \mathrm{sgn}(p)
%  %\frac{p}{|p|}
%  \right)
%  e^{i p x-i|p|t}\,dp
  \right|
+\\+
\frac{\varepsilon}{\pi}
  \left|
  \int_{|p|\le \pi/2\varepsilon}
  \left(
  1+\frac{\sin p\varepsilon}{\sqrt{m^2\varepsilon^2+\sin^2 p\varepsilon}}
  \right)
  \left(e^{-i\omega_pt}-e^{-i\sqrt{m^2+p^2}t}\right)
  e^{i p x}\,dp
  \right|
%+ \\
%+\frac{2\varepsilon}{\pi}
%  \left|
%  \int_{3m(|x|+|t|)/2s}^{\pi/2\varepsilon}
%  \left(
%  1+\frac{\sin p\varepsilon}{\sqrt{m^2\varepsilon^2+\sin^2 p\varepsilon}}
%  \right)
%  \left(e^{-i\omega_pt}-e^{-i\sqrt{m^2+p^2}t}\right)
%  e^{i p x}\,dp
%  \right|
+\\+
\frac{\varepsilon}{\pi}
  \left|
  \int_{|p|\le \pi/2\varepsilon}
  \left(
  \frac{\sin p\varepsilon}{\sqrt{m^2\varepsilon^2+\sin^2p\varepsilon}}
  -\frac{p}{\sqrt{m^2+p^2}}
  \right)
  e^{i p x-i\sqrt{m^2+p^2}t}\,dp
  \right|
=\\=\varepsilon\, O\left(%\frac{m^2\varepsilon^2}{|x-t|}+
\frac{m^2\varepsilon t}{|x-t|}\right)
+\varepsilon\, O\left(\frac{m^3\varepsilon (x+t)^2}{s}
+\frac{m^2\varepsilon t}{|x-t|}\right)
+ \varepsilon\, O\left(m^2\varepsilon\right)
= \frac{m\varepsilon(x+t)}{s}\,O\left(\varepsilon\Delta\right).
\end{multline*}
%Here the first equality follows from Example~\ref{p-massless}.
%The second inequality follows from Corollary~\ref{cor-halve-interval} and
%Lemma~\ref{l-feynman-fourier}.
%The third inequality is straightforward. The obtained $3$ integrals are estimated below in Steps~1--3 respectively.
Below we restrict the integrals to $p\ge 0$;
%to integration domain where $p\ge 0$;
the argument for %the symmetric domain where
$p<0$ is analogous.

\textbf{Step 1.} The integral over $p\ge 2\pi/\varepsilon$
is estimated in Lemma~\ref{l-tail} for $n=0$ and $\alpha=2\pi/\varepsilon$.
%Apply Lemma~\ref{l-first-derivative-test} twice
%for $\alpha:=2\pi/\varepsilon$, $\beta\to+\infty$, $f(p):=p(x-t)$,
%$g(p)=g_1(p):=\frac{p}{\sqrt{m^2+p^2}}-1$
%and $g(p)=g_2(p):=\left(1+\frac{p}{\sqrt{m^2+p^2}}\right)
%\left(e^{i|p|t-i\sqrt{m^2+p^2}t}-1\right)$.
%(Clearly, the lemma remains true for a complex-valued function $g$,
%if %$g$ is replaced by $|\mathrm{Re}\,g|+|\mathrm{Im}\,g|$ in
%the right side of~\eqref{eq-l-first-derivative-test} is multiplies by $2$.)

%Since $g_1(p)\nearrow 0$ as $p\to+\infty$, it follows that for $g=g_1$ the
%right side of~\eqref{eq-l-first-derivative-test} equals
%$4|g_1(2\pi/\varepsilon)|/|x-t|=O(m^2\varepsilon^2/|x-t|)$.

%For $g=g_2$ the right side of~\eqref{eq-l-first-derivative-test}
%is not greater than
%$$
%\frac{4\int_{\pi/2\varepsilon}^{+\infty}|g_2'(p)|\,dp}{|x-t|}
%=O\left(\int_{\pi/2\varepsilon}^{+\infty}
%\left(\frac{m^2}{(m^2+p^2)^{3/2}}(\sqrt{m^2+p^2}-p)
%+1-\frac{p}{\sqrt{m^2+p^2}}\right)
%\frac{t\,dp}{|x-t|}\right)
%%=O\left(\int_{\pi/2\varepsilon}^{+\infty}
%%\frac{m^2\,dp}{|x-t| p^2}\right)
%=O\left( \frac{m^2\varepsilon t}{|x-t|}\right).
%$$

\textbf{Step 2.} By Lemma~\ref{l-g-difference} we have
\begin{multline*}
\hspace{-0.5cm}\int_{0}^{\pi/2\varepsilon_+}
  \left(
  1+\frac{\sin p\varepsilon}{\sqrt{m^2\varepsilon^2+\sin^2 p\varepsilon}}
  \right)
  \left(e^{-i\omega_pt}-e^{-i\sqrt{m^2+p^2}t}\right)
  e^{i p x}\,dp
=O\left(
\int_{0}^{\pi/2\varepsilon_+}
  \left|1-e^{it(\omega_p-\sqrt{m^2+p^2})}\right|
  \,dp\right)
=\\=O\left(
\int_{0}^{\pi/2\varepsilon_+}
m^2\varepsilon^2(p+m)t\,dp\right)
=O\left(m^2\varepsilon^2t
\left(\frac{1}{\varepsilon_+^2}+\frac{m}{\varepsilon_+}\right)\right)
=O\left(\frac{m^2\varepsilon t}{\varepsilon_+}+m^3\varepsilon t\right)
=O\left(\frac{m^3\varepsilon (x+t)^2}{s}\right).
\end{multline*}

For $\varepsilon<s/m(x+t)$ in addition apply Lemma~\ref{l-first-derivative-test}
for $\alpha:=2\pi/\varepsilon_+$, $\beta:=2\pi/\varepsilon$,
$f(p):=px-\sqrt{m^2+p^2}\,t$,
$g(p):=\left(
  1+\frac{\sin p\varepsilon}{\sqrt{m^2\varepsilon^2+\sin^2 p\varepsilon}}
  \right)
  \left(e^{i\sqrt{m^2+p^2}t-i\omega_pt}-1\right)$.
The maximum in~\eqref{eq-l-first-derivative-test} is
estimated analogously to the previous paragraph using
the inequality $\varepsilon<s/m(x+t)\le 1/m$:
\begin{multline*}
\max_{[\pi/2\varepsilon_+;\pi/2\varepsilon]}|g|
=O\left(\max_{p\in [\pi/2\varepsilon_+;\pi/2\varepsilon]}
m^2\varepsilon^2(p+m)t\right)
=O\left( m^2\varepsilon^2\left(\frac{2\pi}{\varepsilon}+m\right)t\right)
= O(m^2\varepsilon t).%+m^3\varepsilon s(x+t)).
\end{multline*}
The variation in~\eqref{eq-l-first-derivative-test} is
estimated using the Leibniz rule and Lemma~\ref{l-g-difference}:
\begin{multline*}
V_{\pi/2\varepsilon_+}^{\pi/2\varepsilon}(g)
=\int_{\pi/2\varepsilon_+}^{\pi/2\varepsilon}|g'(p)|\,dp
=\\=
O\left(\int_{\pi/2\varepsilon_+}^{\pi/2\varepsilon}
\left(\frac{m^2\varepsilon^3\cos p\varepsilon}{(m^2\varepsilon^2+\sin^2 p\varepsilon)^{3/2}}
\left|\omega_p-\sqrt{m^2+p^2}\right|+
\left|\frac{\partial\omega_p}{\partial p}
-\frac{\partial \sqrt{m^2+p^2}}{\partial p}\right|
\right)
t\,dp\right)
=\\=
O\left(\frac{\pi}{2\varepsilon}\cdot m^2\varepsilon^2 t\right)
=O\left(m^2\varepsilon t\right).
\end{multline*}
The denominator of~\eqref{eq-l-first-derivative-test}
is estimated using Lemma~\ref{l-omegaprime}.
Thus $\int_{\alpha}^{\beta}g(p)e^{if(p)}\,dp
=O\left(%\frac{m^3\varepsilon (x+t)^2}{s}
{m^2\varepsilon t}/{|x-t|}\right)$.

\textbf{Step 3.} By Lemma~\ref{l-g-difference} we have
\begin{multline*}
\int_{0}^{\pi/2\varepsilon}
  \left(
  \frac{\sin p\varepsilon}{\sqrt{m^2\varepsilon^2+\sin^2p\varepsilon}}
  -\frac{p}{\sqrt{m^2+p^2}}
  \right)
  e^{i p x-i\sqrt{m^2+p^2}t}\,dp
=O\left(\frac{\pi}{2\varepsilon}\cdot m^2\varepsilon^2\right)
=O\left(m^2\varepsilon\right).\\[-1.6cm]
\end{multline*}
\end{proof}

\begin{proof}[Proof of Corollary~\ref{cor-anti-uniform}]
For $m>0$ this follows from Theorem~\ref{th-continuum-limit} because $\Delta$ is uniformly bounded and the limiting functions are uniformly continuous on each compact subset of $\mathbb{R}^2\setminus\{|x|=|t|\}$.
For $m=0$ this follows from Example~\ref{p-massless}.
\end{proof}

\subsection{Identities (Propositions~\ref{p-basis}--\ref{p-triple})}
%\subsection{Infinite-lattice identities}

\label{ssec-proofs-identities}

We first prove the results of \S\ref{ssec-identities}; then those of \S\ref{ssec-analytic} (except \bluevar{for} Proposition~\ref{th-equivalence} proved above).

\begin{proof}[Proof of Proposition~\ref{p-mass}]
This is obtained by substituting axiom~2 into axiom~1 in Definition~\ref{def-anti-alg}
and passing to the limit $\delta\searrow 0$ (cf.~\eqref{eq-substituted}).
%\mscomm{!!! replace !!!}
%Apply Proposition~\ref{cor-dirac} for $n=3$, $a=a_0$, and $f=f_k(x,t)$.
%The sum over paths $a\dots f$
%vanishes unless $(x\mod T\varepsilon,t\mod T\varepsilon)=(0,0)$ and $k=2$, when it equals $1$.
%Pass to the limit $T\to \infty$ and then to the limit $\delta\searrow 0$.
%Then $A(e\dots f)$ tends to $1/\sqrt{1+m^2\varepsilon^2}$, or $-im\varepsilon/\sqrt{1+m^2\varepsilon^2}$,
%or $0$ depending on the turns of the length-$3$ path $e\dots f$.
%Since the limits of $A(a\to f)$ and $A(a\to e)$ exist
%by Theorem~\ref{p-real-imaginary}, we get the required identity.
%%% WRONG!!!
%Apply Proposition~\ref{p-dirac-finite} for edges $a=a_0$ and $f=f_1$ or $f_2$
%defined right before~\eqref{eq-def-infinite-lattice-propagator} so that $P$ is even.
%Pass to the limit $T\to \infty$ and then to the limit $\delta\searrow 0$.
%Since limit~\eqref{eq-def-infinite-lattice-propagator} exists
%by Proposition~\ref{p-real-imaginary}, it follows that
%the limit of $A(a\to e)$ exists and equal to~\eqref{eq-def-infinite-lattice-propagator}.
%Applying Proposition~\ref{p-dirac-finite} for $e$ instead of $f$
%so that $P$ is now odd and pass to the same limit, we get the required identity.
\end{proof}

%The Klein-Gordon equation
Proposition~\ref{p-Klein-Gordon-mass} is deduced from the
previous %Dirac
one similarily to
\cite[Proof of Proposition~7]{SU-22}.

\begin{proof}[Proof of Proposition~\ref{p-Klein-Gordon-mass}]
Substituting $t=t+\varepsilon$ in Proposition~\ref{p-mass},
Eq.~\eqref{eq-Dirac-source1}, we get
$$
\tilde A_1(x,t+\varepsilon%,m,\varepsilon
) = \frac{1}{\sqrt{1+m^2\varepsilon^2}}
(\tilde A_1(x+\varepsilon,t%,m,\varepsilon
)
+ m \varepsilon\, \tilde A_2(x,t%,m,\varepsilon
)).
$$
Changing the signs of both $x$ and $t$ and applying Corollary~\ref{cor-symmetry}
we get for $(x,t)\ne (0,0)$
$$
\tilde A_1(x,t-\varepsilon%,m,\varepsilon
) = \frac{1}{\sqrt{1+m^2\varepsilon^2}}
(\tilde A_1(x-\varepsilon,t%,m,\varepsilon
)
- m \varepsilon\, \tilde A_2(x,t%,m,\varepsilon
)).
$$
Adding the resulting two identities we get the required identity for $k=1$.

The one for $k=2$ is proved analogously but we start with~\eqref{eq-Dirac-source2}.
The analogues of the above two identities hold for
$(x,t)\ne (0,-\varepsilon)$ and $(x,t)\ne (-\varepsilon,0)$ respectively.
\end{proof}

\begin{proof}[Proof of Proposition~\ref{p-symmetry-mass}]
This has been proved in Corollaries~\ref{cor-symmetry} and~\ref{cor-skew-symmetry}.
%%%
%All the identities except
%$(t-x)\widetilde{A}_2(x,t)=(t+x)\widetilde{A}_2(-x,t)$
%have been proved in Corollary~\ref{cor-symmetry}. It remains to prove
%the latter identity for $t>0$.
%
%Changing the sign of the
%integration variable in Proposition~\ref{th-equivalence}, we get %for $t>0$
%$$
%\widetilde{A}_2(-x,t)
%=
%\frac{\varepsilon}{2\pi}\int_{-\pi/\varepsilon}^{\pi/\varepsilon}
%\left(1-
%\frac{\sin (p\varepsilon)} {\sqrt{m^2\varepsilon^2+\sin^2(p\varepsilon)}}\right) e^{ipx-i\omega_pt}\,dp.
%$$
%Adding the expression for $\widetilde{A}_2(x,t)$ from Proposition~\ref{th-equivalence},
%integrating by parts, and using that
%$\omega_p'=\sin (p\varepsilon)/\sqrt{m^2\varepsilon^2+\sin^2(p\varepsilon)}$,
%we get the required identity in an equivalent form:
%\begin{multline*}
%\widetilde{A}_2(x,t)+\widetilde{A}_2(-x,t)
%=
%\frac{\varepsilon}{\pi}\int_{-\pi/\varepsilon}^{\pi/\varepsilon}
%e^{ipx}e^{-i\omega_pt}\,dp
%=
%-\frac{\varepsilon}{\pi}\int_{-\pi/\varepsilon}^{\pi/\varepsilon}
%\frac{e^{ipx}}{ix}(-it\omega_p')e^{-i\omega_pt}\,dp
%=\frac{t}{x}
%\left(\widetilde{A}_2(x,t)-\widetilde{A}_2(-x,t)\right).
%\end{multline*}
%%%
\end{proof}

\begin{proof}[Proof of Proposition~\ref{p-mass2}]
By Proposition~\ref{th-equivalence} and the Plancherel theorem we get
\begin{multline*}
\sum\limits_{x\in\varepsilon\mathbb{Z}}
\left(|\widetilde{A}_1\left(x,t%,m,\varepsilon
\right)|^2+
|\widetilde{A}_2\left(x,t%,m,\varepsilon
\right)|^2
\right)
=\\=\frac{\varepsilon}{2\pi}
  \int_{-\pi/\varepsilon}^{\pi/\varepsilon}
  \left(\left|
  \frac{m\varepsilon e^{-i\omega_pt}}
  {\sqrt{m^2\varepsilon^2+\sin^2(p\varepsilon)}}
  \right|^2
  +
  \left|
  \left(1+
  \frac{\sin (p\varepsilon)}
  {\sqrt{m^2\varepsilon^2+\sin^2(p\varepsilon)}}\right)
  e^{-i\omega_pt}
  \right|^2\right)\,dp
 =\\ =\frac{\varepsilon}{2\pi}
 \int_{-\pi/\varepsilon}^{\pi/\varepsilon}
  \left(2+\frac{2\sin(p\varepsilon)}
  {\sqrt{m^2\varepsilon^2+\sin^2(p\varepsilon)}}\right)
  \,dp
=2,
\end{multline*}
because the second summand in the latter integral is an odd function in $p$.
\end{proof}

\begin{proof}[Proof of Proposition~\ref{p-Huygens2}]
This follows from Proposition~\ref{th-equivalence} and the convolution theorem,
because
\begin{align*}
  2\frac{im\varepsilon e^{-i\omega_pt}}
  {\sqrt{m^2\varepsilon^2+\sin^2(p\varepsilon)}}
  &=  \left(1+
  \frac{\sin (p\varepsilon)}
  {\sqrt{m^2\varepsilon^2+\sin^2(p\varepsilon)}}\right)
  e^{-i\omega_pt'}\cdot
  \frac{im\varepsilon e^{-i\omega_p(t-t')}}
  {\sqrt{m^2\varepsilon^2+\sin^2(p\varepsilon)}}
  \\&+
  \frac{im\varepsilon e^{-i\omega_pt'}}
  {\sqrt{m^2\varepsilon^2+\sin^2(p\varepsilon)}}
  \cdot \left(1-
  \frac{\sin (p\varepsilon)}
  {\sqrt{m^2\varepsilon^2+\sin^2(p\varepsilon)}}\right)
  e^{-i\omega_p(t-t')},\\
\hspace{-0.5cm}  2\left(1+
  \frac{\sin (p\varepsilon)}
  {\sqrt{m^2\varepsilon^2+\sin^2(p\varepsilon)}}\right)
  e^{-i\omega_pt}
  &=\left(1+
  \frac{\sin (p\varepsilon)}
  {\sqrt{m^2\varepsilon^2+\sin^2(p\varepsilon)}}\right)
  e^{-i\omega_pt'}\cdot
  \left(1+
  \frac{\sin (p\varepsilon)}
  {\sqrt{m^2\varepsilon^2+\sin^2(p\varepsilon)}}\right)
  e^{-i\omega_p(t-t')}
  \\&- \frac{im\varepsilon e^{-i\omega_pt'}}
  {\sqrt{m^2\varepsilon^2+\sin^2(p\varepsilon)}}
  \cdot \frac{im\varepsilon e^{-i\omega_p(t-t')}}
  {\sqrt{m^2\varepsilon^2+\sin^2(p\varepsilon)}}.
  \\[-1.3cm]
  %\vspace{-0.5cm}
\end{align*}
\end{proof}

For the next proposition we need a lemma, which follows from
Definition~\ref{def-mass} and Theorem~\ref{th-well-defined}.

\begin{lemma}[Initial value] \label{l-initial}
For $k+x/\varepsilon$ even we have
$\widetilde{A}_k(x,0)=\delta_{k2}\delta_{x0}$.
\end{lemma}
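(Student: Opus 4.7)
The plan is to reduce the claim directly to the definition of Feynman's original propagator via Theorem~\ref{th-well-defined}. The parity assumption $k+x/\varepsilon$ even splits into exactly two sub-cases: either $k=1$ with $x/\varepsilon$ odd, or $k=2$ with $x/\varepsilon$ even. In both cases $(x+t)/\varepsilon=x/\varepsilon$ has the right parity to fall into the \emph{explicit} branch of Theorem~\ref{th-well-defined}, so I would rewrite
\[
\widetilde{A}_1(x,0,m,\varepsilon)=a_1(x,\varepsilon,m,\varepsilon),\qquad
\widetilde{A}_2(x,0,m,\varepsilon)=a_2(x+\varepsilon,\varepsilon,m,\varepsilon),
\]
using that $t=0\ge 0$ so no sign flip occurs.

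The rest is a direct evaluation of Feynman's sums $a_k(y,\varepsilon,m,\varepsilon)$ at the minimal possible time $t=\varepsilon$. By Definition~\ref{def-mass}, a checker path from $(0,0)$ to $(y,\varepsilon)$ has exactly two lattice points, and the requirement that the path contains $(\varepsilon,\varepsilon)$ forces its second (and last) point to equal $(\varepsilon,\varepsilon)$. Hence the sum is empty unless $y=\varepsilon$, and for $y=\varepsilon$ it reduces to a single path with no turns, giving $a(s)=(1+m^2\varepsilon^2)^{1-2/2}\,i\,(-im\varepsilon)^0=i$. Therefore $a_1(\varepsilon,\varepsilon,m,\varepsilon)=0$, $a_2(\varepsilon,\varepsilon,m,\varepsilon)=1$, and both vanish for $y\ne\varepsilon$.

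Plugging back: in the first sub-case one gets $\widetilde{A}_1(x,0)=a_1(x,\varepsilon)=0$ (irrespective of $x$), matching $\delta_{k2}\delta_{x0}=0$. In the second sub-case, $\widetilde{A}_2(x,0)=a_2(x+\varepsilon,\varepsilon)$ equals $1$ exactly when $x+\varepsilon=\varepsilon$, i.e.\ $x=0$, and vanishes otherwise; this is precisely $\delta_{k2}\delta_{x0}$. There is no genuine obstacle here: the only point requiring care is matching the parity hypothesis of the lemma with the parity conditions in Theorem~\ref{th-well-defined} (and observing that the time shift by $+\varepsilon$ inside $a_k$ is exactly what makes the support of $\widetilde{A}_k(\cdot,0)$ collapse onto the origin).
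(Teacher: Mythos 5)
Your proposal is correct and follows exactly the route the paper intends: the paper's own justification is the one-line remark that the lemma ``follows from Definition~\ref{def-mass} and Theorem~\ref{th-well-defined}'', and your write-up is precisely that reduction (parity bookkeeping, the sign convention at $t=0$, and the evaluation of $a_k(\cdot,\varepsilon)$ via the unique two-point path through $(\varepsilon,\varepsilon)$) carried out in full detail.
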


\begin{proof}[Proof of Proposition~\ref{p-Huygens}]
The proof is by induction on $t$.
The base $t=t'$ is Lemma~\ref{l-initial}.
%follows from Proposition~\ref{p-real-imaginary}
%and Definition~\ref{def-mass} which imply that $\widetilde{A}_1(x-x',0)=0$
%and $\widetilde{A}_2(x'-x,0)=\delta_{x'-x,0}$ for $x-x'$ odd and even
%respectively.
The inductive step follows from
\begin{multline*}
\widetilde{A}_1(x,t+\varepsilon)
=\frac{1}{\sqrt{1+m^2\varepsilon^2}}
(\tilde A_1(x+\varepsilon,t%,m,\varepsilon
)
+ m \varepsilon\, \tilde A_2(x,t%,m,\varepsilon
))
=\\=\sum\limits_{\substack{x'\in\varepsilon\mathbb{Z}:\\
(x+\varepsilon+x'+t+t')/\varepsilon \text{ odd}}}
\widetilde{A}_2(x',t')\widetilde{A}_1(x-x'+\varepsilon,t-t')
+\sum\limits_{\substack{x'\in\varepsilon\mathbb{Z}:\\
(x+\varepsilon+x'+t+t')/\varepsilon \text{ even}}}
\widetilde{A}_1(x',t')\widetilde{A}_2(x'-x-\varepsilon,t-t')
+\\+ m \varepsilon
\sum\limits_{\substack{x'\in\varepsilon\mathbb{Z}:\\
(x+\varepsilon+x'+t+t')/\varepsilon \text{ odd}}}
\widetilde{A}_2(x',t')\widetilde{A}_2(x-x',t-t')
 - m \varepsilon \sum\limits_{\substack{x'\in\varepsilon\mathbb{Z}:\\
(x+\varepsilon+x'+t+t')/\varepsilon \text{ even}}}
 \widetilde{A}_1(x',t')\widetilde{A}_1(x'-x,t-t')
=\\=
\sum\limits_{\substack{x'\in\varepsilon\mathbb{Z}:\\
(x+x'+t+\varepsilon+t')/\varepsilon \text{ odd}}}
\widetilde{A}_2(x',t')\widetilde{A}_1(x-x',t+\varepsilon-t')
+\sum\limits_{\substack{x'\in\varepsilon\mathbb{Z}:\\
(x+x'+t+\varepsilon+t')/\varepsilon \text{ even}}}
\widetilde{A}_1(x',t')\widetilde{A}_2(x'-x,t+\varepsilon-t')
\end{multline*}
and an analogous computation for $\widetilde{A}_2(x,t+\varepsilon)$.
Here the first and the last equality follow from Proposition~\ref{p-mass},
and the middle equality is the inductive hypothesis.
\end{proof}

\begin{proof}[Proof of Proposition~\ref{l-mean}]
Assume that $t\ge 0$; otherwise
perform the transformation $(x,t)\mapsto(-x,-t)$,
which preserves~\eqref{eq-mean1}--\eqref{eq-mean2}
by Proposition~\ref{p-symmetry-mass}.

Identity~\eqref{eq-mean2} is then obtained from
Propositions~\ref{p-symmetry-mass} and~\ref{th-equivalence}
as follows:
\begin{multline*}
2m\varepsilon x\widetilde{A}_2(x,t)=
m\varepsilon(t+x)\left(\widetilde{A}_2(x,t)-\widetilde{A}_2(-x,t)\right)
=\frac{(t+x)m\varepsilon^2}{\pi}\int_{-\pi/\varepsilon}^{\pi/\varepsilon}
\frac{\sin (p\varepsilon) e^{ipx-i\omega_pt}\,dp}
{\sqrt{m^2\varepsilon^2+\sin^2(p\varepsilon)}}
=\\= \frac{-i(t+x)m\varepsilon^2}{2\pi}
\int_{-\pi/\varepsilon}^{\pi/\varepsilon}
  \frac{\left(e^{ip(x+\varepsilon)}-e^{ip(x-\varepsilon)}\right)e^{-i\omega_pt}\,dp}
  {\sqrt{m^2\varepsilon^2+\sin^2(p\varepsilon)}}
=(t+x)\left(\widetilde{A}_1(x-\varepsilon,t)
-\widetilde{A}_1(x+\varepsilon,t)\right).
\end{multline*}
%where the minus signs are taken when $t<0$ and $(x+t)/\varepsilon$ is even.

To prove~\eqref{eq-mean1}, apply
Proposition~\ref{th-equivalence} and integrate by parts:
\mscomm{!!! ref to Wolfram !!!}
\begin{align}
\notag
2m\varepsilon x\widetilde{A}_1(x,t)&=
\frac{m^2\varepsilon^3}{\pi}
\int_{-\pi/\varepsilon}^{\pi/\varepsilon}
\frac{\left(ixe^{ipx}\right)e^{-i\omega_pt}\,dp}
  {\sqrt{m^2\varepsilon^2+\sin^2(p\varepsilon)}}
=-\frac{m^2\varepsilon^3}{\pi}
\int_{-\pi/\varepsilon}^{\pi/\varepsilon}
e^{ipx}\frac{\partial}{\partial p}\left(\frac{e^{-i\omega_pt}}
  {\sqrt{m^2\varepsilon^2+\sin^2(p\varepsilon)}}\right)\,dp
\\ \label{eq-proof-mean1}
&=
\frac{m^2\varepsilon^3}{\pi}
\int_{-\pi/\varepsilon}^{\pi/\varepsilon}
\frac{\sin (p\varepsilon)e^{ipx-i\omega_pt}}
  {m^2\varepsilon^2+\sin^2(p\varepsilon)}
\left(it+\frac{\varepsilon\cos(p\varepsilon)}
  {\sqrt{m^2\varepsilon^2+\sin^2(p\varepsilon)}}\right)
  \,dp;
\\ \notag
(x-t)\widetilde{A}_2(x,t)&=
\frac{-i\varepsilon}{2\pi}
\int_{-\pi/\varepsilon}^{\pi/\varepsilon}
\left(i(x-t)e^{ip(x-t)}\right)e^{-i(\omega_p-p)t}
\left(1+\frac{\sin(p\varepsilon)}
{\sqrt{m^2\varepsilon^2+\sin^2(p\varepsilon)}}\right)\,dp
\\ \notag
&=
\frac{i\varepsilon}{2\pi}
\int_{-\pi/\varepsilon}^{\pi/\varepsilon}
e^{ip(x-t)}\frac{\partial}{\partial p}
\left(e^{-i(\omega_p-p)t}
\left(1+\frac{\sin(p\varepsilon)}
{\sqrt{m^2\varepsilon^2+\sin^2(p\varepsilon)}}\right)\right)\,dp
\\ \label{eq-proof-mean2}
&=
\frac{im^2\varepsilon^3}{2\pi}
\int_{-\pi/\varepsilon}^{\pi/\varepsilon}
\frac{e^{ipx-i\omega_pt}\,dp}
  {m^2\varepsilon^2+\sin^2(p\varepsilon)}
\left(it+\frac{\varepsilon\cos(p\varepsilon)}
  {\sqrt{m^2\varepsilon^2+\sin^2(p\varepsilon)}}\right).
\end{align}
Substituting $x\pm\varepsilon$ for $x$ in~\eqref{eq-proof-mean2},
subtracting the resulting equalities, and adding~\eqref{eq-proof-mean1},
we get~\eqref{eq-mean1}.
\end{proof}

\begin{proof}[Proof of Proposition~\ref{p-triple}]
The first required identity follows from Proposition~\ref{l-mean}
by substituting $x\pm\varepsilon$ for $x$ in~\eqref{eq-mean2}
and inserting the resulting expressions into~\eqref{eq-mean1}.
The second one is obtained by the same argument with \eqref{eq-mean2} and
\eqref{eq-mean1} interchanged.
\end{proof}

\begin{proof}[Proof of Example~\ref{p-massless}]
This follows directly from Proposition~\ref{th-equivalence}.
\end{proof}

\begin{proof}[Proof of Example~\ref{ex-simplest-values}]
Eq.~\eqref{eq-Gauss}--\eqref{eq-lemniscate} are checked directly.
\mscomm{!!! ref to Wolfram !!!}
Table~\ref{table-a4} is filled inductively using Lemma~\ref{l-initial}
and Propositions~\ref{p-symmetry-mass}, \ref{l-mean}, \ref{p-mass}.
\end{proof}

\begin{proof}[Proof of Proposition~\ref{p-basis}]
The value $2^{|t|/2}\mathrm{Re}\,\widetilde{A}_k(x,t,1,1)$
is an integer by Theorem~\ref{th-well-defined} and Definition~\ref{def-mass}.
It remains to prove that $2^{|t|/2}\mathrm{Im}\,\widetilde{A}_k(x,t,1,1)$
is a rational linear combination of $G$ and $L'$ for $x+t+k$ odd;
otherwise the expression vanishes by Theorem~\ref{th-well-defined}.
By Proposition~\ref{p-symmetry-mass} we may assume that $x,t\ge 0$.

The proof is by induction on $t$. The base $t=0$ is proved by induction on $x$.
The base $(x,t)=(0,0)$ and $(1,0)$ is Example~\ref{ex-simplest-values}.
The step from $x$ to $x+1$ follows from Proposition~\ref{l-mean}.
Thus the assertion holds for $t=0$ and each $x$.
The step from $t$ to $t+1$ follows from Proposition~\ref{p-mass}.
\end{proof}

\begin{proof}[Proof of Proposition~\ref{p-mass5}]
It suffices to prove the proposition for $x,t\ge 0$ and $\varepsilon=1$.
Indeed, for $\varepsilon\ne 1$ perform the transformation
$(x,t,m,\varepsilon)\mapsto (x/\varepsilon,t/\varepsilon,m\varepsilon,1)$ which
preserves the required formulae by Corollary~\ref{cor-scaling}.
For $x<0$ change the sign of $x$. The left sides transform as shown in Proposition~\ref{p-symmetry-mass}.
By the identity $\binom{n}{k}=\frac{n}{n-k}\binom{n-1}{k}$
it follows that the right sides transform in the same way as the left sides.
For $t<0$ change the sign of $t$.
The left sides transform as in Proposition~\ref{p-symmetry-mass}.
By the Euler transformation
${}_2F_1\left(p,q;r;z\right)
=(1-z)^{r-p-q}{}_2F_1\left(r-p,r-q;r;z\right)
$ \cite[9.131.1]{Gradstein-Ryzhik-63}
and the identity $\binom{n}{k}=(-1)^{k}\binom{k-n-1}{k}$,
%it follows that
the right sides transform in the same way.

For $x,t\ge 0$ and $\varepsilon=1$ the proof is by induction on $t$.

Induction base: $t=0$. To compute $\widetilde{A}_k(x,0)$, consider the following $3$ cases:

Case 1: $x+k$ even. The required formula holds by Lemma~\ref{l-initial}
and the identities $\binom{(x+k-2)/2}{x}=\delta_{k2}\delta_{x0}$
and ${}_2F_1(0,1;1;z)=1$ for each $k\in\{1,2\}$, $0\le x\in\mathbb{Z}$, $z\in \mathbb{R}$.

Case 2: $x$ even, $k=1$. Recall that $\varepsilon=1$. Then the required identity follows from
\begin{align*}
\hspace{-0.7cm}
\widetilde{A}_1(x,0)
&=\frac{im}{2\pi}
  \int_{-\pi}^{\pi}
  \frac{e^{i p x}\,dp}
  {\sqrt{m^2+\sin^2p}}
=\frac{im}{\pi}
  \int_{0}^{2\pi}
  \frac{ \sqrt{z}\cos(q x/2)\,dq}
  {\sqrt{1-2z\cos q+z^2}}
=\frac{4im z^{(x+1)/2}}
{xB(1/2,x/2)}\cdot
{}_2F_1\left(\frac{1}{2},\frac{x+1}{2};
\frac{x}{2}+{1};z^2\right)
\\
&=
im (-1)^{x/2}2^{x+1}z^{(x+1)/2}
\binom{(x-1)/2}{x}
{}_2F_1\left(\frac{x+1}{2},\frac{1}{2};
\frac{x}{2}+{1};z^2\right)
\\
&= \frac{im (-1)^{x/2}2^{x+1} z^{(x+1)/2}}
{(1-z)^{x+1}}
\binom{(x-1)/2}{x}
{}_2F_1\left(\frac{x+1}{2},
\frac{x+1}{2};
1+{x};\frac{-4z}{(1-z)^2}\right)
\\
&=
i(-im)^{-x}
\binom{{(x-1)}/{2}}{x}
{}_2F_1\left(\frac{x+1}{2},
\frac{x+1}{2}; 1+x; -\frac{1}{m^2}\right).
\end{align*}
Here the first equality is Proposition~\ref{th-equivalence}.
The second equality is obtained by the change of the integration
variable $q:=2p$, a transformation the denominator
using the notation
$z:=(\sqrt{1+m^2}-m)^2$,
dropping the odd function containing $\sin (qx/2)$,
and halving the integration interval for the remaining even function.
The third equality is obtained by applying
\cite[9.112]{Gradstein-Ryzhik-63}
for $p=1/2$ and $n=x/2$.
The fourth equality is obtained by evaluation of the beta-function
\cite[8.384.1,8.339.1--2]{Gradstein-Ryzhik-63}
and applying \cite[9.131.1]{Gradstein-Ryzhik-63}.
The fifth equality is obtained by applying
\cite[9.134.3]{Gradstein-Ryzhik-63} (with the sign of $z$ changed).
The last equality follows from $4z/(1-z)^2=1/m^2$.
%\mscomm{!!! Write details !!!} The base is given by
%\cite[9.112, % (for $p=1/2$, $z=(\sqrt{1+m^2\varepsilon^2}-m\varepsilon)^2$, $n=x/2$),
%9.131.1, 9.134.3, 9.137.4]{Gradstein-Ryzhik-63}

Case 3: $x$ odd, $k=2$. By Case~2, Proposition~\ref{l-mean}, and
\cite[9.137.15]{Gradstein-Ryzhik-63} we get the identity
\begin{multline*}
%\hspace{-0.5cm}
\widetilde{A}_2(x,0)
=\frac{1}{2m}\widetilde{A}_1(x-1,0)
-\frac{1}{2m}\widetilde{A}_1(x+1,0)
=
%\frac{1}{2\pi}
%  \int_{-\pi}^{\pi}
%  \left(1+\frac{\sin p}
%  {\sqrt{m^2+\sin^2p}}\right)e^{i p x}\,dp
%=\frac{i\sqrt{z}}{2\pi}
%  \int_{0}^{2\pi}
%  \frac{\cos(q (x+1)/2)-\cos(q (x-1)/2)}
%  {\sqrt{1-2z\cos q+z^2}}\,dq
%\\
%&=\frac{2z^{x/2+1}/(x+1)}
%{B(1/2,(x+1)/2)}
%{}_2F_1\left(\frac{1}{2},\frac{x+2}{2};
%\frac{x+3}{2};z^2\right)
%-\frac{2z^{x/2}/(x-1)}
%{B(1/2,(x-1)/2)}
%{}_2F_1\left(\frac{1}{2},\frac{x}{2};
%\frac{x+1}{2};z^2\right)
%\\
%&=
\frac{i(-im)^{1-x}}{2m}
\binom{x/{2}-1}{x-1}
{}_2F_1\left(\frac{x}{2},
\frac{x}{2}; x; -\frac{1}{m^2}\right)
-\\-
\frac{i(-im)^{-x-1}}{2m}
\binom{x/{2}}{x+1}
{}_2F_1\left(\frac{x}{2}+1,
\frac{x}{2}+1; x+2; -\frac{1}{m^2}\right)
=
(-im)^{-x}
\binom{x/{2}}{x}
{}_2F_1\left(\frac{x}{2},
1+\frac{x}{2}; 1+x; -\frac{1}{m^2}\right).
\end{multline*}

Induction step. Using Proposition~\ref{p-mass}, the inductive hypothesis,
and \cite[9.137.11]{Gradstein-Ryzhik-63} we get
\begin{multline*}
\tilde A_1(x,t%,m,\varepsilon
) = \frac{1}{\sqrt{1+m^2}}
(\tilde A_1(x+1,t-1%,m,\varepsilon
)
+ m\, \tilde A_2(x,t-1%,m,\varepsilon
))
=\\=
i\left({1+m^2}\right)^{-\frac{t}{2}}
(-im)^{{t-x-2}}\binom{\frac{t+x-1}{2}}{x+1}
{}_2F_1\left(\frac{x-t+3}{2},\frac{x-t+3}{2};x+2;-\frac{1}{m^2}\right)+
\\+
m\left({1+m^2}\right)^{-\frac{t}{2}}
(-im)^{t-x-1}\binom{\frac{t+x-1}{2}}{x}
{}_2F_1\left(\frac{x-t+1}{2},\frac{x-t+3}{2};x+1;-\frac{1}{m^2}\right)
=\\=
i\left({1+m^2}\right)^{-\frac{t}{2}}
(-im)^{{t-x}}\binom{\frac{t+x-1}{2}}{x}
{}_2F_1\left(\frac{x-t+1}{2},\frac{x-t+1}{2};x+1;-\frac{1}{m^2}\right).
\end{multline*}
For $\tilde A_2(x,t)$ the step is analogous; it uses
~\cite[9.137.18]{Gradstein-Ryzhik-63} for $x\ne 0$
and~\cite[9.137.12]{Gradstein-Ryzhik-63} for $x\!=\!0$.
\end{proof}

\subsection{Combinatorial definition %and loop expansion
(Theorem~\ref{p-real-imaginary},
Propositions~\ref{p-initial}--\ref{cor-dirac})}
\label{ssec-proofs-fourier}

In this section we compute full space-time Fourier transform of the finite-lattice propagator
(Proposition~\ref{cor-double-fourier-finite}) \blueit{and}
use it to prove some identities
(Corollary~\ref{cor-symmetry-finite}, Propositions~\ref{p-initial}--\ref{cor-dirac})
and Theorem~\ref{p-real-imaginary}.
We follow the classical approach known from \blueit{the}
Kirchhoff matrix-tree theorem, the Kasteleyn and Kenyon theorems \cite{Levine-11,Kenyon-02}.
%and a plenty of other similar results.
Namely, the solution of Dirac equation on the finite lattice \blueit{(Proposition~\ref{p-dirac-finite})}
is expressed through determinants, interpreted combinatorially
via loop expansion, and computed explicitly via Fourier transform.

%In what follows
\begin{notation*}
%Denote by $Z$ the denominator of~\eqref{eq-def-finite-lattice-propagator}.
Let $e_1=e_1(x,t)\perp(1,1)$ and $e_2=e_2(x,t)\parallel (1,1)$
be the two edges ending at a lattice point $(x,t)$;
cf.~Figure~\ref{fig-1x1} to the right. Denote $b_k:=e_k(0,0)$ and
$x^*:=x+\frac{\varepsilon}{2}$, $t^*:=t+\frac{\varepsilon}{2}$.
\end{notation*}

\begin{proposition}[Full space-time Fourier transform] \label{cor-double-fourier-finite}
The denominator of~\eqref{eq-def-finite-lattice-propagator} is nonzero.
For each %$m>0$ and
even lattice point $(x,t)$ we have
%$(x,t)\in \varepsilon\mathbb{Z}^2$ we have
\begin{align}\notag
  A(b_2\to e_1)&=
  \frac{-i}{2T^2}
\sum_{p,\omega\in
\faktor{(2\pi/T\varepsilon)\mathbb{Z}}{(2\pi/\varepsilon)\mathbb{Z}}}
  \frac{m\varepsilon -i\delta\varepsilon e^{ip\varepsilon}}
  {\sqrt{1-\delta^2\varepsilon^2}\sqrt{1+m^2\varepsilon^2}\cos(\omega\varepsilon)
  -\cos(p\varepsilon)-im\varepsilon^2\delta}e^{i p x-i\omega t},\\ \notag
  A(b_2\to e_2)&=
\frac{-i}{2T^2}
\sum_{p,\omega\in
\faktor{(2\pi/T\varepsilon)\mathbb{Z}}{(2\pi/\varepsilon)\mathbb{Z}}}
  \frac{\sqrt{1-\delta^2\varepsilon^2}\sqrt{1+m^2\varepsilon^2}\sin(\omega\varepsilon)+\sin(p\varepsilon)}
  {\sqrt{1-\delta^2\varepsilon^2}\sqrt{1+m^2\varepsilon^2}\cos(\omega\varepsilon)
  -\cos(p\varepsilon)-im\varepsilon^2\delta}
  e^{i p x-i\omega t}
  +\frac{1}{2}\delta_{x0}\delta_{t0}.\\
\intertext{For each %$m>0$ and
odd lattice point $(x,t)$ we have}
\notag
  A(b_2\to e_1)&=
 \frac{-i}{2T^2}
 \sum_{p,\omega\in
 \faktor{(2\pi/T\varepsilon)\mathbb{Z}}{(2\pi/\varepsilon)\mathbb{Z}}}
 \frac{m\varepsilon\sqrt{1-\delta^2\varepsilon^2} e^{i\omega\varepsilon}-i\delta\varepsilon\sqrt{1+m^2\varepsilon^2}}
 {\sqrt{1-\delta^2\varepsilon^2}\sqrt{1+m^2\varepsilon^2}\cos(\omega\varepsilon)
  -\cos(p\varepsilon)-im\varepsilon^2\delta}e^{i p x^*-i\omega t^*},\\
\notag
  A(b_2\to e_2)&=
  \frac{1}{2T^2}
\sum_{p,\omega\in
\faktor{(2\pi/T\varepsilon)\mathbb{Z}}{(2\pi/\varepsilon)\mathbb{Z}}}
  \frac{\sqrt{1+m^2\varepsilon^2}e^{-ip\varepsilon}
  -\sqrt{1-\delta^2\varepsilon^2}e^{i\omega\varepsilon}}
  {\sqrt{1-\delta^2\varepsilon^2}\sqrt{1+m^2\varepsilon^2}\cos(\omega\varepsilon)
  -\cos(p\varepsilon)-im\varepsilon^2\delta}
  e^{i p x^*-i\omega t^*}.
%\\  \label{eq-Zne0}
%Z&=2^{T^2}\prod_{p,\omega\in
%\faktor{(2\pi/T\varepsilon)\mathbb{Z}}{(2\pi/\varepsilon)\mathbb{Z}}}
%\left(
%\cos(\omega\varepsilon)
%  -\frac{\cos(p\varepsilon)+im\varepsilon\delta}
%  {\sqrt{1-\delta^2}\sqrt{1+m^2\varepsilon^2}}
%\right)
%\ne 0.
\end{align}
\end{proposition}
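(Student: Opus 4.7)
The plan is to mirror the Fourier diagonalization of Proposition~\ref{l-double-fourier-finite}, replacing Fourier integrals on $\varepsilon\mathbb{Z}^2$ by discrete Fourier sums on the finite torus. The new ingredient specific to the combinatorial setting is a loop-expansion identity of Kasteleyn--Kenyon type, which identifies the ratio~\eqref{eq-def-finite-lattice-propagator} with an entry of the inverse of an explicit matrix.

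First, I would introduce a complex matrix $K=K(m,\varepsilon,\delta,T)$ indexed by the $4T^2$ edges of the torus. The diagonal is the identity; the entry $K_{f,e}$ encodes the local weight $-A(ef)$ read off from Proposition~\ref{p-dirac-finite} whenever the endpoint of $e$ is the starting point of $f$, and is zero otherwise. The planned content of Lemma~\ref{l-loop-expansion} is the identity
\begin{equation*}
\det K=\sum_{\text{loop configurations }S}A(S,m,\varepsilon,\delta),
\qquad
\det K\cdot A(a\to f)=\sum_{\substack{\text{loop configurations $S$}\\\text{with source $a$, sink $f$}}}A(S,m,\varepsilon,\delta),
\end{equation*}
obtained by expanding $\det K$ via the Leibniz formula and grouping cycles of each permutation into loops, the permutation signs combining with the explicit minuses in~\eqref{eq-def-anti3} so that the weight of every loop configuration is counted exactly once. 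Consequently $A(a\to f)=(K^{-1})_{f,a}$.

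The matrix $K$ is invariant under the natural $(\varepsilon\mathbb{Z}/T\varepsilon\mathbb{Z})^2$ translation action, so the discrete Fourier transform block-diagonalizes it into $2\times 2$ blocks, one per Fourier mode $(p,\omega)\in\faktor{(2\pi/T\varepsilon)\mathbb{Z}}{(2\pi/\varepsilon)\mathbb{Z}}$; the two rows in each block correspond to the two edge directions. By exactly the substitution that produced~\eqref{eq-transformed-equation} in the proof of Proposition~\ref{l-double-fourier-finite}, each $2\times 2$ symbol has determinant proportional to
\begin{equation*}
D(p,\omega):=\sqrt{1-\delta^2}\sqrt{1+m^2\varepsilon^2}\cos(\omega\varepsilon)-\cos(p\varepsilon)-im\varepsilon\delta.
\end{equation*}
Since $\operatorname{Im} D(p,\omega)=-m\varepsilon\delta\neq 0$ for $m,\varepsilon,\delta>0$, every block is invertible, so $\det K\neq 0$; this is the nonvanishing of the denominator in~\eqref{eq-def-finite-lattice-propagator}. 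Closed-form inversion of each $2\times 2$ block yields exactly the same rational functions of $e^{ip\varepsilon}$ and $e^{i\omega\varepsilon}$ as in the infinite-lattice case, and summing the inverse Fourier series over the finite grid gives the stated formulas for $A(b_2\to e_1)$ and $A(b_2\to e_2)$; the additive $\tfrac{1}{2}\delta_{x0}\delta_{t0}$ term reflects the identity part of $K$ when $a=f=b_2$, in agreement with Proposition~\ref{p-initial}.

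The hard part will be the loop-expansion identity in the first step: a careful sign-and-multiplicity accounting is needed so that (i) the cycles of a permutation assemble into oriented loops with the correct overall sign from~\eqref{eq-def-anti3}, (ii) each unordered loop is counted with the correct multiplicity despite the freedom of choosing its starting edge, and (iii) the edge-disjointness condition of Definition~\ref{def-anti-combi} is enforced automatically by the fact that each edge appears in at most one cycle of any given permutation. Once Lemma~\ref{l-loop-expansion} is in hand, the remaining Fourier computation is essentially identical to the proof of Proposition~\ref{l-double-fourier-finite}.
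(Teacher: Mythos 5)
Your proposal matches the paper's proof essentially step for step: your loop-expansion identity is the paper's Lemma~\ref{l-loop-expansion} (with $K=I-U$), the identification $A(a\to f)=(K^{-1})_{fa}$ via cofactors/Cramer's rule is the same, and the discrete Fourier diagonalization repeating the computation of Proposition~\ref{l-double-fourier-finite} is exactly how the paper obtains both the explicit sums and the nonvanishing of $Z=\det(I-U)$ (the paper phrases the latter as uniqueness of the solution of the transformed system, and its remark after the proposition records precisely your product-of-block-determinants formula for $Z$). The only cosmetic differences are that the paper routes the Fourier step through the lattice-point axioms~1--2 via Lemma~\ref{l-equivalence-equations} instead of block-diagonalizing the edge-indexed matrix directly, and that the $(\varepsilon\mathbb{Z}/T\varepsilon\mathbb{Z})^2$-Fourier blocks of $K$ are really $4\times 4$, collapsing to the $2\times 2$ symbols with determinant proportional to $D(p,\omega)$ only after the odd-sublattice elimination you invoke.
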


The proposition follows from the next $3$ lemmas.
The first one is proved completely analogously to Proposition~\ref{l-double-fourier-finite},
only the Fourier series is replaced by the discrete Fourier transform.

\begin{lemma}[Full space-time Fourier transform] \label{l-discrete-fourier}
There exists a unique pair of functions $A_k(x,t)$ on the lattice of size $T$
satisfying axioms~1--2 from Definition~\ref{def-anti-alg}. It is given
by the expressions from Proposition~\ref{l-double-fourier-finite}, only
the integrals are replaced by the sums over
$\faktor{(2\pi/T\varepsilon)\mathbb{Z}}{(2\pi/\varepsilon)\mathbb{Z}}$,
and the factors $\varepsilon^2/4\pi^2$ are replaced by $1/T^2$.
\end{lemma}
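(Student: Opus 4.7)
The plan is to repeat the proof of Proposition~\ref{l-double-fourier-finite} verbatim, with the Fourier series on $\varepsilon\mathbb{Z}^2$ replaced by the finite discrete Fourier transform on the torus $\varepsilon\mathbb{Z}^2/T\varepsilon\mathbb{Z}^2$. The key observation enabling this is that the two proofs differ only in harmonic analysis (continuous vs.\ discrete Pontryagin dual); the algebra of the $2\times 2$ linear system~\eqref{eq-transformed-equation} is identical. First I would substitute axiom~2 into axiom~1 to obtain the closed recurrence~\eqref{eq-substituted} for the restrictions of $A_1, A_2$ to the even sublattice of the torus. It suffices to solve this system on the even sublattice, since axiom~2 then uniquely determines the values at the odd sites.

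Second, I would apply the discrete Fourier transform
\begin{equation*}
\widehat{A}_k(p,\omega) := \sum_{(x,t)\in\varepsilon\mathbb{Z}^2/T\varepsilon\mathbb{Z}^2}
A_k(x,t)\, e^{-ipx+i\omega t},
\qquad (p,\omega)\in\faktor{(2\pi/T\varepsilon)\mathbb{Z}}{(2\pi/\varepsilon)\mathbb{Z}},
\end{equation*}
which is a linear bijection between two copies of $\mathbb{C}^{T^2}$ by the discrete Plancherel theorem. Under this bijection the shifts $x\mapsto x\pm\varepsilon$ and $t\mapsto t-\varepsilon$ are taken to multiplication by $e^{\pm ip\varepsilon}$ and $e^{i\omega\varepsilon}$, and the term $\delta_{x0}\delta_{t0}$ is taken to $1$. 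Thus the same algebraic manipulation as in the proof of Proposition~\ref{l-double-fourier-finite} reduces~\eqref{eq-substituted} to system~\eqref{eq-transformed-equation}, now considered pointwise at each of the $T^2$ discrete frequencies.

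Third, I would solve this $2\times 2$ system at each frequency. Its determinant equals the denominator appearing in the formulas of the lemma. Since $m,\varepsilon,\delta>0$ and $\delta<1$, the imaginary part $-m\varepsilon\delta$ of that denominator is strictly negative, so the denominator is nonzero at every frequency of the dual torus. This gives both existence and uniqueness of $\widehat{A}_1,\widehat{A}_2$, and the solution is the same closed-form pair as in the proof of Proposition~\ref{l-double-fourier-finite}. Inverting the discrete Fourier transform by the standard formula (which introduces the factor $1/T^2$ in place of $\varepsilon^2/4\pi^2$) yields the claimed expressions for $A_k(x,t)$ at the even lattice points, and axiom~2 provides the formulas at the odd lattice points by direct substitution.

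The main obstacle is essentially nonexistent: the proof is a mechanical adaptation. The only point requiring attention is nonvanishing of the denominator at every discrete frequency, which is immediate from the nonzero imaginary part $-im\varepsilon\delta$; in the continuous case of Proposition~\ref{l-double-fourier-finite} one only needed nonvanishing almost everywhere, whereas here one genuinely needs it everywhere, but the same estimate suffices.
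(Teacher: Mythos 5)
Your proposal is correct and matches the paper's approach exactly: the paper itself disposes of this lemma with the single remark that it ``is proved completely analogously to Proposition~\ref{l-double-fourier-finite}, only the Fourier series is replaced by the discrete Fourier transform,'' which is precisely the adaptation you carry out. Your added observation that the denominator must be nonzero at \emph{every} discrete frequency (not merely almost everywhere) and that the strictly negative imaginary part $-m\varepsilon\delta$ guarantees this is a correct and worthwhile point of care.
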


For combinatorial interpretation, we pass from functions
on the lattice to functions on edges.

\begin{lemma}[Equivalence of equations; \bluevar{cf.~Proposition~\ref{p-dirac-finite}}] \label{l-equivalence-equations}
Functions $A_k(x,t)$ on the lattice of size $T$
satisfy axioms~1--2 from Definition~\ref{def-anti-alg}
if and only if the function
$\alpha(e_k(x,t)):=-i^k A_k(x,t)/2$ on the set of edges
satisfies the equation
\begin{equation}\label{eq-l-equivalence-equations}
\alpha(f)= \alpha(e)A(ef)+\alpha(e')A(e'f)+\delta_{b_2f}
\end{equation}
for each edge $f$, where $e\parallel f$ and $e'\perp f$ are the two edges ending at
the starting point of $f$.
\end{lemma}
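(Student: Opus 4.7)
The plan is to verify the ``if and only if'' by a direct, case-by-case substitution. First, observe that any edge $f$ has endpoint $(x,t)$ and starting point $P=(x\mp\varepsilon/2,t-\varepsilon/2)$. Since the lattice constraint $(x+t)/\varepsilon\in\mathbb{Z}$ forces $2x/\varepsilon$ and $2t/\varepsilon$ to have the same parity, and shifting both coordinates by $\pm\varepsilon/2$ flips that parity simultaneously, $P$ always lies on the sublattice opposite to $(x,t)$. Consequently edges $f$ ending at a point where axiom~1 applies (with $2x/\varepsilon,2t/\varepsilon$ both even) are precisely the edges starting at an odd vertex $P$ in the sense of Definition~\ref{def-anti-combi}, and edges $f$ corresponding to axiom~2 are precisely those starting at an even $P$. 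In each case the edge parallel to $f$ ending at $P$ is $e_k(P)$ and the perpendicular one is $e_{3-k}(P)$.

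Next I would read off the node weights $A(ef),A(e'f)$ directly from~\eqref{eq-def-anti3}: a node at odd $P$ contributes $1/\sqrt{1+m^2\varepsilon^2}$ if non-turn and $-im\varepsilon/\sqrt{1+m^2\varepsilon^2}$ if turn, and a node at even $P$ contributes $1/\sqrt{1-\delta^2}$ and $-\delta/\sqrt{1-\delta^2}$ respectively. Substituting these, together with $\alpha(e_k)=-i^k A_k/2$, into~\eqref{eq-l-equivalence-equations} and dividing through by the common prefactor $-i^k/2$ reduces each of the four subcases (axiom~1 or~2, $k=1$ or~$2$) to a one-line identity matching exactly one equation of the two axioms. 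The prefactor $-i^k$ is precisely what converts the imaginary coefficient $-im\varepsilon$ at an odd $P$ into the real $\pm m\varepsilon$ appearing in axiom~1, and converts the real coefficient $-\delta$ at an even $P$ into the $\mp i\delta$ appearing in axiom~2.

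The source term $\delta_{b_2 f}$ in~\eqref{eq-l-equivalence-equations} is nonzero only when $f=b_2=e_2(0,0)$, which falls in the axiom-1/$k=2$ case at $(x,t)=(0,0)$; after multiplying through by the factor $2$ coming from the definition of $\alpha$, it becomes exactly the inhomogeneity $2\delta_{x0}\delta_{t0}$ of axiom~1. Since~\eqref{eq-l-equivalence-equations} ranges over all edges $f$, and each axiom equation is produced by a unique such $f$ under the correspondence above, the map $A_k(x,t)\leftrightarrow\alpha(e_k(x,t))$ is a bijection between the two solution sets, which is the claimed equivalence. The argument is essentially bookkeeping; the only steps requiring care are tracking powers of $i$ through the substitution and verifying the parity matching between Definition~\ref{def-anti-combi} and the sublattice split in axioms~1 and~2.
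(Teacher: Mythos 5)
Your proof is correct and takes essentially the same route as the paper's: identify the two edges ending at the starting point of $f$, read off the node weights from~\eqref{eq-def-anti3} according to the parity of that point, and substitute $\alpha(e_k)=-i^kA_k/2$ to recover the corresponding equation of axiom~1 or~2. The paper writes out only one of the four subcases ($f=e_2(x,t)$ with $(x,t)$ even) and declares the others analogous; your parity bookkeeping, including the treatment of the source term $\delta_{b_2f}$, matches that computation.
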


\begin{proof} Assume $f=e_2(x,t)$ and $(x,t)$ is even; the other cases are analogous.
Then
\begin{align*}
e &=e_2\left(x-\frac{\varepsilon}{2},t-\frac{\varepsilon}{2}\right),
& A(ef)&=\frac{1}{\sqrt{1+m^2\varepsilon^2}},
& \delta_{b_2f}&=\delta_{x0}\delta_{t0},
\\
e'&=e_1\left(x-\frac{\varepsilon}{2},t-\frac{\varepsilon}{2}\right),
& A(e'f)&=\frac{-im\varepsilon}{\sqrt{1+m^2\varepsilon^2}}.
& &
\end{align*}
Substituting $\alpha(e_k(x,t))=-i^k A_k(x,t)/2$,
we get the second equation of axiom~2.
\end{proof}

Now we solve the system of equations~\eqref{eq-l-equivalence-equations}
by Cramer's rule.

\begin{lemma}[Loop expansion]\label{l-loop-expansion}
%Consider the following
Define two matrices with the rows and columns
indexed by edges:
$$
A_{fa}:=A(a\to f)\qquad\text{and}\qquad
U_{fe}
:=\begin{cases}
A(ef), &\text{if the endpoint of $e$ is the starting point of $f$},\\
0, &\text{otherwise}.
\end{cases}
%U_{fe}
%:=\begin{cases}
%A(ef), &\text{if }h(e)=t(f),\\
%0,     &\text{if }h(e)\ne t(f);
%\end{cases}
$$
%where $h(e)$ is the endpoint of $e$ and $t(f)$ is the starting point of $f$.
Denote by $Z$ be the denominator of~\eqref{eq-def-finite-lattice-propagator}.
Then $Z=\det (I-U)$. If $Z\ne 0$ then $A=(I-U)^{-1}$.
%The determinant of the system of equations~\eqref{eq-l-equivalence-equations},
%where the variables are the values of $\alpha$ at all edges,
%equals the denominator of~\eqref{eq-def-finite-lattice-propagator}.
%We have $Z=\det (I-U)$. If $Z\ne 0$ then $A=(I-U)^{-1}$.
\end{lemma}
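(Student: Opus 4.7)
The plan is to prove both equalities by combinatorial expansion of the determinant and its cofactors as sums over permutations of the edge set $E$, in the spirit of the Kasteleyn/Kenyon arguments.  Observe first that $U_{ee}=0$ for every edge $e$, since the endpoint of an edge in the lattice never equals its starting point; hence the diagonal entries of $I-U$ are all $1$.

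For $Z=\det(I-U)$, expand $\det(I-U)=\sum_{\sigma\in S_E}\mathrm{sgn}(\sigma)\prod_{e\in E}(I-U)_{e,\sigma(e)}$.  Fixed points contribute a factor $1$.  A nontrivial cycle $(e_1,\dots,e_k)$ of $\sigma$, with $\sigma(e_i)=e_{i+1}$, contributes $(-1)^{k-1}\prod_{i=1}^{k}(-A(e_{i+1}e_i))$, and each factor $A(e_{i+1}e_i)$ is nonzero only when the endpoint of $e_{i+1}$ coincides with the start of $e_i$.  Setting $f_j:=e_{k+1-j}$ converts this condition into the lattice cycle condition ``endpoint of $f_j$ is start of $f_{j+1}$'', identifying the $\sigma$-cycle with a loop $L$ (cyclic shifts of the $\sigma$-cycle correspond to changes of starting edge of the loop).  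By Proposition~\ref{p-dirac-finite}, each $A(e_{i+1}e_i)$ equals the node weight at the corresponding junction, so by~\eqref{eq-def-anti3} the product of these $k$ weights equals $-A(L)$ (with the overall minus being the loop sign).  The sign factors $(-1)^{k-1}\cdot(-1)^k=-1$ cancel this minus, giving a net cycle contribution of $+A(L)$.  Since distinct $\sigma$-cycles use disjoint edges, summing over all $\sigma$ reproduces the sum over edge-disjoint loop configurations $S$, yielding $\det(I-U)=\sum_S A(S)=Z$.

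For $A=(I-U)^{-1}$ when $Z\ne 0$, it suffices by Cramer's rule to show $(-1)^{a+f}\det M_{af}=N_{fa}$, where $N_{fa}$ denotes the numerator of $A(a\to f)$.  Laplace expansion of $\det(I-U)$ along column $f$ gives the standard identity
\[
(-1)^{a+f}\det M_{af}=\sum_{\tilde\phi:\,\tilde\phi(a)=f}\mathrm{sgn}(\tilde\phi)\prod_{e\ne a}(I-U)_{e,\tilde\phi(e)}.
\]
For each such $\tilde\phi$, the cycles not through $a$ are treated exactly as in the first part and contribute $\prod A(L)$ over edge-disjoint loops.  The cycle through $a$ has some length $l+1$, but since row $a$ is omitted it contributes only $l$ off-diagonal factors; the cycle sign $(-1)^{l}$ and the $(-1)^l$ from these factors cancel, and the reversal argument identifies the remaining product with the node product of a lattice path $P$ from $a$ to $f$, which by~\eqref{eq-def-anti3} equals $A(P)$---with no overall minus, matching the fact that the arrow of a path lacks the loop sign.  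Hence each $\tilde\phi$ corresponds to a loop configuration $S$ with source $a$ and sink $f$ of weight $A(S)=A(P)\prod A(L)$; summing gives $N_{fa}$.  The case $a=f$ reduces to the first part after interpreting the length-$1$ cycle $\{a\}$ of $\tilde\phi$ as the degenerate single-edge path of weight $1$.  Dividing by $\det(I-U)=Z$ yields $A_{fa}=(I-U)^{-1}_{fa}$.

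The main bookkeeping challenge is sign tracking: the permutation cycle sign $(-1)^{k-1}$, the $(-1)^k$ from the off-diagonal entries of $I-U$, the extra minus in $A(L)$ for loops but not for paths, and the $(-1)^{a+f}$ prefactor in the cofactor formula.  I would cross-check that these combine correctly by comparing small cases against the explicit arrows listed in Example~\ref{ex-1x1}.
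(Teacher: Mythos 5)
Your proof is correct and follows essentially the same route as the paper: expand $\det(I-U)$ over permutations, identify the nontrivial permutation cycles with edge-disjoint loops while tracking the three sources of sign (cycle sign, off-diagonal minuses, and the overall minus in the loop weight~\eqref{eq-def-anti3}), and then obtain $A=(I-U)^{-1}$ by the same expansion applied to the cofactor, with the cycle through $a$ becoming the path from $a$ to $f$. The only cosmetic difference is your index convention (forcing the reversal $f_j:=e_{k+1-j}$), which the paper sidesteps by writing the product as $\prod_e(I-U)_{\sigma(e)e}$.
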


\begin{proof}
The first formula follows from
$$
\det (I-U)=\sum_\sigma \mathrm{sgn}(\sigma)\prod_e (I-U)_{\sigma(e)e}
=\sum_\sigma \mathrm{sgn}(\sigma)\prod_{e:\sigma(e)\ne e}(-U_{\sigma(e)e})
=\sum_S A(S)=Z.
$$
Here the products are over all edges $e$, the first two sums are over all permutations $\sigma$ of edges, and the last sum is over all loop configurations $S$.
All the equalities except the third one follow from definitions.

To prove the third equality, take a permutation $\sigma$ of edges and decompose it into disjoint cycles.
Take one of the cycles $e_1e_2\dots e_ke_1$ of length $k>1$.
The contribution of the cycle to the product is nonzero only if
the endpoint of each edge $e_i$ is the starting point of the next one.
In the latter case the contribution is
$$
(-U_{e_2e_1})\dots (-U_{e_1e_k})
= (-1)^k A(e_1e_{2})\dots A(e_ke_{1})
= (-1)^{k-1} A(e_1e_{2}\dots \bluevar{e_k}e_{1}),
$$
where we have taken the minus sign in~\eqref{eq-def-anti3} into account.

Multiply the resulting contributions over all cycles of length greater than $1$.
The cycles form together a loop configuration $S$,
and the product of their arrows is $A(S)$.
Since $(-1)^{k-1}$ is the sign of the cyclic permutation, the product
of such signs equals $\mathrm{sgn}(\sigma)$.
Clearly, the resulting loop configurations $S$ are in
bijection with all permutations giving a nonzero contribution to
the sum. This proves that $\det (I-U)=Z$.

To prove the formula $A=(I-U)^{-1}$, replace the entry $(I-U)_{af}$
of the matrix $I-U$ by $1$, and all the other entries in the row $a$
by $0$. Analogously to the previous argument,
the determinant of the resulting matrix (the cofactor of $I-U$)
equals the numerator of~\eqref{eq-def-finite-lattice-propagator}.
By Cramer's rule we get $A=(I-U)^{-1}$.
\end{proof}

\begin{proof}[Proof of Proposition~\ref{cor-double-fourier-finite}]
This follows from Lemmas~\ref{l-discrete-fourier}--\ref{l-loop-expansion}.
%and the formula for the discrete Fourier transform of $\delta_{x0}\delta_{t0}$.
In particular, $Z=\det(I-U)\ne 0$ because $I-U$ is the matrix of system~\eqref{eq-l-equivalence-equations}
having a unique solution by Lemmas~\ref{l-discrete-fourier}--\ref{l-equivalence-equations}.
\end{proof}

\begin{remark} \label{rem-z} \blueit{One can compute $Z$} using Lemma~\ref{l-loop-expansion}, \blueit{the block structure
$U=\left(\begin{smallmatrix}
            0 & U_{\text{even}} \\
            U_{\text{odd}} & 0
          \end{smallmatrix}\right)$ in a suitably ordered basis, Schur's formula, and} the discrete Fourier transform.
\blueit{Namely,} multiplying the determinants of equations~\eqref{eq-transformed-equation}
over all $p,\omega$, one \blueit{gets} %can show that %the denominator of~\eqref{eq-def-finite-lattice-propagator} equals
$$
Z=2^{T^2}\prod_{p,\omega\in
\faktor{(2\pi/T\varepsilon)\mathbb{Z}}{(2\pi/\varepsilon)\mathbb{Z}}}
\left(
\cos(\omega\varepsilon)
  -\frac{\cos(p\varepsilon)+im\varepsilon^2\delta}
  {\sqrt{1-\delta^2\varepsilon^2}\sqrt{1+m^2\varepsilon^2}}
\right).
$$
This remains true for $m=0$ or $\delta=0$, implying that
$Z=0$ for $T$ divisible by $4$ (because of the factor
obtained for $p=\omega=\pi/2\varepsilon$). For $\delta=0$ the latter remains true
even if $m$ has imaginary part, which shows that Step~2${}'$ in \S\ref{ssec-outline} is necessary.
Moreover, by Proposition~\ref{cor-double-fourier-finite},
the limit $\lim_{\delta\searrow 0}A(b_2\to e_1;m,\varepsilon,\delta,T)$,
hence $\lim_{\delta\searrow 0}A(a_0\to f_1;m,\varepsilon,\delta,T)$,
does not exist for $T$ divisible by $4$ and, say, $x=t=0$. Thus
one cannot interchange the order of limits in~\eqref{eq-def-infinite-lattice-propagator}.
\end{remark}

\begin{example}[No charge conservation on the $2\times 2$ lattice] \label{ex-2x2} For $T=2$ we have
\textup{\cite[\S2]{SU-3}}:
$$\sum_{f\text{ starting on }t=0}|{A}(a\to f)|^2
=\frac{(1+\delta^2\varepsilon^2)(1+m^2\varepsilon^2)}{4(m^2\varepsilon^2+\delta^2\varepsilon^2)}
\ne \frac{(1-\delta^2\varepsilon^2)(1+m^2\varepsilon^2)}{4(m^2\varepsilon^2+\delta^2\varepsilon^2)}
=\sum_{f\text{ starting on }t=\varepsilon}|{A}(a\to f)|^2,
$$
where the sums are over all edges $f$ starting %at the points
on the line $t=0$ and $t=\varepsilon$ respectively.
\end{example}

Performing the change of variables
$(p,\omega)\mapsto (\pm p,-\omega)$ in Proposition~\ref{cor-double-fourier-finite},
we get the following.
%one gets an immediate corollary.

\begin{corollary}[Skew symmetry] \label{cor-symmetry-finite}
For each $(x,t)\in\varepsilon\mathbb{Z}^2$, where $(x,t)\ne (0,0)$,
we have the identities $A(b_2\to e_1(x,-t))=A(b_2\to e_1(x,t))$ and
$A(b_2\to e_2(-x,-t))=-A(b_2\to e_2(x,t))$.
\end{corollary}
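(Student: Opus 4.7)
The plan is to apply the Fourier representations of Proposition~\ref{cor-double-fourier-finite} and to change the summation variable, as the sentence preceding the corollary already advertises. Since $(x,t)\in\varepsilon\mathbb{Z}^2$ makes $2x/\varepsilon,2t/\varepsilon$ even, $(x,t)$ is an \emph{even} lattice point in the sense of Definition~\ref{def-anti-combi}, so only the first pair of formulas in Proposition~\ref{cor-double-fourier-finite} is relevant. Write
$$D(p,\omega):=\sqrt{1-\delta^2}\sqrt{1+m^2\varepsilon^2}\cos(\omega\varepsilon)-\cos(p\varepsilon)-im\varepsilon\delta$$
for the common denominator; then $D(p,\omega)=D(-p,-\omega)=D(p,-\omega)$ by the evenness of cosine, and both $\omega\mapsto-\omega$ and $(p,\omega)\mapsto(-p,-\omega)$ are bijections of the summation torus $\faktor{(2\pi/T\varepsilon)\mathbb{Z}}{(2\pi/\varepsilon)\mathbb{Z}}$.

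For the first identity I would expand $A(b_2\to e_1(x,-t))$ using Proposition~\ref{cor-double-fourier-finite} and substitute $\omega\mapsto-\omega$ in the sum. The denominator is invariant, the numerator $m\varepsilon-i\delta e^{ip\varepsilon}$ does not depend on $\omega$, and the exponential $e^{ipx+i\omega t}$ returns to $e^{ipx-i\omega t}$, so the result is precisely $A(b_2\to e_1(x,t))$. For the second identity I would expand $A(b_2\to e_2(-x,-t))$ and apply $(p,\omega)\mapsto(-p,-\omega)$. The denominator stays fixed, the numerator $\sqrt{1-\delta^2}\sqrt{1+m^2\varepsilon^2}\sin(\omega\varepsilon)+\sin(p\varepsilon)$ acquires a minus sign because sine is odd, and the exponential $e^{-ipx+i\omega t}$ returns to $e^{ipx-i\omega t}$. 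Collecting terms yields
$$A(b_2\to e_2(-x,-t))=-A(b_2\to e_2(x,t))+\delta_{x0}\delta_{t0},$$
and the hypothesis $(x,t)\ne(0,0)$ removes the Kronecker delta, producing the claimed skew-symmetry.

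The only part that requires any care is the bookkeeping of the isolated $\tfrac12\delta_{x0}\delta_{t0}$ source term in the second formula: it enters with a plus sign once from the Fourier expression for $A(b_2\to e_2(-x,-t))$ evaluated at $(-x,-t)$, and a second time from the sign-flip of the analogous source term sitting inside $A(b_2\to e_2(x,t))$, for a net $+\delta_{x0}\delta_{t0}$ that is killed precisely by the excluded point. Apart from this, everything reduces to the evenness of cosine, oddness of sine, and the obvious bijectivity of sign reversal on the summation torus, so no real obstacle is anticipated.
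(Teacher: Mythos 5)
Your proof is correct and is essentially the paper's own argument: the paper derives this corollary precisely by performing the change of variables $(p,\omega)\mapsto(\pm p,-\omega)$ in Proposition~\ref{cor-double-fourier-finite}, using the evenness of cosine and oddness of sine. Your careful bookkeeping of the $\tfrac12\delta_{x0}\delta_{t0}$ term, which the paper leaves implicit, is accurate and correctly explains why the point $(0,0)$ must be excluded.
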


For the proof of the identities from~\S\ref{ssec-combi-def},
we need \blueit{Proposition~\ref{l-invariance} below}, which follows immediately from defining equations~\eqref{eq-def-anti3}--\eqref{eq-def-finite-lattice-propagator}.
\blueit{(Cf.~Example~\ref{ex-1x1}.)}

\begin{definition} The arrow $A(a\to f)$ is \emph{invariant}
under a transformation $\tau$ of the lattice, if $A(\tau(a)\to \tau(f))=A(a\to f)$.
Clearly, $A(a\to f)=A_{fa}(im\varepsilon, \delta\varepsilon, \sqrt{1+m^2\varepsilon^2},\sqrt{1-\delta^2\varepsilon^2})$
for some rational function $A_{fa}$ in $4$ variables, depending on the parameters $a$, $f$, $T$.
A transformation $\tau$ \emph{acts as the replacement $\delta\leftrightarrow im$},
if $A(\tau(a)\to \tau(f))=A_{fa}(\delta\varepsilon, im\varepsilon, \sqrt{1-\delta^2\varepsilon^2}, \sqrt{1+m^2\varepsilon^2})$.
\end{definition}

\begin{proposition}[Invariance \bluevar{and duality principle}] \label{l-invariance}
The arrow $A(a\to f)$ is invariant under the translations by the vectors
$(\varepsilon,0)$ and $(0,\varepsilon)$ and under the reflection with respect
to the line $x=0$. The translation by $(\varepsilon/2,\varepsilon/2)$
acts as the replacement $\delta\leftrightarrow im$.
\end{proposition}

\begin{proof}[Proof of Proposition~\ref{p-initial}]
By \blueit{Proposition}~\ref{l-invariance} we may assume that $a=b_2$ because
the required equation $A(a\to a)=1/2$ is invariant under the replacement $\delta\leftrightarrow im$.
Apply Proposition~\ref{cor-double-fourier-finite}
for $x=t=0$ so that $e_2=b_2$. The change of variables
$(p,\omega)\mapsto (-p,-\omega)$ shows that the sum over
$p,\omega$ in the expression for $A(b_2\to e_2)$
vanishes. The remaining term is $1/2$. %Thus $A(a\to a)=1/2$.
\end{proof}

%\mscomm{!!! Deduce Proposition~\ref{p-symmetry}
%from Proposition~\ref{cor-double-fourier-finite}
%(a combinatorial proof is given in the last section)!!!}

\begin{proof}[Proof of Proposition~\ref{p-symmetry}]
By \blueit{Proposition}~\ref{l-invariance} we may assume $a=b_2$.
Assume that $(x,t)$ is even; otherwise the proof is analogous.
Consider the following $2$ cases.

Case 1: $f=e_1(x,t)$. Translate both $a$ and $f$ by $(-x,-t)$
and reflect with respect to the line $x=0$.
By \blueit{Proposition}~\ref{l-invariance} and Corollary~\ref{cor-symmetry-finite}
we get $A(e_1(x,t)\to b_2)=A(b_2\to e_1(x,-t))=A(b_2\to e_1(x,t))$, as required.

Case 2: $f=e_2(x,t)\ne b_2$.
Translating by $(-x,-t)$, applying \blueit{Proposition}~\ref{l-invariance}
and Corollary~\ref{cor-symmetry-finite}, we get
$A(e_2(x,t)\to b_2)=A(b_2\to e_2(-x,-t))=-A(b_2\to e_2(x,t))$, as required.
%%%
%Case 3: $f$ is odd. Analogously to Cases~1--2, the required assertion
%is obtained from
%\begin{align*}
%  A(a_0\to e_1)&=
%  \frac{1}{2T^2}
%\sum_{p,\omega\in
%\faktor{(2\pi/T\varepsilon)\mathbb{Z}}{(2\pi/\varepsilon)\mathbb{Z}}}
%  \frac{-im\varepsilon\sqrt{1-\delta^2} -\delta\sqrt{1+m^2\varepsilon^2} e^{i\omega\varepsilon}}
%  {\sqrt{1-\delta^2}\sqrt{1+m^2\varepsilon^2}\cos(\omega\varepsilon)
%  -\cos(p\varepsilon)-im\varepsilon\delta}e^{i p x-i\omega t},\\ \notag
%  A(a_0\to e_2)&=
%\frac{1}{2T^2}
%\sum_{p,\omega\in
%\faktor{(2\pi/T\varepsilon)\mathbb{Z}}{(2\pi/\varepsilon)\mathbb{Z}}}
%  \frac{\sqrt{1-\delta^2}e^{-ip\varepsilon}
%  -\sqrt{1+m^2\varepsilon^2}e^{i\omega\varepsilon}}
%  {\sqrt{1-\delta^2}\sqrt{1+m^2\varepsilon^2}\cos(\omega\varepsilon)
%  -\cos(p\varepsilon)-im\varepsilon\delta}
%  e^{i p x-i\omega t},
%\end{align*}
%where $e_1$ and $e_2$ are the edges ending at
%the point $(x \mod T\varepsilon,t \mod T\varepsilon)$, parallel to $f_1$ and $f_2$
%respectively, which follows from Proposition~\ref{cor-double-fourier-finite}
%and next Proposition~\ref{p-dirac-finite}. \mscomm{!!! recheck !!!}
\end{proof}

\begin{proof}[Proof of Proposition~\ref{p-dirac-finite}]
By Lemma~\ref{l-loop-expansion} we get $(I-U)A=I$, that is, $A_{fa}-\sum_e U_{fe}A_{ea}=\delta_{fa}$, which is equivalent
to the required identity.
\end{proof}

\begin{proof}[Proof of Proposition~\ref{p-dirac-adjoint}]
This follows from Propositions~\ref{p-initial}--\ref{p-dirac-finite}
because $e\parallel f$ and $e'\perp f$.
\end{proof}

\begin{proof}[Proof of Proposition~\ref{cor-dirac}]
By Lemma~\ref{l-loop-expansion} we get
$(I-U^n)A=(I+U+\dots+U^{n-1})(I-U)A=I+U+\dots+U^{n-1}$,
which is equivalent to the required identity for $n\le 2T$.
\end{proof}

\begin{proof}[Proof of Theorem~\ref{p-real-imaginary}]
The denominator of~\eqref{eq-def-finite-lattice-propagator}
does not vanish by Proposition~\ref{cor-double-fourier-finite}.
Limit~\eqref{eq-def-infinite-lattice-propagator}
is computed as follows: \blueit{}
\begin{multline*}
A(a_0\to f_k)
=\sum_{j,l=1}^2 (-1)^l A(b_la_0)A(b_l\to e_j)A(e_jf_k)
=\frac{1}{1-\delta^2\varepsilon^2}\sum_{j,l=1}^2 (\delta\varepsilon)^{2-l}(\bluevar{-}\delta\varepsilon)^{|j-k|}A(b_l\to e_j)
\\
=\bluevar{\frac{1}{1-\delta^2\varepsilon^2}}\sum_{j,l=1}^2 (\delta\varepsilon)^{2-l}(\bluevar{-}\delta\varepsilon)^{|j-k|}
A(b_2\to e_{j'}((-1)^l x,t))
\\
\overset{T\to\infty}{\to}
-\frac{1}{2(1-\delta^2\varepsilon^2)}\sum_{j,l=1}^2 (\delta\varepsilon)^{2-l}(\bluevar{-}\delta\varepsilon)^{|j-k|}i^{j'}
A_{j'}((-1)^l x,t)
\overset{\delta\searrow 0}{\to}
-\frac{1}{2}i^k\widetilde{A}_k(x,t),
%%%
%A(a_0\to f_k)=A(b_2a_0)A(b_2\to f_k)-A(b_1a_0)A(b_1\to f_k)
%=\frac{A(b_2\to f_k)+\delta A(b_1\to f_k)}{\sqrt{1-\delta^2}}
%\\
%=\frac{A(b_2\to f_k(x,t))+\delta A(b_2\to f_{3-k}(-x,t))}{\sqrt{1-\delta^2}}
%\overset{T\to\infty}{\to}
%-\frac{i^k{A}_k(x,t)+i^{3-k}\delta {A}_{3-k}(-x,t))}{2\sqrt{1-\delta^2}}
%\overset{\delta\searrow 0}{\to}
%-\frac{i^k\widetilde{A}_k(x,t)}{2}.
%%%\quad\text{ as first $T\to\infty$ then $\delta\searrow 0$}.
\end{multline*}
where $j':=2-|j-l|$.
Here the first two equalities follow from Propositions~\ref{p-dirac-finite}--\ref{p-dirac-adjoint}.
The third one is obtained by a reflection.
The convergence holds by Propositions~\ref{cor-double-fourier-finite},
\ref{l-double-fourier-finite}, and Definition~\ref{def-anti-alg}.
%The existence of limit~\eqref{eq-def-infinite-lattice-propagator} and the required equalities follow from Proposition~\ref{th-equivalence},
%Corollary~\ref{cor-symmetry}, and \cite[Proposition~12]{SU-22},
%which states that the integrals from
%Proposition~\ref{th-equivalence} equal $a_1(x,t+\varepsilon,m,\varepsilon)$ and
%$a_2(x+\varepsilon,t+\varepsilon,m, \varepsilon)$
%for $t\ge 0$ and appropriate parity of $(x+t)/\varepsilon$.
%The rest is Corollary~\ref{cor-alternation}.
\end{proof}

%Proposition~\ref{cor-dirac} can also be easily deduced from Proposition~\ref{p-dirac-finite} by induction on $n$.

\subsection{Generalizations to several particles (Propositions~\ref{p-independence-1}--\ref{p-perturbation})}
%\subsection{Finite-lattice identities}

\label{ssec-proofs-variations}

The results of \S\ref{sec-identical1} are proved easily.

\begin{proof}[Proof of Proposition~\ref{p-independence-1}]
Due to the condition $x_0 \geq 2t$ there are no paths starting at $A$ and ending at $F'$ and no paths starting at $A'$ and ending at $F$.
Therefore
\begin{multline*}
a (AB,A'B'\to EF,E'F')
= \sum\limits_{\substack{s=AB\dots EF\\s'=A'B'\dots E'F'}}
a(s)a(s')
=\\= \sum\limits_{s=AB\dots EF}a(s)\sum\limits_{s=AB\dots E'F'}a(s')=
a_2(x,t,1,1)a_2(x'-x_0,t,1,1).
\end{multline*}
Taking the norm square, we get the required formula.
\end{proof}

\begin{proof}[Proof of Proposition~\ref{p-transfer-matrix}]
The proof is by induction on~$t$. The base $t=1$ is obvious.
The step is obtained from the following identity by summation over
all unordered pairs $E,E'$:
\begin{equation}\label{eq-local-conservation}
\sum_{F,F'}P(AB,A'B'\to EF,E'F')
=\sum_{D,D'}P(AB,A'B'\to DE,D'E'),
\end{equation}
where the sums are over all ordered pairs $(F,F')$ and $(D,D')$
of integer points such that
$\overrightarrow{EF},\overrightarrow{E'F'},\overrightarrow{DE},
\overrightarrow{D'E'}\in\{(1,1),(-1,1)\}$.
To prove~\eqref{eq-local-conservation}, consider the following $2$ cases.

Case 1: $E\ne E'$. Dropping the last moves of the paths $s$ and $s'$
from Definition~\ref{def-identical}, we get
$$a(AB,A'B'\to EF,E'F')=
\sum_{D,D'}a(AB,A'B'\to DE,D'E')\frac{1}{i}a(DEF)\frac{1}{i}a(D'E'F').
$$
Consider the $4\times 4$ matrix with the entries $\frac{1}{i}a(DEF)\frac{1}{i}a(D'E'F')$,
where $(D,D')$ and $(F,F')$ run through all pairs as in~\eqref{eq-local-conservation}.
A direct checking shows that the matrix is unitary
(actually a Kronecker product of two $2\times 2$ unitary matrices),
which implies~\eqref{eq-local-conservation}.

Case 2: $E=E'$. Dropping the last moves of the two paths,
%$s$ and $s'$ from Definition~\ref{def-identical},
we get for $F\ne F'$
\begin{multline*}
a(AB,A'B'\to EF,EF')=
a(AB,A'B'\to DE,D'E)\left(a(DEF')a(D'EF)-a(DEF)a(D'EF')\right)
=\\=
a(AB,A'B'\to DE,D'E)
\left(\frac{1}{\sqrt{2}}\frac{1}{\sqrt{2}}-
\frac{i}{\sqrt{2}}\frac{i}{\sqrt{2}}\right)
=a(AB,A'B'\to DE,D'E),
\end{multline*}
where the integer points $D$ and $D'$ are now defined by the conditions
$\overrightarrow{DE}=\overrightarrow{EF}$
and $\overrightarrow{D'E}=\overrightarrow{EF'}$.
Since $a(AB,A'B'\to EF,EF')=a(AB,A'B'\to DE,D'E)=0$ for $F=F'$, we get~\eqref{eq-local-conservation}.
%%%%%
%abbreviate $a(AB,A'B'\to EF,E'F')=:a(EF,E'F')$.
%For an integer point $E=(x,t)$ denote
%$E_{\pm+}:=(x\pm 1,t+1)$ and $E_{\pm-}:=(x\pm 1,t-1)$.
%Extending each pair of checker paths $s=AB\dots E$, $s'=A'B'\dots E'$
%by one more move and summing $a(s)a(s')$ over all such pairs
%(and the pairs with the final moves interchanged), we get
%(cf.~\cite[Proposition~1]{SU-22})
%\begin{align*}
%a(EE_{++},E'E'_{++})=&\frac12(a(E_{+-}E,E'_{+-}E')-a(E_{+-}E,E'_{--}E')-a(E_{--}E,E'_{+-}E')-a(E_{--}E,E'_{--}E')),\\
%a(EE_{++},E'E'_{-+})=&\frac12(a(E_{+-}E,E'_{+-}E')-a(E_{+-}E,E'_{--}E')+a(E_{--}E,E'_{+-}E')+a(E_{--}E,E'_{--}E')),\\
%a(EE_{-+},E'E'_{++})=&\frac12(a(E_{+-}E,E'_{+-}E')+a(E_{+-}E,E'_{--}E')-a(E_{--}E,E'_{+-}E')+a(E_{--}E,E'_{--}E')),\\
%a(EE_{-+},E'E'_{-+})=&\frac12(-a(E_{+-}E,E'_{+-}E')-a(E_{+-}E,E'_{--}E')-a(E_{--}E,E'_{+-}E')+a(E_{--}E,E'_{--}E')).
%\end{align*}
%By the orthogonality of the $4\times 4$ matrix of the coefficients we get
%\begin{multline*}
%a(EE_{++},E'E'_{++})^2+a(EE_{++},E'E'_{-+})^2+a(EE_{++},E'E'_{-+})^2+a(EE_{-+},E'E'_{-+})^2=\\
%=a(E_{+-}E,E'_{+-}E')^2+a(E_{+-}E,E'_{--}E')^2+a(E_{+-}E,E'_{--}E')^2+a(E_{--}E,E'_{--}E')^2.
%\end{multline*}
%Summing over all unordered pairs $E,E'$ we obtain that the sum in the proposition
%does not depend on $t$.
%%%%
\end{proof}

For the results of \S\ref{sec-identical} we need the following lemma
proved analogously to Lemma~\ref{l-loop-expansion}.
%only one needs to add $n-1$ sources and sinks to all the loop configurations there.
%

\begin{lemma}[Loop expansion] \label{l-loop-expansion3}
Let $a_1,\dots,a_n$ be distinct edges.
In the matrix $I-U$, replace the entries
$(I-U)_{a_1f_1},\dots,(I-U)_{a_nf_n}$ by $1$,
and all the other entries in the rows $a_1,\dots,a_n$
by $0$. Then the determinant of the resulting matrix
equals $ZA(a_1,\dots,a_n\to f_1,\dots,f_n)$.
\end{lemma}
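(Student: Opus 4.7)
The plan is to follow the template of Lemma~\ref{l-loop-expansion} and its proof, extending the determinant-to-loop correspondence to multi-source/multi-sink configurations. Let $M$ denote the modified matrix described in the statement, and expand by the Leibniz formula $\det M = \sum_{\sigma} \mathrm{sgn}(\sigma) \prod_e M_{\sigma(e), e}$ over permutations $\sigma$ of the edge set. Since row $a_k$ has been replaced so that $M_{a_k, f} = \delta_{f, f_k}$, a term survives only when $\sigma(f_k) = a_k$ for every $k = 1, \dots, n$; otherwise some factor $M_{a_k, \sigma^{-1}(a_k)}$ vanishes. Thus
$$\det M = \sum_{\sigma:\ \sigma(f_k) = a_k\ \forall k} \mathrm{sgn}(\sigma) \prod_{e \notin \{f_1, \dots, f_n\}} (I-U)_{\sigma(e), e}.$$

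Next I would set up a bijection between such permutations $\sigma$ and the loop configurations appearing in the definition of $A(a_1,\dots,a_n \to f_1,\dots,f_n)$. Decompose $\sigma$ into disjoint cycles. Any cycle disjoint from $\{f_1, \dots, f_n\}$ must, for a nonzero contribution, consist of graph-consecutive edges, and hence corresponds either to a fixed point (contributing $1$) or to a loop in the lattice, exactly as in Lemma~\ref{l-loop-expansion}. A cycle containing the sinks $f_{k_1}, \dots, f_{k_r}$ in this cyclic order is forced by $\sigma(f_{k_j}) = a_{k_j}$ to have the shape
$$f_{k_1} \to a_{k_1} \to \cdots \to f_{k_2} \to a_{k_2} \to \cdots \to f_{k_r} \to a_{k_r} \to \cdots \to f_{k_1},$$
with each $\cdots$ being a graph-consecutive run of edges. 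Deleting the $f_{k_j}$'s breaks this $\sigma$-cycle into $r$ edge-disjoint
paths, the $j$-th starting with $a_{k_j}$ and ending with $f_{k_{j+1}}$ (indices mod $r$). Taken over all $\sigma$-cycles, this yields a loop configuration $S$ with sources $a_1,\dots,a_n$ and sinks $f_1,\dots,f_n$; the source-to-sink permutation $\pi$ is the product of these cyclic shifts. The map $\sigma \mapsto S$ (together with $\pi$) is easily inverted edge-by-edge, giving the required bijection.

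The remaining step is to match weights and signs. For a type-(a) cycle of length $L$ the computation in Lemma~\ref{l-loop-expansion} shows that $\mathrm{sgn}(\text{cycle}) \cdot \prod (I-U)_{\sigma(e),e} = A(\text{loop})$. For a type-(b) cycle of length $L$ containing $r$ sinks, the $L - r$ surviving factors contribute $(-1)^{L-r}$ times $\prod A(\text{node})$, which equals $(-1)^{L-r}\prod_{j=1}^{r} A(\text{path}_j)$ by Definition~\ref{def-anti-combi}. Combining with $\mathrm{sgn}(\text{cycle}) = (-1)^{L-1}$ gives $(-1)^{r-1}\prod_j A(\text{path}_j)$, and $(-1)^{r-1}$ is exactly the sign of the $r$-cycle $(k_1,\dots,k_r)$ in $\pi$. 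Multiplying contributions across all cycles of $\sigma$ produces
$$\mathrm{sgn}(\sigma) \prod_{e \notin \{f_1,\dots,f_n\}} (I-U)_{\sigma(e),e} = \mathrm{sgn}(\pi) \prod_{\text{paths}} A(\text{path}) \prod_{\text{loops}} A(\text{loop}) = A(S),$$
using Definition~\ref{def-multipoint}. Summing over $\sigma$ (equivalently, over $S$) yields $\det M = \sum_S A(S) = Z\, A(a_1,\dots,a_n \to f_1,\dots,f_n)$.

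The main obstacle, as in the original lemma, is the bookkeeping of signs: showing that the $(-1)^{L-1}$ from each $\sigma$-cycle's parity combines with the $(-1)^{L-r}$ arising from the entries of $-U$ to reproduce exactly the permutation sign $\mathrm{sgn}(\pi)$ that appears in Definition~\ref{def-multipoint}. Once this parity accounting is made clean for type-(b) cycles, the rest of the argument is a direct transcription of the $n=1$ proof, and in particular requires no new combinatorial input.
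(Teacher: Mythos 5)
Your argument is correct and is exactly the route the paper intends: the paper proves this lemma only by declaring it ``analogous to Lemma~\ref{l-loop-expansion}'', and your proposal is the faithful elaboration of that analogy (Leibniz expansion, the forced condition $\sigma(f_k)=a_k$, splitting each $\sigma$-cycle at the sinks into source-to-sink paths, and the sign count $(-1)^{L-1}(-1)^{L-r}=(-1)^{r-1}$ reproducing $\mathrm{sgn}(\pi)$ from Definition~\ref{def-multipoint}). No gaps.
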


\begin{proof}[Proof of Proposition~\ref{p-det}]
By Theorem~\ref{p-real-imaginary} we get $Z\ne 0$. Then
by Lemma~\ref{l-loop-expansion} we get $(I-U)^{-1}=A$
and $\det A=1/\det(I-U)=1/Z$.
Then by the \blueit{Jacobi} relation between complementary minors of two inverse matrices,
the determinant of the matrix from Lemma~\ref{l-loop-expansion3}
equals
$\det \left(A_{f_ja_i}\right)_{i,j=1}^n/\det A=
Z\det \left(A(a_i\to f_j)\right)_{i,j=1}^n$.
It remains to use Lemma~\ref{l-loop-expansion3} and cancel~$Z$.
\end{proof}

\begin{proof}[Proof of Proposition~\ref{cor-pass}]
The proposition follows from
\begin{multline*}
\hspace{-0.7cm}{A}(a\to f \text{ pass } e%;m,\varepsilon,\delta,T
)
={A}(a\to f%;m,\varepsilon,\delta,T
)
-{A}(a,e\to f,e%;m,\varepsilon,\delta,T
)
=%\\=
{A}(a\to f%;m,\varepsilon,\delta,T
)
-{A}(a\to f%;m,\varepsilon,\delta,T
)
{A}(e\to e%;m,\varepsilon,\delta,T
)
+{A}(a\to e%;m,\varepsilon,\delta,T
)
{A}(e\to f%;m,\varepsilon,\delta,T
)
=\\=\frac{1}{2}
{A}(a\to f%;m,\varepsilon,\delta,T
)+
{A}(a\to e%;m,\varepsilon,\delta,T
){A}(e\to f%;m,\varepsilon,\delta,T
)=%\frac{1}{2}
{A}(a\to f%;m,\varepsilon,\delta,T
){A}(e\to e%;m,\varepsilon,\delta,T
)+
{A}(a\to e%;m,\varepsilon,\delta,T
){A}(e\to f%;m,\varepsilon,\delta,T
).
\end{multline*}
Here the first equality holds because $S\mapsto S\cup\{e\}$ is a bijection between loop configurations $S$ with the source $a$ and the sink $f$  not passing through $e$ and loop configurations with the sources $a,e$ and the sinks $f,e$.
% (see Definition~\ref{def-multipoint}).
This bijection preserves $A(S%,m,\varepsilon,\delta
)$ because $A(S\cup\{e\})=A(S)A(e)=A(S)\cdot 1$. The rest follows from Propositions~\ref{p-det} and~\ref{p-initial}.
\end{proof}

For the result of \S\ref{sec-fermi} we need the following lemma.
%The proof is completely analogous to the second proof Proposition~\ref{p-initial}.

%\mscomm{!!! This means that the combinatorial proof is really required !!!}

\begin{lemma} \label{l-half}
For each edge $e$ we have $\sum_{S\ni e}A(S)=\frac{1}{2}\sum_{S}A(S)$,
where the left sum is over loop configurations containing $e$
and the right sum is over all loop configurations.
\end{lemma}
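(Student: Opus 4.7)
The plan is to derive the lemma directly from the initial-value identity $A(e\to e) = 1/2$ (Proposition~\ref{p-initial}), by identifying loop configurations with source and sink $e$ combinatorially.

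First I would unwind the definition of $A(e\to e)$: the numerator of~\eqref{eq-def-finite-lattice-propagator} is the sum of $A(S)$ over loop configurations $S$ with source $e$ and sink $e$. Such a configuration contains a path starting with $e$ and ending with $e$; by the clause in Definition~\ref{def-anti-combi} that the first and last edges of a path coincide only when the path has a single edge, the path must be exactly the one-edge sequence $(e)$. The remaining loops then form an arbitrary loop configuration $S'$ that is edge-disjoint from $\{e\}$, i.e.\ one that does not contain $e$. This sets up a bijection $S' \leftrightarrow \{e\}\cup S'$ between loop configurations not containing $e$ and loop configurations with source/sink $e$.

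Next I would check that this bijection preserves arrows. The single-edge path $(e)$ has zero turns and zero nodes, so by~\eqref{eq-def-anti3} its arrow is $1$ (no overall minus sign, since a one-edge path is not a loop). Hence $A(\{e\}\cup S') = 1\cdot A(S') = A(S')$, giving
\begin{equation*}
\sum_{\substack{S \text{ with source}\\\text{and sink } e}} A(S) \;=\; \sum_{S\not\ni e} A(S).
\end{equation*}
Combined with Proposition~\ref{p-initial}, this yields $\sum_{S\not\ni e} A(S) = \tfrac{1}{2}\sum_S A(S)$, and subtracting from $\sum_S A(S)$ produces the claimed identity $\sum_{S\ni e} A(S) = \tfrac{1}{2}\sum_S A(S)$.

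There is no real obstacle here; the only point that deserves care is the combinatorial identification of configurations with coinciding source and sink, for which one must invoke the explicit convention in Definition~\ref{def-anti-combi} ruling out longer ``loop-like'' paths. Everything else is bookkeeping from~\eqref{eq-def-anti3} and~\eqref{eq-def-finite-lattice-propagator}. Note that this argument implicitly uses that the denominator $Z = \sum_S A(S)$ is nonzero, which is guaranteed by Theorem~\ref{p-real-imaginary}.
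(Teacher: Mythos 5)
Your proposal is correct and is essentially the paper's own proof: the same bijection $S'\mapsto\{e\}\cup S'$ between configurations not containing $e$ and configurations with source and sink $e$, combined with Proposition~\ref{p-initial}, with the complementary sum $\sum_{S\ni e}A(S)=\sum_S A(S)-\sum_{S\not\ni e}A(S)$ finishing the argument. The extra details you supply (why the path from $e$ to $e$ must be the single-edge path, and why its arrow equals $1$) are exactly the points the paper leaves implicit.
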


\begin{proof}
This follows from
$
\frac{\sum_{S\ni e}A(S)}{\sum_{S}A(S)}=
1-\frac{\sum_{S\not\ni e}A(S)}{\sum_{S}A(S)}
=1-A(e\to e)=\frac{1}{2}.
$
Here the second equality holds because $S\mapsto S\cup\{e\}$ is a bijection between loop configurations $S$ not passing through $e$ and loop configurations with the source $e$ and the sink $e$.
The third equality is Proposition~\ref{p-initial}.
\end{proof}

\begin{proof}[Proof of Proposition~\ref{p-perturbation}]
For loop configurations $S_\mathrm{e}$ and $S_\mu$
denote $A(S_\mathrm{e}):=A(S_\mathrm{e},m_\mathrm{e},\varepsilon,\delta)$,
$A(S_\mu):=A(S_\mu,m_\mu,\varepsilon,\delta)$,
$Z_\mathrm{e}:=\sum_{S_\mathrm{e}}A(S_\mathrm{e})$, $Z_\mu:=\sum_{S_\mu}A(S_\mu)$.

Up to terms of order $g^2$, the denominator of~\eqref{eq-def-fermi} equals
$$
\hspace{-0.5cm}\sum_{S_\mathrm{e},S_\mu}A(S_\mathrm{e})A(S_\mu)
\left(1+\sum_{e\in S_\mathrm{e},S_\mu}g\right)
=
%\sum_{S_\mathrm{e},S_\mu}A(S_\mathrm{e})A(S_\mu)
%+g\sum_{e}\sum_{S_\mathrm{e},S_\mu\ni e}A(S_\mathrm{e})A(S_\mu)
\sum_{S_\mathrm{e}}A(S_\mathrm{e})\sum_{S_\mu}A(S_\mu)
+g\sum_{e}\sum_{S_\mathrm{e}\ni e}A(S_\mathrm{e})\sum_{S_\mu\ni e}A(S_\mu)
=Z_\mathrm{e}Z_\mu\left(1+\sum_{e}\frac{g}{4}\right),
$$
where the second sum is over all common edges $e$ of $S_\mathrm{e}$ and $S_\mu$,
the fifth and the last sums are over all the edges $e$,
%the remaining ones are over loop configurations $S_\mathrm{e}$ and $S_\mu$,
and we applied Lemma~\ref{l-half}.
In particular, the denominator of~\eqref{eq-def-fermi} is nonzero for $g$ sufficiently small in terms of
$m_\mathrm{e},m_\mu,\varepsilon,\delta,T$ because $Z_\mathrm{e}Z_\mu\ne 0$
by Theorem~\ref{p-real-imaginary}.

Up to terms of order $g^2$, the numerator of~\eqref{eq-def-fermi}
equals
\begin{multline*}
Z_\mathrm{e}Z_\mu
\left(A(a_\mathrm{e}\overset{\mathrm{e}}{\to} f_\mathrm{e})
A(a_\mu\overset{\mu}{\to} f_\mu)
+g\sum_{e}A(a_\mathrm{e}\to f_\mathrm{e}
\text{ pass }e)A(a_\mu\to f_\mu \text{ pass }e)
\right)
=\\=Z_\mathrm{e}Z_\mu
A(a_\mathrm{e}\overset{\mathrm{e}}{\to} f_\mathrm{e})
A(a_\mu\overset{\mu}{\to} f_\mu)
+\\
+gZ_\mathrm{e}Z_\mu\sum_{e}
\left(A(a_\mathrm{e}\overset{\mathrm{e}}{\to} e)
A(e\overset{\mathrm{e}}{\to} f_\mathrm{e})
+\frac{1}{2}A(a_\mathrm{e}\overset{\mathrm{e}}{\to} f_\mathrm{e})\right)
\left(A(a_\mu\overset{\mu}{\to} e)A(e\overset{\mu}{\to} f_\mu)
+\frac{1}{2}A(a_\mu\overset{\mu}{\to} f_\mu)\right),
\end{multline*}
where the sums are over all the edges $e$,
and we applied Proposition~\ref{cor-pass}.

Dividing the resulting expressions and applying Proposition~\ref{p-initial}, we get the result.
\end{proof}

%\end{document}

\appendix

\addcontentsline{toc}{myshrink}{}
%\addcontentsline{toc}{myshrink}{}
%\addcontentsline{toc}{myshrink}{}

\section{Alternative definitions and proofs}
\label{ssec-proofs-alternative}

%\mscomm{??? Just remove this section ???}

Here we give alternative combinatorial proofs
of Propositions~\ref{p-initial}--\ref{cor-dirac}
and alternative definitions of Feynman anticheckers \blueit{in the spirit of the six-vertex model and exclusion process.}
The proofs are elementary and rely only on the assertion that
the finite-lattice propagator is well-defined (see Theorem~\ref{p-real-imaginary}).
The proofs of Propositions~\ref{p-dirac-finite} and \ref{cor-dirac} are especially simple and are presented first.
\blueit{Cf.~\cite[Proof of Lemma~3.4]{KSS-23} and~\cite{Dmitriev-23}.}
%completely elementary.
The proofs of Propositions~\ref{p-initial}--\ref{p-symmetry}
require some auxiliary definitions and assertions. Proposition~\ref{p-dirac-adjoint} follows from Propositions~\ref{p-initial}--\ref{p-dirac-finite}. % directly.

\begin{proof}[Second proof of Proposition~\ref{p-dirac-finite}]
%Assume that $p$ is odd without loss of generality.
Case 1: $a\ne f$. Define $A(a\to f\text{ pass }e)$ (respectively, $A(a\to f\text{ bypass }e)$)
analogously to ${A}(a\to f)$,
only the sum in the numerator of~\eqref{eq-def-finite-lattice-propagator}
is now over loop configurations with the source $a$ and the sink $f$
containing the edge $e$ (respectively, not containing $e$).
%(The the denominator of~\eqref{eq-def-finite-lattice-propagator} is not changed.)
In this case the proposition follows from the two identities:
\begin{align}\label{eq-proof-dirac-finite-1}
  A(a\to e\text{ bypass }f)A(ef)+A(a\to e'\text{ bypass }f)A(e'f)
  &=A(a\to f),\\
  \label{eq-proof-dirac-finite-2}
  A(a\to e\text{ pass }f)A(ef)+A(a\to e'\text{ pass }f)A(e'f)&= 0.
\end{align}

Let us prove~\eqref{eq-proof-dirac-finite-1}.
For each loop configuration $S$ with the source $a$ and the sink $f$,
remove the last edge of the path from $a$ to $f$. Since $a\ne f$, we get
a loop configuration $S'$ not containing $f$, with the source $a$ and the sink either $e$ or $e'$. We have decreased the number of nodes in $S$ by $1$, hence
either $A(S)=A(S')A(ef)$ or $A(S)=A(S')A(e'f)$ depending on if the sink is
$e$ or $e'$. Summing over all $S$ we get~\eqref{eq-proof-dirac-finite-1} because
the map $S\mapsto S'$ is clearly invertible.

Let us prove~\eqref{eq-proof-dirac-finite-2}. To each loop configuration $S$
with the source $a$ and the sink $e$ containing $f$, assign a loop configuration $S'$
with the source $a$ and the sink $e'$ containing $f$ as follows.
If $f$ in contained in a loop $f\dots e'f$ from $S$ then combine the loop
with the path $a\dots e$ from $S$ into the new path
$a\dots ef\dots e'$. If $f$ in contained in the path $a\dots e'f\dots e$ from $S$,
then decompose the path into a new path $a\dots e'$ and a new loop $f\dots ef$.
All the other loops in $S$ remain unmodified. The resulting loop configuration is $S'$.
The map $S\mapsto S'$ changes the parity of the number of loops and preserves
the nodes except \blueit{for} that the node $(e',f)$ is replaced by $(e,f)$.
Hence $A(S')A(e'f)=-A(S)A(ef)$. Summing over all $S$ we get~\eqref{eq-proof-dirac-finite-2}
because the map $S\mapsto S'$ is invertible.

Case 2: $a=f$. To each loop configuration $S$ with the source $f$
and the sink $f,e,e'$ respectively, assign a loop configuration $S'$
(without sources and sinks) as follows.
If $S$ has the sink $f$, then remove the path $f$ from $S$.
If $S$ has the sink either $e$ or $e'$, then close up
the path in $S$ into a new loop.
The resulting loop configuration is $S'$.
In the former case, $S'$ has the same loops and nodes as $S$,
and in the latter case, we have added one loop and one node.
Thus, if the sink is $f$, or $e$, or $e'$,
then $A(S')=A(S)$, or $-A(S)A(ef)$, or
$-A(S)A(e'f)$ respectively.
Summing over all $S$ and dividing by the denominator
of~\eqref{eq-def-finite-lattice-propagator}, we get
$$
A(f\to f)
-A(f\to e)A(ef)-A(f\to e')A(e'f)=1,
$$
because the map $S\mapsto S'$ is invertible.
\end{proof}

\begin{proof}[Second proof of Proposition~\ref{cor-dirac}]
The proof is by induction on $n$. The base $n=1$ is the trivial assertion $A(f)=1$.
To perform the induction step, take a path $e\dots f$ of length $n-1$.
Let $d$ and $d'$ be the two edges with the endpoint at the starting point
of $e$.
%Denote $\delta_{ae}:=1$, if $a=e$, and $\delta_{ae}:=0$, if $a\ne e$.
By Proposition~\ref{p-dirac-finite} we get
\begin{multline*}
A(a\to e)A(e\dots f)=
\left(A(a\to d)A(de)+A(a\to d')A(d'e)+\delta_{ae}\right)A(e\dots f)
\\=A(a\to d)A(de\dots f)+A(a\to d')A(d'e\dots f)+\delta_{ae}A(e\dots f).
\end{multline*}
Here $d$ and $d'$ are distinct from all the edges in $e\dots f$
by the assumption $n\le 2T$, and hence can be added to the path.
Summing over all such paths $e\dots f$ we get
$$
\sum_{e\dots f\text{ of length }n-1}A(a\to e)A(e\dots f)=
\sum_{d\dots f\text{ of length }n}A(a\to d)A(d\dots f)
+\sum_{a\dots f\text{ of length }n-1}A(a\dots f).
$$
Adding $-A(a\to f)+\sum_{a\dots f\text{ of length }<n-1}A(a\dots f)$ to both
sides and applying the inductive hypothesis, we get the required identity.
\end{proof}

%\subsection{An equivalent definition}
%\label{ssec-equivalent-def}

The next definitions and a lemma are
needed for combinatorial proofs of Propositions~\ref{p-initial}--\ref{p-symmetry}.

\begin{definition} \label{def-flip} (See Figure~\ref{fig-1x1} to the right) Let $(e,f)$ be a pair of edges such that the endpoint of $e$ is the starting point of $f$. The \emph{complementary pair} $(e',f')$ is formed by the other edge $e'\ne e$ with the same endpoint as $e$ and the other edge $f'\ne f$  with the same starting point as~$f$.

Let $S$ be a loop configuration containing both nodes $(e,f)$ and $(e',f')$. The \emph{flip} of $S$ (at the endpoint of $e$) is the loop configuration obtained as follows. If the nodes $(e,f)$ and $(e',f')$ belong to distinct loops $ef\dots e$ and $e'f'\dots e'$ of $S$, then combine them into one new loop $ef'\dots e'f\dots e$. If the nodes $(e,f)$ and $(e',f')$ belong to the same loop $ef\dots e'f'\dots e$, then decompose the latter into two new loops $e'f\dots e'$ and $ef'\dots e$. All the other loops in $S$ remain unmodified. The \emph{flip} of a loop configuration with sources and sinks is defined analogously.
\end{definition}

%Omit arguments $m,\varepsilon,\delta$ of $A(s,m,\varepsilon,\delta)$.

%Define the \emph{union} (or \emph{concatenation}) $S\cup S'$ of two loop configurations (possibly with sources and sinks) as follows. If $S=s_1\dots s_k$ and $S'=s'_1\dots s'_l$ are generalized checker paths have the unique common edge $s_k=s'_1$, then $S\cup S':=s_1\dots s_{k-1}s'_1\dots s'_l$. If, in addition, $s'_l=s_1$, then $S\cup S':=s_1\dots s_ks'_2\dots s'_ls_1$.

\begin{lemma} \label{l-complete-loop}
If a loop configuration $S$ (without sources and sinks) contains all the edges, then the number of loops in $S$ has the same parity as one half of the total number of turns in $S$.
\end{lemma}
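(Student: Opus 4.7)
The plan is to reduce the claim to a parity computation based on the local ``flip'' operation of Definition~\ref{def-flip}. The first step is to observe that, since $S$ covers every edge, at each lattice point $P$ the two incoming and the two outgoing edges form exactly two consecutive pairs (nodes) of $S$, and there are only two possible such pairings: the ``straight'' one, in which both nodes are pairs of parallel edges and contribute no turn at $P$, and the ``crossed'' one, in which both nodes are pairs of orthogonal edges and contribute exactly two turns at $P$. Calling $P$ a \emph{turning vertex} in the latter case, the total number of turns in $S$ equals twice the number of turning vertices, so the assertion reduces to showing that the parity of the number of loops in $S$ coincides with the parity of the number of turning vertices.

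Next I would introduce a canonical reference: the unique configuration $S_0$ covering all edges in which every lattice point carries the straight pairing. Its loops are the diagonal trajectories in the directions $\pm(1,1)$, each of which wraps around the torus after exactly $2T$ steps. Since the total number of edges equals $4T^2$, the configuration $S_0$ decomposes into $2T$ loops, an even number.

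The heart of the argument is then the claim that every $S$ covering all edges is obtained from $S_0$ by flipping, in any order, at each of its turning vertices. Since a flip alters only the pairing at a single vertex, these operations commute and each one is applicable whenever we wish to perform it. By Definition~\ref{def-flip} each flip either merges two loops into one or splits one loop into two, hence changes the number of loops by $\pm 1$. Summing parities yields
$$
\#\{\text{loops in }S\}\equiv \#\{\text{loops in }S_0\}+\#\{\text{turning vertices}\}\equiv 2T+\tfrac{1}{2}\#\{\text{turns in }S\}\pmod 2,
$$
and since $2T$ is even the lemma follows.

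The main subtlety, which I expect to be the principal (though mild) obstacle, is the formal verification of the previous paragraph: namely, that performing a flip at one vertex never obstructs a later flip at another vertex. This boils down to the observation that the pairing at $P$ depends only on the local incidence structure at $P$, so a flip at $P$ leaves the pairings at all other vertices unchanged, which makes the order of flips irrelevant and ensures that each intermediate configuration is still a loop configuration covering all edges.
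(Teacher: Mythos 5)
Your proof is correct and follows essentially the same route as the paper's: both reduce $S$ to the turn-free configuration by flips, each of which toggles the parity of the number of loops while removing two turns. The only (harmless) differences are that you organize the flips as a commuting family indexed by the turning vertices rather than as an induction on the total number of turns, and you establish evenness of the base case by counting its $2T$ diagonal loops directly, where the paper instead uses the reflection with respect to the line $x=T\varepsilon/2$ to pair up-left loops with up-right loops.
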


\begin{proof}
The proof is by induction over the total number of turns.

Base: If $s$ has no turns, then each loop entirely consists of the edges of the same direction. The reflection with respect to the vertical line $x=T\varepsilon/2$ shows that there is equal number of loops consisting of upwards-left and upwards-right edges. Hence the number of loops is even.

Step: Assume that $s$ has a loop with a turn $(e,f)$.
%Let $e'\ne e$ be the other edge with the same endpoint as $e$ and let $f'\ne f$ be the other edge with the same starting point as $f$.
Since $s$ contains all the edges, it has also a loop with the complementary turn $(e',f')$.
Then a flip of $s$
%If the latter loop $e'f'\dots e'$ is distinct from the former %loop $ef\dots e$, then combine them into one new loop $ef'\dots %e'f\dots e$. If the latter and the former loop are actually the %same loop $ef\dots e'f'\dots e$, then decompose it into two new %loops $e'f\dots e'$ and $ef'\dots e$. In both cases we have
changes the parity of the number of loops and reduces the total number of turns by $2$. By induction, the lemma follows.
\end{proof}

\blueit{Now we give a definition of Feynman anticheckers as a six-vertex model with complex weights (see Figure~\ref{fig-6v}). Instead of assigning a direction to each edge, as in the common six-vertex model, we
speak of choosing a set of edges, and the edges %in the set
are always oriented in the time direction. To each configuration, i.e., set of edges, we also assign an overall sign defined globally in terms of a loop decomposition of the set. Our loop decomposition is very different from the one in~\cite{Duminil-Copin-etal-22}.}
%do not draw edges oriented opposite to time direction.}

\begin{definition} \label{def-anti-subgraphs}
A set $s$ of edges is a \emph{current}, if for each lattice point the number of edges in $s$ starting at the point equals the number of edges in $s$ ending at the point \blueit{(recall that each edge is oriented in the direction where time $t$ increases).} A set $s$ of edges is a \emph{current with sources $a_1,\dots,a_n$ and sinks $f_1,\dots,f_n$}, if \blueit{edges $a_1,\dots,a_n$ are distinct, egdes $f_1,\dots,f_n$ are distinct,} $s$ contains $a_1,\dots,a_n,f_1,\dots,f_n$, and for each lattice point the number of edges in $s$ starting at the point and distinct from $a_1,\dots,a_n$ equals the number of edges in $s$ ending at the point and distinct from $f_1,\dots,f_n$.

\remove{}
%Let $s$ be a current, possibly with distinct sources $a_1,\dots,a_n$ and distinct sinks $f_1,\dots,f_n$.

A lattice point is a \emph{singularity} of $s$, if it is the starting point of two edges of $s$, distinct from the sources. Clearly, for each $s$ there exists a unique loop configuration (called the \emph{loop decomposition} of $s$) having the same sources and sinks, consisting of the same edges, and having no turns at the singularities of $s$. %\mscomm{!!! Give figure !!!}
If the loop decomposition
%this loop configuration
has exactly $l$ loops and $n$ paths joining $a_1,\dots,a_n$ with $f_{\sigma(1)},\dots,f_{\sigma(n)}$ respectively for some permutation $\sigma$, then denote $\mathrm{sgn}(s):=(-1)^l\mathrm{sgn}(\sigma)$. Here we set $\mathrm{sgn}(\sigma)=+1$ for $n=0$.

A \emph{node} of $s$ is an ordered pair $(e,f)$ of edges of $s$ such that the endpoint of $e$ is the starting point of $f$ and is not a singularity of $s$, the edge $e$ is not a sink, and $f$ is not a source. The numbers $\mathrm{eventurns}(s)$, $\mathrm{oddturns}(s)$, $\mathrm{evennodes}(s)$, $\mathrm{oddnodes}(s)$, and $A(s%,m,\varepsilon,\delta
)$ are defined literally as for a path or loop (see Definition~\ref{def-anti-combi}), with the overall sign in~\eqref{eq-def-anti3} set to be $\mathrm{sgn}(s)$.
Denote
$$
{A}^{\bluevar{\mathrm{6V}}}(a_1,\dots,a_n\to f_1,\dots,f_n%;m,\varepsilon,\delta,T
)
:=\frac{\sum\limits_{\substack{\text{currents $s$}\\
\text{with the sources $a_1,\dots,a_n$ and the sinks $f_1,\dots,f_n$}}}A(s%,m,\varepsilon,\delta
)}
{\sum\limits_{\substack{\text{currents $s$}}}A(s%,m,\varepsilon,\delta
)}.
$$

If all $a_1,\dots,a_n,f_1,\dots,f_n$ are distinct, then the \emph{complement} to $s$ is the current $\bar s$ with sources $f_1,\dots,f_n$ and sinks $a_1,\dots,a_n$
formed by $f_1,\dots,f_n, a_1,\dots,a_n$ and exactly those other edges that do not belong to $s$.
%\mscomm{!!! This is not necessarily a current, if the sets $a_1,\dots,a_n$ and $f_1,\dots,f_n$ intersect, e.g., if $n=1$ and $a_1=f_1$!!!}
\end{definition}

\begin{example}[Empty and complete currents] \label{ex-complete}
We have $A(\emptyset%,m,\varepsilon,\delta
)=A(\overline{\emptyset}%,m,\varepsilon,\delta
)=1$,
where $\overline{\emptyset}$ is the current consisting of all the edges. Indeed, the currents $\emptyset$ and  $\overline{\emptyset}$ have no nodes, and
%the loop decomposition of $\overline{\emptyset}$ has equal number of loops consisting entirely of upwards-right and upwards-left edges, hence
$\mathrm{sgn} (\overline{\emptyset})=\mathrm{sgn}(\emptyset)=+1$ by Lemma~\ref{l-complete-loop}
because the loop decomposition of $\overline{\emptyset}$ has no turns.
\end{example}

\begin{proposition}[Equivalence of definitions] \label{p-equivalence} For \blueit{any}
%$m,\varepsilon,\delta>0$, integer $T>0$,
edges
$a_1,\dots,a_n,f_1,\dots,f_n$ we have
$$
{A}^{\bluevar{\mathrm{6V}}}(a_1,\dots,a_n\to f_1,\dots,f_n%;m,\varepsilon,\delta,T
)
=
{A}(a_1,\dots,a_n\to f_1,\dots,f_n%;m,\varepsilon,\delta,T
).
$$
\end{proposition}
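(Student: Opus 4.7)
The plan is to reduce Proposition~\ref{p-equivalence} to the following local switching identity: for every current $s$ (possibly with sources and sinks),
$$\sum_{S\in\mathcal{L}(s)} A(S) = A(s),$$
where $\mathcal{L}(s)$ denotes the set of loop configurations with the same sources and sinks whose underlying multiset of edges coincides with that of $s$. Once this is established, summing over all currents yields $\sum_{S} A(S)=\sum_{s} A(s)$ separately for the numerator and the denominator of~\eqref{eq-def-finite-lattice-propagator}, proving the equality of the two definitions.

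To prove the switching identity I would parametrize $\mathcal{L}(s)$ by the pairing choices at the singularities of $s$. Since both incoming and both outgoing edges at a singularity $v$ belong to $s$, any $S\in\mathcal{L}(s)$ restricts near $v$ to exactly one of two pairings: the straight-through pairing~1 (producing two non-turn nodes) or the crossed pairing~2 (producing two turn nodes). The canonical loop decomposition from Definition~\ref{def-anti-subgraphs} corresponds to pairing~1 at every singularity, and any other $S\in\mathcal{L}(s)$ is obtained from it by performing flips (Definition~\ref{def-flip}) at an arbitrary subset of singularities; these flips commute and act independently, giving a bijection between $\mathcal{L}(s)$ and $\{1,2\}^{k}$ where $k$ is the number of singularities of~$s$.

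Next I would compare the summands. Switching pairing~1 to pairing~2 at $v$ multiplies the local node factor by $(-im\varepsilon)^2=-m^2\varepsilon^2$ when $v$ is odd and by $(-\delta)^2=\delta^2$ when $v$ is even. A case analysis over the five reconnection topologies (two loops merge into one; one loop splits into two; a loop is absorbed by a path; a loop pinches off from a path; two paths swap tails) shows that a flip always reverses the overall sign $(-1)^{l(S)}\mathrm{sgn}(\sigma(S))$: in the first four cases $l(S)$ changes parity while $\sigma(S)$ is unchanged, and in the last case $\sigma(S)$ changes by a transposition while $l(S)$ is unchanged. Combining these two effects, the two pairings at a single singularity contribute
$$\frac{1}{1+m^2\varepsilon^2}-\frac{(-im\varepsilon)^2}{1+m^2\varepsilon^2}=\frac{1+m^2\varepsilon^2}{1+m^2\varepsilon^2}=1 \quad\text{(odd $v$)}, \qquad \frac{1}{1-\delta^2}-\frac{(-\delta)^2}{1-\delta^2}=\frac{1-\delta^2}{1-\delta^2}=1 \quad\text{(even $v$)}.$$
By independence of the pairing choices, the singularities jointly contribute a factor of $1$, so $\sum_{S\in\mathcal{L}(s)} A(S)$ equals $\mathrm{sgn}(s)$ times the product of node and turn weights at the \emph{non-singularity} vertices of $s$, which is exactly $A(s)$ by Definition~\ref{def-anti-subgraphs}.

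The main obstacle is the sign bookkeeping in the previous paragraph: one must verify that in every one of the five reconnection topologies a flip reverses the sign $(-1)^{l(S)}\mathrm{sgn}(\sigma(S))$, which requires tracking how the two strands at $v$ are globally connected. Modulo this essentially combinatorial check — in the spirit of Kasteleyn-type switching arguments for dimer and loop models — the remainder of the proof is the elementary local cancellation displayed above, and summing the local identity over all currents (with and without sources/sinks) completes the argument.
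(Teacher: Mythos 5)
Your proposal is correct and follows essentially the same route as the paper: both parametrize the loop configurations lying over a given current by flips at arbitrary subsets of its singularities, observe that each flip contributes a sign change (via the loop count or the permutation) together with two extra turn factors, and sum — your per-singularity cancellation $\frac{1}{1+m^2\varepsilon^2}-\frac{(-im\varepsilon)^2}{1+m^2\varepsilon^2}=1$ is just the local form of the paper's double binomial sum $\sum_{k,j}\binom{K}{k}\binom{J}{j}(m^2\varepsilon^2)^j(-\delta^2)^k=(1+m^2\varepsilon^2)^J(1-\delta^2)^K$. The sign check you defer (that every reconnection topology flips $(-1)^l\mathrm{sgn}(\sigma)$) is likewise only asserted, not cased out, in the paper.
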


%\mscomm{Move the corollaries to the subsection where identities are discussed}

\begin{proof}[Proof of Proposition~\ref{p-equivalence}]
To each loop configuration $S$ (possibly with sources and sinks), assign the set of all edges contained in the loops and paths of $S$. Clearly, we get a current with the same sources and sinks. A current $s$ with $K$ even and $J$ odd singularities has $2^{K+J}$ preimages, obtained from the loop decomposition $S$ of $s$ by flips at any subset of the set of singularities.

It suffices to prove that $A(s)=\sum_{S'}A(S')$, where the sum is over all $2^{K+J}$ preimages $S'$ of $s$. Take a loop configuration $S'$ obtained from $S$ by flips at $k$ even and $j$ odd singularities. Since each such flip
increases the number of turns by $2$ and
changes either the parity of the number of loops
or the sign of the permutation $\sigma$ from Definition~\ref{def-multipoint},
it follows that $A(S')=(-1)^{k+j}(-im\varepsilon)^{2j}(-\delta\varepsilon)^{2k}A(S)=
(m^2\varepsilon^2)^{j}(-\delta^2\varepsilon^2)^{k}A(S)
$.
Summing over all the subsets of the set of singularities,
% and using %$A(s)=A(S)(1+m^2\varepsilon^2)^{L}(1-\delta^2)^{K}$,
we get the required equality
$$
\sum_{S'}A(S')=\sum_{k=0}^{K}\sum_{j=0}^{J}
\binom{K}{k}\binom{J}{j}
(m^2\varepsilon^2)^{j}(-\delta^2\varepsilon^2)^{k}A(S)
= (1+m^2\varepsilon^2)^{J}(1-\delta^2\varepsilon^2)^{K}A(S)
= A(s),
$$
where the factor before $A(S)$ in the latter equality compensates the contribution
of the $2K+2J$ nodes of the loop decomposition $S$ which are not nodes of the current $s$.
\end{proof}

The following proposition demonstrates a symmetry between particles and antiparticles.

\begin{proposition}[Complement formula] \label{p-complement}
For each %$m,\varepsilon,\delta>0$, integer $T>0$, and
current $s$, possibly with sources $a_1,\dots,a_n$ and sinks $f_1,\dots,f_n$, where all $a_1,\dots,a_n,f_1,\dots,f_n$ are distinct, we have
$
{A}(s%,m,\varepsilon,\delta
)
=
(-1)^{|\{k:a_k\parallel f_k\}|} {A}(\bar s%,m,\varepsilon,\delta
).
$
%where $\mathrm{turns}(s)$ is the number of turns in $s$.
%In particular, for $n=0$ we have
%For each %$m,\varepsilon,\delta>0$, integer $T>0$, and
%current $s$ (without sources and sinks) we have
%$
%{A}(s,m,\varepsilon,\delta)
%=
%{A}(\bar s,m,\varepsilon,\delta).
%$
%For each %$m,\varepsilon,\delta>0$, integer $T>0$, and
%current $s$ with a source $e$ and a sink $f$ we have
%$ {A}(s,m,\varepsilon,\delta)= %-(-1)^{\mathrm{turns}(s)}{A}(\bar s,m,\varepsilon,\delta).
%$
\end{proposition}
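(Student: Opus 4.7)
The plan is to prove $A(s)=(-1)^{|\{k:a_k\parallel f_k\}|}A(\bar s)$ by separately matching the analytic factor in~\eqref{eq-def-anti3} and the overall sign $\mathrm{sgn}$ on the two sides. First I would verify that $\bar s$ is a well-defined current with sources $f_1,\dots,f_n$ and sinks $a_1,\dots,a_n$: since all $a_j,f_k$ are pairwise distinct, $s\cap\bar s=\{a_1,\dots,a_n,f_1,\dots,f_n\}$, so at each lattice point $P$ one has $d^{\bar s}_{\mathrm{in}}(P)=\alpha_{\mathrm{in}}(P)+\phi_{\mathrm{in}}(P)+2-d^s_{\mathrm{in}}(P)$ together with the analogous out-identity (where $\alpha_\bullet(P),\phi_\bullet(P)$ count the source/sink edges touching $P$), from which the swapped-source/sink balance equation for $\bar s$ reduces directly to the balance equation for $s$.

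Next I would show that $s$ and $\bar s$ have matching counts of even/odd nodes and even/odd turns, so the analytic factor in~\eqref{eq-def-anti3} coincides for both, by a case analysis at each lattice point $P$ based on $(d^s_{\mathrm{in}}(P),d^s_{\mathrm{out}}(P))$ and on which (if any) sources or sinks are incident to $P$. The crucial local observation is that when $s$ has exactly one in-edge $e$ and one out-edge $f$ at $P$ forming a node, the complementary in- and out-edges $(e',f')$ form a node of $\bar s$ at $P$ of the same type (straight or turn), because swapping both members of an in/out pair preserves their mutual direction. The remaining possibilities --- $P$ a singularity of $s$ (so $\bar s$ has no edges at $P$) or vice versa, and pairs excluded from being nodes by source/sink incidence at $P$ --- match up by the analogous complementary symmetry between $s$ and $\bar s$.

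Finally, to compare $\mathrm{sgn}(s)$ and $\mathrm{sgn}(\bar s)$, I would superpose the loop decompositions of $s$ and $\bar s$ into a combined transition system on the lattice; its closed walks decompose as the $l(s)$ loops of $s$'s decomposition, the $l(\bar s)$ loops of $\bar s$'s decomposition, and $c(\tau)$ composite walks indexed by the cycles of $\tau:=\sigma_{\bar s}\circ\sigma_s$, where $\sigma_s,\sigma_{\bar s}$ are the path-permutations in the two loop decompositions (a composite walk alternates following an $s$-path $a_j\to f_{\sigma_s(j)}$ with a $\bar s$-path $f_{\sigma_s(j)}\to a_{\tau(j)}$). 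Using $\mathrm{sgn}(\sigma_s)\mathrm{sgn}(\sigma_{\bar s})=\mathrm{sgn}(\sigma_s\sigma_{\bar s})=(-1)^{n-c(\tau)}$ (since $\sigma_s\sigma_{\bar s}$ is conjugate to $\tau$), the sign equality reduces to the parity claim
\[
l(s)+l(\bar s)+c(\tau)\equiv |\{k:a_k\perp f_k\}|\pmod 2,
\]
which I would establish by a flip-reduction argument along the lines of Lemma~\ref{l-complete-loop} applied to the combined transition system, suitably adapted because the source/sink edges are doubly covered. The main obstacle will be this parity accounting: one must trace the ``effective turn'' contribution of each source-sink transition in the composite walks and show it is $1$ modulo~$2$ precisely when $a_k\perp f_k$, which requires a careful local analysis of how the two loop decompositions interact at each source and each sink.
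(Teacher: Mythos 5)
Your plan follows the paper's proof essentially step for step: the complementary-pair bijection $(e,f)\mapsto(e',f')$ on nodes handles the unsigned factor in~\eqref{eq-def-anti3}, and the superposition of the two loop decompositions, with composite loops indexed by the cycles of $\bar\sigma\circ\sigma$ and Lemma~\ref{l-complete-loop} applied to the resulting complete configuration, handles the sign. The ``main obstacle'' you flag is resolved in the paper by two short observations: each source/sink edge is traversed exactly once at the junctions of the glued walks (so no adaptation of Lemma~\ref{l-complete-loop} for double coverage is needed, and the combined configuration has exactly $2\,\mathrm{turns}(s)$ turns), and a path has an even number of turns iff its first and last edges are parallel, which together with the $\sigma$-invariance of the parity of $|\{k:a_k\parallel f_{\sigma(k)}\}|$ gives $n-\mathrm{turns}(s)\equiv|\{k:a_k\parallel f_k\}|\pmod 2$.
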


\begin{example}%[The complement depends on sources and sinks]%[Lattice $1\times 1$]
In Example~\ref{ex-1x1}, if the set $s=\{a,c\}$ is viewed as a current without sources and sinks, then it has the complement $\bar{s}=\{b,d\}$, so that ${A}(s%,m,\varepsilon,\delta
)
=-1/\sqrt{1+m^2\varepsilon^2}\sqrt{1-\delta^2\varepsilon^2}
={A}(\bar s%,m,\varepsilon,\delta
)$.
If the same set $s=\{a,c\}$ is viewed as a current with
the source $a$ and the sink $c$, then the complement $\bar{s}=\{a,b,c,d\}$ has the source $c$ and the sink $a$, so that
${A}(s%,m,\varepsilon,\delta
)
=1/\sqrt{1+m^2\varepsilon^2}=
-{A}(\bar s%,m,\varepsilon,\delta
)$.
\end{example}

\begin{proof}[Proof of Proposition~\ref{p-complement}]
First let us show that $A(s)=A(\bar s)$ up to sign, namely,
$A(s)\mathrm{sgn}(s)=A(\bar s)\mathrm{sgn}(\bar s)$.
To each node $(e,f)$ of $s$, assign the complementary pair $(e',f')$. The latter is a node of $\bar s$. Indeed, since
%We construct a bijection between the sets of nodes of $s$ and %$\bar s$. Let $(e,f)$ be a node of $s$. Let $e'\ne e$ be the %other edge with the same endpoint as $e$ and let $f'\ne f$ be %the other edge with the same starting point as $f$. Let us %check that $(e',f')$ is a node of $\bar s$.Since
the starting point of $f$ (equal to the endpoint of $e$) is not a singularity, it follows that either $f'\notin s$ or $f'$ is a source of $s$. Thus $f'\in \bar s$ and it is not a source of $\bar s$. Analogously, $e'\in \bar s$ and it is not a sink of $\bar s$. Since $f$ is not a source, it follows that either $f\notin \bar s$ or $f$ is a source of $\bar s$. This means that the starting point of $f'$ is not a singularity of $\bar s$. Thus $(e,f)\mapsto(e',f')$ is a bijection between the sets of nodes of $s$ and $\bar s$. This bijection preserves the parity of nodes and takes turns to turns. Thus $A(s)\mathrm{sgn}(s)=A(\bar s)\mathrm{sgn}(\bar s)$.

Second let us show that
$\mathrm{sgn}(s)\mathrm{sgn}(\bar s)=(-1)^{n-\mathrm{turns}(s)}$, where $\mathrm{turns}(s)$ is the total number of turns in the current $s$.
Let $S$ and $\overline{S}$ be the loop decompositions of $s$ and $\bar s$. Let $S$ have exactly $l$ loops and $n$ paths $a_1\dots f_{\sigma(1)}$, \dots,
$a_n\dots f_{\sigma(n)}$ for some permutation $\sigma$. Let $\bar S$ have exactly $\bar l$ loops and $n$ paths $f_1\dots a_{\bar\sigma(1)}$, \dots,
$f_n\dots a_{\bar\sigma(n)}$ for some permutation $\bar\sigma$. Form the loop
$$
a_1\dots f_{\sigma(1)}\dots a_{\bar\sigma\circ\sigma(1)}\dots
f_{\sigma\circ\bar\sigma\circ\sigma(1)}\dots a_1,
$$
starting from $a_1$ and alternating the paths of $S$ and $\bar S$ until the first return to $a_1$. Form analogous loops starting from the other not yet visited edges $a_k$. The resulting loops are in bijection with the loops in the loop decomposition of the permutation $\bar\sigma\circ\sigma$. Hence their total number is even if and only if $(-1)^n\mathrm{sgn}(\sigma)\mathrm{sgn}(\bar\sigma)=+1$.
Consider the set consisting of the resulting loops
(obtained by gluing the paths of $S$ and $\overline{S}$)
and the loops of $S$ and $\overline{S}$. The number of loops in the set is even if and only if $(-1)^n\mathrm{sgn}(s)\mathrm{sgn}(\bar s)=+1$ because $\mathrm{sgn}(s)=(-1)^l\mathrm{sgn}(\sigma)$.
On the other hand, by Lemma~\ref{l-complete-loop} this number
has the same parity as $\mathrm{turns}(s)$, because the total number of turns in $S$ and the total number of turns in $\overline{S}$ both equal $\mathrm{turns}(s)$. We get $\mathrm{sgn}(s)\mathrm{sgn}(\bar s)=(-1)^{n-\mathrm{turns}(s)}$.

It remains to notice that $n-\mathrm{turns}(s)=|\{k:a_k\parallel f_k\}|\mod 2$. Indeed, each loop in $S$ %the loop decomposition of $s$
has an even number of turns, a path $a_k\dots f_{\sigma(k)}$ has an even number of turns if and only if $a_k\parallel f_{\sigma(k)}$, and the parity of $|\{k:a_k\parallel f_k\}|$ is invariant under a permutation of $f_1,\dots,f_n$.
\end{proof}

\begin{proof}[Second proof of Proposition~\ref{p-initial}]
Use Definition~\ref{def-anti-subgraphs} and~Proposition~\ref{p-equivalence}.
The result follows from
$$
\sum_{s\text{ with the source and sink $a$}}A(s%,m,\varepsilon,\delta
)
=\sum_{s\not\ni a}A(s%,m,\varepsilon,\delta
)
=\sum_{s\ni a}A(s%,m,\varepsilon,\delta
)
=\frac{1}{2}\left(\sum_{s\not\ni a}A(s%,m,\varepsilon,\delta
)+\sum_{s\ni a}A(s%,m,\varepsilon,\delta
)\right)
=\frac{1}{2}\sum_{s}A(s%,m,\varepsilon,\delta
).
$$
Here the sums are over currents $s$ (in the first sum --- with the source and the sink~$a$). The first equality holds because $s\mapsto s-\{a\}$ is a bijection between currents with the source and sink $a$ and currents (without sources and sinks) not containing $a$. This bijection preserves $A(s%,m,\varepsilon,\delta
)$ because $a$ does not belong to any node of $s$. The second equality holds because $s\mapsto\overline{s}$ is a bijection between the currents containing and not containing $a$. This bijection preserves $A(s%,m,\varepsilon,\delta
)$ by
Proposition~\ref{p-complement}. The third equality follows from the second one.
\end{proof}

\begin{proof}[Second proof of Proposition~\ref{p-symmetry}]
The map $s\mapsto\overline{s}$ is a bijection between the currents with the source $a$ and sink $f$ and the currents with the source $f$ and sink $a$. By
Proposition~\ref{p-complement}, this bijection preserves $A(s%,m,\varepsilon,\delta
)$ for $a\perp f$ and changes the sign of $A(s%,m,\varepsilon,\delta
)$ for $a\parallel f$. Summing over all $s$ and diving by the sum over all the currents%(without sinks and sources)
, we get the required assertion by Proposition~\ref{p-equivalence}.
\end{proof}

%\addcontentsline{toc}{myshrink}{}

%\section{Alternative definition}
%\label{ssec-def-alternative}

We conclude this section by restating Definition~\ref{def-anti-subgraphs}
informally in a self-contained way resembling \emph{exclusion process}.
(Cf. a different \emph{quantum exclusion process} \cite{Bauer-Bernard-Jin-17} defined
by a continuous-time stochastic differential equation.)

%Let us sketch yet another self-contained alternative definition of the new model
%resembling \emph{exclusion process}. This
%is formalized in
%Definition~\ref{def-anti-subgraphs}.

\begin{wrapfigure}[15]{r}{3.0cm}
\vspace{-0.9cm}
%\vspace{-2.0cm}
\includegraphics[width=2.5cm]{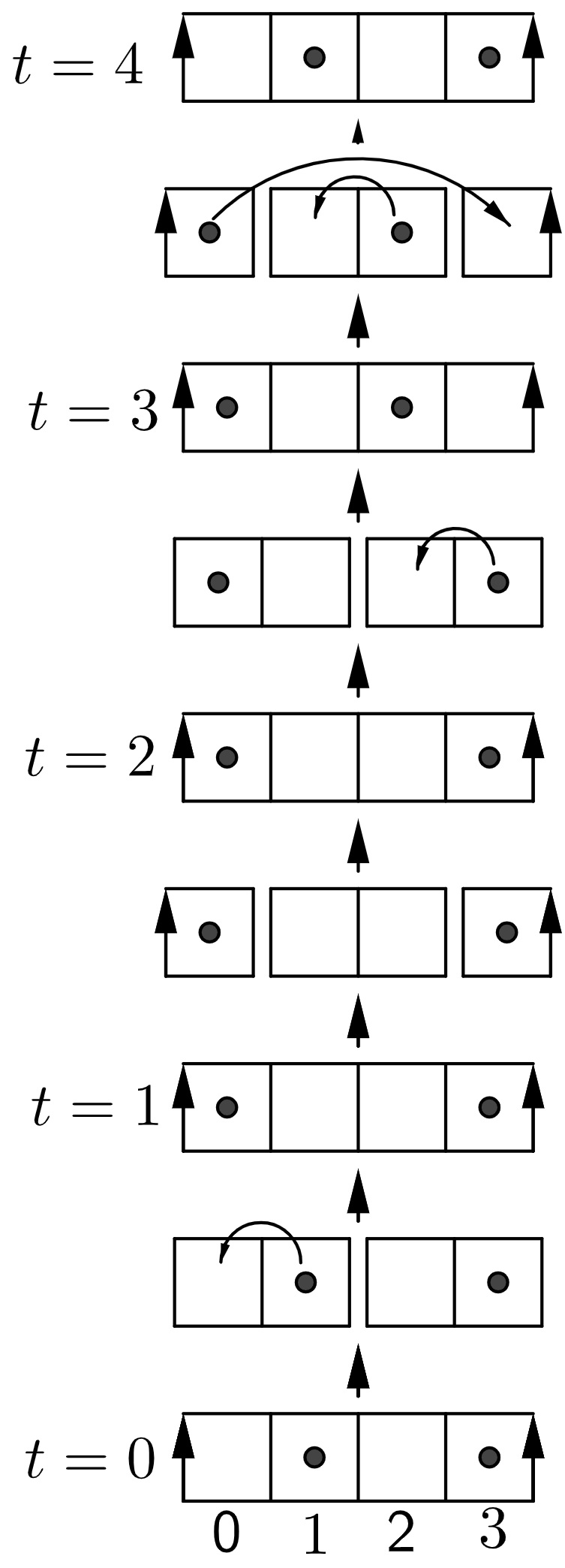}
\caption{A realization}
\label{fig-process}
\end{wrapfigure}
%\begin{definitionsketch*} \label{def-exclusion}
\smallskip
\noindent\textbf{Definition Sketch.}
(See Figure~\ref{fig-process}) Fix $T\in\mathbb{Z},\mu,\bluevar{\nu}>0$ called \emph{half-period, particle
mass, small imaginary mass} respectively. Take a checkered stripe $1\times 2T$ closed in a ring. Enumerate the $2T$ squares
by the numbers $0,\dots, 2T-1$ consecutively.

Define a realization of the exclusion process inductively.
%Initially
At time $t=0$ some squares are occupied by identical particles,
at most one per square.

At time $t=k$ decompose the stripe into $T$ rectangles $1\times 2$
so that squares $k$ and $k+1$ form a rectangle. In each rectangle
with exactly $1$ particle, the particle is allowed to jump
into the empty square of the same rectangle.
In rectangles with $2$ or $0$ particles, nothing is changed.

Finally at time $t=2T$ it is requested that the particles occupy the same
set of squares as at $t=0$. The resulting sequence of $2T$
configurations of particles at times $t=0,1,\dots,2T-1$ is
a \emph{realization} of the exclusion process.

A \emph{realization %of the exclusion process
with a source at $(0,0)$
and a sink at $(x,t)\in\bluevar{\left(\faktor{\mathbb{Z}}{2T\mathbb{Z}}\right)^2}$} is defined analogously, only:
% in addition we require that:
\begin{itemize}
\item at time $0$ before any jumps the square $0$ is empty, and
a particle is added to the square;
\item at time $t$ after all jumps the square $x$ is occupied, and
the particle is removed from it.
\end{itemize}

To each realization $s$ %of the exclusion process
(possibly with a source and a sink), assign a complex number
$A(s)$ as follows. Start with $A(s)=(-1)^n$, where $n$ is the number
of particles at time $0$. \remove{}
% WRONG!!! except the one added (at the source) and the one removed later (at the sink).
For each moment $t=0,1,\dots 2T-1$ and each $1\times 2$ rectangle containing two particles
at %the moment
\blueit{time} $t$, multiply the current value
of $A(s)$ by $-1$. For each moment $t=0,1,\dots 2T-1$ and each $1\times 2$ rectangle
containing exactly one particle at %the moment
\blueit{time} $t$, multiply the current value
of $A(s)$ by
$$
\begin{cases}
1/\sqrt{1+\mu^2}, &\text{if the particle jumps and $t$ is even};\\
-i\mu/\sqrt{1+\mu^2}, &\text{if the particle does not jump and $t$ is even};\\
1/\sqrt{1-\bluevar{\nu}^2}, &\text{if the particle jumps and $t$ is odd};\\
-\bluevar{\nu}/\sqrt{1-\bluevar{\nu}^2}, &\text{if the particle does not jump and $t$ odd}.
\end{cases}
$$
\blueit{Finally, multiply $A(s)$ by the sign of the permutation of the particles obtained at time $t=2T$ (if the particles added at the source and removed at the sink are distinct, then they are identified, and $A(s)$ is multiplied by an additional $-1$).}
%In addition, $A(s)$ has an overall sign, which we do not specify in this informal sketch.

The \emph{two-point function} %for the process
is then
$
{A}^{\bluevar{\curvearrowright}}(0,0\to x,t%;m,\varepsilon,\delta,T
)
:=\frac{\sum\limits_{\substack{\text{realizations $s$}\\
\text{with the source $(0,0)$ and the sink $(x,t)$}}}A(s%,m,\varepsilon,\delta
)}
{\sum\limits_{\substack{\text{realizations $s$}}}A(s%,m,\varepsilon,\delta
)}.
$
%\end{definitionsketch*}
\smallskip

Using Proposition~\ref{p-equivalence}, one can see that
the two-point function actually equals the finite-lattice propagator;
for instance,  for $x,t$ even it equals
$A(a_0\to f_{2}; \mu/2,2,\bluevar{\nu}/2,T)$.
Notice that if we restrict to just realizations without particles at time $0$,
drop space- and time-periodicity requirements,
and take $\bluevar{\nu}=0$, then the definition becomes equivalent to Definition~\ref{def-mass}.

\addcontentsline{toc}{myshrink}{}

\section{Wightman axioms}
\label{sec-axioms}

%\mscomm{!!! THIS SECTION IS \textbf{NOT} INCLUDED IN CHECKERS.TEX !!!
%We check axioms of quantum mechanics, but Wightman axioms of quantum field theory seem \textbf{not} to be true, because there is no suitable action of boosts on the Hilbert space. BTW some of the axioms can be satisfied after multiplying time step of the lattice by $\sqrt{1+m^2\varepsilon^2}$  !!!}

To put the new model in the general framework of quantum theory, we define the Hilbert space describing the states of the model along with the Hamiltonian and the field operators acting on this space. The definition is similar to (and simpler than) the continuum free spin $1/2$ field \cite[\S5.2]{Folland}, %the only differences are
only we have unusual dispersion relation~\eqref{eq-omega} and smaller number of spin components (coming from smaller spacetime dimension).
It goes along the lines of \cite[\S IV]{Bender-etal-85}, where the \emph{massless} case is considered.
For the \emph{massive} case, we have not found the detailed construction in the literature (cf.~\cite[\S2.3]{Arrighi-et-al}).
Although the definition is self-contained, familiarity with the continuum analogue is desirable. %to get the idea.
We use notation $f^*$, $f^\dagger$, $\bar f$, $\langle f| g\rangle$ from~\cite[\S1.1]{Folland}
(introduced below) %in Definitions~\ref{def-abstract-box}--\ref{def-abstract}) %and \cite[\S1.1]{Folland}).
unusual in mathematics but common in physics.
For simplicity, we first perform the construction for the model with a fixed spatial size, then for the infinite lattice,
and finally discuss which Wightman axioms of quantum field theory are satisfied.

\subsection*{Informal motivation}

In quantum theory, a system is described by a Hilbert space encoding all possible \emph{states} of the system. Examples of states of a free field (in a box of fixed spatial size) are: the vacuum state without any particles at all; the state with one particle of given momentum $p$; the state with two particles of momenta $p_1$ and $p_2$; the state with one particle of momentum $p$ and one anti-particle of momentum $q$;
and so on. States of this kind actually form a basis of the Hilbert space. In general, a state is an arbitrary unit vector of the Hilbert space up to scalar multiples.

What quantum theory can compute is the expectation of observables such as the total energy of the system (the \emph{Hamiltonian}) or charge density at a particular point. In general, an observable is a self-adjoint operator on the Hilbert space.  The expectation of the observable in a given state equals the inner product of the state with its image under the operator.

Field operators are not observables but are building blocks for those. They are used to construct states such as the state with one right electron at position $x$ and time $t$, and useful functions such as the propagator.

\subsection*{Definition for fixed spatial size}

\begin{definition} \label{def-abstract-box}
Fix $X\in\mathbb{Z}$ called \emph{lattice spatial size} and $\varepsilon,m>0$.
% called \emph{lattice spatial size, lattice step}, and \emph{particle mass} respectively.
Assume $X>0$. Define $\widetilde{A}_k(x,t,m,\varepsilon,X)$ analogously
to $\widetilde{A}_k(x,t,m,\varepsilon)$
(see Definition~\ref{def-anti-alg}), only take the quotient
$$
\faktor{
\{\,(x,t)\in[0,X\varepsilon]\times\mathbb{R}:
2x/\varepsilon,2t/\varepsilon,(x+t)/\varepsilon\in\mathbb{Z}\,\}
}
{
\forall t\,(0,t)\sim (X\varepsilon,t).
}
$$
%start with a lattice of size $X\times T$, where $X$ is fixed and not tending to infinity.
The \emph{momentum space} is
$$
\mathcal{P}_X:=\left\{\frac{2\pi k}{X\varepsilon}:-\frac{X}{2}<k\le \frac{X}{2}, k\in\mathbb{Z}\right\}.
$$

%For a finite set $M$,
Denote by $L_2(\mathcal{P}_X)$ the Hilbert space with the finite orthonormal basis formed by the functions $\chi_p\colon \mathcal{P}_X\to\mathbb{C}$ equal to $1$ at a particular element $p\in \mathcal{P}_X$ and vanishing at all the other elements.
Equip it with the natural inner product antilinear in the first argument.
Let $\otimes$ and $\wedge$ be respectively the tensor and exterior product over $\mathbb{C}$. An empty exterior product of vectors (respectively, spaces) is set to be $1$ (respectively, $\mathbb{C}$).

The \emph{Hilbert space of $X$-periodic Feynman anticheckers}
%with $X$-periodic boundary conditions}
is the $2^{2X}$-dimensional Hilbert space
$$
\mathcal{H}_X:=\left(
\bigoplus_{n=0}^{X}\bigwedge^n
L_2\left(\mathcal{P}_X
%\left\{\frac{2\pi k}{X\varepsilon}:-\frac{X}{2}<k\le \frac{X}{2}, k\in\mathbb{Z}\right\}
%\faktor{(2\pi/X\varepsilon)\mathbb{Z}}
%{(2\pi/\varepsilon)\mathbb{Z}}
\right)
\right)^{\otimes 2}.
$$
It has an orthonormal basis formed by the vectors
\begin{equation}\label{eq-basis}
\sqrt{n!l!}(\chi_{p_1}\wedge\dots\wedge \chi_{p_n})\otimes
(\chi_{q_1}\wedge\dots\wedge \chi_{q_l})
\end{equation}
for all integers $n,l$ from $0$ to $X$
and all $p_1,\dots,p_n,q_1,\dots,q_l
\in \mathcal{P}_X$
%\{\frac{2\pi k}{X\varepsilon}:-\frac{X}{2}<k\le \frac{X}{2}, k\in\mathbb{Z}\}$
such that $p_1<\dots<p_n$ and $q_1<\dots<q_l$.
(Physically, the vectors mean the states
with $n$ electrons of momenta $p_1,\dots,p_n$ and $l$ positrons
of momenta $q_1,\dots,q_l$.)
The basis vector $1\in \mathbb{C}=\bigwedge^0 L_2\left(
\mathcal{P}_X
%\left\{\frac{2\pi k}{X\varepsilon}:-\frac{X}{2}<k\le \frac{X}{2}, k\in\mathbb{Z}\right\}
%\faktor{(2\pi/X\varepsilon)\mathbb{Z}}
%{(2\pi/\varepsilon)\mathbb{Z}}
\right)$ obtained
for $n=l=0$ is %called the
the \emph{vacuum vector}. It is denoted by $|0\rangle$. The dual vector is denoted by $\langle 0|\in \mathcal{H}_X^*$.

The \emph{Hamiltonian of $X$-periodic Feynman anticheckers}
%with $X$-periodic boundary conditions}
%\mscomm{??? or antiperiodic ???}
is the linear operator on $\mathcal{H}_X$ %$H_X\colon \mathcal{H}_X\to \mathcal{H}_X$
such that all basis vectors~\eqref{eq-basis} are eigenvectors with the eigenvalues (see notation~\eqref{eq-omega})
$$
\omega_{p_1}+\dots+\omega_{p_n}+\omega_{q_1}+\dots+\omega_{q_l}.
$$
(Physically, $\omega_{p}$ is viewed as the energy of a particle with
momentum $p$; hence the Hamiltonian eigenvalues mean total energy of the eigenstates.)

For $p\in \mathcal{P}_X$,
%\{\frac{2\pi k}{X\varepsilon}:-X<k\le X, k\in\mathbb{Z}\}$,
the \emph{creation operators} $a_p^\dagger $ and $b_q^\dagger$ \emph{of particles and antiparticles with momenta $p$ and $q$} respectively are the linear operators on $\mathcal{H}_X$ defined on basis vectors~\eqref{eq-basis} by
\begin{align*}
\hspace{-0.5cm}  a_p^\dagger \left(\sqrt{n!l!}(\chi_{p_1}\wedge\dots\wedge \chi_{p_n})\otimes
(\chi_{q_1}\wedge\dots\wedge \chi_{q_l})\right)&:= \sqrt{(n+1)!l!}(\chi_p\wedge\chi_{p_1}\wedge\dots\wedge \chi_{p_n})\otimes
(\chi_{q_1}\wedge\dots\wedge \chi_{q_l}),\\
\hspace{-0.5cm}  b_q^\dagger \left(\sqrt{n!l!}(\chi_{p_1}\wedge\dots\wedge \chi_{p_n})\otimes
(\chi_{q_1}\wedge\dots\wedge \chi_{q_l})\right)&:= (-1)^n\sqrt{n!(l+1)!}(\chi_{p_1}\wedge\dots\wedge \chi_{p_n})\otimes
(\chi_q\wedge\chi_{q_1}\wedge\dots\wedge \chi_{q_l}).
\end{align*}
%\mscomm{!!! Check linearity vs antilinearity, and also the sign convention of Folland, p.96 !!!}
Their adjoint operators are denoted by $a_p$ and $b_q$ respectively.
For each $t\in \varepsilon\mathbb{Z}$ and $x\in\faktor{\varepsilon\mathbb{Z}}{X\varepsilon\mathbb{Z}}$
define the \emph{field operator} $\psi_X(x,t)\colon\mathcal{H}_X\to \mathcal{H}_X\oplus\mathcal{H}_X$ by
$$
\psi_X(x,t):=\frac{1}{\sqrt{X}}
\sum_{p\in \mathcal{P}_X
%\{\frac{2\pi k}{X\varepsilon}:-X<k\le X, k\in\mathbb{Z}\}
%\faktor{(2\pi/X\varepsilon)\mathbb{Z}}
%{(2\pi/\varepsilon)\mathbb{Z}}
}
\left(
\begin{pmatrix}
   -i\cos(\alpha_p/2) \\
  \sin(\alpha_p/2)
\end{pmatrix}
e^{ipx-i\omega_pt}a_p
+
\begin{pmatrix}
  i\cos(\alpha_p/2) \\
   \sin(\alpha_p/2)
\end{pmatrix}
e^{i\omega_pt-ipx}b_p^\dagger
\right),
$$
where $\alpha_p\in [0,\pi]$ is determined by the condition
$
\cot\alpha_p=\frac{\sin(p\varepsilon)}{m\varepsilon}.
$
(Informally, the first component of the field operator creates a right positron or annihilates a right electron at position $x$ and time $t$. The second component does the same for left particles.)

The propagator is defined through field operators as follows.
Denote by $\psi^\dagger_X(x,t)\colon \mathcal{H}_X\oplus\mathcal{H}_X\to \mathcal{H}_X$
the adjoint of the operator $\psi_X(x,t)$.
Define the \emph{Dirac adjoint} by
$\bar \psi_X(x,t):=\psi_X^\dagger(x,t)
\left(
\begin{smallmatrix}
0 & -i \\
i & 0
\end{smallmatrix}\right)$. \blueit{Let $\psi_X(0,0)^T$ be the transpose of $\psi_X(0,0)$.}
Define the \emph{time-ordered product}
$$
\mathrm{T}\psi_X(x,t)\bar \psi_X(0,0):=
\begin{cases}
\psi_X(x,t)\bar \psi_X(0,0), &\text{if }t\ge 0;\\
-\bar \psi_X(0,0)^T\psi_X(x,t)^T, &\text{if }t<0.
\end{cases}
$$
%\mscomm{!!! This definition contradicts to the convention \cite[(6.40)]{Folland}!!!}
The \emph{Feynman propagator for $X$-periodic Feynman anticheckers} is
$\langle 0|\mathrm{T}\psi_X(x,t)\bar \psi_X(0,0)|0\rangle$.
\end{definition}

%\begin{remark}
A direct checking using an analogue of Proposition~\ref{th-equivalence}
shows that the two constructions of the propagator are consistent:
for each $x,t\in\varepsilon\mathbb{Z}$ and positive $X\in \mathbb{Z}$
the propagator for $X$-periodic Feynman anticheckers equals
(cf.~Proposition~\ref{p-abstract} below)
$$
%\langle 0|\mathrm{T}\psi_X(x,t)\bar \psi_X(0,0)|0\rangle=
-\frac{i}{2} \begin{pmatrix}
\widetilde{A}_1(-x,t,m,\varepsilon,X)& \widetilde{A}_2(x,t,m,\varepsilon,X) \\
-\widetilde{A}_2(-x,t,m,\varepsilon,X) & \widetilde{A}_1(x,t,m,\varepsilon,X)
\end{pmatrix}.
$$
%\end{remark}

%\mscomm{??? State a remark: %why is the Hamiltonian a Hamiltonian,
%%Dirac equation, expression for the propagator and the Hamiltonian,
%why the propagator square is the charge density ??? }

\begin{remark} In $1$ space dimension, unlike $3$ dimensions, there are \emph{no} states
such as ``one \emph{right} electron of momentum $p$'' or ``one \emph{left} electron of momentum $p$''.
What we do have is the state $a_p^\dagger|0\rangle$, ``one electron of momentum $p$'',
which is right with the probability $\cos^2(\alpha_p/2)$ and left with the probability $\sin^2(\alpha_p/2)$.
But there \emph{are} states ``one right (or one left) electron at position $x$ and time $t$''; they are the two components of
$\psi_X(x,t)|0\rangle$. The reason is that the Dirac equation in $1$ space- and $1$ time-dimension
(both lattice and continuum) has just \emph{one} (up to proportionality) solution for given momentum $p$ and positive
energy \cite[Proposition~16]{SU-22}.
%%% NOT TRUE - massless case discussed there!
%The definition of the field operators in \cite[(67) and (60)]{Arrighi-et-al}
%is different from both ours and the usual continuum one \cite[(5.32)]{Folland}: it contains neither antiparticle-creation operators nor spinor factors.
\end{remark}

\subsection*{Definition for the infinite lattice}

%We use conventions from the first paragraph of Definition~\ref{def-abstract-box} and~\eqref{eq-omega}.

\begin{definition} \label{def-abstract} Denote by $L_2[a;b]$ the Hilbert space of square-integrable functions $[a;b]\to\mathbb{C}$ with respect to the Lebesque measure up to changing on a set of measure zero.
Equip it with the inner product $\langle f| g\rangle:=\int_{[a;b]}f^*(p)g(p)\,dp$
antilinear in the first argument, where $^*$ denotes complex conjugation. Denote by $\bigoplus$, $\bigotimes$, and $\bigwedge$ the orthogonal direct sum, the tensor and exterior product of \emph{Hilbert spaces}, that is, \emph{completions} of the orthogonal direct sum, the tensor and exterior product of Hermitian spaces over $\mathbb{C}$.

Fix $\varepsilon,m>0$. The \emph{Hilbert space of Feynman anticheckers} is the Hilbert space
$$
\mathcal{H}:=\left(
\bigoplus_{n=0}^{\infty}\bigwedge^n
L_2\left[
-\frac{\pi}{\varepsilon};\frac{\pi}{\varepsilon}
%\faktor{(2\pi/X\varepsilon)\mathbb{Z}}
%{(2\pi/\varepsilon)\mathbb{Z}}
\right]
\right)^{\otimes 2}.
$$
%\mscomm{!!!The required Hilbert space is a completion of this space $\mathcal{H}$!!!}
The vector $1\in \mathbb{C}=\bigwedge^0 L_2\left[
-\frac{\pi}{\varepsilon};\frac{\pi}{\varepsilon}\right]$ is denoted by $|0\rangle$. The dual vector is denoted by $\langle 0|\in \mathcal{H}^*$.
Denote by $\mathcal{H}^0\subset \mathcal{H}$ the (incomplete) linear span
of $\bigwedge^n L_2\left[-\frac{\pi}{\varepsilon};\frac{\pi}{\varepsilon}\right]
\bluevar{\otimes \bigwedge^l L_2\left[-\frac{\pi}{\varepsilon};\frac{\pi}{\varepsilon}\right]}$
for \bluevar{all} $n,\bluevar{l}$. %=0,1,\dots$.

The \emph{Hamiltonian of Feynman anticheckers} is the linear operator on
$\mathcal{H}^0$ given by (see~\eqref{eq-omega})
$$
H(u_1\wedge\dots \wedge u_n \otimes v_1\wedge\dots \wedge v_l):=
(\omega_p u_1)\wedge\dots \wedge u_n \otimes v_1\wedge\dots \wedge v_l+\dots+
u_1\wedge\dots \wedge u_n \otimes v_1\wedge\dots (\omega_pv_l)
$$
for all integers $n,l\ge 0$ and all $u_1,\dots,u_n,v_1,\dots,v_l\in L_2\left[
-\frac{\pi}{\varepsilon};\frac{\pi}{\varepsilon}
\right]$, where $\omega_p$ is understood as a function in $p$.
The \emph{evolution operator} is the bounded linear operator on
$\mathcal{H}$ given by
$$
e^{-iHt}(u_1\wedge\dots \wedge u_n \otimes v_1\wedge\dots \wedge v_l):=
(e^{-i\omega_pt} u_1)\wedge\dots \wedge (e^{-i\omega_pt}u_n)
\otimes (e^{-i\omega_pt}v_1)\wedge\dots \wedge (e^{-i\omega_pt}v_l).
$$
The operators %\emph{momentum operator}
$P$ and %\emph{translation operator}
$e^{-iPx}$
are defined analogously, only $\omega_p$ and $t$ are replaced by $p$ and $x$.

For $f\in L_2\left[
-\frac{\pi}{\varepsilon};\frac{\pi}{\varepsilon}
\right]$, the \emph{creation operators} $a(f)^\dagger $ and $b(f)^\dagger$
\emph{of particles and antiparticles} respectively \emph{with momentum distribution} $f$
are the linear operators on $\mathcal{H}$ defined by
\begin{align*}
  a(f)^\dagger (u\otimes v)
  &:= \sqrt{n+1}\,(f\wedge u)\otimes v,\\
  b(f)^\dagger (u\otimes v)
  &:= (-1)^n\sqrt{l+1}\,u \otimes (f^*\wedge v)
\end{align*}
for all $u\in \bigwedge^n L_2\left[
-\frac{\pi}{\varepsilon};\frac{\pi}{\varepsilon} \right]$ and $v\in \bigwedge^l L_2\left[
-\frac{\pi}{\varepsilon};\frac{\pi}{\varepsilon} \right]$.
Their adjoint operators are denoted by $a(f)$ and $b(f)$ respectively.
For each $x,t\in \varepsilon\mathbb{Z}$
define the \emph{field operator} $\psi(x,t)\colon\mathcal{H}\to \mathcal{H}\oplus\mathcal{H}$ by
$$
\psi(x,t):=
\left(
\begin{matrix}
\psi_1(x,t)\\
\psi_2(x,t)
\end{matrix}
\right):=
%\sqrt{\frac{\varepsilon}{\pi}}
\left(
\begin{matrix}
  a(f_{1,x,t})+b(f^*_{1,x,t})^\dagger\\
  a(f_{2,x,t})+b(f^*_{2,x,t})^\dagger
  %%%%%
  %a        ( i\cos(\alpha_p/2)e^{i\omega_pt-ipx}) +
  %%a        (-i\cos(\alpha_p/2)e^{ipx-i\omega_pt})+
  %b^\dagger(-i\cos(\alpha_p/2)e^{ipx-i\omega_pt})\\
  %%b^\dagger( i\cos(\alpha_p/2)e^{i\omega_pt-ipx})\\
  %a(          \sin(\alpha_p/2)e^{i\omega_pt-ipx}) +
  %%a(          \sin(\alpha_p/2)e^{ipx-i\omega_pt}) +
  %b^\dagger(  \sin(\alpha_p/2)e^{ipx-i\omega_pt})
  %%b^\dagger(  \sin(\alpha_p/2)e^{i\omega_pt-ipx})
\end{matrix}
\right),
\quad\text{where}
\quad
\begin{aligned}
f_{1,x,t}&=\sqrt{{\varepsilon}/{2\pi}}\,i\cos(\alpha_p/2)e^{i\omega_pt-ipx},\\
f_{2,x,t}&=\sqrt{{\varepsilon}/{2\pi}}\,\sin(\alpha_p/2)e^{i\omega_pt-ipx}.
\end{aligned}
$$
(The creation, annihilation, and field operators are bounded;
see \cite[(4.57)]{Folland}.)
The \emph{Feynman propagator for Feynman anticheckers}
is defined through them analogously to Definition~\ref{def-abstract-box}.
%\mscomm{Check normalization and signs!!! Prove the proposition !!!}
%Check if the operators are well-defined on the whole $\mathcal{H}$ !!!}
\end{definition}

This construction of the propagator is consistent with the
ones from Definitions~\ref{def-anti-alg}, \ref{def-anti-combi}, and~\ref{def-anti-subgraphs}:

\begin{proposition} \label{p-abstract}
Let
$\mathrm{T}\langle f_{k,x,t}|f_{l,0,0}\rangle
:=\begin{cases}
\langle f_{k,x,t}|f_{l,0,0}\rangle,
& \mbox{if }t\ge 0; \\
-\langle f_{l,0,0}|f_{k,x,t}\rangle, & \mbox{if }t<0.
\end{cases}$
Then for all $x,t\in\varepsilon\mathbb{Z}$ we get
%\mscomm{Insert the formula through inner product in between!}
$$
\langle 0|\mathrm{T}\psi(x,t)\bar \psi(0,0)|0\rangle
=
i\,\mathrm{T}\begin{pmatrix}
\langle f_{1,x,t}|f_{2,0,0}\rangle & -\langle f_{1,x,t}|f_{1,0,0}\rangle\\
\langle f_{2,x,t}|f_{2,0,0}\rangle & -\langle f_{2,x,t}|f_{1,0,0}\rangle
\end{pmatrix}
%%%\begin{cases}
%  i\begin{pmatrix}
%\langle f_{1,x,t}|f_{2,0,0}\rangle & -\langle f_{1,x,t}|f_{1,0,0}\rangle\\
%\langle f_{2,x,t}|f_{2,0,0}\rangle & -\langle f_{2,x,t}|f_{1,0,0}\rangle
%\end{pmatrix},
%& \mbox{if }t\ge 0 \\
%    i\begin{pmatrix}
%\langle f_{2,0,0}|f_{1,x,t}\rangle & -\langle f_{1,0,0}|f_{1,x,t}\rangle\\
%\langle f_{2,0,0}|f_{2,x,t}\rangle & -\langle f_{1,0,0}|f_{2,x,t}\rangle
%\end{pmatrix},
%& \mbox{if }t< 0;
%\end{cases}
%%%
=-\frac{i}{2}\begin{pmatrix}
\widetilde{A}_1(-x,t,m,\varepsilon)& \widetilde{A}_2(x,t,m,\varepsilon) \\
-\widetilde{A}_2(-x,t,m,\varepsilon) & \widetilde{A}_1(x,t,m,\varepsilon)
\end{pmatrix}.
$$
\end{proposition}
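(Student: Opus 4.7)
The chain of equalities breaks into two independent parts: the first is a Fock-space manipulation involving the canonical anticommutation relations, and the second explicitly identifies the $L^2$ inner products with the Fourier integrals of Proposition~\ref{th-equivalence}.

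For the first equality I would decompose $\bar\psi_l(0,0) = \sum_j M_{jl}\psi_j^\dagger(0,0)$ with $M = \left(\begin{smallmatrix}0 & -i\\ i & 0\end{smallmatrix}\right)$, and expand each $\psi_k(x,t) = a(f_{k,x,t}) + b(f^*_{k,x,t})^\dagger$ together with $\psi_j^\dagger(0,0) = a(f_{j,0,0})^\dagger + b(f^*_{j,0,0})$. For $t\ge 0$ the relations $a|0\rangle = b|0\rangle = 0$ kill every term except the contraction $\{a(f_{k,x,t}),a(f_{j,0,0})^\dagger\} = \langle f_{k,x,t}|f_{j,0,0}\rangle$; summing over $j$ against $M_{jl}$ reproduces exactly the $(k,l)$-entry of the claimed matrix. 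For $t<0$ the fermionic time-ordering swaps the factors (introducing a sign) so that the only surviving contraction is the antiparticle one $\{b(f^*_{j,0,0}),b(f^*_{l,x,t})^\dagger\}$. Since $b(f)^\dagger$ is antilinear in $f$ (as dictated by the formula $b(f)^\dagger(u\otimes v) = (-1)^n\sqrt{l+1}\,u\otimes(f^*\wedge v)$), this anticommutator evaluates to $\langle f^*_{l,x,t}|f^*_{j,0,0}\rangle = \overline{\langle f_{l,x,t}|f_{j,0,0}\rangle} = \langle f_{j,0,0}|f_{l,x,t}\rangle$. The resulting conjugation and index swap are exactly what the symbol $\mathrm{T}\langle\cdot|\cdot\rangle$ encodes in the $t<0$ case, giving the first equality.

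For the second equality, I would substitute the explicit formulas for $f_{1,x,t}$ and $f_{2,x,t}$ into each inner product and use the half-angle identities $\sin(\alpha_p/2)\cos(\alpha_p/2) = \tfrac12\sin\alpha_p$, $2\cos^2(\alpha_p/2) = 1+\cos\alpha_p$, $2\sin^2(\alpha_p/2) = 1-\cos\alpha_p$, together with $\sin\alpha_p = m\varepsilon/\sqrt{m^2\varepsilon^2+\sin^2 p\varepsilon}$ and $\cos\alpha_p = \sin(p\varepsilon)/\sqrt{m^2\varepsilon^2+\sin^2 p\varepsilon}$. Each of the four inner products then becomes an integral whose integrand coincides, up to the substitution $p\mapsto -p$ (which leaves $\omega_p$ and $\sin\alpha_p$ invariant but flips $\cos\alpha_p$), with one of the Fourier integrals in Proposition~\ref{th-equivalence}. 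Matching them amounts to invoking the symmetries of Proposition~\ref{p-symmetry-mass}, in particular $\widetilde A_1(x,t) = \widetilde A_1(-x,t) = \widetilde A_1(x,-t)$ and the reflection identities $(t-x)\widetilde A_2(x,t) = (t+x)\widetilde A_2(-x,t)$, $\widetilde A_2(x,t) = -\widetilde A_2(-x,-t)$.

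The hardest step is tracking the signs in the case $t<0$. The Fourier formula in Proposition~\ref{th-equivalence} carries a parity-dependent sign $\pm$ for $t<0$, and by Corollary~\ref{cor-alternation} the raw Fourier integral $I(x,t) := \frac{im\varepsilon^2}{2\pi}\int\frac{e^{ipx-i\omega_p t}\,dp}{\sqrt{m^2\varepsilon^2+\sin^2 p\varepsilon}}$ (and its $\widetilde A_2$ analogue) is real or purely imaginary depending on the parity of $(x+t)/\varepsilon$. The complex conjugation arising from the $b$-$b^\dagger$ contractions and the reflection $t\mapsto -t$ coming from the symmetry of $\widetilde A_k$ must interlock correctly with these parity cases. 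The cleanest bookkeeping I foresee is to split each Fourier integrand into parts even and odd in $p$, observe that conjugation corresponds to $t\mapsto -t$ on each part (using $\overline{J(x,t)} = J(x,-t)$ and $\overline{K(x,t)} = -K(x,-t)$ with the obvious notation), and then to apply the symmetries of $\widetilde A_k$ uniformly to absorb all the $\pm$ signs.
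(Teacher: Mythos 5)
Your proposal is correct and follows essentially the same route as the paper's proof: contract the field operators against the vacuum using $a(f)|0\rangle=b(f)|0\rangle=0$ and the canonical anticommutation relations to get the matrix of inner products (with the $t<0$ case yielding the complex conjugate), then substitute the explicit $f_{k,x,t}$, apply the half-angle identities, and identify the resulting integrals with Proposition~\ref{th-equivalence}. The only (immaterial) difference is in the final sign bookkeeping for $t<0$: the paper absorbs the conjugation by the single change of variables $p\mapsto\pi/\varepsilon-p$, which flips $e^{-ipx+i\omega_pt}$ to $(-1)^{(x+t)/\varepsilon}e^{ipx-i\omega_pt}$ and matches the parity-dependent $\pm$ of Proposition~\ref{th-equivalence} directly, whereas you propose $p\mapsto -p$ followed by the reflection symmetries of Proposition~\ref{p-symmetry-mass} — both work.
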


\begin{proof}
Case 1: $t\ge 0$. By definition, for each $k,l\in\{1,2\}$ we have
\begin{multline*}
\langle 0|\mathrm{T}\psi_k(x,t)\bar \psi_l(0,0)|0\rangle=
\langle 0|\psi_k(x,t)\bar \psi_l(0,0)|0\rangle=
-i(-1)^{l}\langle 0|\psi_k(x,t)\psi_{3-l}^\dagger(0,0)|0\rangle
=\\=
-i(-1)^{l}\psi_{k}^\dagger(x,t)|0\rangle\cdot
\psi_{3-l}^\dagger(0,0)|0\rangle
=-i(-1)^{l}\langle f_{k,x,t}|f_{3-l,0,0}\rangle,
\end{multline*}
where $\cdot$ denotes the inner product in $\mathcal{H}$.
%, antilinear in the first argument.
%We have
The latter equality follows from
\begin{align*}
\psi_{k}^\dagger(x,t)|0\rangle  &=
%\sqrt{\frac{\varepsilon}{\pi}}
\left(a(f_{k,x,t})^\dagger+b(f^*_{k,x,t})
%%% a^\dagger( i\cos(\alpha_p/2)e^{i\omega_pt-ipx}) +
%%% b(\dots)%        (-i\cos(\alpha_p/2)e^{ipx-i\omega_pt})
\right)|0\rangle %\\&= %\sqrt{\frac{\varepsilon}{\pi}}i\cos(\alpha_p/2)e^{i\omega_pt-ipx}
=f_{k,x,t}\otimes 1
\in \bigwedge^1 L_2\left[
-\frac{\pi}{\varepsilon};\frac{\pi}{\varepsilon}
\right] \otimes \bigwedge^0 L_2\left[
-\frac{\pi}{\varepsilon};\frac{\pi}{\varepsilon}
\right],
%%%
%\\
% \psi_{2}^\dagger(x,t)|0\rangle  &=
%\sqrt{\frac{\varepsilon}{\pi}}
%\left(a^\dagger(\sin(\alpha_p/2)e^{i\omega_pt-ipx}) +
%b(\dots)      %(\sin(\alpha_p/2)e^{ipx-i\omega_pt})
%\right)|0\rangle \\
%&= \sqrt{\frac{\varepsilon}{\pi}}\sin(\alpha_p/2)e^{i\omega_pt-ipx}\otimes 1
%\in \bigwedge^1 L_2\left[
%-\frac{\pi}{\varepsilon};\frac{\pi}{\varepsilon}
%\right] \otimes \bigwedge^0 L_2\left[
%-\frac{\pi}{\varepsilon};\frac{\pi}{\varepsilon}
%\right],
\end{align*}
where $b(f)|0\rangle=0$ because
for each $u\in \bigwedge^n L_2\left[
-\frac{\pi}{\varepsilon};\frac{\pi}{\varepsilon} \right]$ and $v\in \bigwedge^l L_2\left[
-\frac{\pi}{\varepsilon};\frac{\pi}{\varepsilon} \right]$ we have
$$
b(f)|0\rangle\cdot(u\otimes v)=
|0\rangle\cdot b(f)^\dagger(u\otimes v)=
|0\rangle\cdot (-1)^n\sqrt{l+1}\,u \otimes (f^*\wedge v)=0
$$
by the condition $\bigwedge^0 L_2\left[
-\frac{\pi}{\varepsilon};\frac{\pi}{\varepsilon}
\right]\otimes \bigwedge^0 L_2\left[
-\frac{\pi}{\varepsilon};\frac{\pi}{\varepsilon}
\right]\perp \bigwedge^n L_2\left[
-\frac{\pi}{\varepsilon};\frac{\pi}{\varepsilon}
\right]\otimes \bigwedge^{l+1} L_2\left[
-\frac{\pi}{\varepsilon};\frac{\pi}{\varepsilon}
\right]$.
Thus by the definitions of $f_{k,x,t}$, $\alpha_p$ and Proposition~\ref{th-equivalence}
we get
\begin{multline*}
\langle 0|\mathrm{T}\psi(x,t)\bar \psi(0,0)|0\rangle=
\frac{-i\varepsilon}{2\pi}
\int_{-\pi/\varepsilon}^{\pi/\varepsilon}
\begin{pmatrix}
  i\,\cos(\alpha_p/2)\sin(\alpha_p/2) & \cos^2(\alpha_p/2) \\
  -\sin^2(\alpha_p/2) & i\,\cos(\alpha_p/2)\sin(\alpha_p/2)
\end{pmatrix}e^{ipx-i\omega_pt}\,dp
=\\=
\frac{-i\varepsilon}{4\pi}
\int\limits_{-\pi/\varepsilon}^{\pi/\varepsilon}
\begin{pmatrix}
  \frac{im\varepsilon}{\sqrt{m^2\varepsilon^2+\sin^2 p\varepsilon}}
  & 1+\frac{\sin p\varepsilon}{\sqrt{m^2\varepsilon^2+\sin^2 p\varepsilon}} \\
  \frac{\sin p\varepsilon}{\sqrt{m^2\varepsilon^2+\sin^2 p\varepsilon}}-1 & \frac{im\varepsilon}{\sqrt{m^2\varepsilon^2+\sin^2 p\varepsilon}}
\end{pmatrix}e^{ipx-i\omega_pt}\,dp
=
-\frac{i}{2} \begin{pmatrix}
\widetilde{A}_1(-x,t,m,\varepsilon)& \widetilde{A}_2(x,t,m,\varepsilon) \\
-\widetilde{A}_2(-x,t,m,\varepsilon) & \widetilde{A}_1(x,t,m,\varepsilon)
\end{pmatrix}.
\end{multline*}

Case 2: $t<0$. An analogous computation shows that
$$
%\begin{multline*}
\langle 0|\mathrm{T}\psi_k(x,t)\bar \psi_l(0,0)|0\rangle=
%\langle 0|\psi_k(x,t)\bar \psi_l(0,0)|0\rangle=
%-i(-1)^{l}\langle 0|\psi_k(x,t)\psi_{3-l}^\dagger(0,0)|0\rangle=
i(-1)^{l}\psi_{3-l}(0,0)|0\rangle\cdot
\psi_{k}(x,t)|0\rangle
= i(-1)^{l}\langle f_{3-l,0,0}|f_{k,x,t}\rangle
= \left(-i(-1)^{l}\langle f_{k,x,t}|f_{3-l,0,0}\rangle\right)^*.
%-i(-1)^{l}\psi_{3-l}^\dagger(0,0)|0\rangle\cdot
%\psi_{k}^\dagger(x,t)|0\rangle.
%\end{multline*}
$$
Thus %the Feynman propagator is given by
we get the same integral formula
as in Case~1, only the whole expression is conjugated.
The change of the variables $p\mapsto \pi/\varepsilon-p$
and Proposition~\ref{th-equivalence}
complete the proof.
\end{proof}

Formula~\eqref{eq-q} for the expected charge is consistent with the
expression through field operators, which we briefly recall now
(this paragraph is for specialists).
Under notation from \cite[\S6.4]{Folland},
$:\!\psi^\dagger(x,t)\psi(x,t)\!:$ is
the \emph{charge density operator} for the %spin-$1/2$
field
\cite[(3.113)]{Peskin-Schroeder}. Thus
minus the expected charge density in the state
$\psi_1^\dagger(0,0)|0\rangle$
(meaning one right electron at the origin) is
$$
\frac{\langle 0|\psi_1(0,0)
:\!\psi^\dagger(x,t)\psi(x,t)\!:
\psi_1^\dagger(0,0)|0\rangle}
{\langle 0|\psi_1(0,0)\psi_1^\dagger(0,0)|0\rangle}
=\frac{\left|\langle 0|\psi(x,t)\psi_1^\dagger(0,0)|0\rangle\right|^2}
{\langle 0|\psi_1(0,0)\psi_1^\dagger(0,0)|0\rangle}
=\frac{1}{2}|\widetilde{A}_1\left(x,t,m,\varepsilon\right)|^2+
\frac{1}{2}|\widetilde{A}_2\left(x,t,m,\varepsilon\right)|^2,
$$
which coincides with~\eqref{eq-q}.
Here the first equality can be deduced, for instance,
from a version of Wick's theorem \cite[(6.42)]{Folland}, and the second one
--- from Proposition~\ref{p-abstract}.

\subsection*{Wightman axioms}

%\mscomm{? Write which Wightman axioms do hold: 1,3,5 (after rescaling time by $\sqrt{1+m^2\varepsilon^2}$), 2 and 4 (for translations only), 6 (for lattice points only).}

The continuum limit of the new model is the well-known free spin-$1/2$
quantum field theory which of course satisfies Wightman axioms
\cite[\S5.5]{Folland}. One cannot expect the discrete model to satisfy all the axioms
before passing to the limit because they are strongly tied to
continuum spacetime and Lorentz transformations. Remarkably, some of them
still hold on the lattice. %discrete level.

\begin{proposition}[Checking of Wightman axioms]\label{p-wightman}
The objects introduced in Definition~\ref{def-abstract}
satisfy the following conditions:
\begin{description}
\item[axiom 1:] $\psi_k(x,t)$ is a bounded
linear operator on $\mathcal{H}$
for each $(x,t)\in\varepsilon\mathbb{Z}^2$ and $k\in\{1,2\}$;
\item[axiom 2:] $|0\rangle$ is the unique up to proportionality
  vector in $\mathcal{H}$ such that $e^{iH\varepsilon}|0\rangle=e^{iP\varepsilon}|0\rangle=|0\rangle$;
\item[axiom 3:] the vectors
$\psi_{k_1}(x_1,t_1)\dots\psi_{k_l}(x_l,t_l)
\psi^\dagger_{k_{l+1}}(x_{l+1},t_{l+1})\dots \psi^\dagger_{k_{l+n}}(x_{l+n},t_{l+n})|0\rangle$
for all $0\le n,l\in\mathbb{Z}$,
$k_j\in\{1,2\}$, $(x_j,t_j)\in \varepsilon\mathbb{Z}^2$
span a dense linear subspace in $\mathcal{H}$;
\item[axiom 4 (weakened):] %\mscomm{(???)}
$e^{biH-aiP}\psi_k(x,t)e^{aiP-biH}=\psi_k(x+a,t+b)$
for each $(a,b)\in\varepsilon\mathbb{Z}^2$;
\item[axiom 5 (weakened):]
$H\ge 0$ and $H^2-(1-\omega_0\varepsilon/\pi)^2P^2\ge 0$ \blueit{(here $\omega_0$ is $\omega_p$ for $p=0$)};
\item[axiom 6:]
$[\psi_{k}(x,t),\psi_{k'}(x',t')]_+
=[\psi_{k}(x,t),\psi^\dagger_{k'}(x',t')]_+
=[\psi^\dagger_{k}(x,t),\psi^\dagger_{k'}(x',t')]_+=0$
for all $k,k'\in\{1,2\}$ and $(x,t),(x',t')\in \varepsilon\mathbb{Z}^2$ such that $|x-x'|>|t-t'|$,
where $[a,b]_+:=ab+ba$.
%\mscomm{??? Or Dirac adjoint ??? - does not matter}
\end{description}
\end{proposition}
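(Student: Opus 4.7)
The plan is to verify the six conditions in turn, reducing each to either standard Fock-space identities or to properties of the lattice propagator already established in \S\ref{ssec-proofs-fourier-series}, Theorem~\ref{th-well-defined}, and Proposition~\ref{p-abstract}. The routine cases (axioms~1, 3, 4, 2) come first; axiom~6 comes next; axiom~5 is saved for last as the main obstacle.

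For axiom~1, I invoke the standard bound $\|a(f)^\dagger\|,\|b(f)^\dagger\|\le\|f\|_{L^2}$ on fermionic Fock space together with the obvious bound on $\|f_{k,x,t}\|$. For axiom~3, I use that $\psi_k^\dagger(x,t)|0\rangle=f_{k,x,t}$ and that the system $\{f_{1,x,t},f_{2,x,t}:(x,t)\in\varepsilon\mathbb{Z}^2\}$ is total in $L_2[-\pi/\varepsilon,\pi/\varepsilon]$ (the exponentials $e^{-ipx}$ with $x\in\varepsilon\mathbb{Z}$ form a Fourier basis and the coefficients $i\cos(\alpha_p/2),\sin(\alpha_p/2)$ never vanish simultaneously); iterating $\psi_k^\dagger$ and $\psi_k$ then reaches higher sectors by the standard cyclicity argument for the vacuum. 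For axiom~4, starting from the identity $e^{-iHt}a(f)e^{iHt}=a(e^{-iHt}f)$ and its analogues for $a^\dagger,b,b^\dagger,P$, a direct computation gives $(e^{iHb-iPa}f_{k,x,t})(p)=e^{i(\omega_pb-pa)}f_{k,x,t}(p)=f_{k,x+a,t+b}(p)$ and similarly for the $f^*$-argument in the antiparticle term. For axiom~2, both $e^{iH\varepsilon}$ and $e^{iP\varepsilon}$ act on $\bigwedge^n L_2\otimes\bigwedge^l L_2$ by multiplication by $e^{i\varepsilon\sum(\omega_{p_i}+\omega_{q_j})}$ and $e^{i\varepsilon\sum(p_i+q_j)}$ respectively; since $\omega_p$ is a non-constant real-analytic function on $[-\pi/\varepsilon,\pi/\varepsilon]$, the preimage of $(2\pi/\varepsilon)\mathbb{Z}$ is measure-zero in any one variable $p_i$, and a Fubini argument forces any invariant $L^2$-vector in sectors with $n+l\ge 1$ to vanish.

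For axiom~6, the first and third anticommutators vanish identically by the standard CAR together with anticommutation of $a(\cdot)$ and $b(\cdot)^\dagger$ (provided by the sign $(-1)^n$ in the definition of $b^\dagger$). The middle anticommutator reduces by CAR to the scalar operator $2\,\mathrm{Re}\langle f_{k,x,t},f_{k',x',t'}\rangle\cdot I$. I use axiom~4 to translate to $(x',t')=(0,0)$; if $t<0$ I further reduce to $t>0$ by Hermitian conjugation combined with the reflection $(x,t)\mapsto(-x,-t)$. Then Proposition~\ref{p-abstract} expresses each of the four relevant inner products $\langle f_{k,x,t},f_{k',0,0}\rangle$ as $\pm\widetilde{A}_j(\pm x,t)$ for some $j\in\{1,2\}$, and for $|x|>|t|$ Theorem~\ref{th-well-defined} shows that $\widetilde{A}_j$ is either purely imaginary (so the real part vanishes) or equals $\pm a_j(\pm x+\varepsilon,|t|+\varepsilon,m,\varepsilon)=0$ --- the latter vanishing because no checker path of $(|t|+\varepsilon)/\varepsilon$ steps starting with $(\varepsilon,\varepsilon)$ can reach a lattice point whose $x$-coordinate exceeds $|t|+\varepsilon$ in absolute value or lies on the opposite side of the origin.

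The main obstacle is axiom~5. Positivity $H\ge 0$ is immediate because each $\omega_p\ge\omega_0>0$. For the operator inequality $H^2\ge(1-\omega_0\varepsilon/\pi)^2P^2$, I use that $H$ and $P$ act as multiplication operators in momentum space on each sector, reducing the claim to the pointwise inequality $\sum(\omega_{p_i}+\omega_{q_j})\ge(1-\omega_0\varepsilon/\pi)\bigl|\sum(p_i+q_j)\bigr|$; by the triangle inequality this follows from the single-particle bound $\omega_p\ge(1-\omega_0\varepsilon/\pi)|p|$ on $[-\pi/\varepsilon,\pi/\varepsilon]$. The single-particle bound is equivalent to monotonicity of $\omega_p/p$ on $(0,\pi/\varepsilon]$, with minimum $1-\omega_0\varepsilon/\pi$ attained at $p=\pi/\varepsilon$. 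After the substitutions $u=p\varepsilon$ and $c=1/\sqrt{1+m^2\varepsilon^2}$, this monotonicity becomes the inequality $\arccos(c\cos u)-uc\sin u/\sqrt{1-c^2\cos^2 u}\ge 0$ on $[0,\pi]$; differentiating, the derivative of this difference equals $-uc\cos u\cdot(1-c^2)/(1-c^2\cos^2 u)^{3/2}$, which has the sign of $-\cos u$, so the difference is decreasing on $[0,\pi/2]$ and increasing on $[\pi/2,\pi]$, attaining its minimum $\pi(1-c)/2>0$ at $u=\pi/2$. This is the only genuine calculus inequality needed in the whole proof.
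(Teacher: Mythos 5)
Your proof is correct and follows essentially the same route as the paper's: boundedness and the canonical anticommutation relations for axioms 1, 3, 4 and 6, the reduction of the mixed anticommutator to $2\,\mathrm{Re}\langle f_{k,x-x',t-t'}|f_{k',0,0}\rangle$ and its vanishing for $|x-x'|>|t-t'|$ via Proposition~\ref{p-abstract}, Theorem~\ref{th-well-defined} and Definition~\ref{def-mass}, and the single-particle bound $\omega_p\ge(1-\omega_0\varepsilon/\pi)|p|$ for axiom 5. You go further than the paper in two places, both of which check out: the measure-zero argument for uniqueness of the vacuum (the paper dismisses axiom 2 ``by definition'') and the calculus verification that $\omega_p/p$ decreases on $(0,\pi/\varepsilon]$ with minimum $1-\omega_0\varepsilon/\pi$ at the endpoint (the paper states the dispersion inequality without proof).
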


Here Axiom~1 is weaker than the continuum one in the sense that the field operators are defined
only on the lattice, but stronger in the sense that they are genuine bounded operators rather than distributions.
The vectors in Axiom~3 mean states with electrons at the points
$(x_{l+1},t_{l+1}),\dots,(x_{l+n},t_{l+n})$ and positrons at the points
$(x_1,t_1),\dots,(x_l,t_l)$.
Axiom~4 is much weaker than the continuum one, which
involves general Lorentz transformations. On the lattice,
only translations remain, because (almost all) the other
Lorentz transformations do not preserve the lattice.
Axiom~5 in continuum theory asserts that $H\ge 0$ and $H^2-P^2\ge 0$, i.e.
the energy is positive in any frame of reference.
The inequality $H^2-P^2\ge 0$ is violated on the lattice
(but this does not mean negative energy because
Lorentz transformations do not preserve the lattice).
The weakened Axiom~5 shows that it still holds ``in the continuum limit''.
Axiom~6 is equivalent to vanishing of the real part of the
Feynman propagator outside the light cone; this is obvious in the original
Feynman model but nontrivial in the new one.

\begin{proof}
Axioms~1 and~2 hold by definition; recall that the operators are bounded by~\cite[(4.57)]{Folland}.
%Axiom~2 holds because
%$$e^{iH\varepsilon}(u_1\wedge\dots \wedge u_n \otimes v_1\wedge\dots \wedge v_l)
%=e^{i\omega_p\varepsilon}u_1\wedge\dots \wedge e^{i\omega_p\varepsilon}u_n
%\otimes e^{i\omega_p\varepsilon}v_1\wedge\dots e^{i\omega_p\varepsilon}\wedge v_l.
%$$
%\mscomm{(state as a lemma and write about extending to $\mathcal{H}$?)}
Axiom~3 holds because linear span of the functions
$(\psi_1(x,0)+\psi_2(x,0))|0\rangle
=i\sqrt{\varepsilon/\pi}\,e^{-i\alpha_p/2-ipx}
%\left(i\cos(\alpha_p/2)+\sin(\alpha_p/2)\right)e^{-ipx}
$, where $x$ runs through $\varepsilon\mathbb{Z}$,
is dense in $L_2\left[
-\frac{\pi}{\varepsilon};\frac{\pi}{\varepsilon}\right]$.
Axiom 4 is checked directly. %\mscomm{!!! Need to discuss !!!}.
Axiom~5 follows from the inequality
$\omega_p\ge |p|(1-\omega_0\varepsilon/\pi)$.

To check Axiom~6, recall that all the anticommutators of the operators
$a(f),a(f)^\dagger,b(f),b(f)^\dagger$ vanish except
$[a(f),a(g)^\dagger]_+=[b(f^*),b(g^*)^\dagger]_+=\langle f|g\rangle I$
\cite[(4.56) and p.96]{Folland}. This immediately
implies that $[\psi_{k}(x,t),\psi_{k'}(x',t')]_+
=[\psi^\dagger_{k}(x,t),\psi^\dagger_{k'}(x',t')]_+=0$.
Finally, by Proposition~\ref{p-abstract},
Theorem~\ref{th-well-defined}, and Definition~\ref{def-mass}
for $|x-x'|>|t-t'|$ we have
\begin{multline*}
[\psi_{k}(x,t),\psi^\dagger_{k'}(x',t')]_+
=[a(f_{k,x,t}),a(f_{k',x',t'})^\dagger]_+
+ [b(f^*_{k,x,t})^\dagger,b(f^*_{k',x',t'})]_+
=\\=\langle f_{k,x,t}|f_{k',x',t'}\rangle I
+\langle f_{k',x',t'}|f_{k,x,t}\rangle I
=2I\,\mathrm{Re}\langle f_{k,x-x',t-t'}|f_{k',0,0}\rangle
=0.\\[-1.55cm]
\end{multline*}
\end{proof}

%\vspace{-0.4cm}

\subsection*{Acknowledgements}

The work was prepared within the Russian Science Foundation grant N22-41-05001, \url{https://rscf.ru/en/project/22-41-05001/}. \mscomm{Check wording!!!}
The authors are grateful to P.~Arnault, C.~B\'eny, \blueit{J.~Bj\"ornberg,} A.~Borodin, M.~Drmota, \bluevar{A.~Glazman, V.~Gorin,} M.~Gualtieri, \bluevar{A.~Kholodenko, R.~Kenyon,} F.~Kuyanov, \bluevar{K.~Ryan, S.~Tata} for useful discussions.

\tobeadded
This work was presented as ...
\endtobeadded

{
\footnotesize

\subsection*{Statements and Declarations}

%%%%
%\smallskip
%\textbf{Data availability.} Data sharing not applicable to this article as no datasets were generated or analysed during the current study.
%
\textbf{Conflict of interests.} The authors have no relevant financial or non-financial interests to disclose.
%%%

%\vspace{-0.4cm}

%\vspace{-0.2cm}
\noindent
\textsc{Mikhail Skopenkov ORCID: 0000-0003-2453-0009\\
King Abdullah University of Science and Technology, \blueit{Saudi Arabia} and\\
%National Research University Higher School of Economics (Faculty of Mathematics) and\\
HSE University (Faculty of Mathematics), \bluevar{Russia}} \remove{}
%and\\
%Institute for Information Transmission Problems, Russian Academy of Sciences} %\\
%Bolshoy Karetny per. 19, bld. 1, Moscow, 127994, Russian Federation
\\
\texttt{mikhail.skopenkov\,@\,gmail$\cdot $com} \quad \url{https://users.mccme.ru/mskopenkov/}

\vspace{0.3cm}
\noindent
\textsc{Alexey Ustinov ORCID: 0000-0002-0624-0406\\
%Pacific National University, Khabarovsk, Russia \&\\
%National Research University Higher School of Economics (Faculty of Computer Science) and\\
HSE University (Faculty of Computer Science) and\\
%and\\
Khabarovsk Division of the Institute for Applied Mathematics,\\
Far-Eastern Branch,
Russian Academy of Sciences, Russia} %\\
%Bolshoy Karetny per. 19, bld. 1, Moscow, 127994, Russian Federation
\\
\texttt{Ustinov.Alexey\,@\,gmail$\cdot $com} \quad
\blueit{\url{https://www.hse.ru/en/org/persons/530309935}}
%\url{http://iam.khv.ru/staff/Ustinov.php}

}

\end{document}